\title{3D Tensor Renormalisation Group at High Temperatures}
\date{\normalsize{\textit{Institut des Hautes Études Scientiﬁques, 91440 Bures-sur-Yvette, France, ebel@ihes.fr}}}
\theoremstyle{plain}
\newtheorem{trm}{Theorem}[section]
\newtheorem{lem}[trm]{Lemma}
\newtheorem{cor}[trm]{Corollary}
\theoremstyle{definition}
\newtheorem{dfn}[trm]{Definition}
\theoremstyle{remark}
\newtheorem{rmk}[trm]{Remark}
\providecommand*{\hrefurl}{\hyper@normalise\hrefurl@}
\providecommand*{\hrefurl@}[2]{\hyper@linkurl{#2}{#1}}
\newcommand{\diag}{\hyperlink{targ:diags}{diag}}
\newcommand{\graph}{\hyperlink{targ:graphs}{g}}
\newcommand{\deltaA}{\mathbf{a}}
\newcommand{\deltaB}{\mathbf{b}}
\newcommand{\z}{z}
\newcommand{\s}{s}
\newcommand{\deltaU}{\mathbf{u}}
\begin{document}

\maketitle

\numberwithin{equation}{section}

\begin{abstract}
    Building upon previous $2D$ studies, this research focuses on describing $3D$ tensor renormalisation group (RG) flows for lattice spin systems, such as the Ising model. We present a novel RG map, which operates on tensors with infinite-dimensional legs and does not involve truncations, in contrast to numerical tensor RG maps. To construct this map, we developed new techniques for analysing tensor networks. Our analysis shows that the constructed RG map contracts the region around the tensor $A_*$, corresponding to the high-temperature phase of the $3D$ Ising model. This leads to the iterated RG map convergence in the Hilbert-Schmidt norm to $A_*$ when initialised in the vicinity of $A_*$. This work provides the first steps towards the rigorous understanding of tensor RG maps in $3D$.
\end{abstract}

\tableofcontents

\section{Introduction}

In this paper, we study the tensor network approach to the renormalisation group (RG) in the high-temperature phase of $3D$ lattice models. The transition from lattice models to tensor networks has been discussed in \cite{Kennedy2022a, Kennedy2022, Gilt, Xie}. Section~3 in \cite{Kennedy2022a} is of particular interest. While the main focus in \cite{Kennedy2022a} is on the $2D$ case, methods from Section~3 can be extended to $3D$. In the language of tensor networks, the "high-temperature phase" (or "high-temperature limit") means that all tensors in the network are near the fixed point tensor $A_*$, with a single non-zero tensor element $(A_*)_{000000}=1$.

The main result of our work is formulated in Theorem~\ref{maintheor} provided at the end of Section~\ref{basics}. Roughly speaking, Theorem~\ref{maintheor} states that a transformation of a tensor network (RG map) exists that:
\begin{itemize}
    \item reduces the size of the network;
    \item preserves the physical properties (i.e., it preserves the partition function);
    \item brings the tensors in the network closer to the fixed point tensor $A_*$.
\end{itemize}
Since the existing techniques of rigorous RG analysis \cite{Kennedy2022a},\cite{Kennedy2022} appear to be impossible to generalise to the $3D$ case, we develop a novel technique to prove Theorem~\ref{maintheor}. This technique is called rearrangement RG. It was first proposed by Tom Kennedy for $2D$ tensor networks in his unpublished notes \cite{Kennedy2022b} (reviewed in Appendix~\ref{2Drd}). From these notes, we take some ideas and adapt them to $3D$. Similar techniques were also applied in numerical studies, see the loop-TNR algorithm \cite{PhysRevLett.118.110504}.

Although we work in the high-temperature regime, we hope that, in the future, the methods introduced in this paper will help investigate other parts of the phase diagram. Our optimism is grounded in analytical and numerical advances in applying tensor RG to $2D$ systems. Specifically, it was shown in \cite{Kennedy2022} and \cite{Kennedy2022a} that the particular tensor RG techniques may be applied for rigorous study of high- and low-temperature limits of $2D$ lattice systems. Numerical studies go even further, showing that tensor RG algorithms are capable of yielding critical exponents in close agreement with known values for the $2D$ Ising model (see, e.g. \cite{Gilt},\cite{PhysRevLett.118.110504}).

Regarding 3D studies, to our knowledge, there are no analytical results for 3D tensor RG. On the numerical side, two algorithms, HOTRG \cite{Xie} and Gilt \cite{Gilt}, may be employed for 3D tensor networks. The HOTRG algorithm yielded critical exponents (and other physical quantities) consistent with Monte Carlo simulations and high-temperature expansions of the $3D$ Ising model. However, it has been reported that the critical temperature obtained by HOTRG may converge with increasing bond dimension slower than one might expect based on the results from \cite{Xie}.\footnote{This issue was noted in personal communications with Antoine Tilloy and Clément Delcamp.} Presumably, the issue is that HOTRG performs a coarse-graining step but not a proper RG transformation, i.e., some details of the UV physics survive the coarse-graining.

Some of such UV physics details can be described by so-called corner-double-line (CDL) tensors \cite{Gilt}, \cite{Kennedy2022} and their generalisation to $3D$ discussed in \cite{Gilt} (see Section VII) and \cite{lyu2023essential}. We call corner-multi-line (CML) tensors the CDL tensors and their $3D$ generalisation. The CML tensors carry irrelevant UV information, making retrieving any relevant IR information from the network complicated. For example, CML tensors complicate the identification of the IR fixed point, as there are equivalent tensors that differ only by CML contributions. Additionally, irrelevant UV information consumes valuable computational resources. If this issue is not addressed, the tensor network will mainly include UV information after several RG iterations. The filtration of UV information/CML tensors, called disentanglement, is an essential part of any proper RG transformation.

The Gilt algorithm performs a proper RG transformation of a tensor network. It appears to offer adequate tools for the disentanglement in various dimensions. It demonstrates impressive results in $2D$, yielding the scaling dimensions of the critical $2D$ Ising model with good precision (see Table I in \cite{Gilt}). However, no $3D$ results have been published yet. The reason is that even though Gilt has a quite low computational cost\footnote{The Gilt's computational cost in $3D$ is $O(\chi^{12})$, where $\chi$ is the bond dimension. This is close to $O(\chi^{11})$ required to contract two $6$-leg tensors.}, it is still too high for appropriate numerical computations in $3D$ as for the same accuracy as in $2D$, these require higher bond dimensions (see section VII in \cite{Gilt}).

There are currently few papers on the topic of rigorous tensor RG approach. Because of this, we will keep our discussion self-contained and provide all the necessary definitions. However, a curious reader may find some additional background information (and interesting $2D$ results) in \cite{Kennedy2022a},\cite{Kennedy2022}. We also refer the reader to \cite{TRG, TNR, PhysRevLett.118.110504, EntRen, Gilt} for more details on the numerical tensor RG.

This paper is organised as follows. In Section~\ref{basics}, we provide the basic definitions and formulate the main theorem. In Section~\ref{cornstr}, we consider the simplest RG transformations and use these to reduce the Theorem~\ref{maintheor} to a simpler statement. In Sections~\ref{rRG},\ref{prf}, we give a detailed proof of Theorem~\ref{maintheor}. In Section~\ref{finalremarks}, we make final remarks and formulate a few problems for the future.

\section{Basic definitions}\label{basics}

Many definitions discussed in this and the following section were introduced in \cite{Kennedy2022a}. We provide them here for convenience.

\begin{dfn}\label{tensd}
    Let $n \geq 1$ be an integer and $\mathcal{I}$ a finite or countable nonempty set (called an index set). An $n$\textbf{-leg tensor} $A$ with legs indexed by $\mathcal{I}$ is defined as a map from $\mathcal{I}^n$ to $\mathbb{C}$: $(i_1,\ldots,i_n) \mapsto A_{i_1,\ldots,i_n}$.
\end{dfn}
An $n$-leg tensor will also be called "a tensor with $n$ legs/indices" or "$n$-tensor".

\begin{dfn}\label{HSn}
    The \textbf{Hilbert-Schmidt norm} of an $n$-leg tensor $A$ is defined as
    \begin{equation}
        \|A\| = \sqrt{\sum_{i_1,\ldots,i_n \in \mathcal{I}} |A_{i_1,\ldots, i_n}|^2}.
    \end{equation}
\end{dfn}
This is the norm used most often throughout the work. It will be referred to simply as "norm".

\begin{dfn}\label{Def_of_sum}
    For two $n$-leg tensors $A$ and $B$ indexed by $\mathcal{I}_A$ and $\mathcal{I}_B$ respectively we define the sum as follows:

    \begin{itemize}
        \item If $\mathcal{I}_A=\mathcal{I}_B$, then $A+B$ is defined as the usual element-wise sum.
        \item If $\mathcal{I}_A \neq \mathcal{I}_B$, we first extend each tensor by zeros to a larger tensor indexed by $\mathcal{I}_A \cup \mathcal{I}_B$, and then take the element-wise sum of extended tensors.
    \end{itemize}
\end{dfn}

\begin{rmk}
    The above definitions naturally generalize to cases where legs have distinct index sets.
\end{rmk}

In graphical notation, a 6-leg tensor is represented as six lines meeting at a point in 3D space. We will label tensors with small figures at the midpoint or their names nearby. Here are two examples:
\begin{equation}\label{solid}
    A = \includegraphics[valign=c]{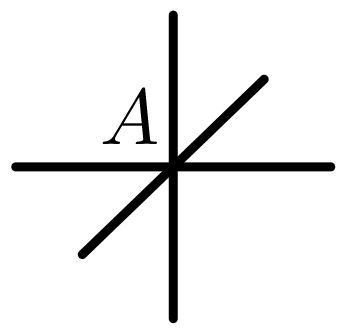}, \ \deltaA = \includegraphics[valign=c]{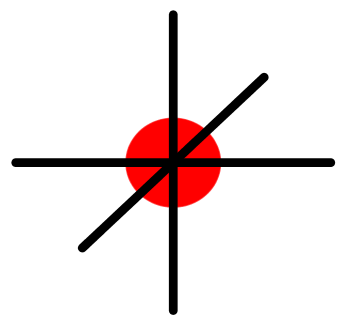}.
\end{equation}
No symmetry is assumed for $n$-leg tensors. Thus it is necessary to fix the correspondence between the indices $i_1,\ldots,i_n$ of a tensor component written in the standard notation: $A_{i_1 i_2 \ldots i_n}$, and the legs of the tensor component in the graphical notation. For a $6$-leg tensor, this correspondence is fixed by the following convention:
\begin{equation}\label{solidel}
    A_{i_1 j_1 i_2 j_2 i_3 j_3}=\includegraphics[valign=c,scale=0.35]{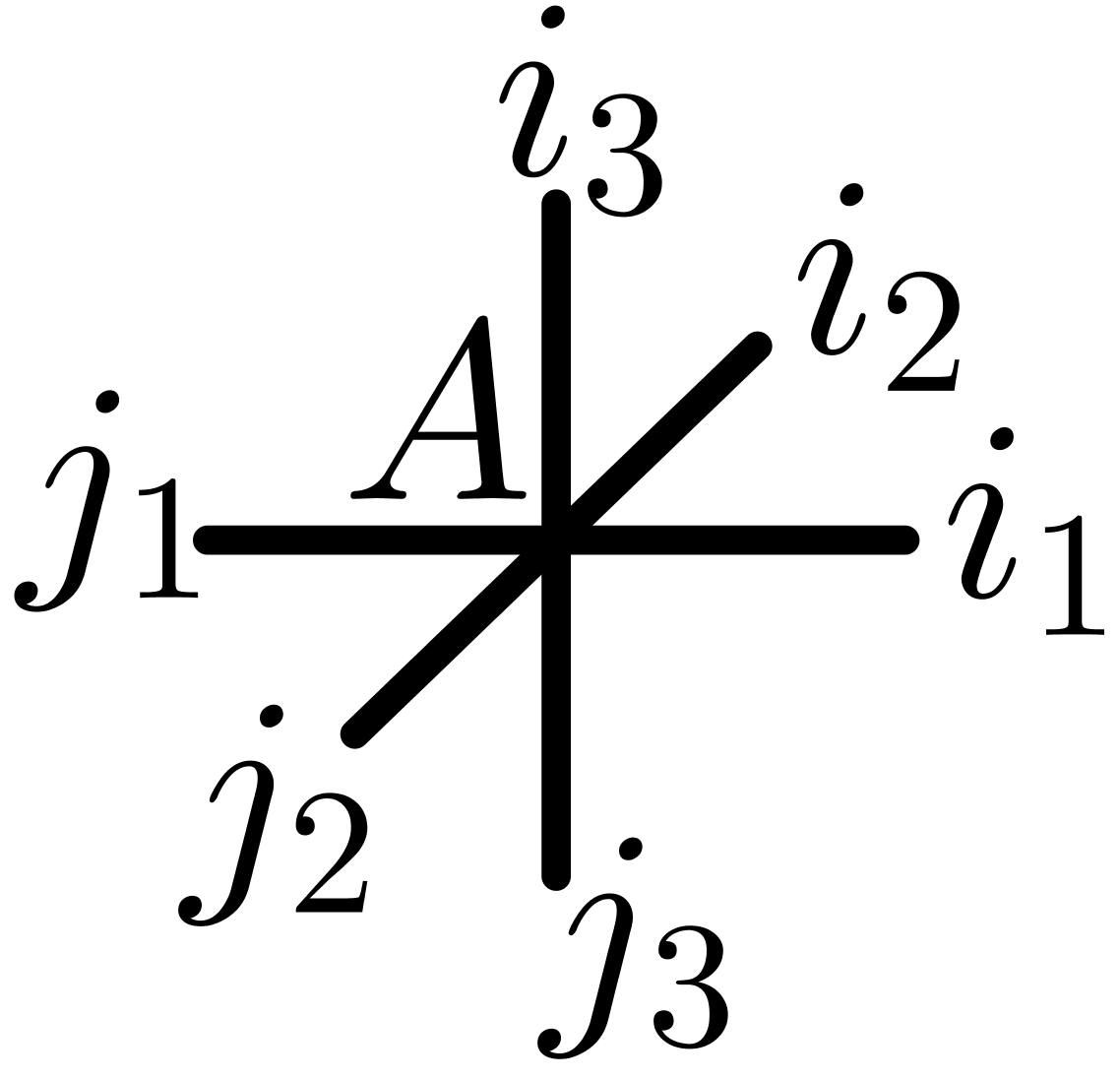}.
\end{equation}

The contraction of tensors is a necessary notion in this paper, which can be explained through the following example. Let $A$ and $B$ be $6$-tensors with the same index space $\mathcal{I}$. By contracting the first leg of $A$ with the second leg of $B$ it is possible to define the $11$-tensor $C$ as:
\begin{equation}\label{contraction}
    C_{i_1, \ldots, i_{11}} = \sum_{j\in \mathcal{I}} A_{j i_1 \ldots i_{5}} B_{ i_6 j i_7 \ldots i_{11}}.
\end{equation}
In the graphical notation, contraction is represented by glueing the corresponding legs of tensors:
\begin{equation}\label{contractionG}
    C= \includegraphics[valign=c, scale=0.66]{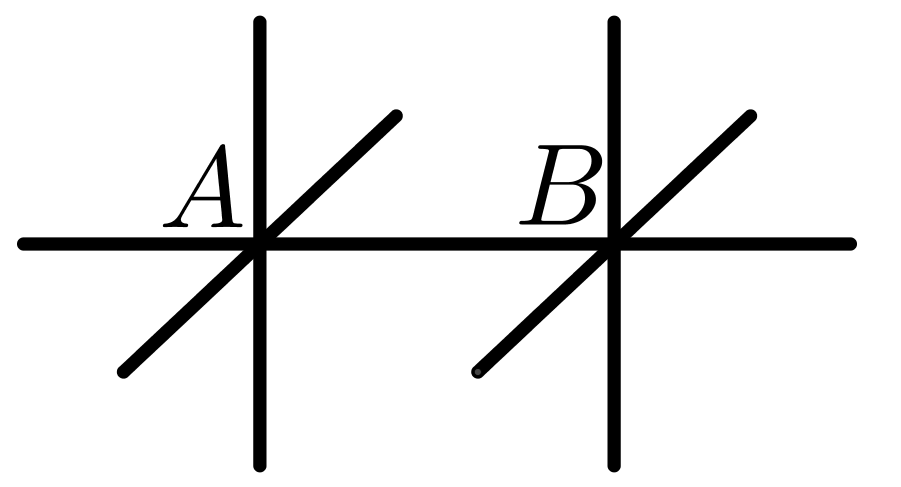}.
\end{equation}
Two contracted legs will be referred to as a \textbf{bond}. The contraction is defined only for tensors with the same index sets for each pair of contracted legs.

A \textbf{tensor network} is defined as a graph, every vertex of which is associated with a tensor and every edge with contracted legs. This association allows evaluation of the network by performing all contractions. This evaluation results in a number called the \textbf{partition function} of the tensor network.

Partition functions of many lattice models can be represented in the form of a tensor network partition function (see \cite{Kennedy2022}, \cite{Kennedy2022a}). Hence, the term "partition function" is suitable in this case.

\begin{figure}
    \centering
    \includegraphics[scale=1]{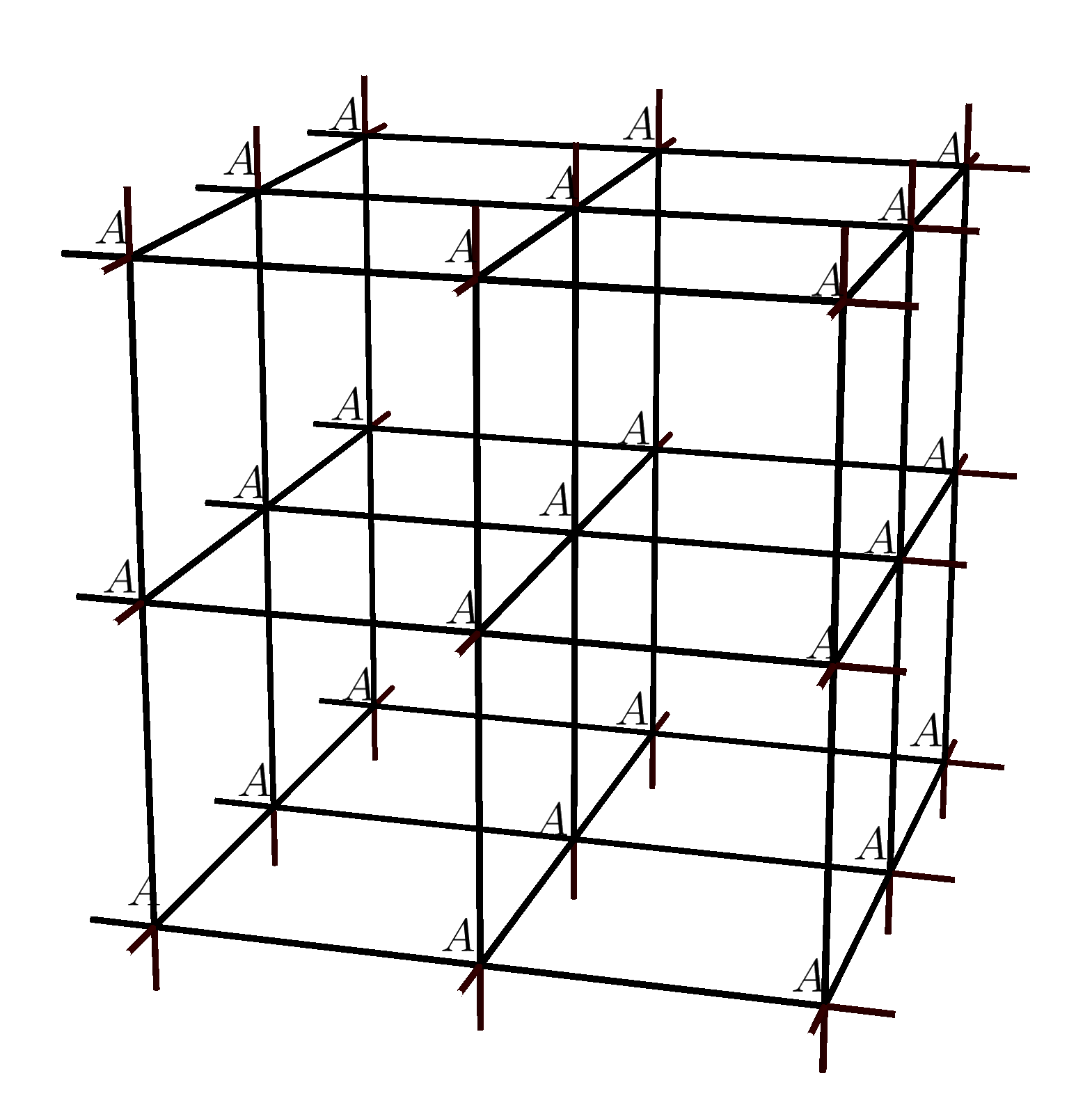}
    \caption{A cubic tensor network of size $3 \times 3 \times 3$. Boundary conditions are assumed to be periodic, i.e., legs of the tensors at opposite sides of the boundary are contracted}
    \label{fig:tensornetwork}
\end{figure}

In this work, we consider cubic networks with the same $6$-leg tensor at each vertex, see \figref{fig:tensornetwork}. A cubic network with tensor $A$ will be referred to as the network built of $A$. The partition function of such a network of size $N \times N \times N $ built of $A$ is denoted as $Z(A, N)$.

We are interested in tensors with a countable index set. We denote by $\mathcal{H}$ as the complex Hilbert space of all $6$-leg tensors of finite norm indexed by $\N_0=\N \cup \{0\}$, and by $\tilde{\mathcal{H}}$ as the subspace of $\mathcal{H}$ which consists of tensors $A$ satisfying the condition
\begin{equation}\label{devin0}
    A_{000000}=0.
\end{equation}

The central role will be played by the tensor $A_*$, which is called the \textbf{fixed point tensor} and defined as follows. It is the element of $\mathcal{H}$ with only one nonzero tensor component given by:
\begin{equation}\label{fipin0}
    (A_*)_{000000}=1.
\end{equation}
The term "fixed point" will be clarified later. It comes from the fact that there is a network transformation which maps $A_*$ to itself.

Let $U_\epsilon$ be the ball $\|A\|< \epsilon$ in $\tilde{\mathcal{H}}$. We study here the following domain of tensors:
\begin{equation}\label{ATens}
    U_\epsilon(A_*) = \{A_*+\deltaA \mid  \deltaA \in U_\epsilon\}.
\end{equation}

\begin{rmk}\label{strt}
    For any $6$-leg tensor $A \in \mathcal{H}$ with nonzero $000000$ component, there exists $\deltaA \in \tilde{\mathcal{H}}$ such that:
    \begin{equation}\label{scalingA}
        A=\z  (A_* + \deltaA),
    \end{equation}
    where $\z = A_{000000}$. Equation \eqref{scalingA} will be referred to as the \textbf{normal decomposition} of $A$. Tensor $\deltaA$ will be referred to as \textbf{normal deviation} of $A$, and the factor $\z$ as \textbf{normal factor} of A.
\end{rmk}

\begin{rmk}\label{factorrem}
    The normal factor can always be factored out from the partition function:
    \begin{equation}\label{factorout}
        Z(A,N) = \z^{N^3}Z(A_*+\deltaA, N).
    \end{equation}
    That makes $U_\epsilon (A_*)$ worth studying in our context.
\end{rmk}

Let $\mathcal{I}$ and $\mathcal{J}$ be some index sets. Let $\chi_{\mathcal{J}}$ be the characteristic function of $\mathcal{J}$ in $\mathcal{I}$. Its domain is $\mathcal{I}$ and it is defined by the following formula:
\begin{equation}
    \chi_{\mathcal{J}}(i) = \begin{cases}
        1, & \text{if } i \in \mathcal{J} \cap \mathcal{I}    \\
        0, & \text{if } i \notin \mathcal{J} \cap \mathcal{I}
    \end{cases}.
\end{equation}

\begin{dfn}\label{restr}
    Let $A$ be an $n$-leg tensor indexed by $\mathcal{I}$. The \textbf{restriction} of $A$ with the first leg restricted to $\mathcal{J}$ is defined as the $n$-leg tensor $\bar{A}$ indexed by $\mathcal{I}$ with tensor elements given by the following formula:
    \begin{equation}\label{restrd}
        \bar{A}_{i_1, \ldots, i_n} = \chi_{\mathcal{J}} (i_1) A_{i_1,\ldots,i_n}.
    \end{equation}
\end{dfn}

\begin{rmk}
    This definition admits natural generalisation for an arbitrary set of restricted legs.
\end{rmk}

In graphical notation, the restriction is indicated by either placing a $\mathcal{J}$ label near the relevant leg, as shown for $\bar{A}$ in \eqref{restrd_example} below, or by altering the line style, such as using dashed lines or ticks.
\begin{equation}\label{restrd_example}
    \bar{A}_{i_1 i_2 \ldots i_n} = \includegraphics[scale=0.4,valign=c]{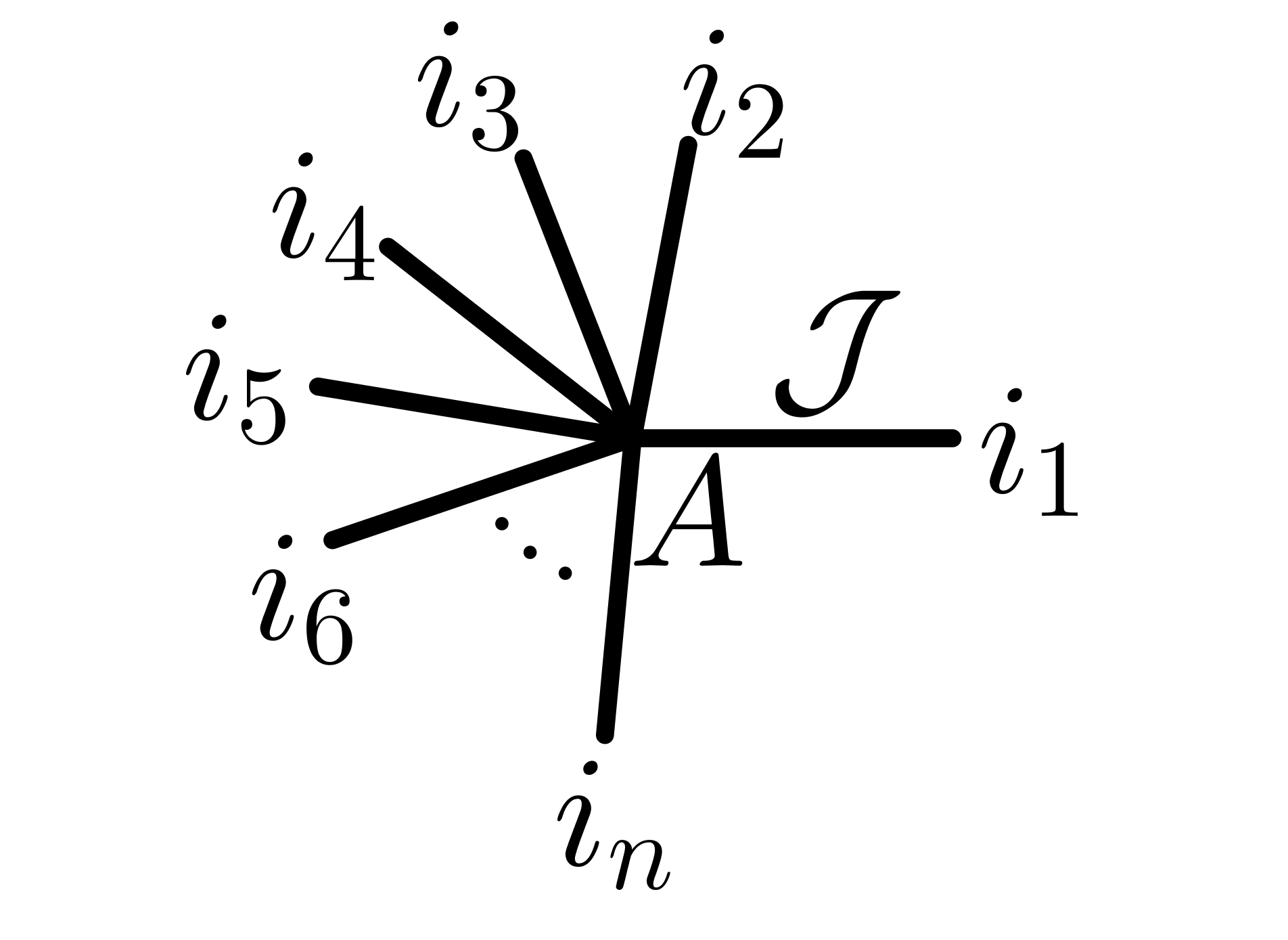};
\end{equation}
Restrictions to $\{0\}$ are represented by a \textbf{dashed line}. For example, equation \eqref{devin0} in this notation looks as following:
\begin{equation}
    \includegraphics[valign=c]{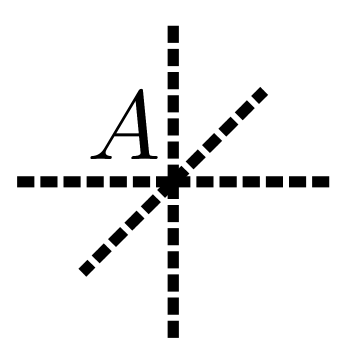} =0.
\end{equation}

Note that $A_*$ satisfies the equation:
\begin{equation}\label{Astrestr}
    \includegraphics[valign=c]{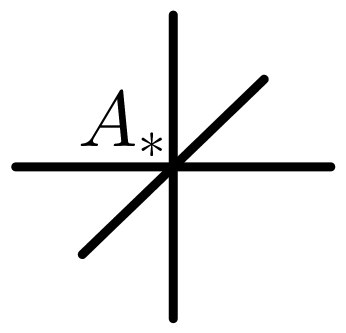}=\includegraphics[valign=c]{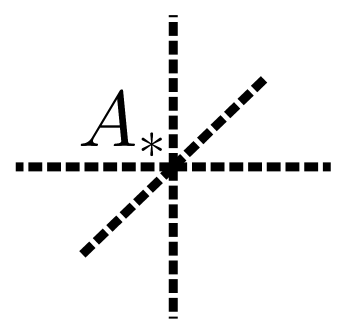},
\end{equation}
because all other components but $(A_*)_{000000}$ vanish.\footnote{Equation \eqref{Astrestr} can be written in the standard notation as $
        (A_*)_{i_1 i_2 \ldots i_6 }=\prod_{k=1}^6 \chi_{\{0\}}(i_k) \,
        (A_*)_{i_1 i_2 \ldots i_6 }. $}

For this reason, we draw $A_*$ with all legs dashed. Moreover, the label $A_*$ is dropped from the graphical notation due to its special role and frequent appearance in equations. Thus, $A_*$ will be depicted as:
\begin{equation}\label{FixedP}
    A_*=\includegraphics[valign=c]{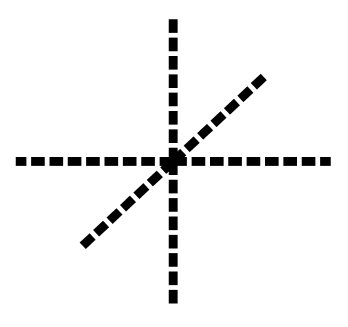}.
\end{equation}

\begin{dfn}\label{RGmap}
    Let $\s$ be a positive integer. An \textbf{RG map} with a lattice rescaling factor $\s$ is defined to be a map of $6$-leg tensors $A \mapsto A'$ with the property that for any $N$ divisible by $\s$:
    \begin{equation}\label{RG}
        Z(A,N)=Z(A', N/\s).
    \end{equation}
\end{dfn}

The main result of this paper is the following theorem.

\begin{trm}\label{maintheor}
    For any sufficiently small $\epsilon >0$ there exists an RG map $A \mapsto A'$ defined on $U_\epsilon (A_*)$ with the lattice rescaling factor $\s=4$, such that:
    \begin{subequations}\label{maincond}
        \begin{align}
             & \deltaA' \text{ and } \z' \text{(normal deviation and factor of $A'$) depend analytically on }\deltaA;\label{analit} \\
             & \|\deltaA'\| < t \epsilon^{h}.\label{contr}
        \end{align}
    \end{subequations}
    Here $t$ and $h>1$ are universal constants independent of $A$ and $\epsilon$.
\end{trm}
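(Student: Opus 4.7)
The plan is to construct the RG map in two phases: an elementary $4\times 4\times 4$ block contraction that merges $64$ copies of $A$ into one effective tensor (the job of Section~\ref{cornstr}), followed by the rearrangement RG of Sections~\ref{rRG}--\ref{prf} that reshapes this effective tensor into the form $\z'(A_* + \deltaA')$. Throughout I would work with the normal decomposition $A = \z(A_* + \deltaA)$ of Remark~\ref{strt}: Remark~\ref{factorrem} lets a factor $\z^{64}$ emerge from each block and feed into $\z'$, so the analytic heart of \eqref{maincond} reduces to controlling the induced map $\deltaA \mapsto \deltaA'$.

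The core tool is a multilinear expansion. Writing each occurrence of $A$ in the block as $A_* + \deltaA$ and expanding gives a sum of $2^{64}$ diagrams, indexed by the subset $S$ of block vertices that receive a $\deltaA$ insertion. The $S = \emptyset$ diagram is pure $A_*$; by \eqref{Astrestr} every internal and external bond is pinned to $0$, so it contributes a scalar multiple of the new fixed point, destined for $\z'$. For $|S|=1$, if the insertion lies strictly interior to the block then all six of its legs are pinned to $0$ by the surrounding $A_*$-tensors and the diagram vanishes by \eqref{devin0}; if it lies on the boundary, only its internal legs are pinned and a genuinely linear-in-$\deltaA$ contribution survives. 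Diagrams with $|S| \ge 2$ are controlled in Hilbert--Schmidt norm by sub-multiplicativity, $\|(\text{diagram})\| \le C^{|S|}\|\deltaA\|^{|S|}$, so they sum to $O(\epsilon^2)$.

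The role of the rearrangement RG is precisely to dispose of the dangerous linear, boundary-vertex diagrams. Following the spirit of Kennedy's $2D$ scheme reviewed in Appendix~\ref{2Drd}, I would split each site tensor along well-chosen cuts into a few low-rank summands, reorder the contraction pattern inside the block, and use \eqref{Astrestr} to collapse the resulting $A_*$-chains; with the right splitting the boundary $|S|=1$ contributions either recombine into scalar factors absorbed into $\z'$ or cancel against compensating pieces produced from the $S = \emptyset$ diagram under the rearranged contraction. What remains is a tensor $B$ whose deviation from $A_*$ starts at order $\epsilon^2$. Reading off $\z' = B_{000000}$ and $\deltaA' = B/\z' - A_*$ then yields \eqref{contr} with $h=2$, once the combinatorial factors $\binom{64}{k}$ are absorbed into the universal constant $t$; analyticity \eqref{analit} is automatic because the whole construction writes $B$ as a polynomial of degree at most $64$ in the entries of $\deltaA$ and $\z' = 1 + O(\epsilon^2)$ stays bounded away from zero on $U_\epsilon(A_*)$.

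The principal obstacle is the three-dimensional rearrangement itself. In $2D$, Kennedy's splitting exploits planar topology: loops of $A_*$ can be pinched off and the linear boundary diagrams reassemble cleanly into scalar counterterms. In $3D$ no such planar device is available; the $4^3$ block has interior, face, edge, and corner vertices with qualitatively different external-leg structures, and a naive rearrangement either violates \eqref{RG} or produces uncontrolled combinatorial blow-up over the countable index set $\N_0$. A workable scheme must simultaneously preserve the partition function, keep the number of decomposed pieces bounded uniformly in the bond dimension, and systematically cancel every first-order contribution. I expect this geometric design, rather than the analytic estimate above, to occupy most of Sections~\ref{cornstr}--\ref{prf}; the reduction to a simpler corner-type statement announced in Section~\ref{cornstr} presumably pinpoints which boundary configurations must be handled, after which the rearrangement supplies the tailored cancellation.
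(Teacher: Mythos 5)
Your high-level picture is in the right spirit — normal decomposition, multilinear expansion in $\deltaA$, the pure-$A_*$ term feeding $\z'$, interior single insertions killed by \eqref{devin0} and \eqref{Astrestr}, and then some rearrangement to dispose of the surviving boundary linear terms — and you correctly identify that the $3D$ rearrangement is where all the work is. But the proposal leaves that rearrangement entirely unsupplied, and it is not a detail: it is essentially the whole content of Sections~\ref{rRG}--\ref{prf}. The paper does not use a single $4^3$ block. It first performs a gauge transformation plus a $2^3$ simple RG step to manufacture the corner structure \eqref{corn}, then a second $s=2$ step whose crux is Lemma~\ref{mainlem}: one constructs tensors $B_v = A_* + \deltaB_v$ reproducing the $2^3$ block $T$ as $T'$ (Eq.~\eqref{maineq}) and then \emph{regroups} the $B_v$'s into the \emph{shifted} block $U$ (Eq.~\eqref{Utens}), so that the legs that were $T'$-external become $U$-internal. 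The decisive estimates are not sub-multiplicativity on diagrams with $|S|\ge 2$, but the two properties \eqref{P1}--\eqref{P2}: $\deltaB_v = O(\epsilon^a)$ with $a>1/2$, and $\bar{\deltaB_v}=O(\epsilon^b)$ with $b>1$ where the bar restricts the $T'$-external legs to $\{0\}$. That second bound is precisely what makes $\deltaU_1 = O(\epsilon^b)$ with $b>1$ in the shifted block; there is no mechanism in your proposal that produces it, and planar pinching tricks do not carry over (cf.\ the discussion around \figref{fig:entShapes}).

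The quantitative claim $h=2$ also does not match what the construction can give, and for a structural reason worth flagging: some of the counterterm tensors $\deltaB_v^{\sigma_q}$ (the ones built in Section~\ref{CON2} to handle sourceless templates) have tensor elements equal to $\epsilon^{N_q/(N_q-1)}$ that do \emph{not} vanish when $\deltaA\to 0$. As noted in Section~\ref{finalremarks}, this means $A_*$ is not even a fixed point of the analytic map $R_\epsilon$, and the achievable exponent is $h=\min(2a,b)$ maximised under $a\le 1-7c$, $b\le 1+c$, $c<1/14$, giving $h=16/15$, not $2$. Finally, the analyticity argument by ``polynomial of degree $64$'' is too crude: the gauge operator $G$ of Lemma~\ref{gaugelem} is a convergent power series in $\deltaA$, the reweighting factors depend explicitly on $\epsilon$, and the output tensor lives on a genuinely enlarged index set $\mathcal{D}_0$; analyticity in \eqref{analit} is established by composing these pieces, not by polynomiality. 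In short, the architecture you sketched is compatible with the paper's, but the core lemma (Lemma~\ref{mainlem}) and its template/source/sink combinatorics are the actual proof, and the exponent you announced is incorrect.
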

\begin{rmk}
    In Section~\ref{theexponent}, we will show that our construction yields $h=16/15$.
\end{rmk}
\begin{rmk}
    In addition, it will be shown that $\z' = 1+O(\epsilon^{h})$.
\end{rmk}

The proof of this theorem is divided into three following sections.

\section{Corner structure}\label{cornstr}

Let us provide two crucial examples of RG maps. Maps from both examples are defined on $\mathcal{H}$ as it is the case of interest in the context of Theorem \ref{maintheor}.

The first example is a gauge transformation of a tensor. It is an RG map with lattice rescaling factor~$1$.

Let $G$ be an invertible bounded operator on $l_2(\N_0)$ whose inverse $G^{-1}$ is also bounded. In graphical notation, operators $G$ and $G^{-1}$ are denoted as follows:
\begin{equation}
    G_{ij}=j\includegraphics[scale=1,valign=c]{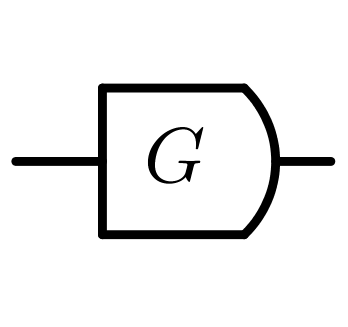}i, \ (G^{-1})_{ij}=j\includegraphics[scale=1,valign=c]{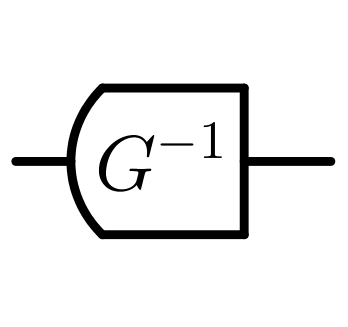}i.
\end{equation}
Let $A$ be an element of $\mathcal{H}$ and $A'$ be the tensor defined as follows:
\begin{equation}\label{gaugex}
    A'_{i_1 j_1 i_2 j_2 i_3 j_3} = \sum_{i'_1, j'_1 \in \N_0} G_{i_1 i'_1} A_{i'_1 j'_1 i_2 j_2 i_3 j_3 } (G^{-1})_{j'_1 j_1},
\end{equation}
or in the graphical notation:
\begin{equation}
    A'= \includegraphics[scale=1,valign=c]{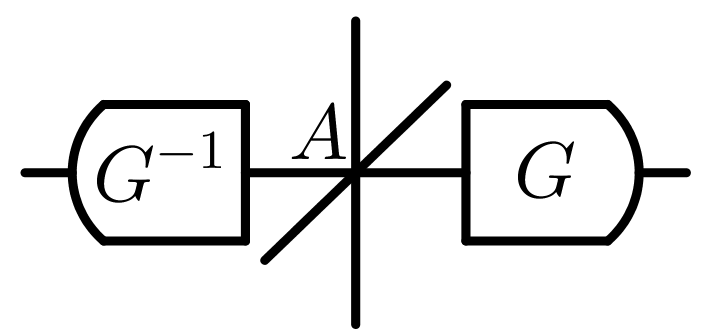}.
\end{equation}
The map $A \mapsto A'$ is an example of a gauge transformation.

Note that operators $G$ and $G^{-1}$ cancel each other in a network built of $A'$ as it is shown in this example:
\begin{equation}
    \includegraphics[scale=1,valign=c]{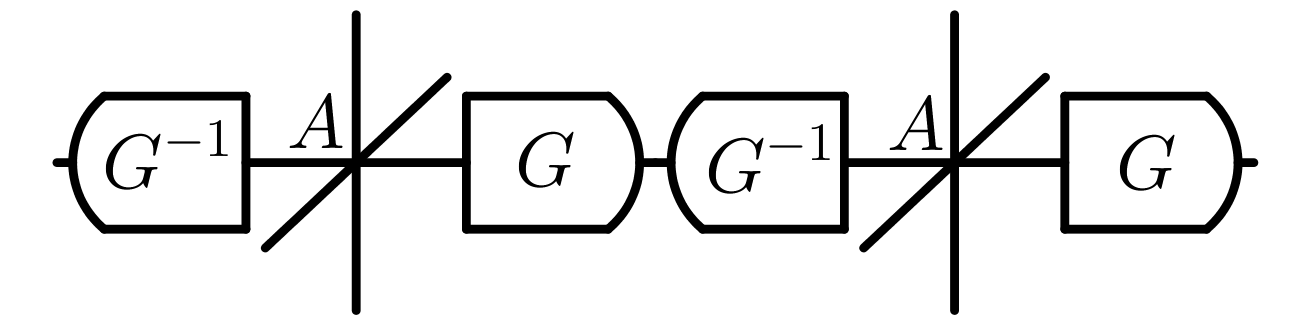}  =  \includegraphics[scale=1,valign=c]{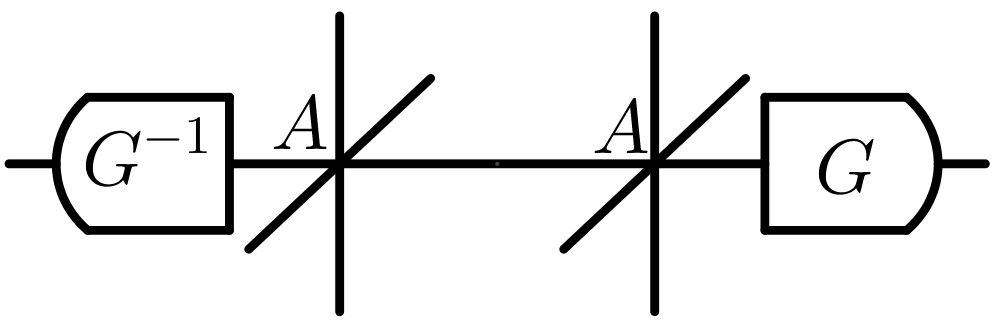} .
\end{equation}
Hence, the partition function remains unchanged after this transformation.

Note that the norm of $A'$ is finite and bounded by $\|A\| \|G\|_{op} \|G^{-1}\|_{op}$, where $\| \ \|_{op}$ is the operator norm.

Transformation analogous to \eqref{gaugex} can be performed along other directions, i.e., using indices $i_2 j_2$ or $i_3 j_3$ to contract $A$ with $G$ and $G^{-1}$. A general \textbf{gauge transformation} is defined as a composition of such transformations along all three directions (not necessarily with the same $G$).

The second example of an RG map is a simple RG step.

Let $A$ be an element of $\mathcal{H}$ and $T$ be the tensor defined by contracting $8$ copies of $A$ as follows:
\begin{equation}\label{TTens}
    T=\includegraphics[valign=c]{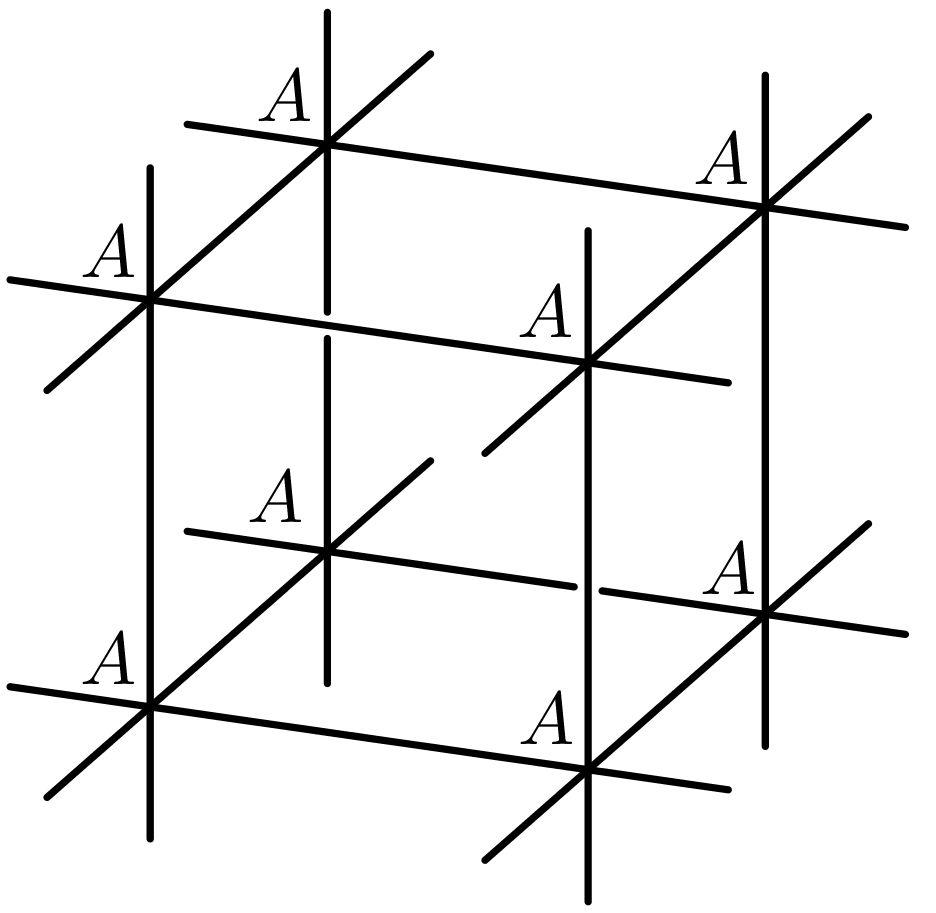}.
\end{equation}
Note that $T$ has finite norm (see Proposition 2.4 in \cite{Kennedy2022a}).

The tensor $T$ has $24$ legs, which can be naturally divided into $6$ groups of $4$, one group per face of the cube. One such group is highlighted here:
\begin{equation}\label{Thigh}
    \includegraphics[valign=c]{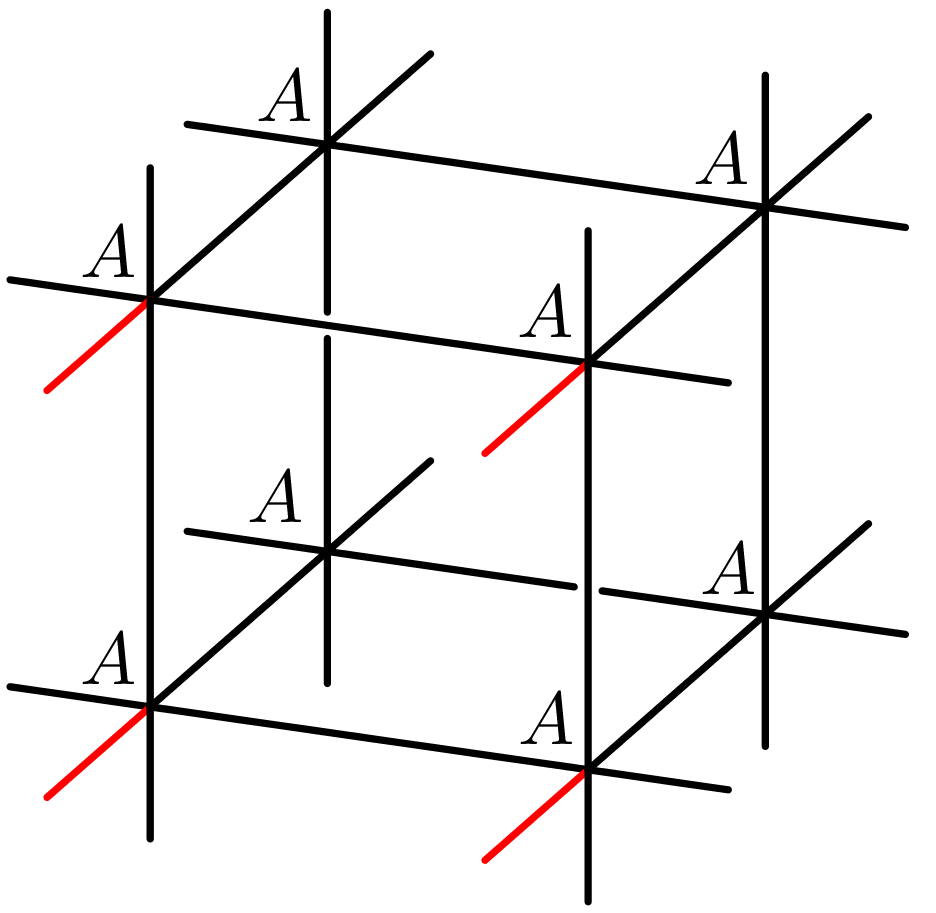}.
\end{equation}
Therefore, $T$ may be viewed as a $6$-leg tensor indexed by $\mathcal{I}=\N_0^4$. This is an example of a \textbf{leg grouping} operation, using terminology from \cite{Kennedy2022a}.

Clearly, $Z(T,N/2)=Z(A,N)$. Hence, the map $A \mapsto T$ is an RG map with lattice rescaling factor $2$. This map is called the \textbf{simple RG step}.

In what follows, it will be convenient to represent $T$ as a tensor indexed by $\N_0$ and not $\N_0^4$. This is achieved by using a \textbf{reindexing} operation (terminology from \cite{Kennedy2022a}). Let $\mathcal{I}_1$ and $\mathcal{I}_2$ be two index sets of the same cardinality, and let $\rho$ be a bijection from $\mathcal{I}_1$ to $\mathcal{I}_2$. Let $A$ be an $n$-leg tensor indexed by $\mathcal{I}_2$. Then, the $n$-leg tensor $A'$ indexed by $\mathcal{I}_1$, defined by:
\begin{equation}\label{reindex}
    A'_{i_1, \ldots, i_n} = A_{\rho(i_1) \ldots \rho(i_n)},
\end{equation}
is said to be obtained from $A$ by reindexing. Note that $A'$ has the same norm as $A$ and that the partition function remains unchanged after reindexing.

Consider again the tensor $T$ defined by \eqref{TTens}. As was discussed, leg grouping allows viewing $T$ as a 6-leg tensor indexed by $\N_0^4$. Reindexing allows a transformation of $T$ into a $6$-leg tensor $A'$ indexed by $\N_0$. Note that the map $\rho$ acting from $\N_0$ to $\N_0^4$ is not unique. We leave $\rho$ arbitrary up to the following condition:
\begin{equation}\label{0t0}
    \rho(0)=(0,0,0,0).
\end{equation}
The condition \eqref{0t0} plays the following role. It implies that
\begin{equation}\label{fpexpl}
    A'_{000000} = T_{(0,0,0,0) \ldots (0,0,0,0)}.
\end{equation}
When $A=A_*$, the r.h.s. of \eqref{fpexpl} is the only nonzero tensor component of $T$ and is equal to $1$. Thus, \eqref{0t0} guarantees that if $A=A_*$, then $A' = A_*$. In other words, \eqref{0t0} guarantees that $A_*$ is a fixed point of the simple RG step followed by reindexing.

Applying a combination of a gauge transformation and the simple RG step followed by reindexing to a tensor from $U_\epsilon (A_*)$, one can produce a tensor whose normal deviation has a corner structure, which is defined in Definition~\ref{corndef}. This property is essential for the proof of Theorem \ref{maintheor}.

\begin{rmk}
    In what follows, when a tensor is said to be $O(\epsilon^p)$, this means that its norm is bounded by $\kappa \epsilon^p$, where $\kappa$ is some universal constant independent of parameters defining the tensor.  For example, any tensor $C \in U_\epsilon$ is $O(\epsilon)$ with $\kappa =1$.
\end{rmk}

\begin{dfn}\label{corndef}
    A $6$-tensor $C$ belonging to $\tilde{\mathcal{H}}$ is said to have the \textbf{corner structure} if it is $O(\epsilon)$ and satisfies the following equation:
    \begin{equation}\label{corn}
        \includegraphics[valign=c]{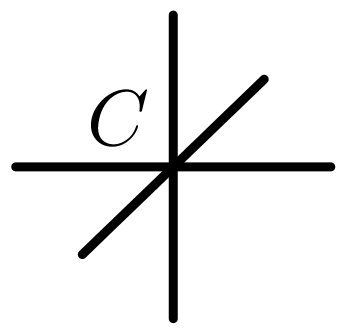}= \includegraphics[valign=c]{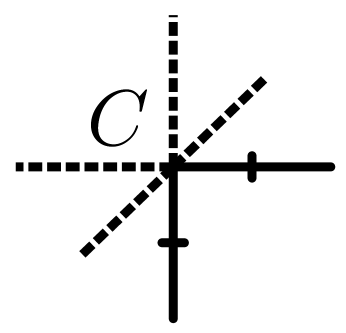}\!+\!11 \  +\includegraphics[valign=c]{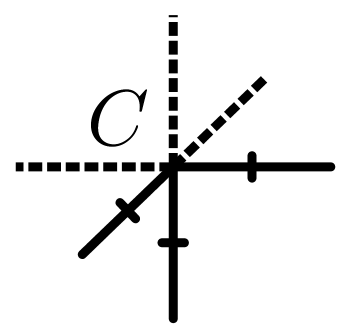}\!+\!7 \ + O(\epsilon^2).
    \end{equation}
    Here and below \textbf{a ticked line} represents the restriction to $\N$.
\end{dfn}

The notation "$+n$" refers to other restrictions of $C$ described by the patterns of ticked and dashed lines related to the shown pattern by rotations (similar notation is used in \cite{Kennedy2022a}). Eq.~\eqref{corn} means that up to the second order in $\epsilon$, the tensor $C$ is equal to its restrictions with only $2$ or $3$ legs restricted to $\N$ and all other legs restricted to $\{0\}$, with the additional condition that none of the legs restricted to $\N$ are opposite to each other.

The two following lemmas provide further details about the transformations which produce a tensor whose normal deviation has the corner structure.

\begin{lem}\label{gaugelem}
    There exists $\epsilon>0$ and a gauge transformation $A \mapsto A'$  defined on $U_{\epsilon}(A_*)$ and having the following properties:
    \begin{subequations}
        \begin{align}
             & \z' = 1+ O(\deltaA^2),  \text{ where $\z'$ is the normal factor of $A'$;} \label{nuanalit}                                                                  \\
             & \deltaA' = O(\deltaA),  \text{where $\deltaA'$ is the normal deviation of $A'$};                                                                            \\
             & \includegraphics[valign=c]{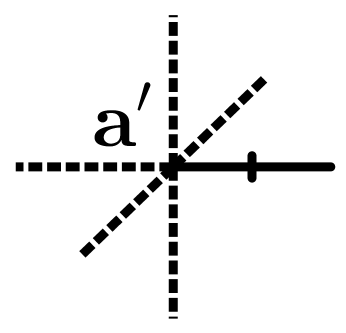}, \includegraphics[valign=c]{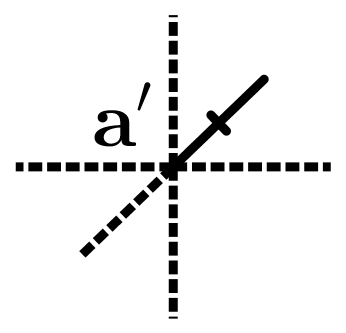}, \includegraphics[valign=c]{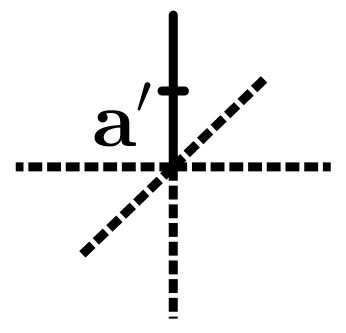}, \ldots  =  O(\deltaA^2) \label{oneleg}.
        \end{align}
    \end{subequations}
\end{lem}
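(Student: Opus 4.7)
The plan is to construct three gauge operators $G_1, G_2, G_3$, one per axis, each of the form $G_k = I + g_k$ with a small operator $g_k$ determined by the data of $\deltaA$, so that their composition cancels all six one-leg restrictions of $\deltaA'$ displayed in \eqref{oneleg} to first order in $\deltaA$.

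First I would analyse the effect of a single-axis gauge on its two one-leg restrictions at leading order. Writing $A = A_* + \deltaA$ and expanding \eqref{gaugex} to first order in $g_1$ and $\deltaA$, the only first-order contribution to $A'_{i00000}$ (with $i\in\N$) is $(g_1)_{i0}\cdot (A_*)_{000000} = (g_1)_{i0}$, since the competing terms pair a $g_1$-entry with a $\deltaA$-entry and are hence $O(\deltaA^2)$; the analogous first-order contribution to $A'_{0i0000}$ is $-(g_1)_{0i}$. Setting
\begin{equation*}
(g_1)_{i0} := -\deltaA_{i00000}, \qquad (g_1)_{0i} := \deltaA_{0i0000} \quad (i \in \N),
\end{equation*}
and $(g_1)_{ij} := 0$ otherwise (in particular $(g_1)_{00}=0$), kills the two $x$-axis one-leg restrictions up to $O(\deltaA^2)$. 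I then define $g_2, g_3$ by the same recipe applied to the one-leg components along the $y$- and $z$-axes, and take $A' := (G_3 \circ G_2 \circ G_1)(A)$.

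Three verifications remain. (i) Each $g_k$ is supported on one row and one column whose $l_2$-norms are bounded by $\|\deltaA\|$, so $\|g_k\|_{\mathrm{op}} \le 2\|\deltaA\| = O(\epsilon)$; for sufficiently small $\epsilon$, $G_k^{-1}$ is given by a convergent Neumann series, and $A \mapsto A'$ is a bona fide bounded gauge transformation. (ii) Applying $G_2$ after $G_1$ does not resurrect the $x$-axis one-leg restrictions at first order, because any new first-order correction to such a component involves a two-leg restriction of the intermediate tensor along the $x$- and $y$-axes, which is already $O(\deltaA)$, multiplied by a $g_2$-entry, yielding $O(\deltaA^2)$; the same argument handles $G_3$. (iii) For the normal factor, $A'_{000000} = 1 + \deltaA_{000000} + (\text{gauge corrections})$; the first-order gauge corrections vanish precisely because $(g_k)_{00}=0$, while $\deltaA_{000000}=0$ by $\deltaA \in \tilde{\mathcal{H}}$, so $\z' = 1 + O(\deltaA^2)$. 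Analyticity of $\z'$ and $\deltaA'$ in $\deltaA$ follows because each $g_k$ is linear in $\deltaA$, the Neumann series for $G_k^{-1}$ is an analytic function of $g_k$, and $A \mapsto G_k A G_k^{-1}$ is polynomial in $G_k, G_k^{-1}, A$; dividing by $\z'\neq 0$ preserves analyticity.

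The main obstacle I anticipate is bookkeeping rather than a conceptual hurdle: one must check systematically that composing gauges along three different axes does not spoil the first-order cancellation of any of the six one-leg restrictions, and that the operator-norm, invertibility and analyticity estimates for $G_k, G_k^{-1}$ are uniform over $U_\epsilon(A_*)$. The rank-at-most-two structure of each $g_k$ is what makes these estimates clean and the cross-axis interference manifestly second order.
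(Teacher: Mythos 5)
The paper does not actually prove Lemma~\ref{gaugelem}: it explicitly omits the proof, citing Propositions 2.1 and 2.2 of \cite{Kennedy2022} for the $2D$ analogue and remarking that the $3D$ generalisation is straightforward (with $O(\epsilon^p)$ replaced by $O(\deltaA^p)$). Your blind construction is a correct, self-contained argument in exactly the spirit of that reference: perturb each axis by $G_k = I + g_k$ with $g_k$ linear in $\deltaA$, chosen row/column so as to cancel the two one-leg restrictions along that axis at first order; the rank-two structure gives uniform operator-norm and Neumann-series estimates, cross-axis interference is manifestly $O(\deltaA^2)$ because it always pairs a $g_k$-entry with a restriction of $\deltaA$ that is already $O(\deltaA)$, and analyticity follows from composing the linear map $\deltaA \mapsto g_k$ with the Neumann series and the division by $\z' = 1 + O(\deltaA^2) \neq 0$. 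One small imprecision: the first-order-in-$g$, zeroth-order-in-$\deltaA$ correction to $A'_{000000}$ cancels by the conjugation structure ($G_{00}(G^{-1})_{00} = 1 + O(g^2)$) regardless of whether $(g_k)_{00}=0$, so the stated reason ``precisely because $(g_k)_{00}=0$'' is not the operative one; the conclusion $\z' = 1 + O(\deltaA^2)$ is nonetheless correct.
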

The notation $O(\deltaA^n)$ will be used for analytic functions ($\mathbb{C}$ valued or $\mathcal{H}$ valued) of $\deltaA$ in domain $U_{\epsilon}$ which are bounded by $\kappa \|\deltaA\|^n$, where $\kappa$ is a constant independent of $\deltaA$ and $\epsilon$. The property \eqref{oneleg} means that any restriction of $\deltaA'$ with one leg restricted to $\N$ and other legs restricted to $\{0\}$ is $O(\deltaA^2)$.

Let $A'$ be the result of the gauge transformation from Lemma \ref{gaugelem} applied to a tensor $A \in U_\epsilon (A_*)$. Let $A''$ be the result of the simple RG map followed by reindexing applied to $A'$.

\begin{lem}\label{simplerglem}
    Tensor $\deltaA''$ (the normal deviation of $A''$) has the corner structure \eqref{corn}. In addition, $\z''$ (the normal factor of $A''$) satisfies the bound \eqref{nuanalit} and, for any sufficiently small $\epsilon$, $\deltaA''$ is $O(\deltaA)$.
\end{lem}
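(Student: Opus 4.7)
The plan is to expand the eight-copy cube \eqref{TTens} built from $A' = A_* + \deltaA'$ as a sum $T = \sum_{S \subseteq \{1,\ldots,8\}} T^{(S)}$, where $T^{(S)}$ places $\deltaA'$ at the corners of the cube indexed by $S$ and $A_*$ at the remaining corners. The key structural observation is that an $A_*$ sitting at a corner forces all six of its legs to the value $0$, so in each summand $T^{(S)}$ every external leg at a corner outside $S$ is pinned to $0$, and every internal bond incident to a corner outside $S$ is also pinned to $0$. Only the external legs at corners in $S$ and the internal bonds between two corners of $S$ remain free. This reduces the proof to combinatorial bookkeeping of which restrictions of $\deltaA'$ survive in each $T^{(S)}$, together with norm estimates for the terms with $|S|\geq 2$.

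For $|S|=0$ only the all-zero external index contributes, with coefficient $1$; thanks to $\rho(0)=(0,0,0,0)$ this reproduces the $A_*$ piece of $A''$. For $|S|=1$ at a corner $c$, all three internal bonds of the lone $\deltaA'$ are pinned to $0$, so $T^{(\{c\})}$ depends only on the three external legs of $\deltaA'$ at $c$; these lie on the three faces of the cube adjacent to $c$, and only in the slot of each face-$4$-tuple that corresponds to $c$. After reindexing, this populates exactly the $3$-leg corner restriction of $A''$ at the vertex dual to $c$, together with its $2$-leg and $1$-leg sub-restrictions obtained by further restricting external legs to $\{0\}$. The single-$\mathbb{N}$-leg sub-restriction is a single-leg restriction of $\deltaA'$ in the sense of \eqref{oneleg} and is therefore $O(\|\deltaA\|^2)$; the $2$- and $3$-leg corner sub-restrictions are $O(\|\deltaA\|)$ and furnish the main terms on the right-hand side of \eqref{corn}. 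Summing over the eight choices of $c$ yields contributions to all eight $3$-leg corner restrictions and all twelve adjacent $2$-leg corner restrictions appearing in Definition~\ref{corndef}.

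For $|S|\geq 2$, each summand contains at least two factors of $\deltaA'$ contracted with copies of $A_*$; the partial-contraction estimate in Proposition~2.4 of \cite{Kennedy2022a} gives $\|T^{(S)}\| \leq \kappa\,\|\deltaA'\|^{|S|}$ for a universal $\kappa$, so $\sum_{|S|\geq 2} T^{(S)}$ has norm $O(\|\deltaA\|^2)$. To close the corner-structure claim it remains to check that \emph{non-corner} restrictions of $A''$ cannot arise from $|S|=0$ or $|S|=1$: a single cube corner is adjacent to only three faces that meet at a common vertex and contain no opposite pair, so none of the forbidden patterns (a single $\mathbb{N}$-leg outside a corner cone, two $\mathbb{N}$-legs on opposite faces, three $\mathbb{N}$-legs containing an opposite pair, or four or more $\mathbb{N}$-legs) can be produced by $|S|=1$. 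Together with the $|S|\geq 2$ bound, this forces every non-corner restriction of $A''$, and hence of $\deltaA''$ after dividing by $\z''$, to be $O(\|\deltaA\|^2)$, which is \eqref{corn}.

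Finally, $\z''=T_{(0,0,0,0),\ldots,(0,0,0,0)}$: the $|S|=0$ piece contributes $1$, each $|S|=1$ piece vanishes because $\deltaA'_{000000}=0$ for $\deltaA'\in\tilde{\mathcal{H}}$, and the $|S|\geq 2$ pieces are $O(\|\deltaA\|^2)$, so $\z''=1+O(\|\deltaA\|^2)$; the estimate $\deltaA''=O(\|\deltaA\|)$ then follows from $\|A''-\z'' A_*\| = O(\|\deltaA\|)$, which combines the $O(\|\deltaA\|)$ corner contributions from $|S|=1$ with the $O(\|\deltaA\|^2)$ bound on the higher terms. The principal obstacle is the combinatorial bookkeeping: tracking how the eight cube corners and six faces interact through the reindexing $\rho$, verifying that the three-face cone at each cube corner produces exactly the corner patterns of Definition~\ref{corndef} with no spurious non-corner contribution, and justifying the norm bounds on the full infinite-dimensional contraction via Proposition~2.4 of \cite{Kennedy2022a}.
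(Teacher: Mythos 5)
Your argument is correct and matches the approach the paper has in mind: the paper omits the proof of Lemma~\ref{simplerglem}, deferring to the $2D$ Propositions~2.1 and~2.2 of \cite{Kennedy2022} with the remark that the $3D$ generalisation is straightforward, and your expansion of the cube into $T^{(S)}$ terms, matching of the single-insertion contributions with corner patterns via the three faces incident to each cube vertex, and absorption of all $|S|\geq 2$ terms into $O(\epsilon^2)$ is exactly that generalisation. One small point: the paper's $O(\deltaA^n)$ notation requires analytic dependence on $\deltaA$ in addition to the norm bound, and you argue only the latter; but since each $T^{(S)}$ is multilinear in $\deltaA'$ (itself analytic in $\deltaA$ by Lemma~\ref{gaugelem}) and reindexing is a bounded linear map, this is a routine completion rather than a gap.
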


We omit the proof of these lemmas. In \cite{Kennedy2022}, analogous statements were proven for the $2D$ case; see Propositions 2.1 and 2.2 in \cite{Kennedy2022}. The generalisation to the $3D$ case is straightforward. The only noticeable difference is that in \cite{Kennedy2022}, all tensors are real, and the bounds, analogous to \eqref{nuanalit}, \eqref{oneleg}, etc., are given in terms of $O(\epsilon^p)$ and not in terms of $O(\deltaA^p)$. However, gauge operators $G$ from \cite{Kennedy2022} are given as convergent power series in $\deltaA$, which makes them suitable for the complex case and implies analytic dependence on $\deltaA$. Moreover, it is clear that $O(\epsilon^p)$ can be replaced by $O(\deltaA^p)$ in the proofs of Propositions 2.1 and 2.2 in \cite{Kennedy2022}, which leads to the bounds from Lemmas \ref{gaugelem} and \ref{simplerglem}.

The gauge transformation from the Lemma \ref{gaugelem}, followed by a simple RG map, followed by reindexing, is an RG map $A \mapsto  A''$ with lattice rescaling factor $2$ satisfying analyticity requirements \eqref{analit}. As discussed in Remark \ref{factorrem}, normal factor $\z''$ can be factored out from the partition function. Therefore, the proof of Theorem \ref{maintheor} is reduced to constructing an $\s=2$ RG map $A \mapsto A'$ such that conditions \eqref{maincond} are satisfied under the additional assumption that $\deltaA$ has corner structure \eqref{corn}. This will be done below.

\section{Rearrangement RG step for \texorpdfstring{$3D$}{3D}}\label{rRG}

In the previous $2D$ study \cite{Kennedy2022}, an RG map satisfying a property similar to \eqref{contr} was constructed.\footnotemark The construction from \cite{Kennedy2022} relied on special operators called disentanglers. These operators, acting on two bonds, allow the reduction of "entanglement" in a network, which was the key step in the whole procedure.
\footnotetext{The analogue of \eqref{contr} in \cite{Kennedy2022} was $\|\deltaA'\| \leq C \|\deltaA\|^{3/2}$, where $C$ is a universal constant. This property is stronger than \eqref{contr}, but the meaning is similar. Both properties mean that the RG map contracts any region around $A_*$ that is small enough. The map from \cite{Kennedy2022} was not analytic. In \cite{Kennedy2022a}, it was promoted to the analytic map with the following property: $\|\deltaA'\| \leq C \epsilon^{1/2} \|\deltaA\|$, which is still stronger than \eqref{contr}. See more details in Appendix~\ref{2Drd}}

\begin{rmk}
    We omit a rigorous definition of entanglement as it will not be used in what follows. Intuitively, tensor entanglement measures the effective dimensions of bonds in tensor contractions (see Remark 2.6 in \cite{Kennedy2022a}). For example, suppose $C$ is a contraction of two tensors $A$ and $B$. If the largest contribution to $C$ is given by only one index value of the bond connecting $A$ and $B$, then the entanglement of this contraction is low. Conversely, if contributions with many different bond index values are equally large, then entanglement is high.
\end{rmk}

In \cite{Kennedy2022a}, a method very similar to the one from \cite{Kennedy2022} was used to construct an RG map whose domain includes tensors representing the $2D$ Ising model at low temperatures. It was proven that the two ground states of the Ising model correspond to the two stable fixed points of the constructed RG map and that there is an unstable fixed point at which the first-order phase transition occurs.

Unfortunately, it appears to be impossible to use disentanglers acting on two bonds to generalise results of \cite{Kennedy2022} to $3D$. The problem has to do with entanglement in groups of tensors not lying in any plane, see examples in \figref{fig:entShapes}.\footnote{\label{refereequestion} The examples in \figref{fig:entShapes} also demonstrate that there are significantly more ways to entangle tensors around the cube in $3D$ than around a $2D$ plaquette. This could explain the slower decay of the environment spectra in the Gilt-TNR algorithm in the $3D$ case compared to the $2D$ one (see Figure 7 in \cite{Gilt}). This spectra behaviour is the main challenge for the Gilt-TNR algorithm in $3D$, as it implies that a higher bond dimension is required for obtaining the same precision as in $2D$.} We omit a detailed discussion of why the method of \cite{Kennedy2022} does not work in $3D$ (at least our attempts have not been successful) but will instead present a method which does work.

\begin{figure}
    \centering
    \includegraphics[scale=1]{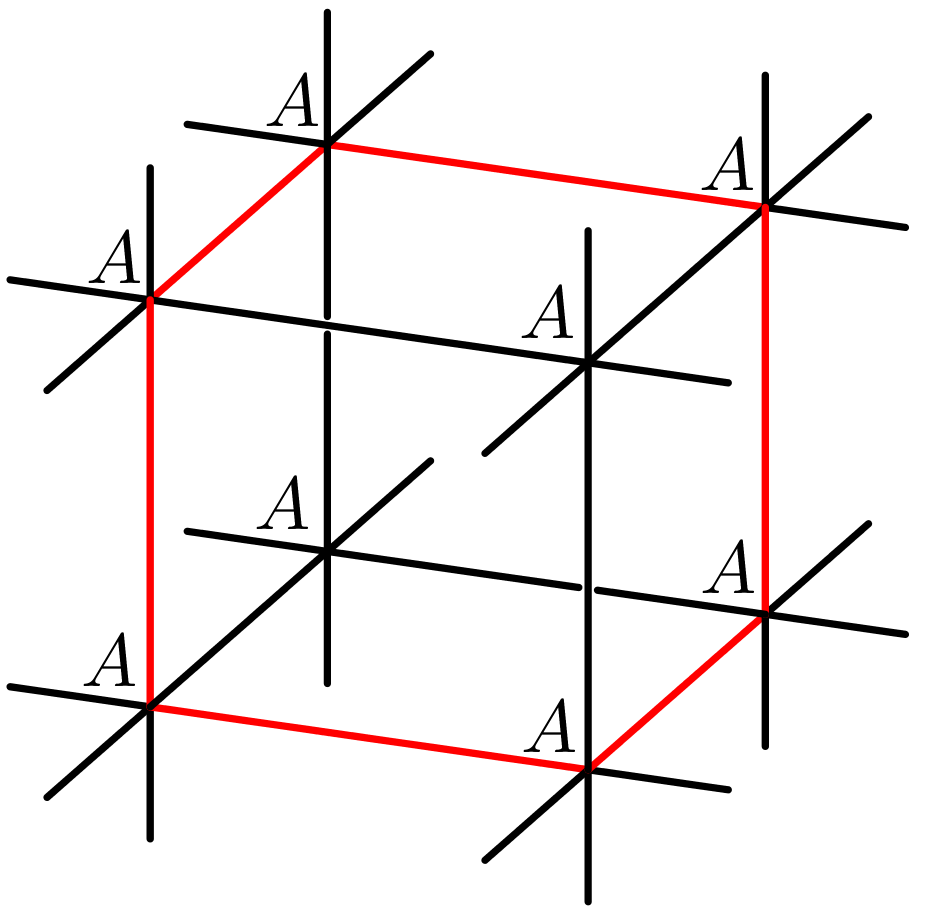}
    \includegraphics[scale=1]{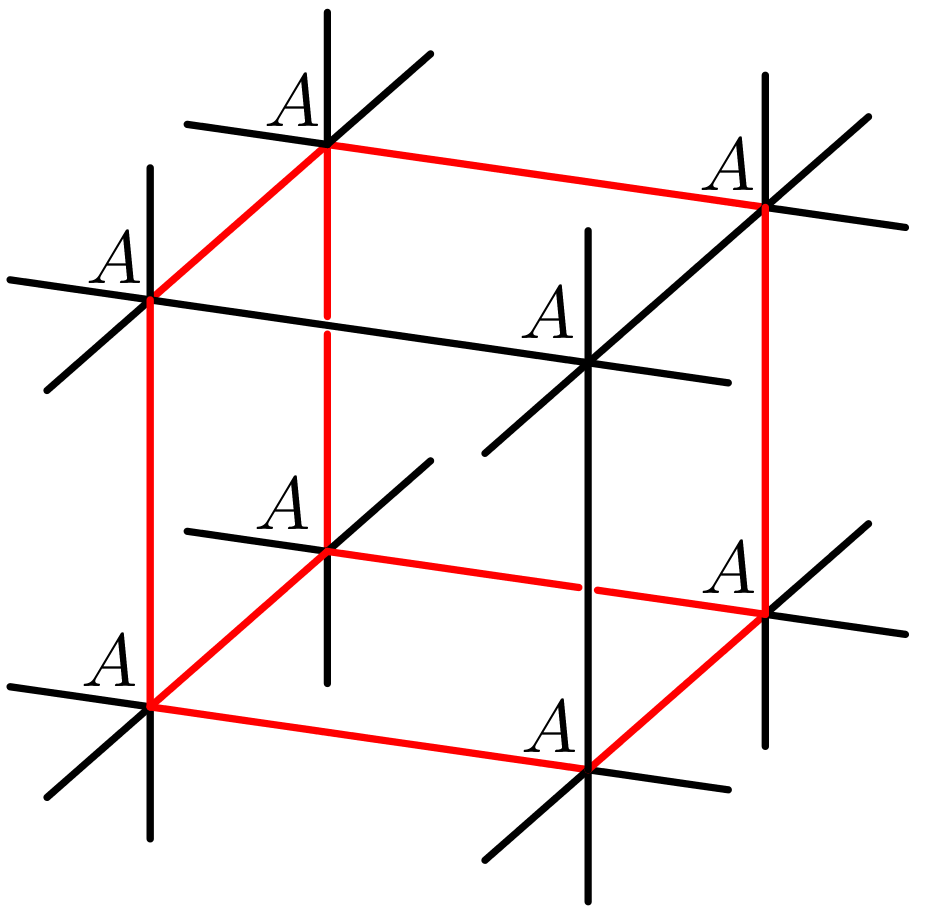}
    \caption{Examples of the groups of tensors (contracted by highlighted bonds), which cannot be disentangled by methods from \cite{Kennedy2022} generalised to three dimensions}
    \label{fig:entShapes}
\end{figure}

Tom Kennedy proposed an alternative construction of the RG map for the $2D$ case in his unpublished notes \cite{Kennedy2022b}.\footnote{In Appendix~\ref{2Drd}, we give a brief review of the ideas from \cite{Kennedy2022b} and explain what our contribution is to this story (apart from applying methods from \cite{Kennedy2022b} to $3D$).} This method, called "Rearrange disentanglers," can be generalised to three dimensions, which is done in this and the following sections. The idea is to define an RG map using a set of tensors $B_1, \ldots, B_8$ such that:
\begin{equation}\label{maineq}
    \includegraphics[scale=1,valign=c]{CubeA}=\includegraphics[scale=1,valign=c]{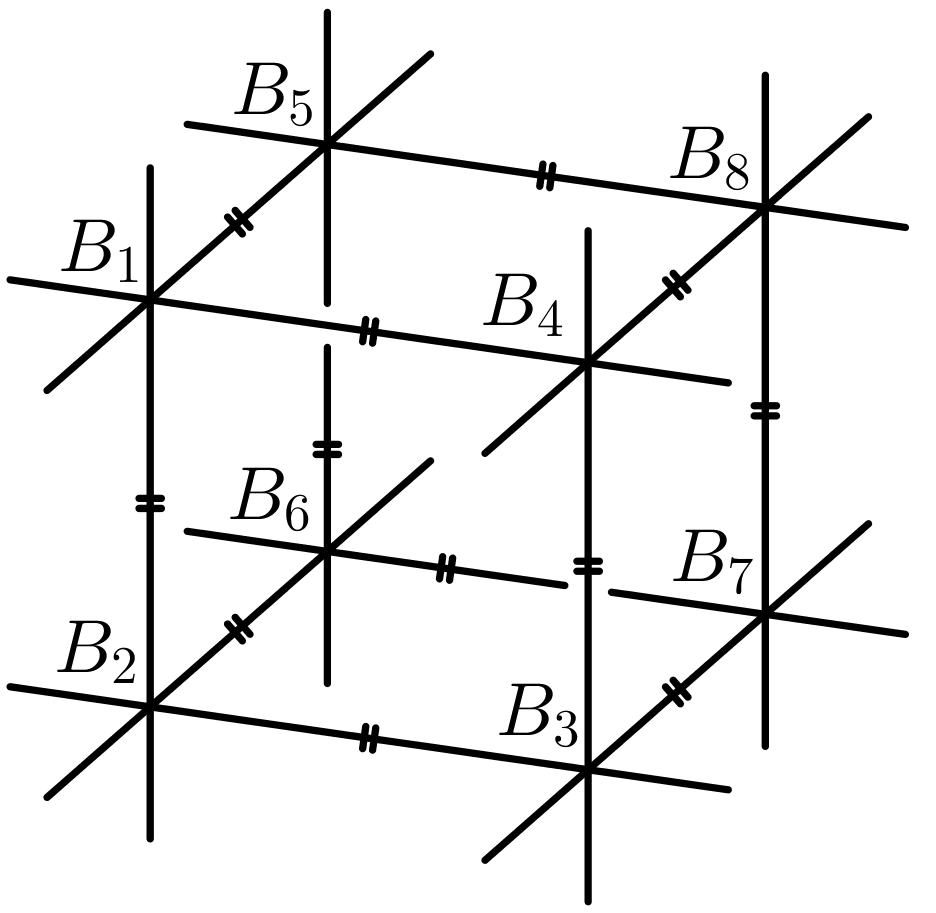}.
\end{equation}
Here and below, a \textbf{line with two ticks} represents the new index set $\mathcal{D}_0$, which will be defined precisely in Section~\ref{D0} below (briefly, the union of $\N_0$ with a finite number of disjoint copies of $\N$ and $\{0\}$). The property \eqref{maineq} allows the replacement of a tensor network built of $A$ tensors by the network built of $B_v$ ($v=1,\ldots,8$) tensors if the size of the network is even. This will be explained in more detail later in the proof of Theorem \ref{maintheor}.

Now, let us explain some terminology. The contraction of $B_v$ tensors defined by the r.h.s. of \eqref{maineq} is called $T'$:
\begin{equation}\label{Tprime}
    T'=\includegraphics[valign=c]{CubeB}.
\end{equation}
Equation \eqref{maineq} thus says $T=T'$.

Contractions of $8$ tensors, as in \eqref{Tprime} and \eqref{TTens}, will often appear in this paper. These will be referred to as \textbf{cubic contractions} with \textbf{insertions} of given tensors in given vertices. The vertices of such a contraction are labelled by numbers in the order shown in \figref{fig:labv}. Thus $T'$ is the cubic contraction with insertions of $B_1, \ldots, B_8$ in vertices $1, \ldots,8$, respectively. By convention, if insertions are specified for less than $8$ vertices, for example, if there are $n<8$ insertions, then the $A_*$ tensor is placed at the rest of the vertices.

\begin{figure}
    \centering
    \includegraphics[scale=1]{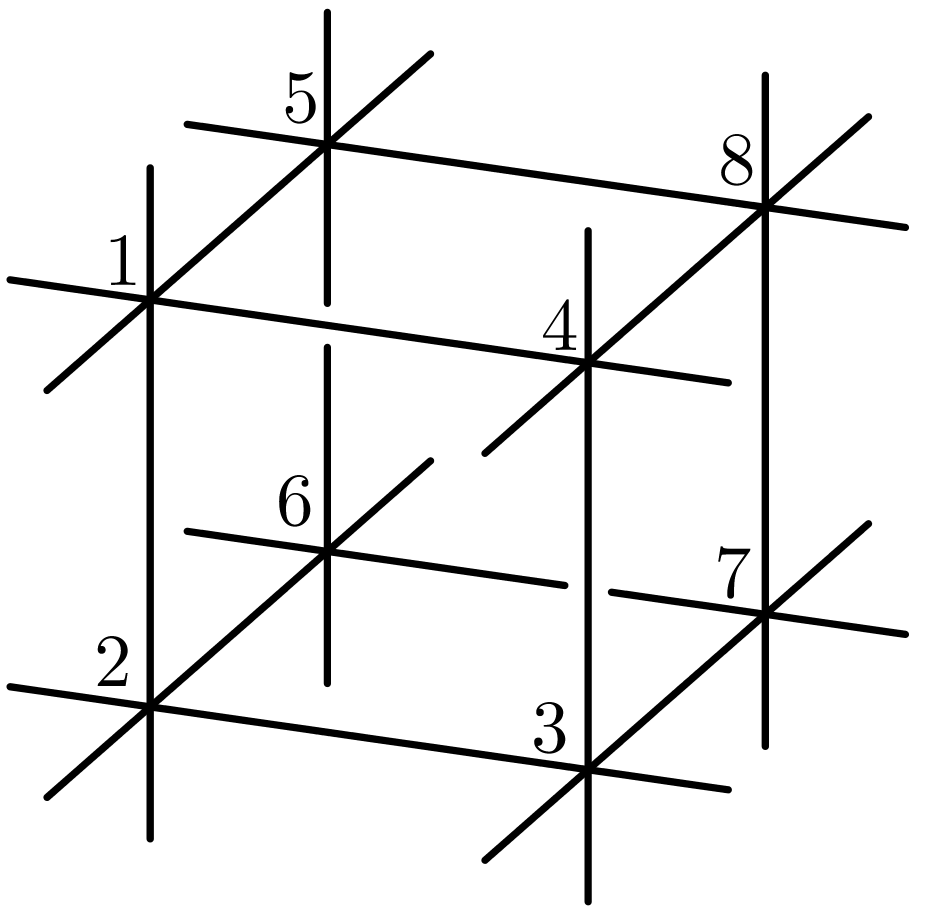}
    \caption{Labels of vertices of a cubic contraction}
    \label{fig:labv}
\end{figure}

It will also be important to divide the legs of $B_v$ tensors into internal and external leg groups with respect to the contraction $T'$. Let $L$ be a cubic contraction with insertions of tensors $C_1, \ldots, C_8$ in vertices $1, \ldots, 8$, respectively. The legs of a $C_v$ tensor involved in contractions within $L$ are called \textbf{$L$-internal}, while those not involved are called \textbf{$L$-external}. This division relies solely on the tensor's position in the contraction. For instance, consider tensor $A$ at the vertex $3$ in the contraction $T$, Eq.~\eqref{TTens}. Its legs $i_1 j_1 i_2 j_2 i_3 j_3$ (see the convention \eqref{solidel}) are divided as follows: $j_1,i_2, i_3$ are $T-$internal and  $i_1, j_2, j_3$ are $T-$external.

The proof of Theorem~\ref{maintheor} requires tensors $B_v$ of a special form. The following lemma explains what this form is and asserts the existence of the required tensors.

\begin{lem}\label{mainlem}
    There exist $a >1/2$ and $b>1$, such that for any $A=A_*+\deltaA$, where $\deltaA$ has corner structure \eqref{corn}, tensors $B_v$ satisfying \eqref{maineq} can be found in the form $B_v=A_*+\deltaB_v$, where $\deltaB_v $ tensors satisfy the following properties:
    \begin{subequations}\label{Pconds}
        \begin{align}
             & \ \text{tensors $\deltaB_v$ depend analytically on $\deltaA$};\label{P0} \\
             & \ \deltaB_v = O(\epsilon^a); \label{P1}                                  \\
             & \ \bar{\deltaB_v} = O(\epsilon^b); \label{P2}                            \\
             & \ (\deltaB_v)_{000000}=0. \label{P3}
        \end{align}
    \end{subequations}
\end{lem}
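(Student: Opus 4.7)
My strategy is to expand the left-hand side of \eqref{maineq} using the corner structure of $\deltaA$ and then repackage the resulting finite sum as a cubic contraction of vertex tensors on the enlarged index set $\mathcal{D}_0$. I first substitute $A = A_* + \deltaA$ at each of the eight vertices of \eqref{TTens} and expand, obtaining $T = A_* + \sum_{\varnothing \neq V \subseteq \{1,\ldots,8\}} T_V$, where $T_V$ places $\deltaA$ at the vertices of $V$ and $A_*$ elsewhere. Because $A_*$ forces all six of its incident bond indices to be $0$, any vertex $v \in V$ has its $T$-internal legs adjacent to $V^c$ pinned to $0$, while legs along bonds internal to $V$ are summed freely over $\mathbb{N}_0$. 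Applying the corner decomposition \eqref{corn} to each $\deltaA$ insertion (up to an $O(\epsilon^2)$ per-vertex remainder) rewrites each $T_V$ as a finite sum over corner-type assignments $\sigma$ on $V$ that are bond-compatible with the $V^c$-pinning.

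To realise this expansion as a cubic contraction of $B_v = A_* + \deltaB_v$ on $\mathcal{D}_0$, I would allocate one dedicated copy of $\mathbb{N}$ inside $\mathcal{D}_0$ for each admissible $\sigma$, together with copies of $\{0\}$ for the ``passive'' legs of $\sigma$ that must remain distinguishable from the ordinary $0 \in \mathbb{N}_0$. On the sector attached to $\sigma$, I define $\deltaB_v$ by a prescribed factorisation that splits the amplitude of $T_\sigma$ across the vertices $v \in V_\sigma$; on the ordinary $\mathbb{N}_0^6$ sector I set $\deltaB_v = 0$. Because distinct configurations live on disjoint sectors of the bond indices, the cubic contraction $T'$ equals $\sum_\sigma T_\sigma = T$ with no cross-terms. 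Analyticity \eqref{P0} is then automatic, since each factorisation distributes an analytic scalar amplitude across a prescribed set of vertices, and \eqref{P3} holds by construction because every non-zero component of $\deltaB_v$ carries at least one extra label.

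Finally, for the norm bounds I argue $\sigma$ by $\sigma$. The amplitude of $T_\sigma$ is $O(\epsilon^{|V_\sigma|})$ by the corner structure, and the factorisation distributes it so that each $\deltaB_v$ picks up a share whose norm is controlled by a geometric-mean-type estimate. Summing over the finitely many $\sigma$'s gives $\|\deltaB_v\| = O(\epsilon^a)$ for some $a > 1/2$, the strict inequality coming from the cube-geometric constraints of the corner structure (no two restricted legs opposite, bond-compatibility, rotational multiplicity) which uniformly improve the naive two-vertex split exponent $1/2$. The restricted norm \eqref{P2} drops to $O(\epsilon^b)$ with $b > 1$ because the restriction filters the dominant two-vertex sectors out and keeps only configurations with $|V_\sigma| \geq 3$, whose amplitudes of order $\epsilon^3$ split across at least three vertices. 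The main obstacle is the explicit choice of factorisation used in the second step: it must be simultaneously analytic in $\deltaA$, bond-exact so that the cubic contraction reconstructs each $T_\sigma$, and norm-balanced across $V_\sigma$ to beat the threshold $a = 1/2$. The enlargement of $\mathbb{N}_0$ to $\mathcal{D}_0$ is precisely what makes a clean sector-by-sector factorisation possible — without it, distinct $\sigma$'s would interfere and one would be forced into a lossy global SVD-type decomposition that could not achieve the strict inequality $a > 1/2$ required by Theorem~\ref{maintheor} through \eqref{contr}.
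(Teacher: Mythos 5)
Your high-level plan --- expand $T$ over insertion patterns, classify via the corner structure, allocate disjoint sectors inside $\mathcal{D}_0$, and factorise each term across its vertices --- matches the paper's template machinery in spirit and correctly identifies why the index-set enlargement is needed. However, the proposal glosses over exactly the features that make the lemma hard, and the gaps are fatal.

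The decisive obstruction concerns the ``sourceless'' insertion patterns (the $\mathrm{NSrc}$ templates of the paper): connected subgraphs with $\geq 4$ vertices, all external legs dashed, and all internal degrees $\geq 2$. For such a $\sigma$, $T_\sigma(\deltaA)$ is a scalar $C_\sigma=O(\epsilon^{|V_\sigma|})$ times $T_*$, and since no ticked external leg kills $\bar{\deltaB_v^\sigma}$, factorising $C_\sigma$ across \emph{all} of $V_\sigma$ with each factor $O(\epsilon^b)$, $b>1$, would force the product to be $O(\epsilon^{b|V_\sigma|})$, strictly smaller than the generic size $\epsilon^{|V_\sigma|}$ of $C_\sigma$ --- impossible. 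Your arithmetic (``$\epsilon^3$ split across at least three vertices'') gives $b=1$, not $b>1$. The paper instead realises $T_\sigma$ on only $|V_\sigma|-1$ vertices, dropping a designated ``special vertex'' $w_\sigma$ (Section~\ref{CON2}). But this shrinking of the footprint immediately falsifies your claim that ``distinct configurations live on disjoint sectors so $T'=\sum_\sigma T_\sigma$ with no cross-terms'': a disconnected pattern $\widehat{\sigma}\star\lambda$, with $\lambda$ occupying the now-vacated $w_\sigma$, produces a cross-term $C_\sigma\,T_\lambda(\deltaA)$ that is matched by nothing on the r.h.s.\ of \eqref{maineq}. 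Eliminating these unwanted terms requires the correction tensors of \eqref{ErrTrmDef} together with the disjointness Lemmas~\ref{lem:nocrossvals} and~\ref{lem:nowqnew}, which have no counterpart in your proposal. Further, a genuine ``geometric-mean'' factorisation $C_\sigma^{1/k}$ is not analytic in $\deltaA$; the paper concentrates all $C_\sigma$-dependence on a single vertex $v_*$ with fixed $\epsilon$-powers elsewhere (\eqref{vBdef}), which is why the $\deltaB_v^2$ tensors do not vanish at $\deltaA=0$ --- a feature your ``automatic'' analyticity argument cannot predict. For the sourceful ($\mathrm{Src}$) templates the paper likewise does not split evenly: it reweights sources up by $\epsilon^{-cN_{\mathrm{sink}}}$ (permissible because sources are either restriction-killed or carry the corner-structure bonus $\deltaA_v=O(\epsilon^2)$) and sinks down, which is the actual mechanism producing $a>1/2$ and $b>1$ that you invoke without justification. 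In short, the explicit factorisation you flag as ``the main obstacle'' is the entire content of the lemma, and the proposal leaves it unsolved.
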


Here $\bar{\deltaB_v}$ denotes the restriction of $\deltaB_v$ with the $T'-$external legs of $\deltaB_v$ restricted to $\{0\}$.

It is useful to look at the properties \eqref{P1} and \eqref{P2} in the graphical notation. Here is an example for $\deltaB_1$:
\begin{subequations}
    \begin{align}
        \eqref{P1}: & \ \includegraphics[valign=c]{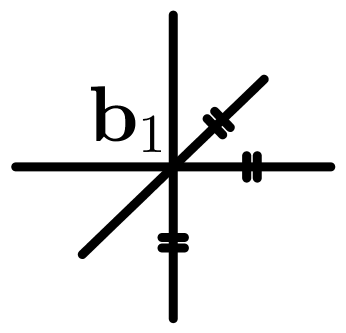} = O(\epsilon^a);   \\
        \eqref{P2}: & \  \includegraphics[valign=c]{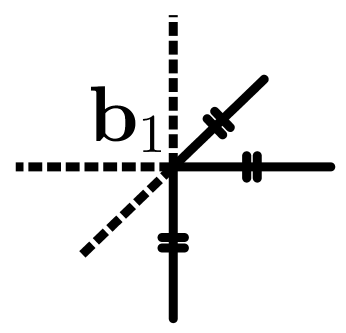} = O(\epsilon^b).
    \end{align}
\end{subequations}

Lemma \ref{mainlem} will be proven in the next section. Here, assuming Lemma \ref{mainlem} is true, the proof of Theorem \ref{maintheor} is given.

\begin{proof}[Proof of Theorem \ref{maintheor}]

    \begin{figure}
        \centering

        \begin{subfigure}[t]{0.3\textwidth}
            \includegraphics[scale=0.45]{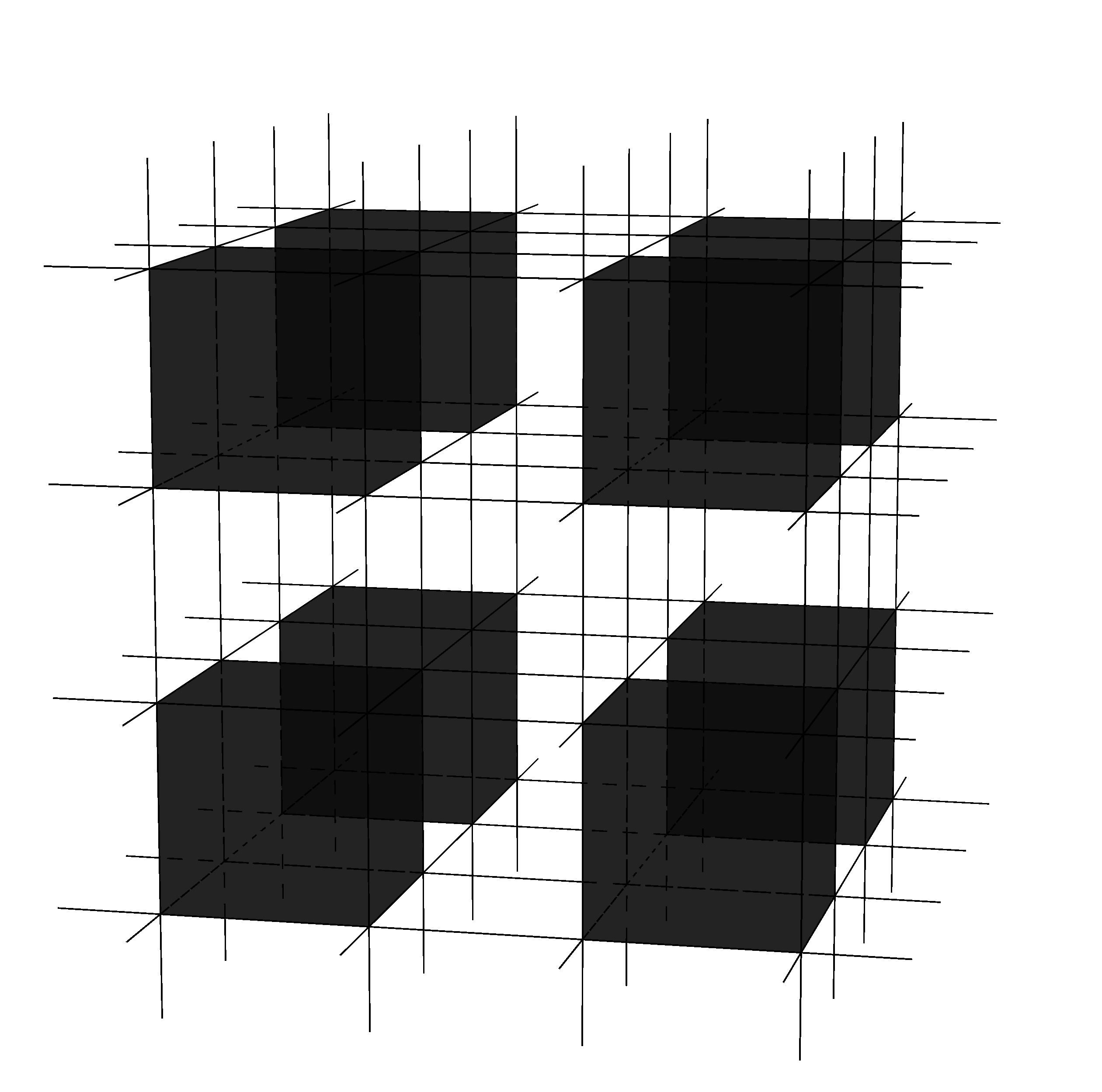}
            \caption{Groups of $A$ tensors. Each black cube represents one group.}
        \end{subfigure}
        \hspace{0.01\textwidth}
        \begin{subfigure}[t]{0.3\textwidth}
            \includegraphics[scale=0.45]{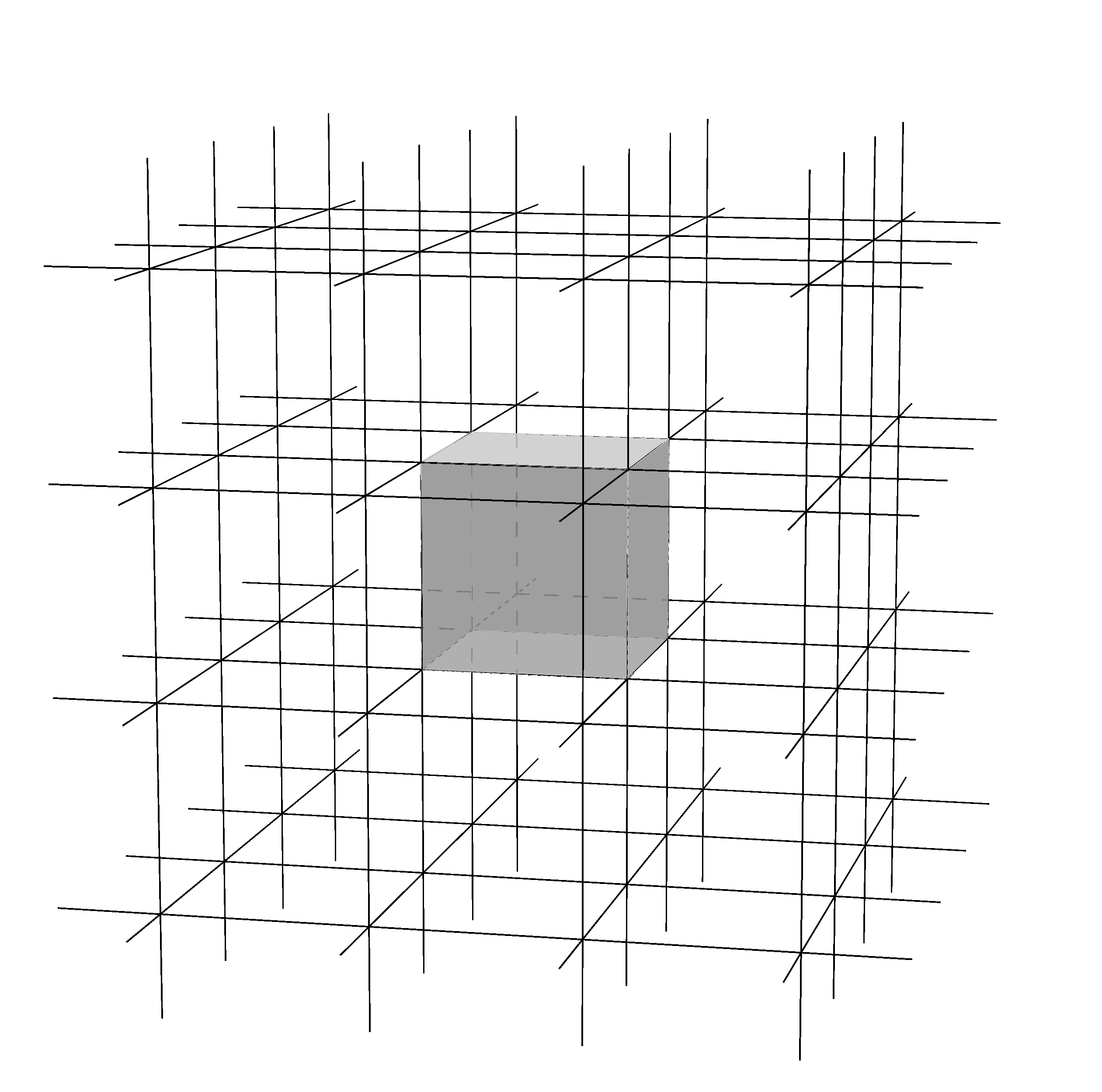}
            \caption{The grey cube represents a group of $B_v$ tensors. Other grey cubes are not depicted because they are partially outside of the given part of the network}
        \end{subfigure}
        \hspace{0.01\textwidth}
        \begin{subfigure}[t]{0.3\textwidth}
            \includegraphics[scale=0.45]{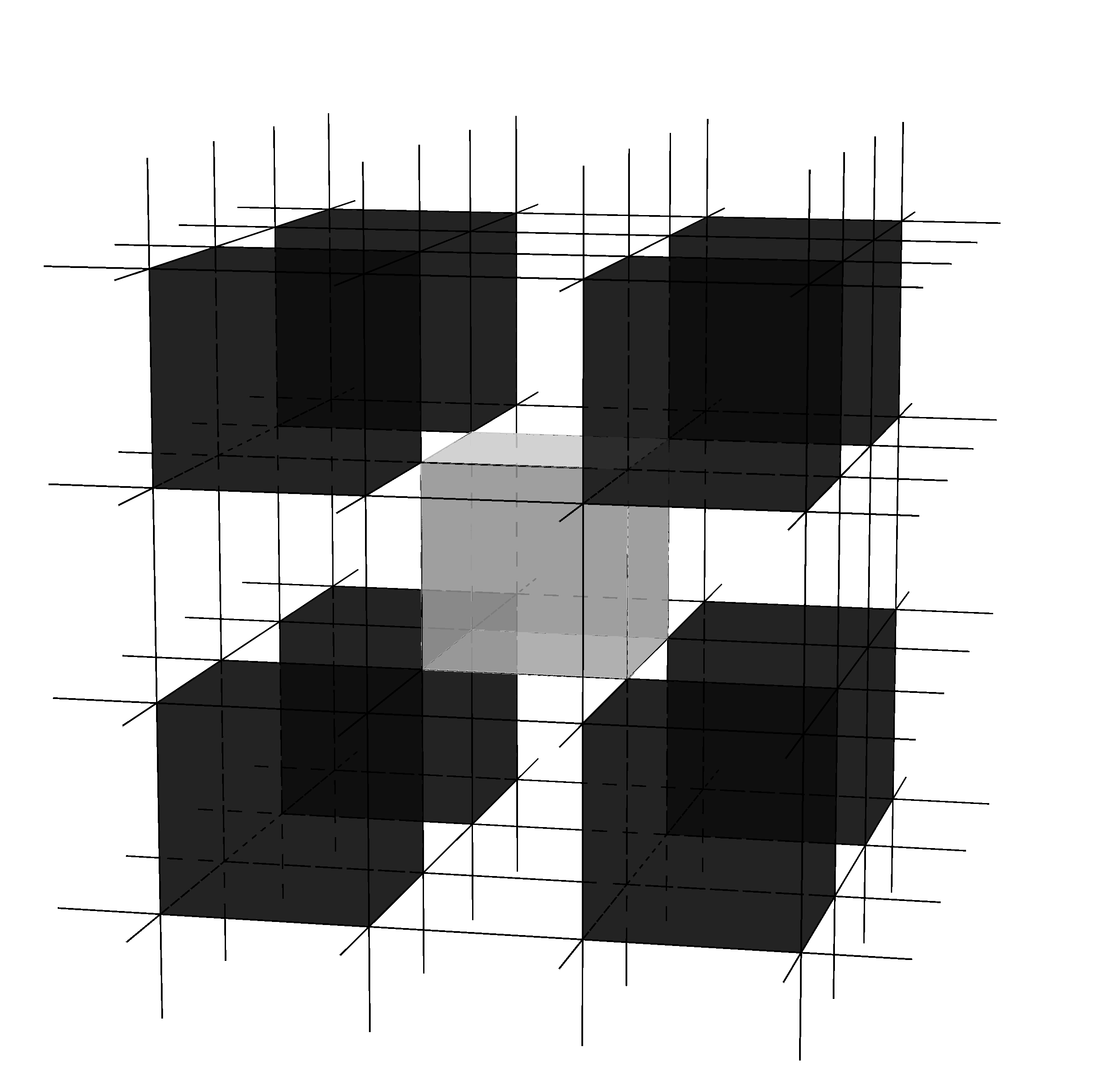}
            \caption{Black cubes that touch the grey one.}
        \end{subfigure}

        \caption{A part of the network with groupings of tensors for the proof of Theorem \ref{maintheor}. Tensor labels $A, B_v$ are suppressed to make the picture more readable}
        \label{fig:grouping}
    \end{figure}

    As noticed at the end of Section \ref{cornstr}, to prove the theorem, it is enough to construct a $b=2$ RG map $A \mapsto A'$ such that conditions \eqref{maincond} are satisfied under the additional assumption that $\deltaA$ has corner structure \eqref{corn}.

    Consider a tensor network of size $N$ built of $A=A_*+\deltaA$, where $\deltaA$ has corner structure \eqref{corn}. Assume that $N$ is even. Gather tensors $A$ in groups of $8$ tensors, such that each group form a cube on the lattice, and these cubes (referred to as black cubes) are arranged in a lattice of the size $N/2$ as in \figref{fig:grouping} (a). Each such block cube may be viewed as the tensor $T$. Replace $A$ tensors within the block cubes by $B_v$ from Lemma \ref{mainlem} using \eqref{maineq}. The result is a new tensor network of the same size and with the same partition function.

    Next, group $B_v$ tensors into other cubes (referred to as grey cubes) arranged in the lattice of size $N/2$, like the black ones, but this lattice is shifted such that each grey cube touches $8$ black cubes as in \figref{fig:grouping} (c). Each such group may be viewed as a cubic contraction $U$ given by:
    \begin{equation}\label{Utens}
        U=\includegraphics[scale=1, valign=c]{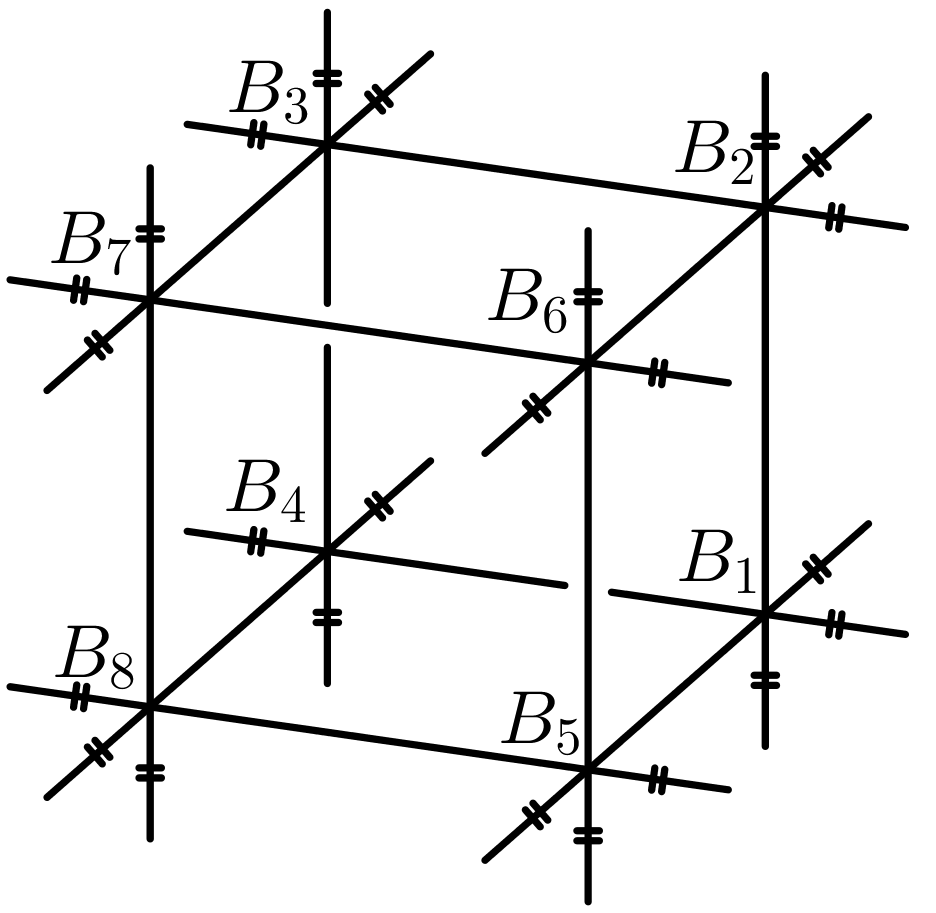}.
    \end{equation}
    Note that $U$ is made of the same tensors $B_v$ as $T'$, but with a different assignment to the vertices of the cubic contraction. The most important feature is that $T'-$external legs of $B_v$ tensors are  $U-$internal.

    Next, group the legs of $U$ in the same way as it was done with legs of $T$ in the simple RG step example in Section \ref{cornstr}. Thus, $U$ may be viewed as a $6$-leg tensor indexed by $\mathcal{D}_0^4$. As mentioned earlier, we will define $\mathcal{D}_0$ in Section~\ref{D0}. For now, let us assure the reader that $\mathcal{D}_0$ is countable. Then, let $\rho$ be a bijection from $\N_0$ to $\mathcal{D}_0^4$ satisfying \eqref{0t0}. Define $A'$ by reindexing of $U$ as follows:
    \begin{equation}\label{reindA}
        A'_{i_1 j_1 i_2 j_2 i_3 j_3 } = U_{\rho(i_1) \rho(j_1) \rho(i_2 )\rho(j_2) \rho(i_3) \rho(j_3) }.
    \end{equation}
    The map $A \mapsto A'$ is called the \textbf{rearrangement RG step}. By construction, it is an RG map with lattice rescaling factor $2$. It will now be shown that $A'$ satisfies \eqref{maincond} (provided that $\epsilon$ is sufficiently small), thus proving the theorem.

    Substitute $B_v = A_* + \deltaB_v$ into $U$ and expand the result in the number of insertions of $\deltaB_v$ tensors. Let $T_*$ be the cubic contraction with $0$ insertions of $\deltaB_v$, i.e., the cubic contraction with $A_*$ at every vertex. Let $\deltaU_n, n\geq 1$ be the sum of all cubic contractions, appearing in the expansion of $U$, with $n$ insertions of $\deltaB_v$ tensors. For reference, there are $8$ terms in $\deltaU_1$, $48$ terms in $\deltaU_2$, and so on, with only one term in $\deltaU_8$. Let $\deltaU$ be the sum of all $\deltaU_n$ tensors:
    \begin{equation}\label{dUdec}
        \deltaU = \deltaU_1 +\deltaU_2 + \ldots + \deltaU_8.
    \end{equation}
    Thus,
    \begin{equation}\label{Udec}
        U=T_*+\deltaU.
    \end{equation}
    Note that $\deltaU_{(0,0,0,0)\ldots (0,0,0,0)}$ is not necessarily zero.

    Let us show that $\deltaU$ is $O(\epsilon^\gamma)$ with $\gamma >1$, namely:
    \begin{equation}\label{Uord}
        \deltaU = O(\epsilon^{\min(2a,b)}).
    \end{equation}
    Consider $\deltaU_1$:
    \begin{equation}\label{du1}
        \deltaU_1 =  \includegraphics[scale=1,valign=c]{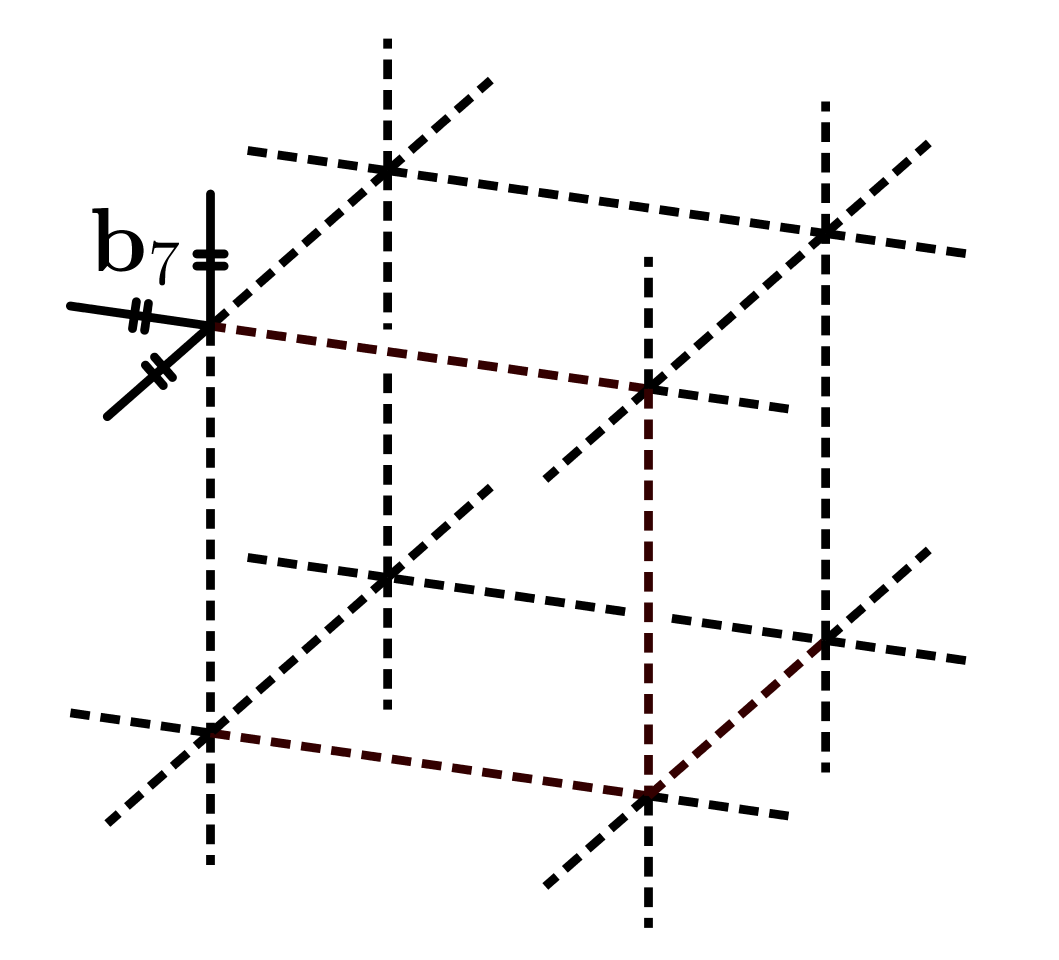} + \ldots,
    \end{equation}
    where "$\ldots$" stands for the other cubic contractions with a single $\deltaB_v$ insertion. In the contractions in \eqref{du1}, tensors $\deltaB_v$ are contracted with copies of $A_*$ by all their legs which are $T'\text{-external}$ ($U$-internal). Thus, only the restrictions of the $\deltaB_v$ tensors with $T'\text{-external}$ legs restricted to $\{0\}$ appear in these contractions. Then, Property \eqref{P2} from Lemma \ref{mainlem} implies that each of these contractions is $O(\epsilon^b)$. Therefore,
    \begin{equation}\label{U1est}
        \deltaU_1 = O(\epsilon^b).
    \end{equation}
    As for $\deltaU_n, n\geq 2$, Property \eqref{P1} from Lemma \ref{mainlem} implies that
    \begin{equation}\label{U28est}
        \deltaU_n = O(\epsilon^{na}), n\geq 2.
    \end{equation}
    Equations \eqref{U1est} and \eqref{U28est} imply \eqref{Uord}.

    Using \eqref{Uord}, one can prove that $A'$ defined by \eqref{reindA} satisfies \eqref{contr}. Consider the normal decomposition of $A'$:
    \begin{eqnarray}\label{willBeusefull}
        A' = \z' (A_*+\deltaA').
    \end{eqnarray}
    Equations \eqref{reindA} and \eqref{0t0} imply that $\z'=U_{(0,0,0,0) \ldots (0,0,0,0)}$. Note that \eqref{Uord} implies:
    \begin{equation}\label{nuord}
        \z' = 1+O(\epsilon^{\min(2a,b)}).
    \end{equation}

    As pointed out in Section \ref{cornstr}, reindexing does not change the norm of a tensor. Hence, \eqref{Uord}, \eqref{Udec} and \eqref{0t0} imply that
    \begin{equation}\label{Apord}
        A' = A_*+O(\epsilon^{\min(2a,b)}).
    \end{equation}
    Bounds \eqref{Apord} and \eqref{nuord} imply that, for sufficiently small $\epsilon$,
    \begin{equation}\label{dApbound}
        \deltaA' = O(\epsilon^{\min(2a,b)}).
    \end{equation}
    This is precisely \eqref{contr} with $h=\min (2a,b)$.

    It is left to show that \eqref{analit} is satisfied by $A'$. Note that $A'$ depends analytically on $\deltaA$  as it is a composition of analytic maps: tensors $B_v$ depend analytically on $\deltaA$ by Property \eqref{P0} from Lemma \ref{mainlem}; $U$ is analytic as it is a contraction of $B_v$ tensors; reindexing is analytic as it is a bounded linear map. Therefore $\z'$ is analytic, and $\deltaA'$ is analytic (if $\epsilon$ is sufficiently small, so that $\z'$ is nonzero, as it is necessary to divide $A'$ by $\z'$ to define $\deltaA'$). Thus, \eqref{analit} is satisfied.

    Therefore, $A'$ satisfies \eqref{maincond}. Hence, Theorem \ref{maintheor} is proven.

\end{proof}

\section{Proof of Lemma~\ref{mainlem}}\label{prf}

Let us begin by discussing the notion of a template, which is extensively used throughout the proof.

\subsection{Templates}\label{tmplandD}

For this discussion, it is convenient to introduce the notion of the \textbf{restriction of a bond}. Let us provide an example. Consider $C$, a contraction of two $6$-tensors $A$ and $B$, given by \eqref{contraction}. Let $\mathcal{J}$ be another index set. Then \eqref{contraction} can be written as
\begin{equation}\label{restrb}
    C_{i_1\ldots i_{11}} = \sum_{j\in  \mathcal{I} \cap \mathcal{J}} A_{ji_1\ldots i_5} B_{i_6 j i_7 \ldots i_{11}}+\sum_{j\in \mathcal{I}\setminus \mathcal{J}} A_{ji_1\ldots i_5} B_{i_6 j i_7 \ldots i_{11}}.
\end{equation}
Denote the first sum in \eqref{restrb} by $\bar{C}$. We say that $\bar{C}$ is obtained from $C$ by restricting the bond between $A$ and $B$ to $\mathcal{J}$. In the graphical notation, a restriction of a bond will be depicted analogously to the restriction of a leg, i.e., either by label $\mathcal{J}$ near the line as here:
\begin{equation}
    \bar{C}=\includegraphics[scale=0.7,valign=c]{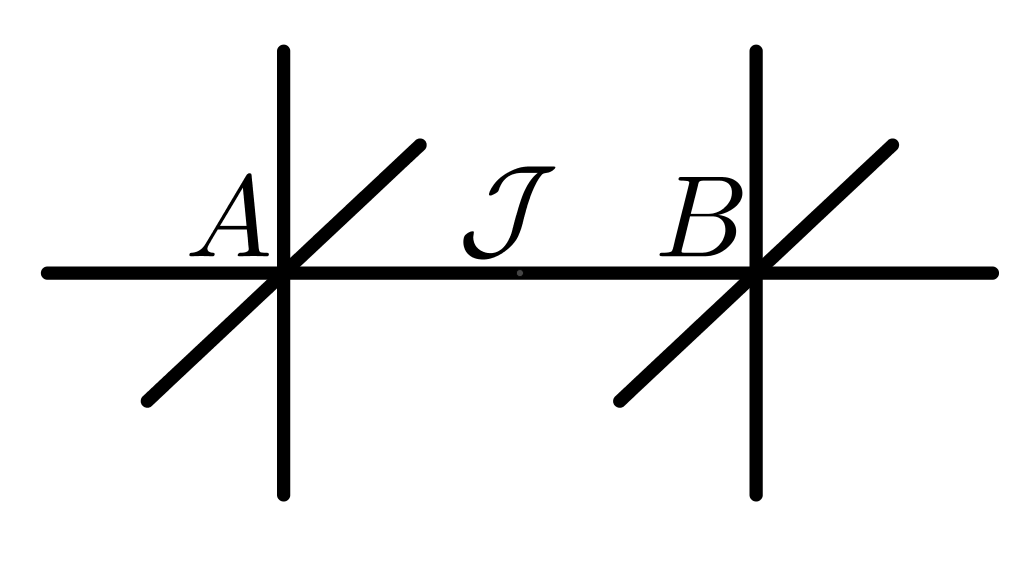},
\end{equation}
or by using different types of lines, such as a dashed line for restriction to $\{0\}$, and a ticked line for restriction to $\N$.

Now, consider the cubic contraction $T$ given by \eqref{TTens} and appearing in the l.h.s. of \eqref{maineq}. Label all legs and bonds of this contraction by numbers $1,2,\ldots,36$ in an arbitrary but fixed order. For definiteness, we choose the order as follows:
\begin{equation}
    \includegraphics[scale=0.33,valign=c]{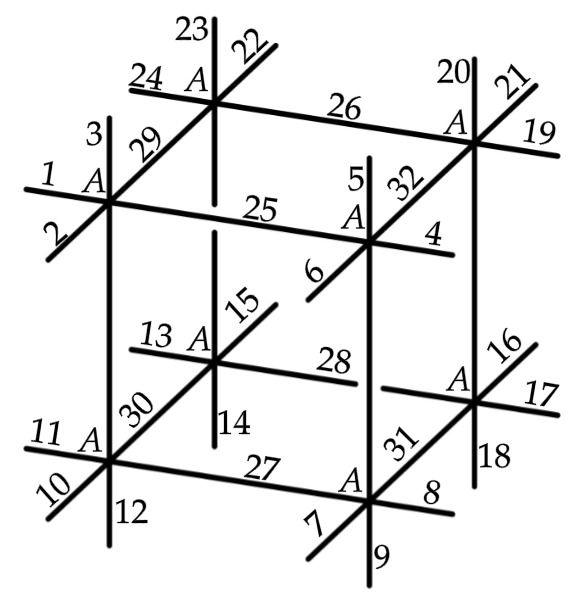}.
\end{equation}
Let $\gamma$ be a map from $\{1,\ldots,36\}$ to $\{\{0\}, \N\}$. We refer to such a map as a \textbf{template}. The set of all templates is denoted $\mathrm{Tmpl}$.

Using templates, we expand $T$ as:
\begin{equation}\label{Tdec}
    T=\sum_{\gamma \in \mathrm{Tmpl}} \includegraphics[scale=0.33,valign=c]{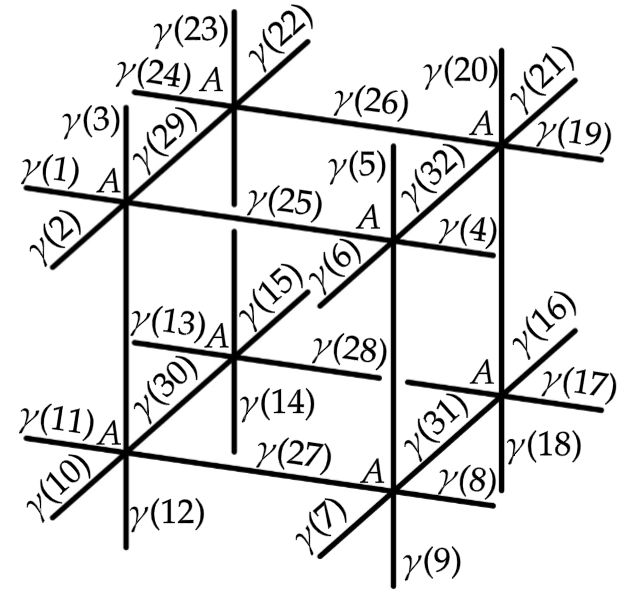}.
\end{equation}
This expansion may look frightening, but it simply states that $T$ is equal to the sum of all possible restrictions of itself with legs and bonds restricted to $\{0\}$ or $\N$.

We will prove Lemma~\ref{mainlem} by carefully analysing expansion \eqref{Tdec}. For this, we will now divide $\mathrm{Tmpl}$ into disjoint sets, which will require different treatments in our analysis.

As the first step of the division of $\mathrm{Tmpl}$, we distinguish the trivial template. The trivial template $\tau$ is the template that assigns $\{0\}$ to each bond and leg of $T$. We call "non-trivial" all other templates and denote the set of non-trivial templates by $\mathrm{Tmpl}'$.

\hypertarget{targ:diags}{Further division relies on the graphical representation of templates $\gamma$ by diagrams $\diag(\gamma)$ in which we make a bond or leg number $n$ dashed or ticked depending on whether $\gamma(n)=\{0\}$ or $\N$. For example, the following diagram corresponds to the template $\gamma$ with $\gamma(n)=\N$ for $n=3,29,6,20,35$,  and $\gamma(i)=\{0\}$ for all other $n$.
    \begin{equation}\label{templateEX}
        \diag(\gamma)=\includegraphics[scale=1,valign=c]{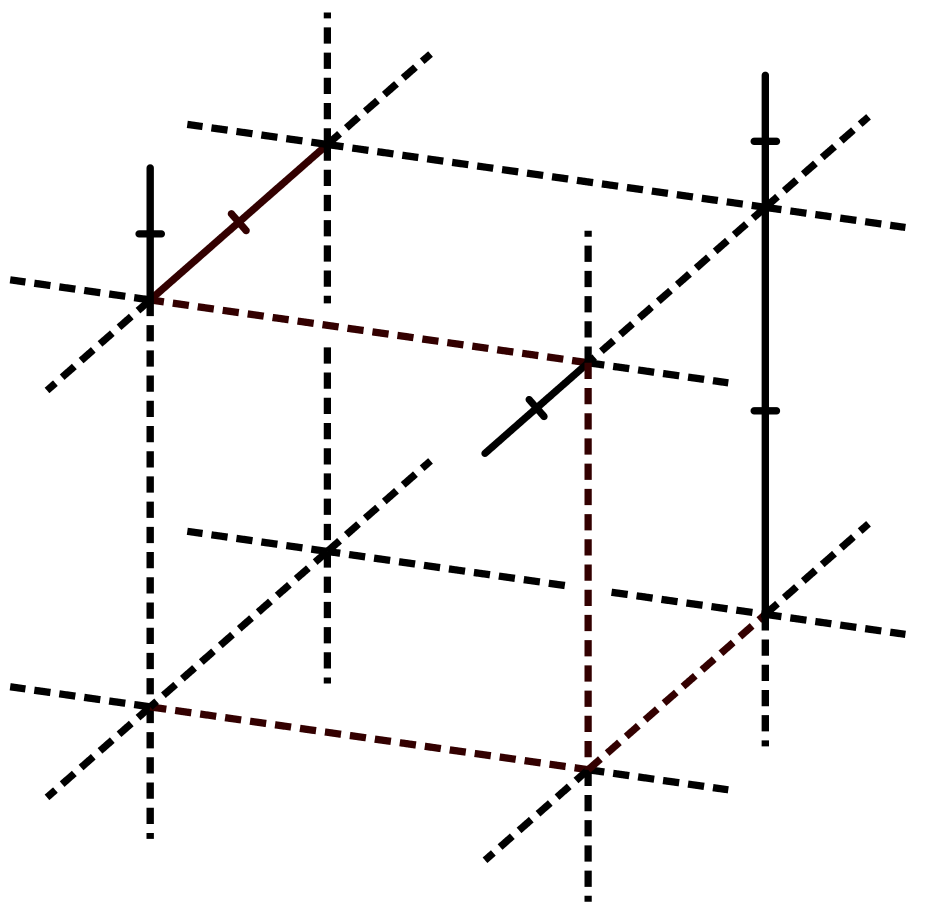}
    \end{equation}
    Note that the correspondence between templates $\gamma$ and diagrams $\diag(\gamma)$ is one-to-one.}

\hypertarget{targ:graphs}{We also need the notion of the internal graph of a non-trivial template. It is defined as follows. Consider $\diag(\gamma)$, where $\gamma \in \mathrm{Tmpl}'$. Let $V$ be the set of vertices of the cube where at least one ticked line of $\diag(\gamma)$ ends. Let $E$ be the set of ticked bonds in $\diag(\gamma)$.
    \begin{dfn}\label{intgraphdef}
        The \textbf{internal graph} of a non-trivial template $\gamma$, denoted by $\graph (\gamma)$, is defined as the graph with vertices $V$ and edges $E$. Note that $\graph (\gamma) \neq \void$.
    \end{dfn}
    For instance, the internal graph of the template $\gamma$ from \eqref{templateEX} is highlighted here in light blue:
    \begin{equation}
        \includegraphics[scale=1,valign=c]{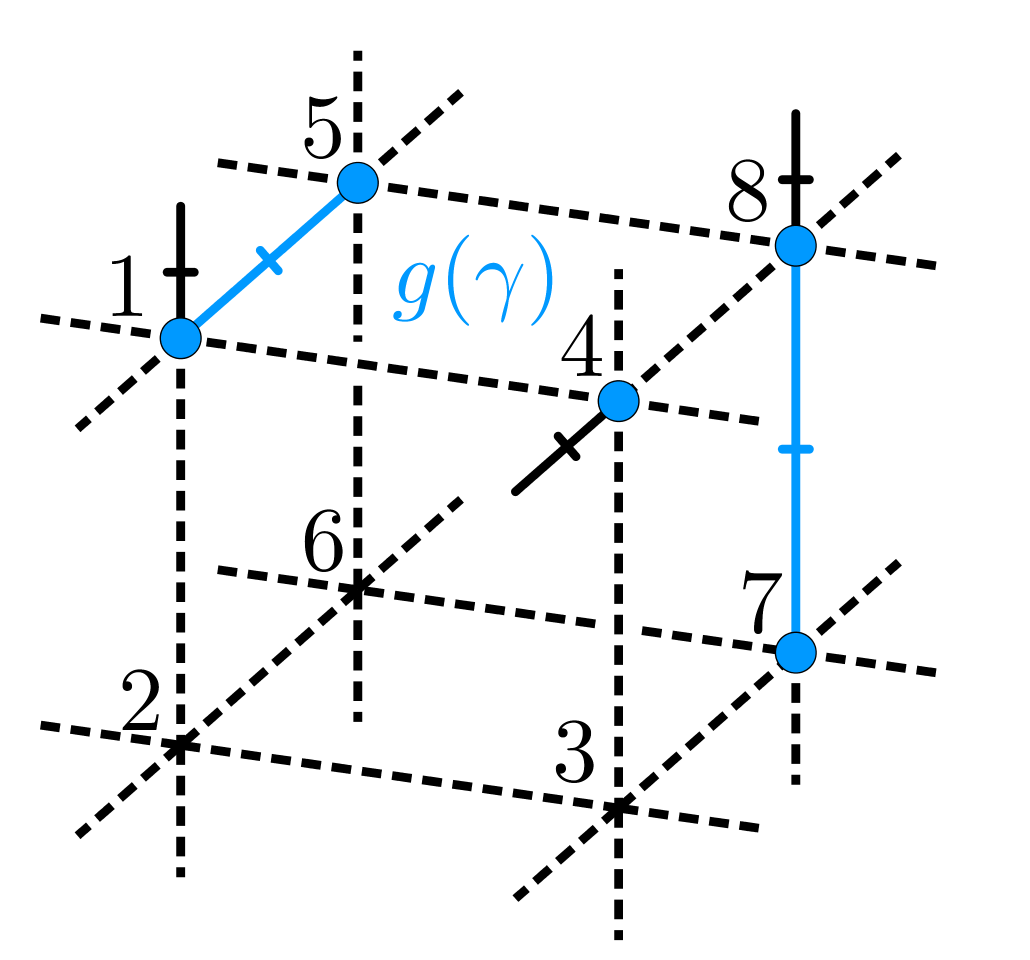}.
    \end{equation}
    It has $5$ vertices, $V=\{1,4,5,7,8\}$, and two edges $E=\{(1,5), (7,8)\}$.}

A non-trivial template is called \textbf{connected} (resp. \textbf{disconnected}) if its internal graph is connected (resp. disconnected). We denote the set of all connected templates by $\mathrm{Con}$ and the set of all disconnected templates by $\mathrm{Disc}$.

We will express each tensor $B_v$ as a sum over connected templates. There are two types of connected templates, which give rise to different terms in these sums. The difference between these types of templates is in the presence of sources, which are defined as follows.

Let $\gamma \in \mathrm{Con}$. We say that a vertex $v \in\graph (\gamma)$ is a \textbf{source} of $\gamma$ if at least one of the following two conditions holds:
\begin{equation} \label{source1}
    \text{a) there is a ticked leg\footnotemark coming out of } v,
\end{equation}
\footnotetext{Recall that we distinguish between legs and bonds. A bond is a pair of contracted legs.}
or
\begin{equation} \label{source2}
    \text{b) } v \text{ is a vertex of degree } 1 \text{ in }\graph (\gamma).
\end{equation}
If $v$ is not a source of $\gamma$, we call it a \textbf{sink} of $\gamma$. The sets of connected templates with and without sources are denoted by $\mathrm{Src}$ and $\mathrm{NSrc}$, respectively.

Here is an example. Consider the template $\gamma$ given by the following diagram:
\begin{equation}\label{diag_gamma}
    \diag(\gamma)=\includegraphics[valign=c,scale=1]{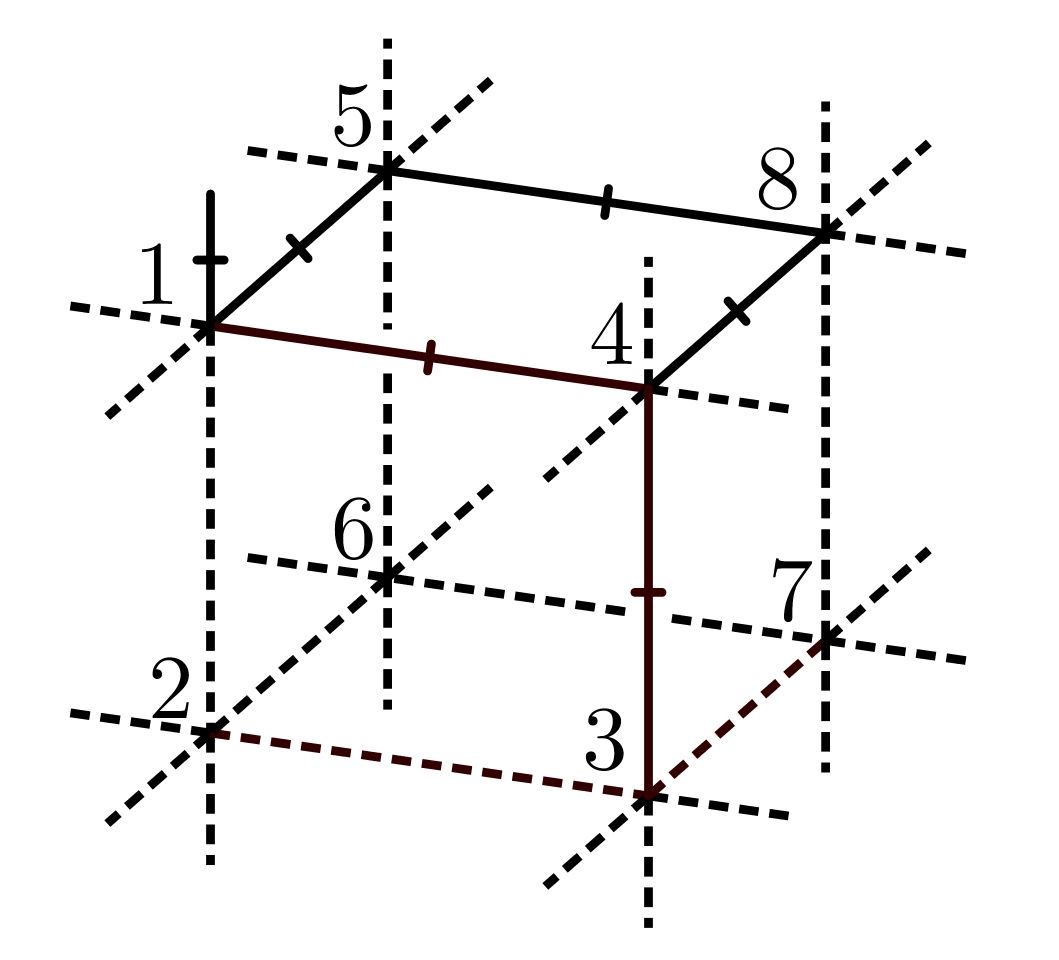}.
\end{equation}
Vertices $1$ and $3$ are sources of this template due to \eqref{source1} and \eqref{source2} respectively, and vertices $4,5,8$ are sinks.

\tikzset{every picture/.style={line width=0.75pt}} 

In this way, we obtain the division of templates into subsets summarised in the following diagram:
\begin{equation}\label{classif}
    \includegraphics[scale=0.3,valign=c]{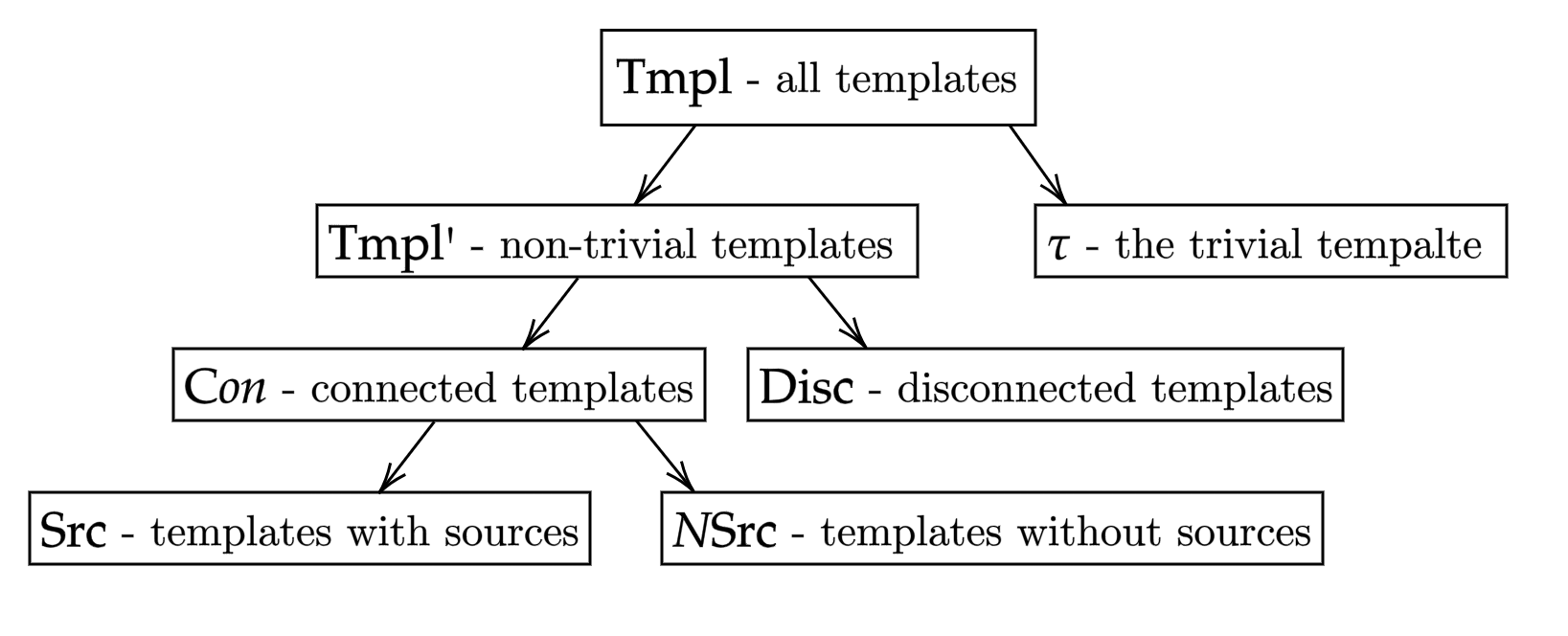}
\end{equation}
Note in particular that $\mathrm{Tmpl}$ is the following disjoint union:
\begin{equation}
    \mathrm{Tmpl}=\mathrm{Src} \sqcup \mathrm{NSrc}  \sqcup \mathrm{Disc} \sqcup \{\tau\}.
\end{equation}

To handle disconnected templates, we also introduce the notion of the union of templates. Consider templates $\gamma$ and $\mu$ such that:
\begin{equation}\label{uncond}
    \graph (\gamma) \cap\graph (\mu) = \void.
\end{equation}

\begin{dfn}\label{def:union}
    If \eqref{uncond} holds, the \textbf{union} of $\gamma$ and $\mu$, denoted $\gamma \star \mu$, is defined as the template given by the following formula:
    \begin{equation}\label{uni_def}
        \gamma \star \mu (n)=\begin{cases}
            \N,    & \text{if } \gamma(n)=\N \text{ or } \mu(n)=\N \\
            \{0\}, & \text{if } \gamma(n)=\mu(n)=\{0\}
        \end{cases}.
    \end{equation}
    If \eqref{uncond} does not hold, the union is not defined.
\end{dfn}
Here is an example, which by slight abuse of notation, we provide in terms of diagrams:
\begin{equation}\label{unionExample}
    \includegraphics[valign=c,scale=0.8]{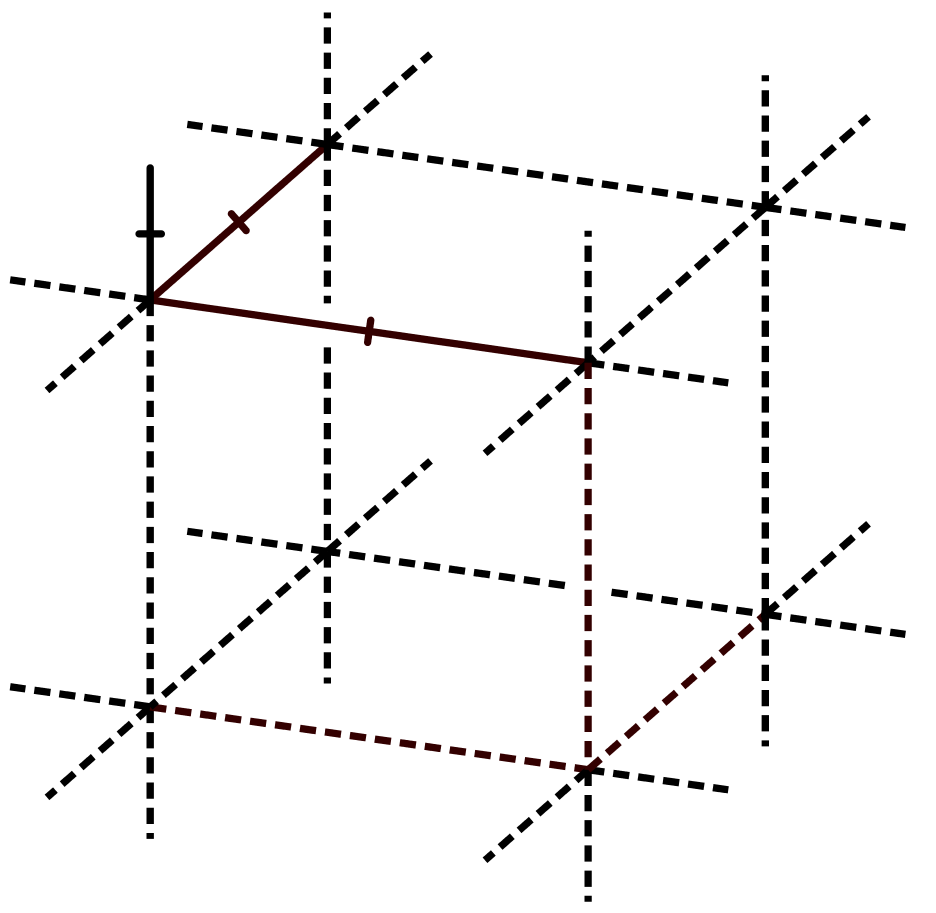} \star \includegraphics[valign=c,scale=0.8]{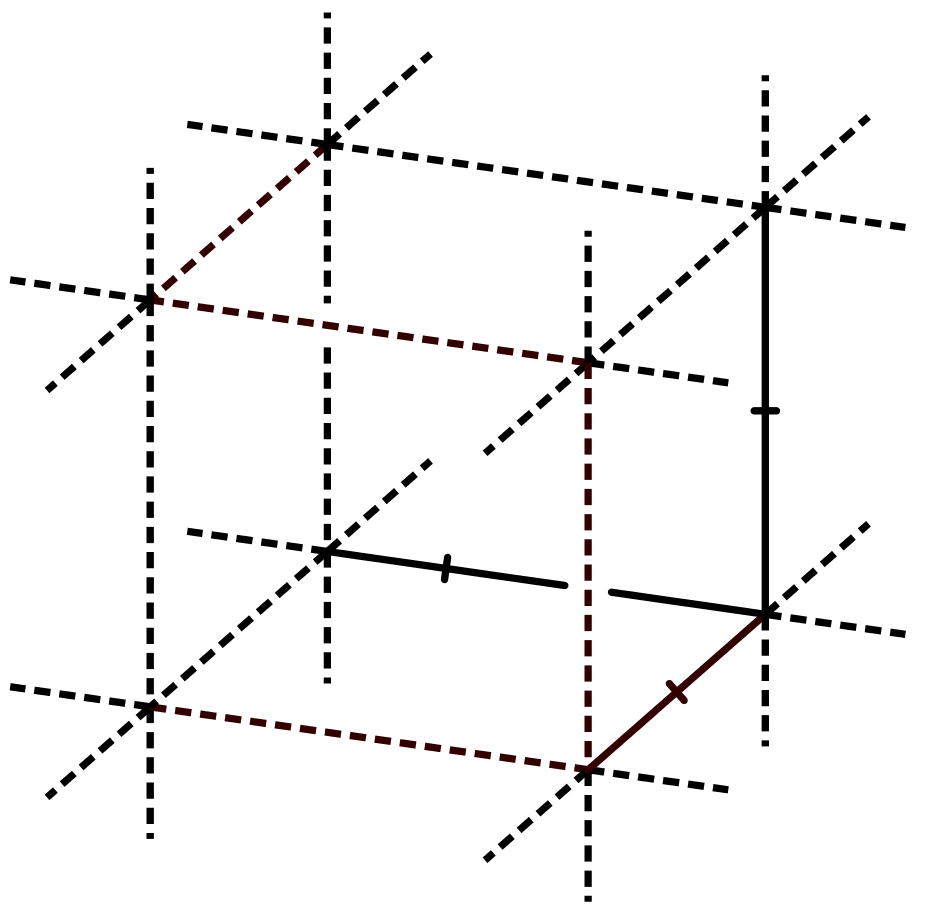}= \includegraphics[valign=c,scale=0.8]{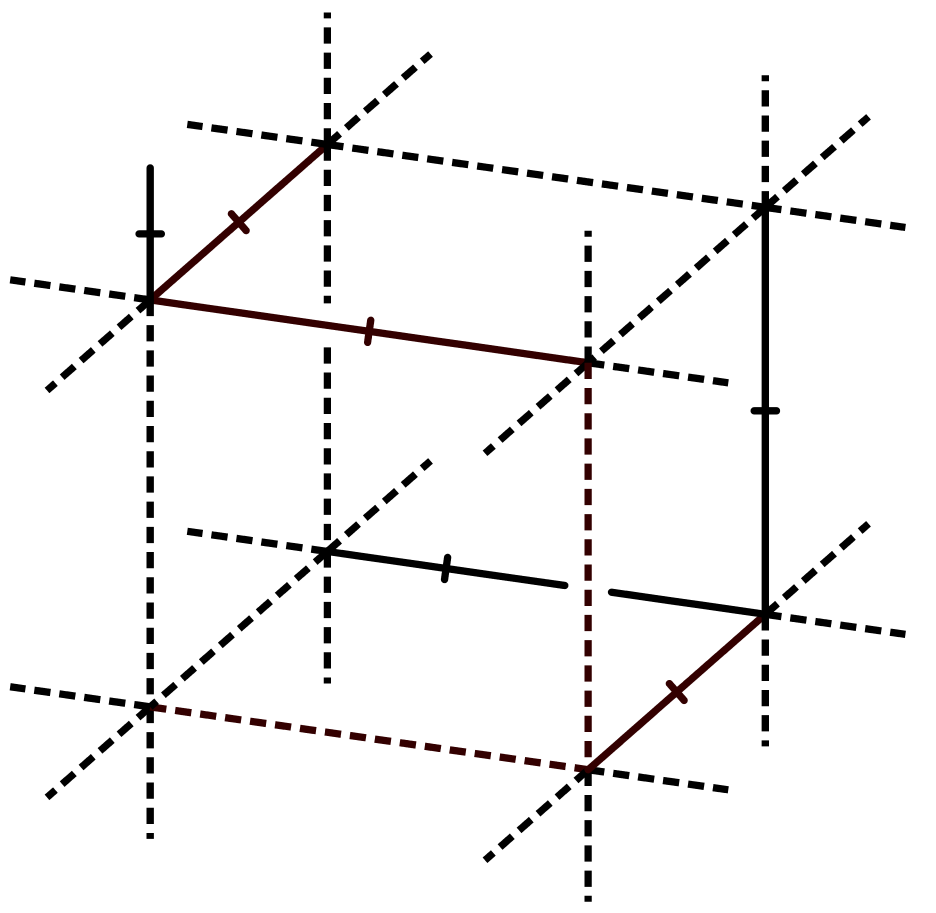}.
\end{equation}
Note that any disconnected template $\gamma$ can be uniquely written as the union of connected templates $\gamma_1,\ldots,\gamma_m$. We will refer to templates $\gamma_k$ as \textbf{connected components} of $\gamma$.

\subsection{Index set \texorpdfstring{$\mathcal{D}_0$}{D0}}\label{D0}

Recall that our goal is to construct tensors $B_v$ which solve Eq.~\eqref{maineq} and satisfy additional properties given in Lemma \ref{mainlem}. As indicated in Eq.~\eqref{maineq} by lines with two ticks, some legs of these tensors will take values in the new index set.\footnote{These are $T'$-internal legs of tensors $B_v$, using the notation from Section \ref{rRG}. See \eqref{Tprime} and the discussion below.} We denote this set by $\mathcal{D}_0$ and define it as follows.

For each $\gamma \in \mathrm{Src}$, we define a copy of $\N$ denoted by $\N_\gamma$. For each $\gamma \in \mathrm{NSrc}$, we define two distinct single-element index sets: one with the element denoted by $0_\gamma$ and another with the element denoted by $1_\gamma$. \footnote{If we wanted to be pedantic, we could define $\N_\gamma=\{(\gamma ,i), i\in \N\}, \ 0_\gamma =(\gamma,0), \ 1_\gamma =(\gamma, 1)$. \label{footNg}} All $\{0_\gamma\}, \{1_\gamma\}, \N_\gamma$ are meant to be disjoint from each other and from $\{0\}, \N$. We then define:
\begin{subequations}
    \begin{align}
         & \mathcal{O}=\{0\} \sqcup \bigsqcup_{\gamma \in \mathrm{NSrc}} \{0_\gamma\};                                                    \\
         & \mathcal{D}= \N \sqcup \bigsqcup_{\gamma \in \mathrm{Src}} \N_\gamma \sqcup \bigsqcup_{\gamma \in \mathrm{NSrc}} \{1_\gamma\}; \\
         & \mathcal{D}_0 = \mathcal{O} \sqcup \mathcal{D}.
    \end{align}
\end{subequations}

We update the graphical notation to work with tensors indexed by $\mathcal{D}_0$. Until now, a dashed line meant restriction to $\{0\}$ and a ticked line to $\N$. From now on, a dashed line will mean restriction to $\mathcal{O}$, and a ticked line will mean restriction to $\mathcal{D}$. Note that this new convention is not in conflict with the old one as restricting a tensor indexed by $\N_0$ to $\mathcal{O}$ (resp. $\mathcal{D}$) is equivalent to restricting it to $\{0\}$ (resp. $\N$).

Note that in the discussion about templates, we explicitly used sets $\{0\}$ and $\N$. However, it is clear that the definition of templates we gave in Section~\ref{tmplandD} can be generalised for use with any pair of disjoint sets in place of $\{0\}$ and $\N$. From this point onwards, we will work with "extended" templates $\gamma$, which have values $\gamma(i) \in \{\mathcal{O}, \mathcal{D}\}$. We think about set $\mathcal{O}$ (resp. $\mathcal{D}$) as an extension of $\{0\}$ (resp. $\N$).

Let us go through the main points of Section~\ref{tmplandD} to show that everything works smoothly after this extension:
\begin{itemize}
    \item Equation \eqref{Tdec} is still valid due to the equivalence of restrictions to $\{0\}, \N$ and $\mathcal{O}, \mathcal{D}$ for tensors indexed by $\N_0$.
    \item Diagrams $\diag(\gamma)$ consist of dashed and ticked lines, in agreement with the new graphical conventions introduced earlier (dashed line for $\mathcal{O}$, ticked for $\mathcal{D}$). They are indistinguishable from the diagrams of templates before extension.
    \item The trivial template $\tau$ now assigns $\mathcal{O}$ to each bond.
    \item Definitions that relied on the diagrams (internal graph $\graph (\gamma)$, sources and sinks) remain the same since diagrams $\diag(\gamma)$ are the same.
    \item Division of templates \eqref{classif} is the same since it relied on the diagrams $\diag(\gamma)$.
    \item In the definition of union, we replace \eqref{uni_def} by
          \begin{equation}\label{uni_def_new}
              \gamma \star \mu (n)=\begin{cases}
                  \mathcal{D}, & \text{if } \gamma(n)=\mathcal{D} \text{ or } \mu(n)=\mathcal{D} \\
                  \mathcal{O}, & \text{if } \gamma(n)=\mu(n)=\mathcal{O}
              \end{cases}.
          \end{equation}
\end{itemize}

We now have all the tools to proceed to the proof of Lemma~\ref{mainlem}.

\subsection{Preliminary remarks and the plan of the proof}\label{preps}

We will use the notation $T(\{\deltaB_v\})$ to represent the cubic contractions of tensors $B_v=A_*+\deltaB_v$, positioned as in \eqref{Tprime}. When all tensors $\deltaB_v$ are equal to $\deltaA$, we write $T(\deltaA)$. In this notation, Eq.~\eqref{maineq} can be expressed as:
\begin{equation}\label{maineq1}
    T(\{\deltaB_v\})=T(\deltaA).
\end{equation}
We recall that $T(\deltaA)$ has been expressed by \eqref{Tdec}. The contraction $T(\{\deltaB_v\})$ is given by the analogous equation with $A$ replaced by $B_v$.\footnote{Note that after the extension of templates in Section~\ref{D0}, the expansion \eqref{Tdec} is valid when $\deltaB_v$ is indexed by any $\mathcal{I} \subseteq \mathcal{D}_0$.} We will denote the individual terms in this expansion by $T_\gamma (\{\deltaB_v\})$ ($T_\gamma(\deltaA)$ when all $\deltaB_v=\deltaA$). In this notation, \eqref{maineq1} becomes:
\begin{equation}\label{maineq2}
    \sum_{\gamma \in \mathrm{Tmpl}}T_\gamma (\{\deltaB_v\})=\sum_{\gamma \in \mathrm{Tmpl}} T_\gamma (\deltaA).
\end{equation}

\begin{rmk}\label{orderconvention}
    Note that the arguments of $T$ and $T_\gamma$ are \emph{labelled} sets of tensors: each $\deltaB_v$ is labelled by the vertex $v$, which tells us where it should be inserted in the cubic contraction. Because of this labelling, the order in which $\deltaB_v$’s are listed does not matter.
\end{rmk}

Let us remark on the structure of individual terms in \eqref{maineq2}. Consider terms $T_\gamma(\deltaA)$ from the r.h.s. of \eqref{maineq2}. For illustrative purposes, we provide a particular example where $\gamma$ is given by the following diagram:
\begin{equation}\label{gamma_generic}
    \diag(\gamma) = \includegraphics[valign=c,scale=1]{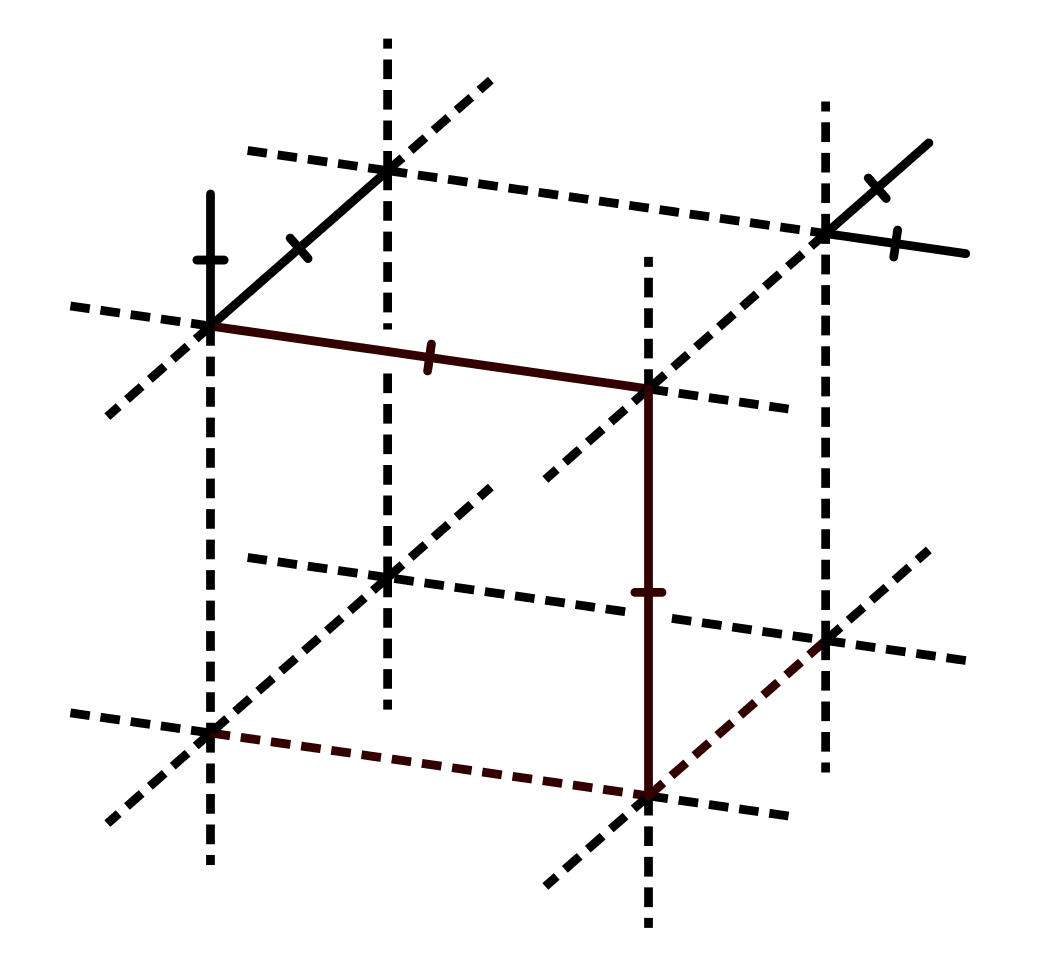}.
\end{equation}
The corresponding term is
\begin{equation}\label{Texp_generic}
    T_\gamma(\deltaA) = \includegraphics[valign=c]{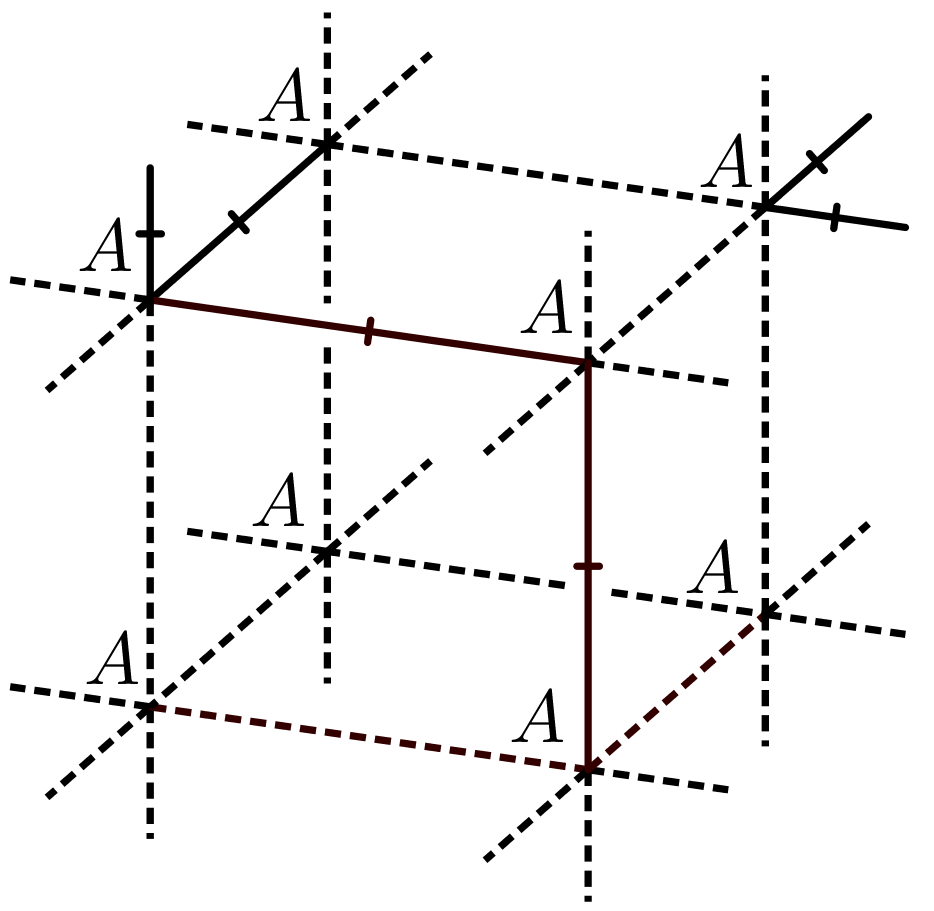}.
\end{equation}
Note that the restriction of $A$ with all legs restricted to $\mathcal{O}$ is $A_*$. At the same time, a restriction of $A$ with at least one leg restricted to $\mathcal{D}$ is equal to the restriction of $\deltaA$ with legs restricted in the same way.\footnote{Note that $A$ is indexed by $\N_0$. Thus, restrictions to $\mathcal{O}$ and $\mathcal{D}$ are equivalent to restrictions to $\{0\}$ and $\N$, respectively.} Consequently, at each vertex of $T_\gamma (\deltaA)$, we can replace $A$ with $A_*$ if the vertex is surrounded by dashed lines or replace it with $\deltaA$ otherwise. Below, we often make such replacements without an explicit mention. Here is \eqref{Texp_generic} after this replacement ($\deltaA$ is represented by the red disk, as shown in \eqref{solid}):
\begin{equation}\label{rep_restr}
    T_\gamma(\deltaA) = \includegraphics[valign=c]{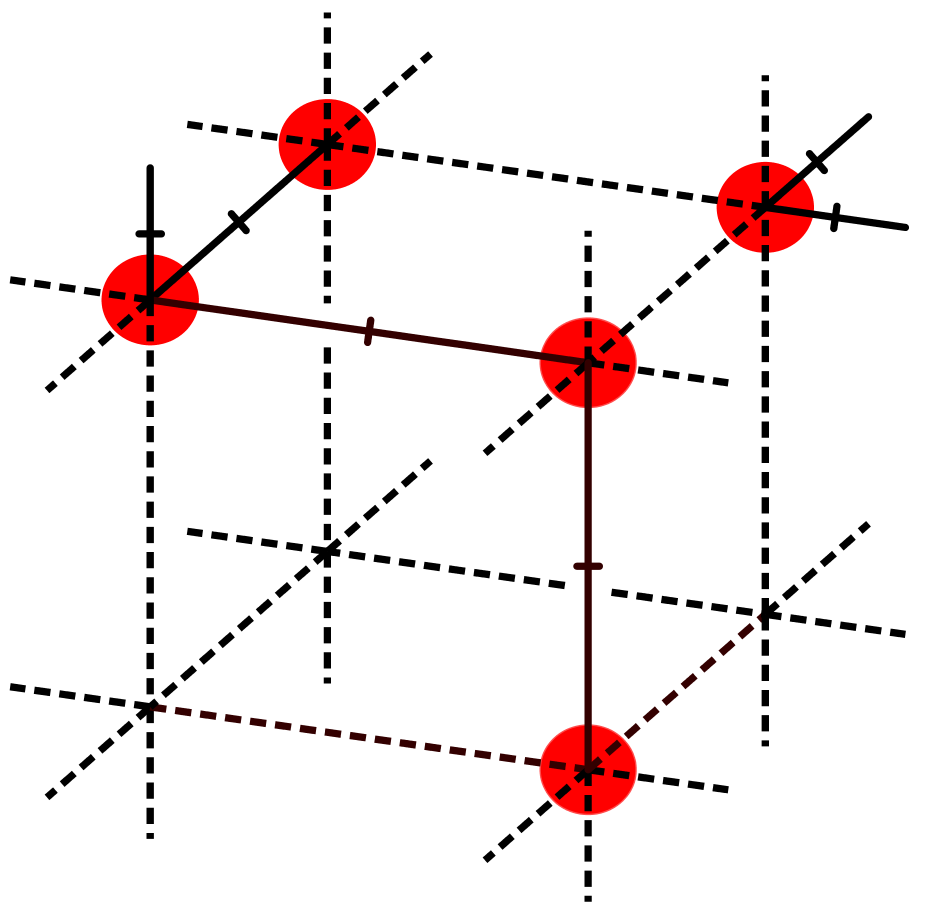}.
\end{equation}
In this way, we observe that each $T_\gamma(\deltaA)$ is the cubic contraction in which:
\begin{subequations}\label{trm_dscrA}
    \begin{align}
         & \text{Tensors $\deltaA$ are inserted in vertices $v \in\graph (\gamma)$.}\label{trm_dscrA1} \\
         & \text{Tensors $A_*$ are inserted in vertices $v \notin\graph (\gamma)$.}\label{trm_dscrA2}  \\
         & \text{Bonds and legs are restricted according to $\gamma$.}\label{trm_dscrA3}
    \end{align}
\end{subequations}
Here and below, "restricted according to $\gamma$" refers to the restriction as in \eqref{Tdec}. We will use this expression for cubic contractions, as in \eqref{trm_dscrA3}, and for individual $6$-tensors. For the latter, we say that a $6$-tensor is restricted according to $\gamma$ around a vertex $v$ if its legs are restricted in agreement with dashed ($\mathcal{O}$) and ticked ($\mathcal{D}$) lines touching the vertex $v$ in $\diag(\gamma)$. If a tensor is labelled by a vertex $v$ (see Remark~\ref{orderconvention}), we simply say that the tensor is restricted according to $\gamma$, meaning that it is restricted according to $\gamma$ around the vertex $v$.

Let us make an analogous discussion for $T_\gamma(\{\deltaB_v\})$. In this paper, we only consider tensors $\deltaB_v$ that satisfy the following condition:
\begin{equation}\label{dbvOr}
    \text{the restriction of $\deltaB_v$ with all legs restricted to $\mathcal{O}$ is $0$.\footnotemark}
\end{equation}
\footnotetext{Such a condition is automatically satisfied by $\deltaA$ since it is indexed by $\N_0$ and $\deltaA_{000000}=0$, see \eqref{ATens}.} Assuming this condition, $T_\gamma(\{\deltaB_v\})$ is (analogous to $T_\gamma(\deltaA)$) the cubic contraction in which:
\begin{subequations}\label{trm_dscr}
    \begin{align}
         & \text{Tensors $\deltaB_v$ are inserted in vertices $v \in\graph (\gamma)$.}\label{trm_dscr1} \\
         & \text{Tensors $A_*$ are inserted in vertices $v \notin\graph (\gamma)$.}\label{trm_dscr2}    \\
         & \text{Bonds and legs are restricted according to $\gamma$.}\label{trm_dscr3}
    \end{align}
\end{subequations}
Here is an example of $T_\gamma(\{\deltaB_v\})$ for the template $\gamma$ given by \eqref{gamma_generic}:
\begin{equation}
    T_\gamma(\{\deltaB_v\})=\includegraphics[valign=c]{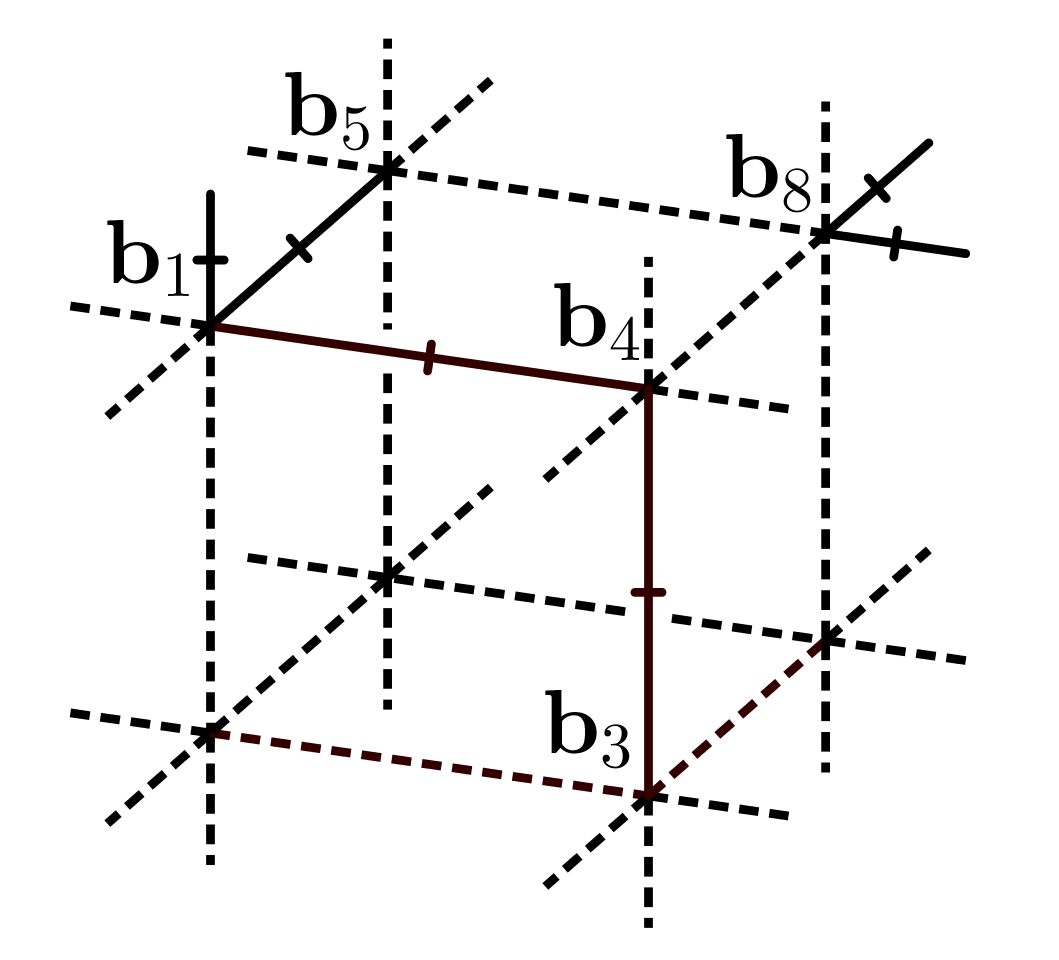}.
\end{equation}

In agreement with the above discussion, we have:
\begin{equation}\label{triv}
    T_\tau (\{\deltaB_v\})= T_*, \qquad T_\tau(\deltaA)= T_*,
\end{equation}
where $\tau$ is the trivial template and $T_*$ denotes the cubic construction of $8$ $A_*$ tensors.

After these preliminary remarks, we delve into the analysis of the main equation \eqref{maineq2}. Given the tensor $\deltaA$, our goal is to find tensors $\deltaB_v$ satisfying \eqref{maineq2} and \eqref{Pconds}. We will look for these in the form
\begin{equation}\label{dbv12dec}
    \deltaB_v = \deltaB_v^1 +\deltaB_v^2,
\end{equation}
where:
\begin{itemize}
    \item The tensors $\deltaB_v^1$ solve the following equation:
          \begin{equation}\label{Src_sect}
              \sum_{\gamma \in \mathrm{Tmpl}} T_\gamma(\{\deltaB_v^1\}) = T_*+\sum_{\gamma \in \mathrm{Src}} T_\gamma(\deltaA) +\sum_{\gamma \in \mathrm{DiscS}} T_\gamma (\deltaA).
          \end{equation}
          Here and below, $\mathrm{DiscS} \subset \mathrm{Disc}$ denotes the set of disconnected templates which are unions (see Def.~\ref{def:union} and Eq.~\eqref{uni_def_new}) of connected templates with sources, i.e.,
          \begin{equation}\label{Discs_def}
              \mathrm{DiscS}=\{\gamma_1 \star \ldots \star \gamma_m \mid \gamma_k \in \mathrm{Src}, m \geq 2 \}.
          \end{equation}
          Note the difference between \eqref{Src_sect} and \eqref{maineq2}. In \eqref{Src_sect}, the non-trivial contributions on the r.h.s. are limited to those coming from templates $\gamma \in \mathrm{Src}$ (see \eqref{classif}) and their unions. In \eqref{maineq2}, all templates contribute to the r.h.s..
    \item The tensors $\deltaB_v^2$ are correction terms that ensure $\deltaB_v$ resolves \eqref{maineq2}.
\end{itemize}

We will ensure that both $\deltaB_v^1$ and $\deltaB_v^2$ satisfy \eqref{Pconds}, and hence their sum $\deltaB_v$ will also satisfy \eqref{Pconds}. In particular, we will achieve \eqref{P1} and \eqref{P2} using two different methods for $\deltaB_v^1$ and $\deltaB_v^2$. Let us provide a coarse-grained picture of these methods.

In both cases, we will first find tensors $\deltaB_v^\gamma$ that fulfill \eqref{P1}, \eqref{P2}, and analogues of the following equation:
\begin{equation}\label{randlabl}
    T(\{\deltaB_v^\gamma\}) = T_\gamma(\deltaA),
\end{equation}
where $\gamma \in \mathrm{Src}$ for the $\deltaB_v^1$ case and $\gamma \in \mathrm{NSrc}$ for the $\deltaB_v^2$ case. Then, we will define $\deltaB_v^1$ and $\deltaB_v^2$ as sums of the corresponding $\deltaB_v^\gamma$ tensors.\footnote{For brevity, certain technical details are omitted here.} The difference between the $\deltaB_v^1$ and $\deltaB_v^2$ cases lies in the approach to satisfying \eqref{P1} and \eqref{P2} for $\deltaB_v^\gamma$.

In the $\deltaB_v^1$ case, we will use that $\deltaA$ restricted according to a template $\gamma \in \mathrm{Src}$ around a source vertex satisfies \eqref{P1} and \eqref{P2} even after being slightly scaled up.\footnote{We will be demonstrate this in Section~\ref{PcondsDBV1}} We will scale up $\deltaA$'s in the sources and scale down $\deltaA$'s in the sinks of $T_\gamma(\deltaA)$ in the r.h.s. of \eqref{randlabl}, ensuring that all tensors therein satisfy \eqref{P1} and \eqref{P2}. The tensors $\deltaB_v^\gamma$ will be defined as these scaled versions of $\deltaA$.

In the $\deltaB_v^2$ case, we will seek a solution of \eqref{randlabl} where only $N_\gamma-1$ of  $\deltaB_v^\gamma$ tensors are nonzero, where $N_\gamma$ is the number of vertices in $\graph (\gamma)$. Note that $T_\gamma(\deltaA)=O(\epsilon^{N_\gamma})$, and so it is possible to find $\deltaB_v^\gamma$ bounded as $O(\epsilon^{N_\gamma/(N_\gamma-1)})$, and therefore satisfying \eqref{P1} and \eqref{P2}. Our search for appropriate $\deltaB_v^\gamma$'s will be simplified dramatically by the fact that $T_\gamma(\deltaA) \propto T_*$ when $\gamma \in \mathrm{NSrc}$.

\subsection{Construction of tensors \texorpdfstring{$\deltaB_v^1$}{bv1}} \label{dbv1}

In this subsection, we find tensors $\deltaB_v^1$ satisfying \eqref{Src_sect} and \eqref{Pconds}. Our idea is to solve \eqref{Src_sect} term-by-term. We will construct $\deltaB_v^1$ such that for a non-trivial template $\gamma$:
\begin{equation}\label{dbv1_goal}
    T_\gamma(\{\deltaB_v^1\})=\begin{cases}
        T_\gamma(\deltaA), & \text{ if } \gamma \in \mathrm{Src} \sqcup \mathrm{DiscS}    \\
        0,                 & \text{ if } \gamma \notin \mathrm{Src} \sqcup \mathrm{DiscS}
    \end{cases}.
\end{equation}
Clearly, this implies \eqref{Src_sect}.

We will search for $\deltaB_v^1$ satisfying \eqref{dbv1_goal} and \eqref{Pconds} in the form:
\begin{equation}\label{dbv1exp}
    \deltaB_v^1 = \sum_{\mu \in \mathrm{Src}} \deltaB_v^\mu.
\end{equation}
We will ensure that $\deltaB_v^\mu$ tensors satisfy \eqref{Pconds}, and hence $\deltaB_v^1$ will also satisfy \eqref{Pconds}. We will achieve \eqref{dbv1_goal} by constructing $\deltaB_v^\mu$ such that:
\begin{subequations}\label{compatrel1}
    \begin{align}
        \forall \gamma \in \mathrm{Tmpl}',\mu \in \mathrm{Src}: \qquad & T_\gamma (\{\deltaB_v^{\mu}\})=\delta_{\gamma,\mu} T_\gamma (\deltaA);\footnotemark \label{compat1}                 \\
        \forall \gamma \in  \mathrm{Con}:  \qquad                      & T_\gamma \left( \{\deltaB_v^1\}\right)  =\sum_{\mu \in \mathrm{Src}} T_\gamma (\{\deltaB_v^{\mu}\});\label{compat2} \\
        \forall \gamma \in \mathrm{Disc}: \qquad                       & T_\gamma(\{\deltaB_v^1\})=\begin{cases}
                                                                                                       T_\gamma (\deltaA), & \text{if } \gamma \in \mathrm{DiscS}    \\
                                                                                                       0,                  & \text{if } \gamma \notin \mathrm{DiscS}
                                                                                                   \end{cases}.\label{restB}
    \end{align}
\end{subequations}
\footnotetext{Here, $\delta_{\gamma,\mu}$ is the Kronecker delta. Note that in this paper, we do not use Einstein's convention of summation over repeated indices. E.g. there is no summation over $\gamma$ in \eqref{compat1}}

Let us show that \eqref{dbv1_goal} follows from \eqref{compatrel1}. Let $\gamma$ be a connected template. Then, using \eqref{compat2},\eqref{compat1} we obtain:
\begin{equation}\label{water2}
    T_\gamma(\{\deltaB_v^1\}) = \sum_{\mu \in \mathrm{Src}} T_\gamma (\{\deltaB_v^\mu\})=\sum_{\mu \in \mathrm{Src}} \delta_{\gamma,\mu} T_\gamma(\deltaA)=\begin{cases}
        T_\gamma (\deltaA), & \text{ if } \gamma \in \mathrm{Src}    \\
        0,                  & \text{ if } \gamma \notin \mathrm{Src}
    \end{cases},
\end{equation}
i.e., \eqref{dbv1_goal} for connected templates $\gamma$. For disconnected $\gamma$, \eqref{dbv1_goal} follows from \eqref{restB}.

In this way, we reduced the initial problem of finding tensors $\deltaB_v^1$ satisfying \eqref{Pconds} and \eqref{Src_sect} to the problem of finding tensors $\deltaB_v^\mu, \mu \in \mathrm{Src}$ satisfying \eqref{Pconds} and \eqref{compatrel1}.

Before providing $\deltaB_v^\mu$, let us remark on the notion of internal and external legs. In Section 3, we divided the legs of $B_v$ tensors into two groups: $T'$-internal and $T'$-external, where $T'=T(\{\deltaB_v\})$. For simplicity, we will omit the prefix "$T'$-" and refer to these groups as internal and external legs. This convention will apply to all tensors whose position in a cubic contraction is specified by the index $v=1,\ldots,8$. Here, we provide the diagram with the external legs of each tensor in the contraction highlighted in light blue:
\begin{equation}\label{int_legs_tab}
    \includegraphics[valign=c]{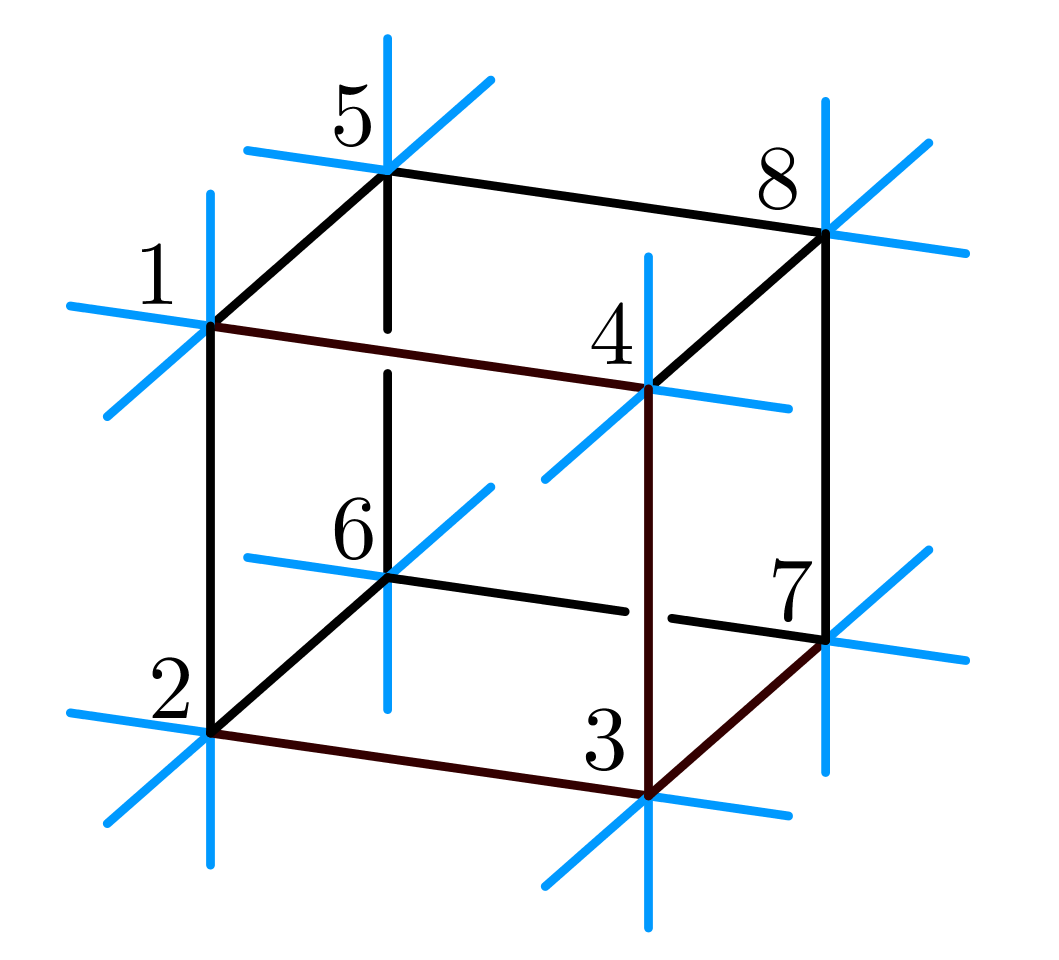}
\end{equation}

After this clarification, we proceed to the construction of $\deltaB_v^\mu$ tensors satisfying \eqref{Pconds} and \eqref{compatrel1}. We will present the construction as follows. First, using an example with a particular $\mu \in \mathrm{Src}$, we will illustrate all steps of the construction. Next, we will generalise our example and construct $\deltaB_v^\mu$ for all $\mu \in \mathrm{Src}$. Finally, we will check that all $\deltaB_v^\mu$ satisfy \eqref{Pconds} and \eqref{compatrel1}.

\subsubsection{Example}\label{example_section_dbv1}
Consider $\mu \in \mathrm{Src}$ given by the following diagram:
\begin{equation}\label{muEX}
    \diag(\mu) = \includegraphics[valign=c,scale=0.75]{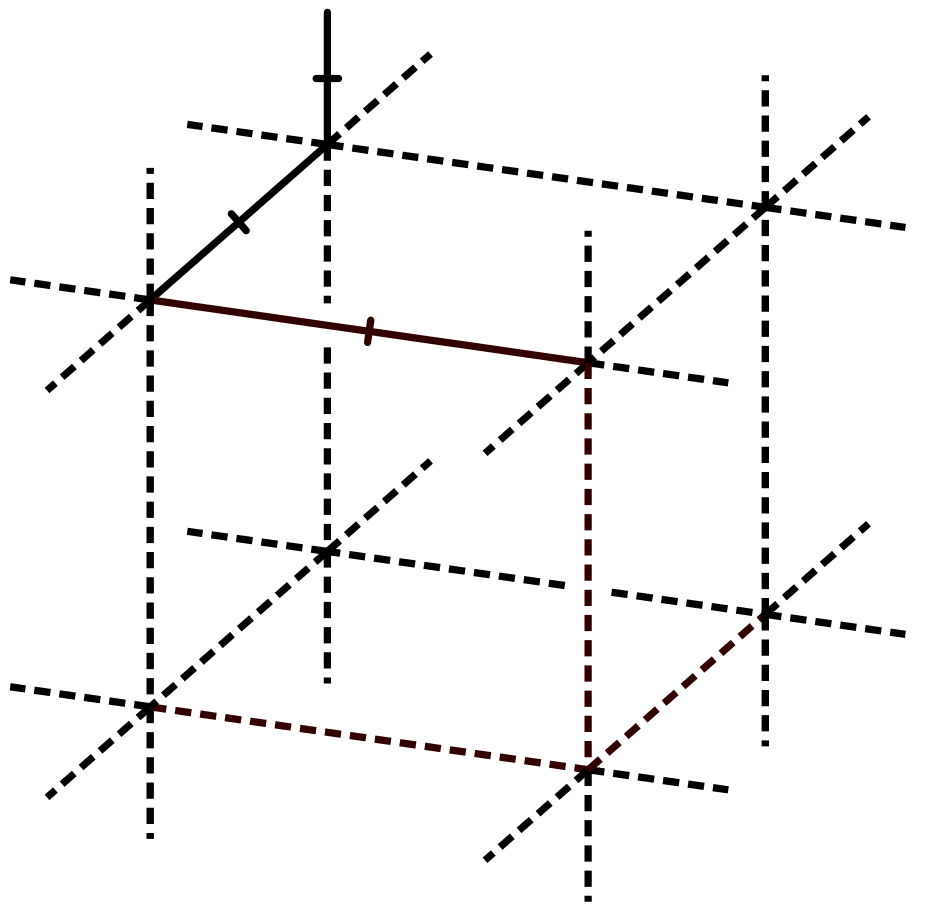}.
\end{equation}
We focus on the corresponding $T_\mu(\deltaA)$:
\begin{equation}\label{stp11}
    T_\mu(\deltaA)=\includegraphics[valign=c,scale=0.75]{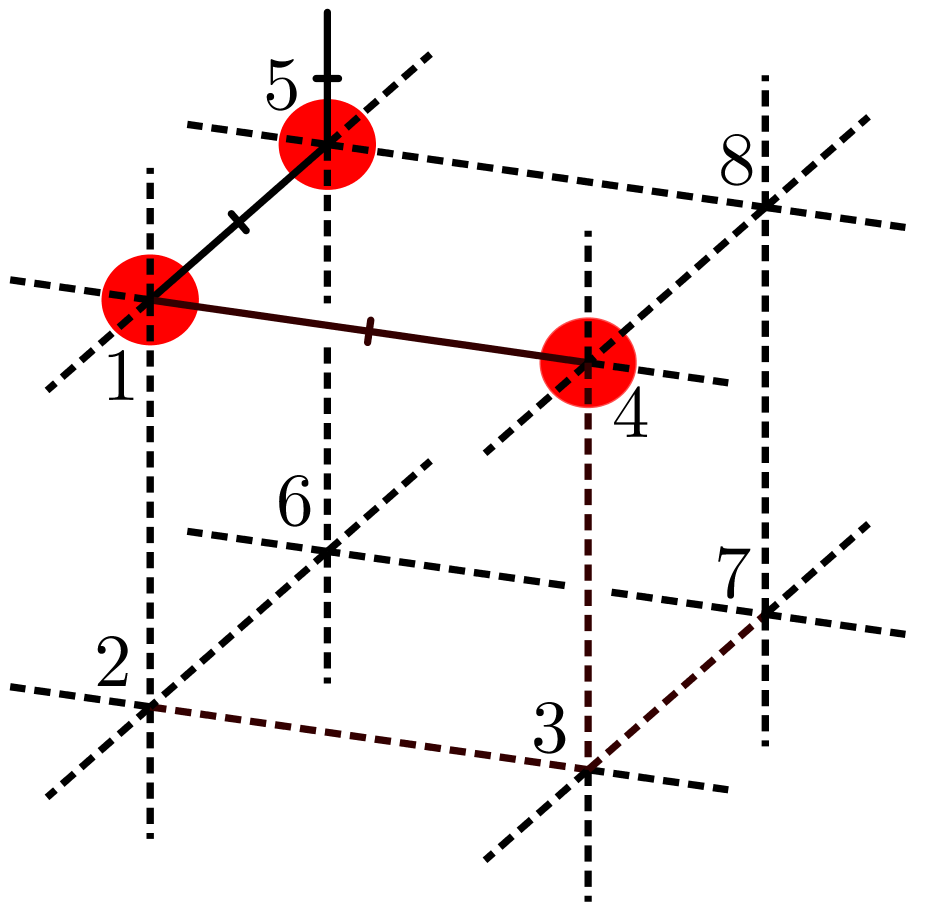}.
\end{equation}
Note that at each vertex $v \in\graph (\mu)$ in \eqref{stp11}, i.e., $v=1,4,5$, tensor $\deltaA$ has legs restricted to $\mathcal{O}$ and $\mathcal{D}$ (equivalently, to $\{0\}$ and $\N$ since $\deltaA$ is indexed by $\N_0$). Let us denote these restrictions as $\deltaA_v$:
\begin{equation}\label{stp12}
    \deltaA_1 =\includegraphics[valign=c]{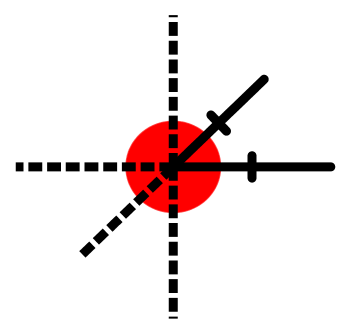} , \deltaA_4 =\includegraphics[valign=c]{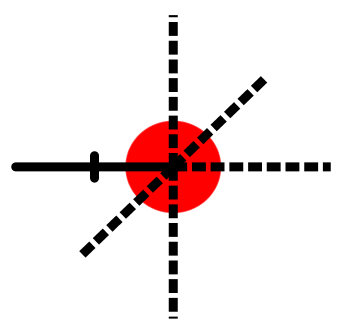} , \deltaA_5 = \includegraphics[valign=c]{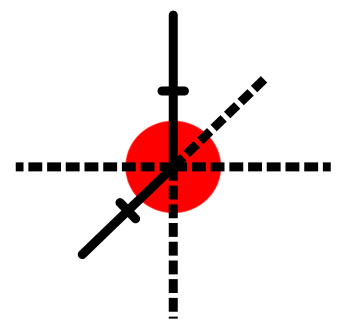}
\end{equation}
Additionally, for $v \notin\graph (\mu)$ we define:
\begin{equation}\label{zerosdav}
    \deltaA_v=0, \ v \in \{2,3,6,7,8\}.
\end{equation}

To simplify the verification of \eqref{compat1} later on, we will track the values of $T_\mu$ evaluated on the tensors from each step of the construction. For $\deltaA_v$ defined in \eqref{stp12}, \eqref{zerosdav}, we have:
\begin{equation}\label{semistep1}
    T_\mu(\{\deltaA_v\})=T_\mu (\deltaA).
\end{equation}
In particular, \eqref{zerosdav} does not affect \eqref{semistep1}, as $T_\mu$ inserts $A_*$ in vertices $v \notin\graph (\mu)$ (see \eqref{trm_dscr}).

Next, let us apply reindexing (see \eqref{reindex}) to obtain tensors $\deltaA^\mu_v$ from $\deltaA_v$. This reindexing will be crucial for the proof of \eqref{compat2}, as it will make the contraction of two $\deltaB_v^\mu$ tensors with different $\mu$ vanish when restricted to $\mathcal{D}$. We perform reindexing only on internal legs of $\deltaA_v$, replacing values $i\in \N$ with their copies from $\N_\mu$.\footnote{Recall that sets $\N_\mu$ are disjoint copies of $\N$. See Section~\ref{D0} and Footnote~\ref{footNg}.} Note that we do not replace $0$, leaving it as it is. In graphical notation, we represent tensors $\deltaA^\mu_v$ by writing $\uparrow \!\! \mu$ next to the internal ticked legs, as illustrated here:
\begin{equation}\label{shift_graph}
    \deltaA^\mu_1 = \includegraphics[valign=c, scale=0.6]{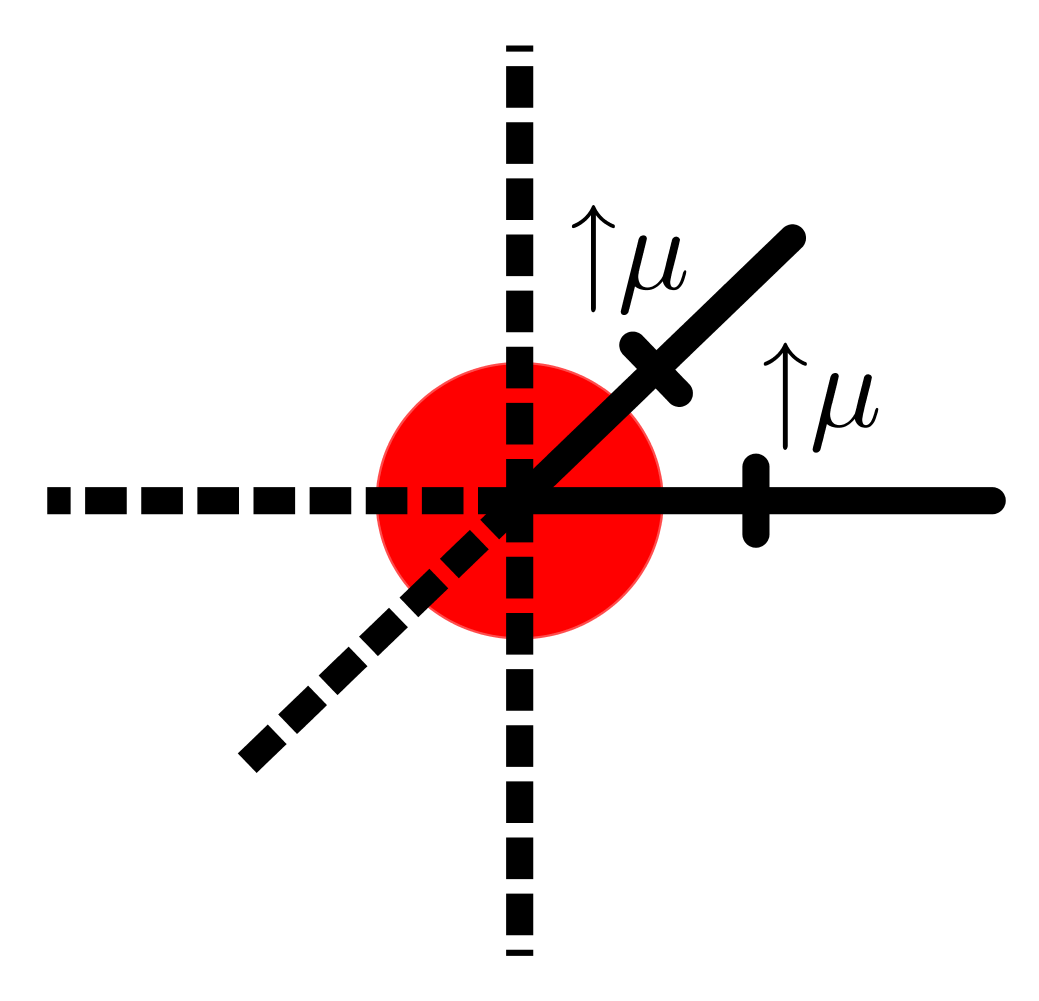} , \deltaA^\mu_4= \includegraphics[valign=c, scale=0.6]{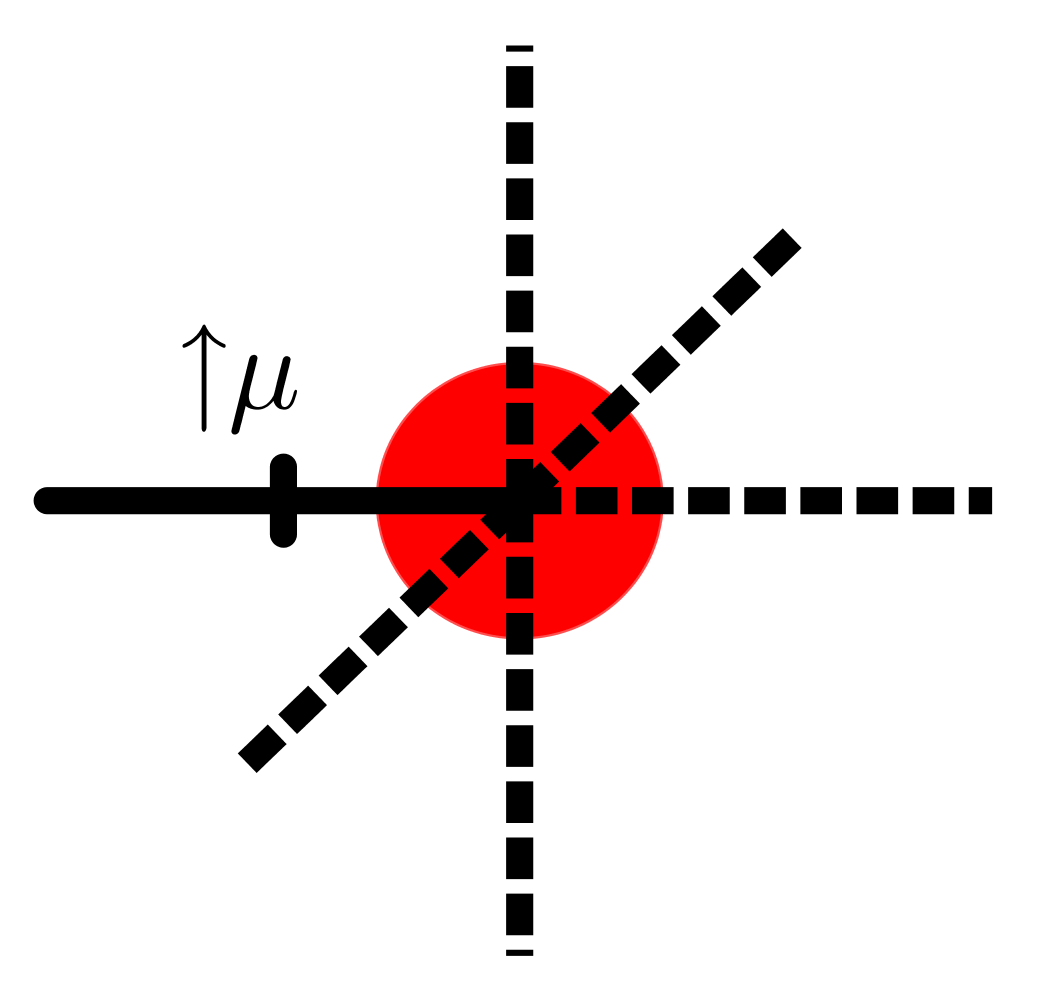}, \ \deltaA^\mu_5 =\includegraphics[valign=c, scale=0.6]{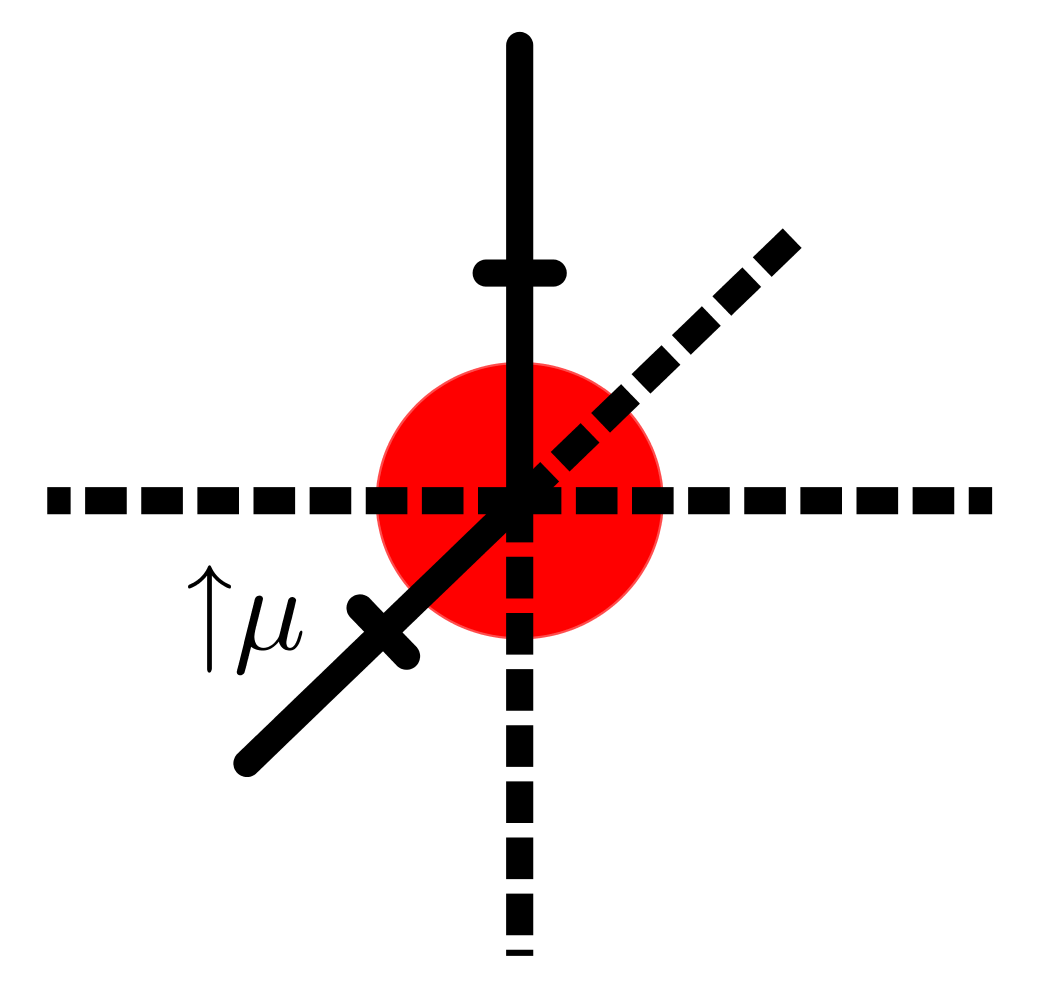}.
\end{equation}
We do not write anything next to dashed internal legs as they remain restricted to $\{0\}$.

Note that we have not changed the tensor elements of $\deltaA_v$ but have merely replaced positive integer values of the internal leg indices with elements of $\N_\mu$. Therefore,
\begin{equation}\label{semistep2}
    T_\mu (\{\deltaA^\mu_v\})=T_\mu(\{\deltaA_v\}).
\end{equation}

Finally, we define $\deltaB_v^\mu$. First, we fix $c >0$. We require $c$ to be sufficiently small, namely $c<1/2$. The significance of $1/2$ will become apparent below when we check \eqref{Pconds} for $\deltaB_v^\mu$.

Next, for sources $v=4,5$, we define:
\begin{equation}\label{dbv_src}
    \deltaB_4^{\mu} = \epsilon^{-c}  \deltaA^\mu_4, \qquad \deltaB_5^{\mu} = \epsilon^{-c}  \deltaA^\mu_5 .
\end{equation}
For the sink $v=1$, we define:
\begin{equation}\label{dbv_snk}
    \deltaB_1^{\mu} = \epsilon^{2 c} \deltaA^\mu_1.
\end{equation}
For vertices $v \notin \graph (\mu)$, we define:
\begin{equation}\label{dbv_out}
    \deltaB_v^\mu =0.
\end{equation}

In Eqs.~\eqref{dbv_src}-\eqref{dbv_out}, we assume that $\deltaB_v^\mu$ tensors have internal legs indexed by $\{0\} \sqcup \N_\mu$ and external legs indexed by $\N_0$ (otherwise \eqref{dbv_src} and \eqref{dbv_snk} does not make sense). For simplicity, we extend by zeros all tensors $\deltaB^\mu_v$ such that their internal legs can take any values in $\mathcal{D}_0$ (i.e. we add the missing tensor elements by fixing their values to be zero).

We can see that the weights $\epsilon^{-c}$ and $\epsilon^{2 c}$ cancel each other in $T_\mu(\{\deltaB_v^\mu\})$, yielding:
\begin{equation}\label{semisteps}
    T_\mu(\{\deltaB^\mu_v\}) = T_\mu(\{\deltaA^\mu_v\}).
\end{equation}
This procedure of rescaling tensors so that their contraction remains the same is called \textbf{reweighting}.

Furthermore, by combining \eqref{semistep1}, \eqref{semistep2} and \eqref{semisteps}, we obtain:
\begin{equation}\label{fullstep}
    T_\mu (\{\deltaB_v^\mu\}) = T_\mu(\deltaA).
\end{equation}

We summarise the construction of $\deltaB_v^\mu$ tensors for $\mu$ given by \eqref{muEX} in the following diagram:
\begin{equation}
    \deltaA \xrightarrow[\eqref{stp12}]{restriction} \deltaA_v \xrightarrow[\eqref{shift_graph}]{reindexing} \deltaA^\mu_v \xrightarrow[\eqref{dbv_src},\eqref{dbv_snk}]{reweighting} \deltaB_v^\mu.
\end{equation}

Let us conclude our example by showing that the tensors $\deltaB_v^\mu$ defined in \eqref{dbv_src}, \eqref{dbv_snk}, and \eqref{dbv_out} satisfy \eqref{Pconds}. For $v \notin\graph (\mu)$, conditions \eqref{Pconds} are trivially satisfied. Also, it is evident that all $\deltaB_v^\mu$ satisfy \eqref{P0} and \eqref{P3}. Consequently, we will focus on conditions \eqref{P1} and \eqref{P2} for tensors $\deltaB_v^\mu$ with $v \in\graph (\mu)$, i.e., $v=1,4,5$.

We will show that \eqref{P1} and \eqref{P2} hold for any parameters $a,b$ belonging to intervals:
\begin{subequations}\label{domains}
    \begin{align}
        1/2 <a \leq 1-c, \label{inta} \\
        1<b \leq \min (1+2c,2-c). \label{intb}
    \end{align}
\end{subequations}
The lower bounds are the conditions on $a,b$ in Lemma~\ref{mainlem}. Since $c<1/2$, both these intervals are non-empty.

We start with $\deltaB_1^\mu$ defined in \eqref{dbv_snk}. Since $\deltaA=O(\epsilon)$, we have $\deltaA_v, \deltaA^\mu_v=O(\epsilon)$, and
\begin{equation}\label{bound_b1}
    \deltaB_1^\mu=O(\epsilon^{1+2c}),
\end{equation}
which clearly satisfies \eqref{P1} and \eqref{P2} with $a,b$ as in \eqref{domains}.

Next, consider $\deltaB_5^\mu$ defined in \eqref{dbv_src}. Note that
\begin{equation}\label{bound_b5}
    \deltaB_5^\mu = O(\epsilon^{1-c}),
\end{equation}
which satisfies \eqref{P1} with $a$ as in \eqref{inta}. Condition \eqref{P2} is trivially satisfied by $\deltaB_5^\mu$ for any $b$. Indeed, $\bar{\deltaB_5^\mu}=0$ since one of the external legs of $\deltaB_5^\mu$ was already restricted to $\mathcal{D}$ (see \eqref{shift_graph}), and $\mathcal{D} \cap \{0\}= \void$.

Finally, consider $\deltaB_4^\mu$ defined in \eqref{dbv_src}. Note that $\deltaA_4$ has only one leg that takes values in $\N$ (see \eqref{stp12}).  Hence, since by assumption, $\deltaA$ has corner structure \eqref{corn}, we have $\deltaA_4, \deltaA^\mu_4=O(\epsilon^2)$, and
\begin{equation}\label{bound_b4}
    \deltaB_4^\mu=O(\epsilon^{2-c}),
\end{equation}
which satisfies \eqref{P1} and \eqref{P2} with $a,b$ as in \eqref{domains}.

In this way, we see that the constructed tensors $\deltaB_v^\mu$ satisfy \eqref{P1} and \eqref{P2} with any $a$ and $b$ within the intervals \eqref{domains}, as claimed.

\subsubsection{General construction}\label{gendbv1}

The general construction will mimic our example, except that the constant $c$ will have to be reduced.

We fix a positive $c<1/14$. The precise value $1/14$ will be important for the check of \eqref{Pconds}. Then, for every $\mu \in \mathrm{Src}$, we will obtain $\deltaB_v^\mu$ by the following step-by-step construction:
\begin{enumerate}
    \setcounter{enumi}{-1}
    \item Let $N_{source}$ and $N_{sink}$ denote the numbers of sources and sinks of $\mu$.
    \item \label{itm:dAv} We denote by $\deltaA_v$ the restriction of $\deltaA$ which is restricted according to $\mu$ around $v$.\footnote{The expression "restricted according to $\mu$ around $v$" was introduced in the discussion after \eqref{trm_dscrA}.} For $v \notin\graph (\mu)$, this implies $\deltaA_v=0$ as $\deltaA_{000000}=0$. Note that \eqref{semistep1} holds.
    \item \label{itm:dAvtld} We define tensors $\deltaA^\mu_v$ by the reindexing of the internal legs of $\deltaA_v$ as in our example. Namely, the reindexing operation replaces positive integer index values with elements of $\N_\mu$. Note that \eqref{semistep2} holds and that tensors $\deltaA^\mu_v$ are restricted according to $\mu$.
    \item \label{itm:dbv} We define $\deltaB^\mu_v$ by reweighting as follows:
          \begin{subequations}\label{main_def}
              \begin{align}
                  \forall v \in\graph (\mu) \text{ that is a source}: & \ \deltaB_v^{\mu} = \epsilon^{-c N_{sink} }  \deltaA^\mu_v ;\label{sources} \\
                  \forall v \in\graph (\mu) \text{ that is a sink}:   & \ \deltaB_v^{\mu} = \epsilon^{c N_{source} } \deltaA^\mu_v; \label{sinks}   \\
                  \forall v \notin\graph (\mu):                       & \ \deltaB_v^{\mu}= 0.\label{nothing}
              \end{align}
          \end{subequations}
          We also extend all $\deltaB_v^\mu$ by zeros such that their internal legs take values in $\mathcal{D}_0$. Note that $\deltaB_v^\mu$ tensors are restricted according to $\mu$.

          This construction generalises \eqref{dbv_src} and \eqref{dbv_snk} of our example where the template $\mu$ had $N_{source}=2$ and $N_{sink}=1$. As in the example, the product of all reweighting factors gives $1$:
          \begin{equation}\label{factors_cancel}
              \left(\epsilon^{-cN_{sink} }\right)^{N_{source}}\left(\epsilon^{cN_{source} }\right)^{N_{sink}}=1.
          \end{equation}
          Hence, we have \eqref{semisteps} as well as \eqref{fullstep}.
\end{enumerate}

\subsubsection{Verification of \texorpdfstring{\eqref{Pconds}}{(\getrefnumber{Pconds})}} \label{PcondsDBV1}
It is evident that \eqref{P0} and \eqref{P3} are satisfied. Thus, we focus on \eqref{P1} and \eqref{P2}. We will show that \eqref{P1} and \eqref{P2} hold for any $a,b$ within the intervals:
\begin{equation}\label{gen_intervals}
    1/2<a\leq 1-7c, \qquad 1<b\leq \min (1+c,2-7c)= 1+c.
\end{equation}
As in the example, the lower bounds are the conditions on $a,b$ in Lemma~\ref{mainlem}. The condition $c<1/14$ ensures that the interval for $a$ is non-empty.

For $v \notin\graph (\mu)$, \eqref{P1} and \eqref{P2} are trivially satisfied due to \eqref{nothing}. Consequently, we focus on $\deltaB_v^\mu$ with $v \in\graph (\mu)$. Our reasoning is analogous to the one under Eq.~\eqref{domains} in Section~\ref{example_section_dbv1}. In particular, one can see that the bounds \eqref{bound_b1}---\eqref{bound_b4} in the example are the special cases of \eqref{bound_sink}---\eqref{v_is2} with $N_{source}=2$ and $N_{sink}=1$.

We need to consider three cases for $v \in\graph (\mu)$: 1) a sink, 2) a source satisfying \eqref{source1}, and 3) a source satisfying \eqref{source2}. If $v $ is a sink, by \eqref{sinks}, we have:
\begin{equation}\label{bound_sink}
    \deltaB_v^\mu = O(\epsilon^{1+c N_{source}}).
\end{equation}
Since $\mu \in \mathrm{Src}$, we have $N_{source} \geq 1$, and so \eqref{P1} and \eqref{P2} hold with $a,b$ as in \eqref{gen_intervals}.

If $v$ is a source satisfying \eqref{source1}, then, by the reasoning as under \eqref{bound_b5} in the example, we can see that $\bar{\deltaB_v^\mu}=0$. Thus, \eqref{P2} holds with any $b$. As for \eqref{P1}, by \eqref{sources}, we have:
\begin{equation}\label{v_is1}
    \deltaB_v^\mu = O(\epsilon^{1-c N_{sink}}).
\end{equation}
Note that $N_{sink}\leq 7$, as we should have at least one source, and so \eqref{P1} holds with $a$ as in \eqref{gen_intervals}.

Finally, let $v$ be a source satisfying \eqref{source2}. We will assume that $v$ violates \eqref{source1}, since otherwise, we are in the previous case. Then, $\deltaA_v$ has exactly one leg that takes values in $\N$. Since $\deltaA$ has the corner structure \eqref{corn}, we have $\deltaA_v, \deltaA^\mu_v=O(\epsilon^2)$, and, by \eqref{sources},
\begin{equation}\label{v_is2}
    \deltaB_v^\mu = O(\epsilon^{2-c N_{sink}}),
\end{equation}
which, by $N_{sink} \leq 7$, satisfies \eqref{P1} and \eqref{P2} with $a,b$ as in \eqref{gen_intervals}.

\subsubsection{Verification of \texorpdfstring{\eqref{compatrel1}}{(\getrefnumber{compatrel1})}} \label{compatdbv1rels}

Recall that a template $\mu$ is a function from $\{1,\ldots,36\}$ to $\{\mathcal{O}, \mathcal{D}\}$. Numbers $1,\ldots,36$ label legs and bonds of a contraction as in \eqref{Tdec}. Let $n^v_1,\ldots,n^v_6 \in \{1,\ldots,36\}$ be the numbers labelling in Eq.~\eqref{Tdec} legs and bonds around vertex $v$ in some arbitrary fixed order, for definiteness as in the following diagram:
\begin{equation}\label{positions_of_ns}
    \includegraphics[scale=0.2, valign=c]{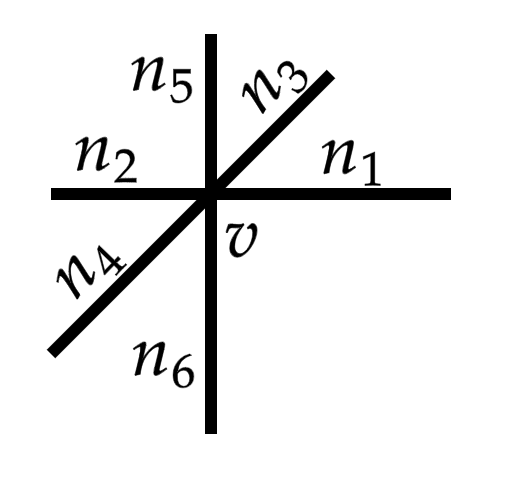}
\end{equation}
These labels $n_i^v$ of legs are not to be confused with indices of the corresponding leg, which take values in $\mathcal{D}$ or $\mathcal{O}$.

A crucial property of the constructed $\deltaB_v^\mu$ tensors is that they are restricted according to $\mu$ (see the discussion after \eqref{trm_dscrA} and the end of Section~\ref{gendbv1}). Using the labels $n^v_i$ of legs, we can express this as follows:
\begin{equation}\label{support}
    \text{each leg $n^v_i$ of $\deltaB_v^\mu$ is restricted to $\mu(n^v_i)$.}
\end{equation}

We start by checking \eqref{compat1}. If $\gamma=\mu$, then \eqref{compat1} coincides with \eqref{fullstep}, which holds as mentioned at the end of Section~\ref{gendbv1}. Assume next that $\gamma \neq \mu$. We need to show that $T_\gamma(\{\deltaB_v^\mu\})=0$.

We will prove that $T_\gamma(\{\deltaB_v^\mu\})=0$ by examining $\deltaB_w^\mu$ with $w\in \graph(\gamma)$ as in \eqref{vertexW} below.
\begin{lem}\label{lem:vertexW}
    Let $\mu \in \mathrm{Con}$ and $\gamma \in \mathrm{Tmpl}'$. If $\gamma \neq \mu$, then there is
    \begin{equation}\label{vertexW}
        \text{a vertex $w \in\graph (\gamma)$ and a leg $n_i^w$ such that $\gamma(n_i^w) \neq \mu(n_i^w)$.}
    \end{equation}
\end{lem}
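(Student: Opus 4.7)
\medskip

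My plan is to prove Lemma~\ref{lem:vertexW} by contradiction. Suppose the conclusion fails: for every $w \in \graph(\gamma)$ and every $i \in \{1,\ldots,6\}$, we have $\gamma(n_i^w) = \mu(n_i^w)$. In words, $\gamma$ and $\mu$ agree on every line (leg or bond) of the cube touching the vertex set $V_\gamma$ of $\graph(\gamma)$. My goal is to use this agreement, together with the hypothesis $\gamma \neq \mu$ and the connectedness of $\graph(\mu)$, to derive a contradiction.

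First, I would show $V_\gamma \subseteq V_\mu$. Indeed, if $v \in V_\gamma$, there is a line $n$ touching $v$ with $\gamma(n)=\mathcal{D}$; by the agreement assumption $\mu(n)=\mathcal{D}$, so $v \in V_\mu$. Next, since $\gamma \neq \mu$, there is some line $n_0$ with $\gamma(n_0) \neq \mu(n_0)$. This $n_0$ cannot touch any vertex of $V_\gamma$, as that would immediately give the desired $w$ and $n_i^w$. In particular, if $\gamma(n_0)=\mathcal{D}$ then every endpoint of $n_0$ would lie in $V_\gamma$, which is impossible. Hence $\gamma(n_0)=\mathcal{O}$ and $\mu(n_0)=\mathcal{D}$, and the endpoint(s) of $n_0$ lie in $V_\mu \setminus V_\gamma$, so $V_\gamma \subsetneq V_\mu$.

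Now I would invoke the connectedness of $\graph(\mu)$. Since $V_\gamma$ is a nonempty proper subset of $V_\mu$, there must be an edge $e \in E_\mu$ with one endpoint $u \in V_\gamma$ and the other endpoint $u' \in V_\mu \setminus V_\gamma$. By construction, $e$ is a bond with $\mu(e)=\mathcal{D}$ and $e$ touches the vertex $u \in V_\gamma$. Applying the agreement assumption to $u$, we get $\gamma(e)=\mu(e)=\mathcal{D}$, so $e \in E_\gamma$. But then both endpoints of $e$, including $u'$, lie in $V_\gamma$, contradicting $u' \notin V_\gamma$.

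There is no serious obstacle here; the only delicate point is the case analysis on whether $n_0$ is a leg or a bond when deducing $V_\gamma \subsetneq V_\mu$, which is handled uniformly by the observation that a ticked line forces all its endpoints into the corresponding vertex set. The connectedness of $\graph(\mu)$, guaranteed by $\mu \in \mathrm{Con}$, is used only once, in the last step, to produce the boundary edge $e$.
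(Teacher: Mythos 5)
Your proof is correct and uses the same essential ingredient as the paper, namely the connectedness of $\graph(\mu)$. The paper proceeds by a two-case split on the direction of the disagreement $\gamma(n)\neq\mu(n)$ and, in the nontrivial case, shows the vertex set $V'$ touched by the removed (ticked-in-$\mu$, dashed-in-$\gamma$) lines would otherwise be isolated in $\graph(\mu)$; you phrase it as a single contradiction and isolate $\graph(\gamma)$ instead, invoking the standard boundary-edge property of connected graphs --- a dual packaging of the same argument.
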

Having the vertex $w$ as in \eqref{vertexW}, we proceed as follows. By \eqref{trm_dscr1}, $T_\gamma(\{\deltaB_v^\mu\})$ inserts $\deltaB_w^\mu$ in $w$. By \eqref{trm_dscr3} and \eqref{support}, the leg $n^w_i$ of $\deltaB_w^\mu$ is restricted in this contraction to $\gamma(n^w_i) \cap \mu(n^w_i)=\mathcal{D} \cap \mathcal{O} = \void$. Therefore, $T_\gamma(\{\deltaB_v^\mu\})=0$.

\begin{proof}[Proof of the Lemma~\ref{lem:vertexW}]
    To find such a $w$, we consider two possibilities for $\gamma \neq \mu$. The first possibility is that there is a vertex $w$ and a leg or bond $n^w_i$ such that $\gamma(n^w_i) =\mathcal{D}$ and $\mu (n^w_i) = \mathcal{O}$. As $\gamma(n^w_i)=\mathcal{D}$, we have $w \in\graph (\gamma)$. So $w$ as in \eqref{vertexW} is found.

    The second possibility is that $\gamma \neq \mu$, but there is no vertex $w$ as in the previous paragraph. Then, whenever $\gamma(n) \neq \mu (n)$, we have $\gamma (n) = \mathcal{O}$ and $\mu(n)=\mathcal{D}$. This means that $\diag(\gamma)$ is obtained from $\diag(\mu)$ by replacing some ticked ($\mathcal{D}$) legs and/or bonds with dashed ones ($\mathcal{O}$). Let $S_{replace}$ be the set of so-replaced ticked legs and bonds, and let $S_{remain}$ be the set of remaining ticked legs and bonds (nonempty because $\gamma$ is nontrivial). Consider the set $V'$ of all vertices touched by lines from $S_{replace}$. We choose a vertex $w \in V'$ which is touched by a line from $S_{remain}$. [Such a vertex exists because if not, the graph with vertices=$V'$ and edges ="bonds from $S_{replace}$" would form a nontrivial connected component of $\graph (\mu)$, a contradiction since $\mu\in \mathrm{Src} \subset \mathrm{Con}$.] Since $w$ is touched by a line from $S_{replace}$, there is a $n_j^w$ such that $\gamma(n^w_j)=\mathcal{O}$, $\mu(n^w_j)=\mathcal{D}$. Since $w$ is touched by a line from $S_{remain}$, we have $w \in\graph (\gamma)$. So $w$ as in \eqref{vertexW} is again found.
\end{proof}

We continue by checking \eqref{compat2}. Let $\gamma$ be a connected template. We denote by $v_1,\ldots,v_N$ the vertices of $\graph (\gamma)$. By \eqref{trm_dscr}, the contraction $T_\gamma(\{\deltaB_v^1\})$ (see \eqref{dbv1exp}) depends only on $\deltaB_{v_1}^1,\ldots, \deltaB_{v_N}^1$. To emphasise this, we will use the following notation:
\begin{equation}\label{dep_on_r}
    T_\gamma(\{\deltaB_v^1\})=T_\gamma (\deltaB_{v_1}^1,\ldots, \deltaB_{v_N}^1).
\end{equation}
We recall our basic notational convention that the tensor with label $v_i$ is inserted into the vertex $v_i$, see Remark~\ref{orderconvention}. The order in which $\deltaB_{v_1}^1,\ldots,\deltaB_{v_N}^1$ are listed in the argument of $T_\gamma$ is irrelevant.

We substitute \eqref{dbv1exp} into the r.h.s. of \eqref{dep_on_r} and expand the result by multilinearity of $T_\gamma(\deltaB_{v_1}^1,\ldots,\deltaB_{v_N}^1)$:
\begin{equation}\label{lin}
    T_\gamma (\deltaB_{v_1}^1, \ldots, \deltaB_{v_N}^1)=T_\gamma \left( \sum_{\mu_1 \in \mathrm{Src}} \deltaB_{v_1}^{\mu_1},\ldots, \sum_{\mu_N \in \mathrm{Src}} \deltaB_{v_N}^{\mu_N}\right)  =\sum_{\mu_1, \ldots, \mu_N \in \mathrm{Src}} T_\gamma \left( \deltaB_{v_1}^{\mu_1},\ldots,\deltaB_{v_N}^{\mu_N} \right).
\end{equation}
By definition of $\mathrm{Con}$, the graph $\graph (\gamma)$ is connected. The legs of each $\deltaB_{v}^\mu$ contracted along the edges of $\graph (\gamma)$ are restricted to index sets $\N_\mu$, which are disjoint for different $\mu$'s. It follows that $T_\gamma \left( \deltaB_{v_1}^{\mu_1},\ldots,\deltaB_{v_N}^{\mu_N} \right)$ is nonzero only for $\mu_1=\ldots=\mu_N$, in which case it equals $T(\{\deltaB_{v}^\mu\})$. This implies \eqref{compat2}.

We finish by checking \eqref{restB}. Let $\gamma \in \mathrm{Disc}$. Recall that we may always uniquely express it as a union (see Def.~\ref{def:union} and  Eq.~\eqref{uni_def_new}) of connected templates $\gamma_1,\ldots, \gamma_m$:
\begin{equation}\label{gamma_is_union}
    \gamma=\gamma_1\star \ldots \star \gamma_m.
\end{equation}
Then, the graph $\graph (\gamma)$ is the disjoint union of connected graphs $\graph (\gamma_k)$, ($k=1,\ldots,m$):
\begin{equation}
    \graph (\gamma) =\graph (\gamma_1) \sqcup \ldots \sqcup\graph (\gamma_m).
\end{equation}

We now consider expansion \eqref{lin} for $T_\gamma(\{\deltaB_v^1\})$.\footnote{Note that \eqref{lin} holds for any template $\gamma$, whether it is connected or not.} In the proof of \eqref{compat2}, the terms on the r.h.s. were vanishing unless all $\mu_i$ were the same. In the current situation, since $\gamma \in \mathrm{Disc}$, the terms on the r.h.s. vanish unless $\mu_i=\mu_j=:\mu^{(k)}$ for each $v_i,v_j$ belonging to the same connected component $\graph (\gamma_k)$. We denote by $v_{k,1},\ldots,v_{k,N_k}$ the vertices of $\graph (\gamma_k)$. Thus, we have:
\begin{equation}\label{lin2}
    T_\gamma(\{\deltaB^1_v\})=\sum_{\mu^{(1)},\ldots,\mu^{(m)} \in \mathrm{Src}} T_{\gamma} (\deltaB_{v_{1,1}}^{\mu^{(1)}},\ldots,\deltaB_{v_{1,N_1}}^{\mu^{(1)}}, \ldots,\deltaB_{v_{m, 1}}^{\mu^{(m)}},\ldots, \deltaB_{v_{m, N_m}}^{\mu^{(m)}} ).
\end{equation}

Now, we examine individual terms in \eqref{lin2}. We claim that the following is true:
\begin{subequations}\label{disc_ab}
    \begin{align}
         & \text{if $\mu^{(k)}=\gamma_k$ for all $k$, the term in the r.h.s of \eqref{lin2} equals $T_\gamma(\deltaA)$;}\label{disc_a} \\
         & \text{if $\mu^{(k)}\neq \gamma_k$ for some $k$, the term in the r.h.s. of \eqref{lin2} vanishes.}\label{disc_b}
    \end{align}
\end{subequations}
Together, these two facts clearly imply \eqref{restB}.

Let us discuss \eqref{disc_a}. It is an easy generalisation of \eqref{fullstep}. Reindexing works out fine (there is no interference between different connected components $\gamma_k$). Reweighting factors, which were introduced in \eqref{main_def} when passing from $\deltaA$ to $\deltaB_{v}^{\mu^{(k)}}$, multiply to $1$ separately for each connected component $\gamma_k$. We omit the details.

Now, let us discuss \eqref{disc_b}. We assume that for some $k$, $\mu^{(k)}\neq \gamma_k$. We need to show that:
\begin{equation}\label{disc_b_w}
    T_{\gamma} (\deltaB_{v_{1,1}}^{\mu^{(1)}},\ldots,\deltaB_{v_{m, N_m}}^{\mu^{(m)}} ) =0
\end{equation}
By Lemma~\ref{lem:vertexW}, there is a vertex $w\in\graph (\gamma_k)$ and $n^w_i$ such that:
\begin{equation}\label{neqgammakmu}
    \gamma_k(n^w_{i}) \neq \mu^{(k)}(n^w_i)
\end{equation}
We note that $\gamma$ coincides with $\gamma_k$ around vertices $v_{k,i} \in\graph (\gamma_k)$ (see, e.g.~\eqref{unionExample}):
\begin{equation}\label{def_uni_new2}
    \gamma(n^{v_{k,i}}_j)=\gamma_k(n^{v_{k,i}}_j).
\end{equation}
[If \eqref{def_uni_new2} is violated, by \eqref{uni_def_new}, we have $\gamma_k (n_j^{v_{k,i}}) = \mathcal{O}$ and $\gamma(n_j^{v_{k,i}})=\mathcal{D}$. Then, by \eqref{gamma_is_union}, there is an $l\neq k$ such that $\gamma_l (n_j^{v_{k,i}})=\mathcal{D}$, and so $v_{k,i} \in\graph (\gamma_l)$, a contradiction since $\graph (\gamma_k)$'s are disjoint.] Thus, by \eqref{neqgammakmu} and \eqref{def_uni_new2}, we have $\gamma(n^w_{i}) \neq \mu^{(k)}(n^w_i)$. Then, by \eqref{trm_dscr1}, $T_{\gamma} (\deltaB_{v_{1,1}}^{\mu^{(1)}},\ldots,\deltaB_{v_{m, N_m}}^{\mu^{(m)}} )$ inserts $\deltaB_w^{\mu^{(k)}}$ in $w$. By \eqref{trm_dscr3} and \eqref{support}, the leg $n^w_i$ of $\deltaB_w^{\mu^{(k)}}$ is restricted in this contraction to $\gamma(n_i^w)\cap \mu^{(k)}(n_i^w)=\mathcal{D}\cap \mathcal{O}=\void$. Therefore, \eqref{disc_b_w} holds.

Let us note that the provided proof of \eqref{disc_b} can be easily generalised to the proof of the following lemma, which will be used in Section~\ref{CON2}.
\begin{lem}\label{lem:incompatible_contractions}
    Let $\gamma$ be a template with connected components $\gamma_1,\ldots, \gamma_m$, where $m\geq 1$ (if $m=1$, $\gamma$ is a connected template). Let $v_1,\ldots,v_n$ be vertices of $\graph(\gamma_k)$ ($k\in \{1,\ldots,n\}$) and $\vec{c}_{v_1},\ldots, \vec{c}_{v_n}$ be tensors restricted according to some template $\mu \neq \gamma_k$. Then
    \begin{equation}
        T_\gamma(\ldots, \vec{c}_{v_1},\ldots, \vec{c}_{v_n}, \ldots  )=0,
    \end{equation}
    where "$\ldots$" to the left and to the right from $\vec{c}_{v_1},\ldots, \vec{c}_{v_n}$ stands for tensors inserted in other vertices of $\graph(\gamma)$ (if $m=1$ there are only $c_{v_1},\ldots,c_{v_n}$ in the argument of $T_\gamma$).
\end{lem}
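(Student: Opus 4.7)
The plan is to adapt, essentially verbatim, the verification of \eqref{disc_b_w} given just above. That argument treated the special case $\vec{c}_{v_j}=\deltaB^{\mu^{(k)}}_{v_{k,i}}$, but the only features it actually used were that every tensor inserted at a vertex of $\graph(\gamma_k)$ is restricted according to one and the same template $\mu$ differing from $\gamma_k$. The present lemma is precisely this abstract statement, so the same obstruction should force $T_\gamma(\ldots,\vec{c}_{v_1},\ldots,\vec{c}_{v_n},\ldots)$ to vanish.

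The proof proceeds in four short steps. The first step is to invoke Lemma~\ref{lem:vertexW} with $\gamma_k$ in the role of its nontrivial template ``$\gamma$'' and our $\mu$ in the role of its connected template ``$\mu$'': since $\mu\neq\gamma_k$, this produces a vertex $w\in\graph(\gamma_k)$ and a leg/bond label $n_i^w$ such that $\gamma_k(n_i^w)\neq\mu(n_i^w)$. The second step is to note that, because $v_1,\ldots,v_n$ exhaust the vertices of $\graph(\gamma_k)$, we have $w=v_j$ for some $j$, so the tensor placed at $w$ in $T_\gamma$ is $\vec{c}_{v_j}$, whose leg $n_i^w$ is restricted to $\mu(n_i^w)$. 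The third step is to repeat the bracketed argument following \eqref{def_uni_new2}, using that the graphs $\graph(\gamma_l)$ are pairwise disjoint, to conclude $\gamma(n_i^w)=\gamma_k(n_i^w)$. The final step combines these facts with \eqref{trm_dscr3}: the contraction further restricts the leg $n_i^w$ of $\vec{c}_{v_j}$ to $\gamma(n_i^w)$, while the tensor is already restricted to $\mu(n_i^w)$, so the effective restriction is to $\gamma(n_i^w)\cap\mu(n_i^w)=\mathcal{D}\cap\mathcal{O}=\void$, and the whole contraction vanishes.

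The main point to watch — and the only real obstacle — is the applicability of Lemma~\ref{lem:vertexW}: its proof used connectedness of one of its two templates, namely the one we are calling ``$\mu$''. The present lemma's statement does not explicitly require $\mu\in\mathrm{Con}$, but I read this as an implicit hypothesis (arranged in the intended applications of Section~\ref{CON2}); otherwise $\mu$ could disagree with $\gamma_k$ only at legs and bonds far from $\graph(\gamma_k)$, in which case the restrictions of the $\vec{c}_{v_j}$ around the $v_j$ would coincide with those dictated by $\gamma_k$ and the contraction could be nonzero. Granting the connectedness of $\mu$, the four-step argument above gives the claim with no further work.
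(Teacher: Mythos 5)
Your proof is correct and takes the same route the paper intends: the paper only asserts that Lemma~\ref{lem:incompatible_contractions} is an ``easy generalisation'' of the verification of \eqref{disc_b}, and your four-step argument is exactly that generalisation, replacing $\deltaB_w^{\mu^{(k)}}$ by the abstract tensor $\vec{c}_{v_j}$ restricted according to $\mu$ and reusing Lemma~\ref{lem:vertexW}, the coincidence $\gamma(n_i^w)=\gamma_k(n_i^w)$ from the disjointness of the components, and the $\mathcal{D}\cap\mathcal{O}=\void$ obstruction. Your caveat about the connectedness of $\mu$ is well taken: Lemma~\ref{lem:vertexW} genuinely requires $\mu\in\mathrm{Con}$ (its proof uses this in the second case), the statement of Lemma~\ref{lem:incompatible_contractions} omits this hypothesis, and without it the claim would indeed fail (e.g., if $\mu$ differed from $\gamma_k$ only by a distant connected component). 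In every use the paper makes of this lemma, $\mu$ is $\widehat{\sigma_q}$ or $\widehat{\sigma_M}$, hence connected, so the hypothesis is implicitly in force, exactly as you read it.
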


We have shown that constructed tensors $\deltaB_v^{\mu}$, $\mu \in \mathrm{Src}$, satisfy \eqref{Pconds} and \eqref{compatrel1}. Hence, as discussed at the beginning of Section~\ref{dbv1}, tensors $\deltaB_v^1$ defined by \eqref{dbv1exp} satisfy \eqref{Pconds} and \eqref{Src_sect}. So, we accomplished the goal of this section: to construct $\deltaB_v^1$ satisfying \eqref{Pconds} and \eqref{Src_sect}.

\subsection{Construction of tensors \texorpdfstring{$\deltaB_v^2$}{bv2}}\label{CON2}

In this subsection, we will construct tensors $\deltaB_v^2$, which will be the correction terms that ensure $\deltaB_v=\deltaB_v^1 +\deltaB_v^2$ resolves the main equation \eqref{maineq2}. This will finish the proof of Lemma~\ref{mainlem}.

We will search for $\deltaB_v^2$ in the form analogous to \eqref{dbv1exp} but with summation over $\sigma \in \mathrm{NSrc}$:
\begin{equation}\label{dbv2Exp1}
    \deltaB_v^2=\sum_{\sigma \in \mathrm{NSrc}} \deltaB_v^{\sigma},
\end{equation}
We will ensure that  tensors $\deltaB_v^{\sigma}$ ($\sigma \in \mathrm{NSrc}$) satisfy \eqref{Pconds} and so $\deltaB_v^2$ will also satisfy \eqref{Pconds}. We will achieve \eqref{maineq2} by constructing $\deltaB_v^{\sigma}$ such that:
\begin{itemize}
    \item For each $\sigma \in \mathrm{NSrc}$ there exists a corresponding $\widehat{\sigma} \in \mathrm{Con}$ such that:
          \begin{equation}\label{compat3}
              \forall \gamma \in \mathrm{Con}: \qquad T_\gamma(\{\deltaB_v^{\sigma}\}) =\begin{cases}
                  T_{\sigma}(\deltaA), & \text{if } \gamma = \widehat{\sigma}    \\
                  0,                   & \text{if } \gamma \neq \widehat{\sigma}
              \end{cases}.
          \end{equation}
          Note that we do not require the map $\sigma \mapsto \widehat{\sigma}$ to be injective. We will give the explicit definition of $\widehat{\sigma}$ later.
    \item For any $\gamma \in \mathrm{Con}$ we have:
          \begin{equation}\label{compat4}
              T_\gamma(\{\deltaB^1_v+\deltaB^2_v\})= T_\gamma(\{\deltaB_v^1\})+\sum_{\sigma \in \mathrm{NSrc} } T_\gamma(\{\deltaB_v^{\sigma}\}).
          \end{equation}
    \item For disconnected templates, we have:
          \begin{equation}\label{compat5limit}
              \sum_{\sigma \in \mathrm{Disc}} T_\sigma(\{\deltaB_v^1+\deltaB_v^2\})=\sum_{\sigma \in \mathrm{Disc}} T_\sigma (\deltaA)
          \end{equation}
\end{itemize}

Let us show that \eqref{maineq2} (where $\deltaB_v=\deltaB_v^1+\deltaB_v^2$) follows from \eqref{dbv1_goal} and \eqref{dbv2Exp1}-\eqref{compat5limit}. Using the division of templates \eqref{classif} and Eq.~\eqref{triv}, we express the l.h.s. of \eqref{maineq2} as:
\begin{equation}\label{sum0}
    \sum_{\gamma \in \mathrm{Tmpl}} T_\gamma(\{\deltaB_v\})= T_* + \sum_{\gamma \in \mathrm{Con}} T_\gamma(\{\deltaB_v\})+ \sum_{\gamma \in \mathrm{Disc}} T_\gamma(\{\deltaB_v\}).
\end{equation}
Consider the sum over $\mathrm{Con}$ in the r.h.s. of \eqref{sum0}. Applying \eqref{compat4}, \eqref{dbv1_goal} and \eqref{compat3}, we obtain:
\begin{equation}\label{sum1}
    \begin{aligned}
        \sum_{\gamma \in \mathrm{Con}} T_\gamma(\{\deltaB_v\}) & =                                                                                                                                                                                                        \\
                                                               & =[\text{by \eqref{compat4}}]\sum_{\gamma \in \mathrm{Con}} T_\gamma(\{\deltaB_v^1\}) + \sum_{\gamma \in \mathrm{Con}} \sum_{\sigma \in \mathrm{NSrc}} T_\gamma(\{\deltaB^{\sigma}_v\})                   \\
                                                               & =[\text{by \eqref{dbv1_goal},\eqref{compat3}}]\ \sum_{\gamma \in \mathrm{Src}} T_\gamma(\deltaA) + \sum_{\gamma \in \mathrm{NSrc}} T_{\gamma}(\deltaA)=\sum_{\gamma \in \mathrm{Con}} T_\gamma(\deltaA).
    \end{aligned}
\end{equation}
Then, substituting \eqref{sum1} and \eqref{compat5limit} into the r.h.s. of \eqref{sum0} we obtain \eqref{maineq2}.

Let us explain how we will achieve \eqref{compat3}-\eqref{compat5limit}. Eqs.~\eqref{compat3} and \eqref{compat4} are relatively easy to achieve. For \eqref{compat3}, we will ensure that tensors $\deltaB_v^{\sigma}$ solve the first line of \eqref{compat3} and are restricted according to $\widehat{\sigma}$.\footnote{The expression "restricted according to $\widehat{\sigma}$" is clarified in the discussion after \eqref{trm_dscrA}.}\footnote{We slightly simplified the discussion here. We will see that some tensors $\deltaB_v^{\sigma}$ are not restricted according to $\widehat{\sigma}$. However, these tensors will not appear in the contractions $T_\gamma(\{\deltaB_v^{\sigma}\})$ for connected $\gamma$'s and so will not break \eqref{compat3}.}
Then, \eqref{compat3} will be proven by the reasoning analogous to that leading to \eqref{compat1} in Section~\ref{dbv1} (see the discussion below \eqref{support}).

For \eqref{compat4}, we will ensure that the internal ticked legs of tensors $\deltaB_v^\sigma$ for different $\sigma$'s are restricted to sets disjoint from each other and from $\bigsqcup\limits_{\mu \in \mathrm{Src}} \N_\mu$ (the set to which the internal ticked legs of $\deltaB_v^1$ are restricted).\footnote{This is, again, a slight simplification. There will be tensors violating this property. However, they will not appear in the contractions $T_\gamma (\{\deltaB_v^1 + \deltaB_v^2\})$ with connected $\gamma$'s.} Then, \eqref{compat4} will be proven by the reasoning analogous to that leading to \eqref{compat2} in Section~\ref{dbv1} (see the discussion around \eqref{dep_on_r}).

The most challenging property is \eqref{compat5limit}. We will achieve it as follows. We number the elements of $\mathrm{NSrc}$ in some arbitrary fixed order:
\begin{equation}\label{NSrc_ord}
    \mathrm{NSrc}=\{\sigma_1,\ldots,\sigma_{|\mathrm{NSrc}|}\}.
\end{equation}
Then, \eqref{dbv2Exp1} can be written as:
\begin{equation}\label{dbv2Exp}
    \deltaB_v^2=\sum_{q=1}^{|\mathrm{NSrc}|} \deltaB_v^{\sigma_q}.
\end{equation}

We define sets $\mathrm{Disc}_M \subset \mathrm{Disc}$ ($M=1,\ldots, |\mathrm{NSrc}|$) as the sets containing all unions of templates belonging to $\mathrm{Src}\sqcup\{\sigma_1, \ldots, \sigma_M\}$:
\begin{equation}\label{DiscMdef}
    \mathrm{Disc}_M = \{\gamma_1 \star \ldots \star \gamma_l \mid \gamma_k \in \mathrm{Src} \sqcup \{ \sigma_1, \ldots, \sigma_{M}\}, \ l\geq 2  \}.
\end{equation}
Note that in the limiting case $M=|\mathrm{NSrc}|$, in \eqref{DiscMdef}, by \eqref{NSrc_ord}, we have $\gamma_k \in \mathrm{Src}\sqcup \mathrm{NSrc}=\mathrm{Con}$. Thus,
\begin{equation}\label{DiscMLim}
    \mathrm{Disc}_{|\mathrm{NSrc}|}= \mathrm{Disc},
\end{equation}
as any disconnected template can be expressed as a union of connected ones.

We will construct tensors $\deltaB_v^{\sigma_q}$ so that the following equation is satisfied for each $M=1,\ldots, |\mathrm{NSrc}|$:
\begin{equation}\label{compat5}
    \sum_{\gamma \in \mathrm{Disc}} T_\gamma\left(\left\{\deltaB^1_v+\sum_{q=1}^{M}\deltaB^{\sigma_q}_v\right\}\right)=\sum_{\gamma \in \mathrm{Disc}_M} T_\gamma(\deltaA).
\end{equation}
Note that by \eqref{dbv2Exp} and \eqref{DiscMLim} this equation reduces to \eqref{compat5limit} for $M=|\mathrm{NSrc}|$.

In this way, we reduced the initial problem of finding the correction terms $\deltaB_v^2$ satisfying \eqref{Pconds} to the problem of finding $\deltaB_v^{\sigma_q}$, satisfying \eqref{Pconds}, \eqref{compat3},\eqref{compat4}, and \eqref{compat5}. We will present the construction of $\deltaB_v^{\sigma_q}$ tensors as follows. Firstly, we will consider an example: we will discuss how $\deltaB_v^{\sigma_q}$ is constructed for $q=1$ and for a particular $\sigma_1 \in \mathrm{NSrc}$. Secondly, we will consider the general case and construct $\deltaB_v^{\sigma_q}$ for all $q=1,\ldots, |\mathrm{NSrc}|$. Finally, we will check \eqref{Pconds},  \eqref{compat3}, \eqref{compat4}, and \eqref{compat5}.

\subsubsection{Example at \texorpdfstring{$q=1$}{q=1}}\label{Con2Example}

Let $\sigma_1 \in \mathrm{NSrc}$ be given by the following diagram:\footnote{The specific choice of $\sigma_1$ is not important. We could have chosen any ordering of $\sigma$'s, and in particular, any $\sigma \in \mathrm{NSrc}$ could be chosen as $\sigma_1$. In the general construction below, the ordering will be arbitrary. \label{ftntnord}}
\begin{equation}\label{NSrc_ex_diag_mu}
    \diag(\sigma_1)=\includegraphics[scale=0.75, valign=c]{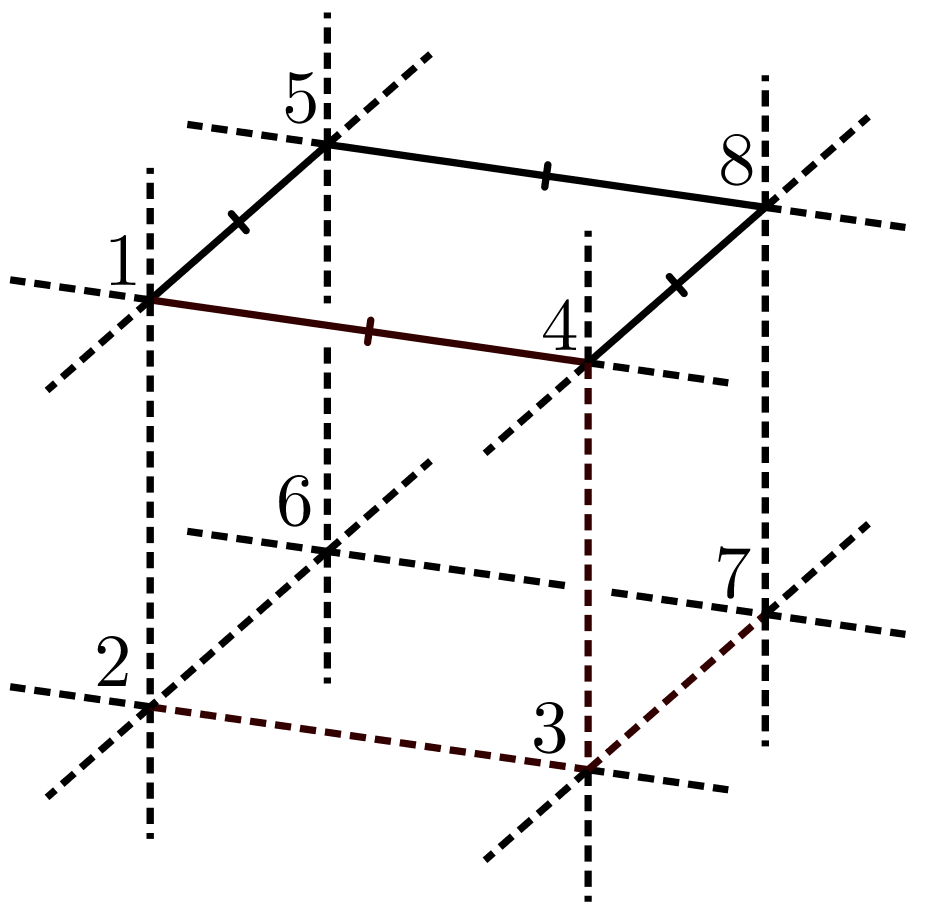}.
\end{equation}
Consider the corresponding $T_{\sigma_1}(\deltaA)$:
\begin{equation}\label{NSrc_C}
    T_{\sigma_1} (\deltaA) = \includegraphics[scale=0.75, valign=c]{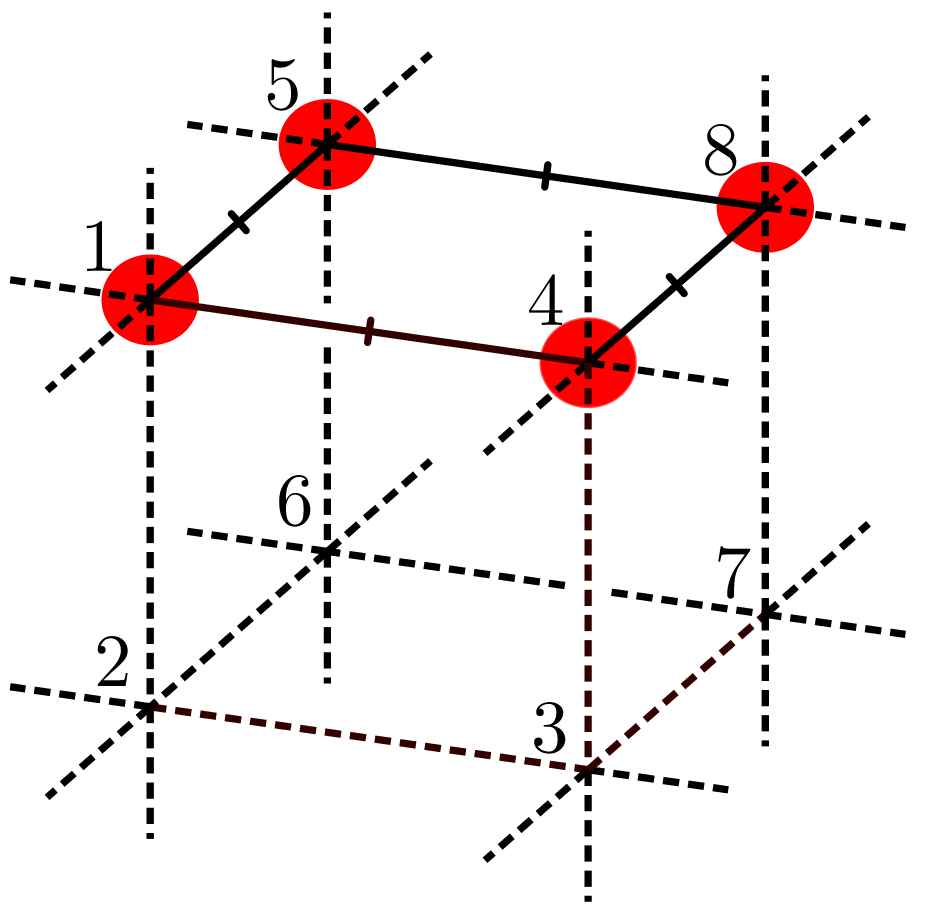}.
\end{equation}
All the legs of $T_{\sigma_1} (\deltaA)$ are restricted to $\{0\}$ (as $\deltaA$ is indexed by $\N_0$), so $T_{\sigma_1}(\deltaA)$ is proportional to $T_*$ (the cubic contraction of $8$ copies of $A_*$):
\begin{equation}\label{NSrc_toget}
    T_{\sigma_1} (\deltaA) = C_{\sigma_1} T_*,
\end{equation}
where we introduced a scalar quantity $C_{\sigma_1} = C_{\sigma_1}(\deltaA)$.

Now, we will construct $\widehat{\sigma_1}$ for $\sigma_1$ given by \eqref{NSrc_ex_diag_mu}. To satisfy the first line of \eqref{compat3}, we should find a template $\widehat{\sigma_1} \in \mathrm{Con}$ and tensors $\deltaB_v^{\sigma_1}$ such that:
\begin{equation}\label{NSrcEx_m}
    T_{\widehat{\sigma_1}}(\{\deltaB_v^{\sigma_1}\}) = T_{\sigma_1}(\deltaA)=[\text{by \eqref{NSrc_toget}}] \ C_{\sigma_1} T_*.
\end{equation}
As $T_{\sigma_1} (\deltaA)$ has $4$ insertions of $\deltaA$, we have:
\begin{equation}\label{Const_ord}
    C_{\sigma_1}=O(\epsilon^4).
\end{equation}
Consequently, as we want $\deltaB_v^{\sigma_1}$ to satisfy \eqref{Pconds}, $\graph (\widehat{\sigma_1})$ should have less than $4$ vertices. Otherwise, due to \eqref{NSrcEx_m} and \eqref{Const_ord}, at least one $\deltaB_v^{\sigma_1}$ tensor would violate \eqref{P2}. Then, to construct $\widehat{\sigma_1}$, we arbitrarily select a vertex $w \in\graph (\sigma_1)$ and obtain $\diag(\widehat{\sigma_1})$ by replacing all ticked bonds which touch the vertex $w$ in $\diag(\sigma_1)$ with dashed ones.\footnote{The vertex $w$ will have to be chosen slightly more carefully in the general construction.} Choosing $w=4$, we get:
\begin{equation}\label{NSrc_hatsigma_diag}
    \diag(\widehat{\sigma_1})=\includegraphics[scale=0.75, valign=c]{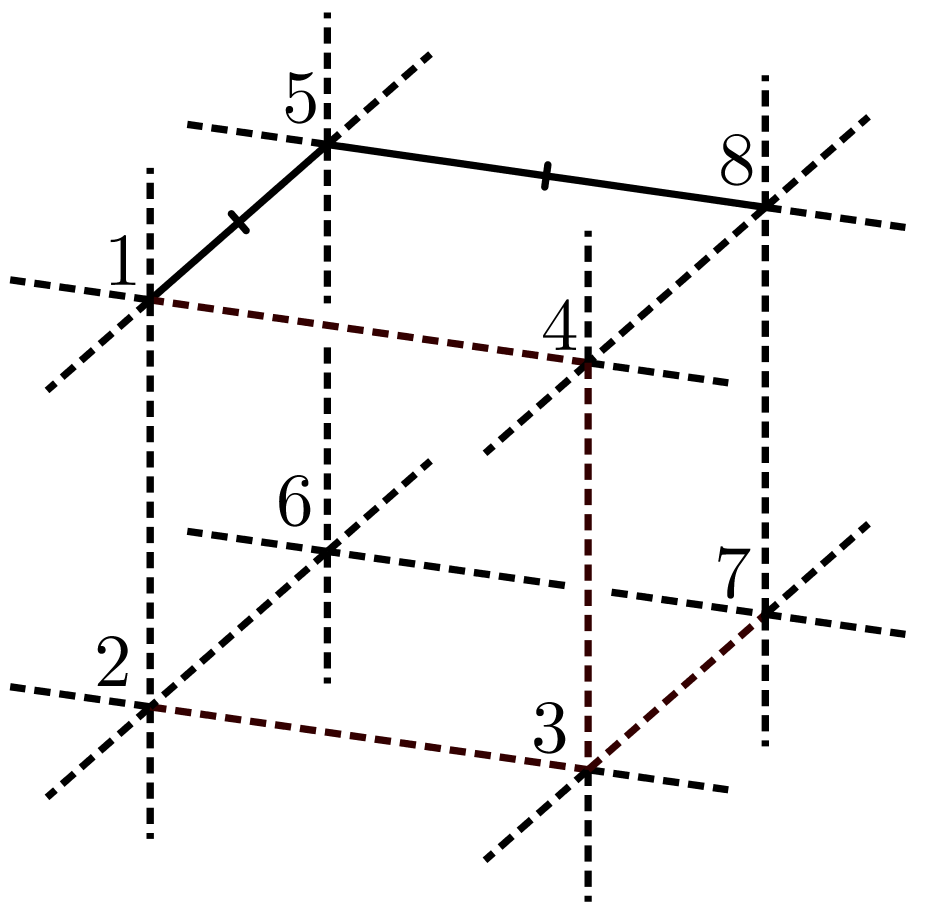}
\end{equation}
Clearly, $\widehat{\sigma_1} \in \mathrm{Con}$, as required by \eqref{compat3}.\footnote{In our example, $\widehat{\sigma_1}$ has two sources: vertices $1$ and $8$. Let us note in advance that this is not a general feature. There will be some $\widehat{\sigma_q} \in \mathrm{NSrc}$.}

We now present an initial, naive definition of tensors $\deltaB_v^{\sigma_1}$. Although this definition appears straightforward and resolves \eqref{NSrcEx_m}, we will see that it requires a subtle adjustment to satisfy \eqref{compat5} with $M=1$. This explains the term "naive".
\begin{itemize}
    \item For $v=1,8$, we define $\deltaB_v^{\sigma_1}$ as the tensors restricted according to $\widehat{\sigma_1}$, each with a single nonzero tensor element, given by:
          \begin{equation}\label{NSrc_def_18}
              \includegraphics[scale=0.75,valign=c]{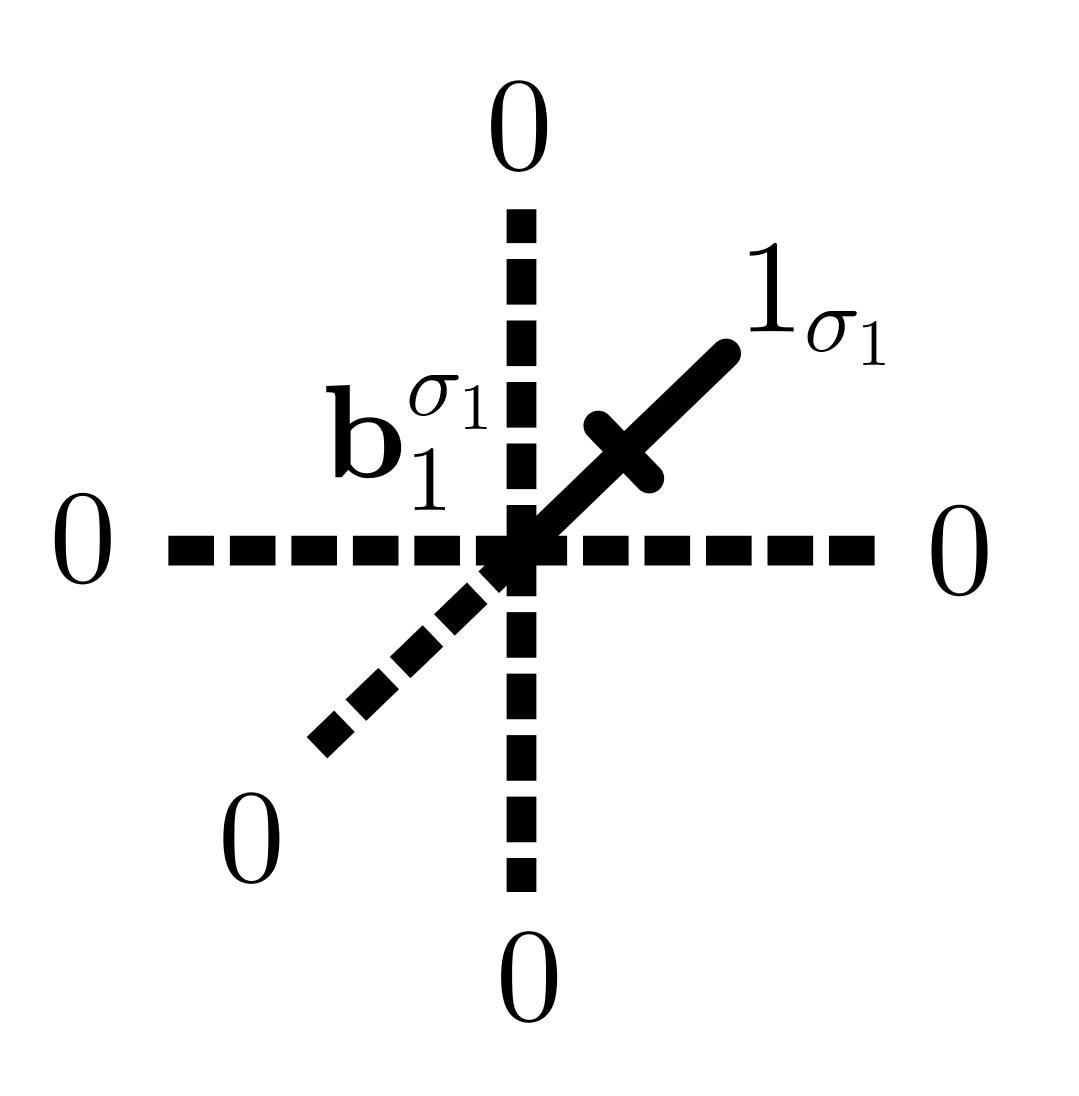}=\epsilon^{4/3}, \qquad \includegraphics[scale=0.75,valign=c]{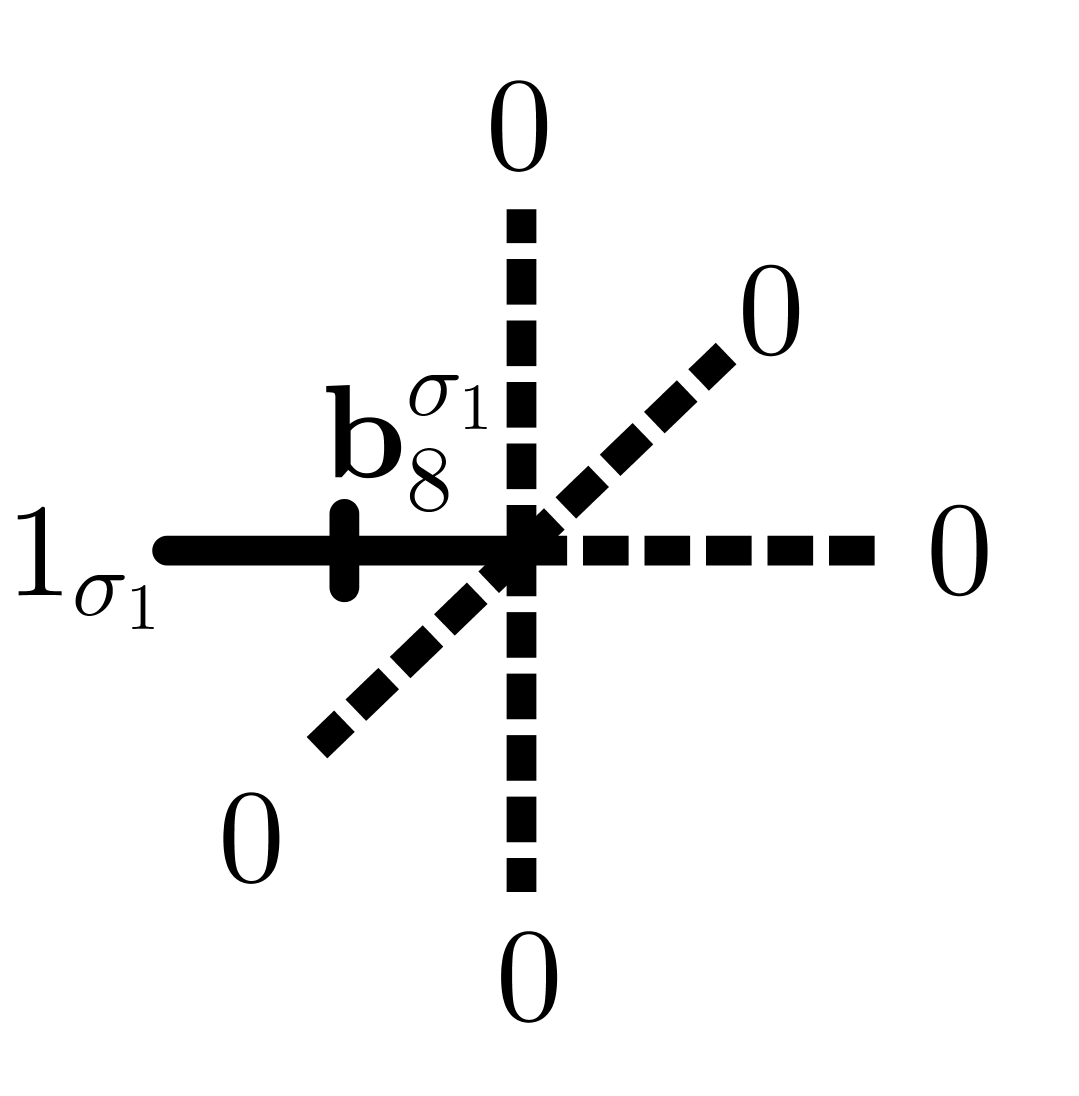}=\epsilon^{4/3},
          \end{equation}
          where $1_{\sigma_1}$ is an element of $\mathcal{D}$ introduced in Section~\ref{D0} (see also Footnote~\ref{footNg}).
    \item For $v=5$, we define $\deltaB_5^{\sigma_1}$ as the tensor restricted according to $\widehat{\sigma_1}$ with a single nonzero tensor element given by:
          \begin{equation}\label{NSrc_def_5}
              \includegraphics[scale=0.75,valign=c]{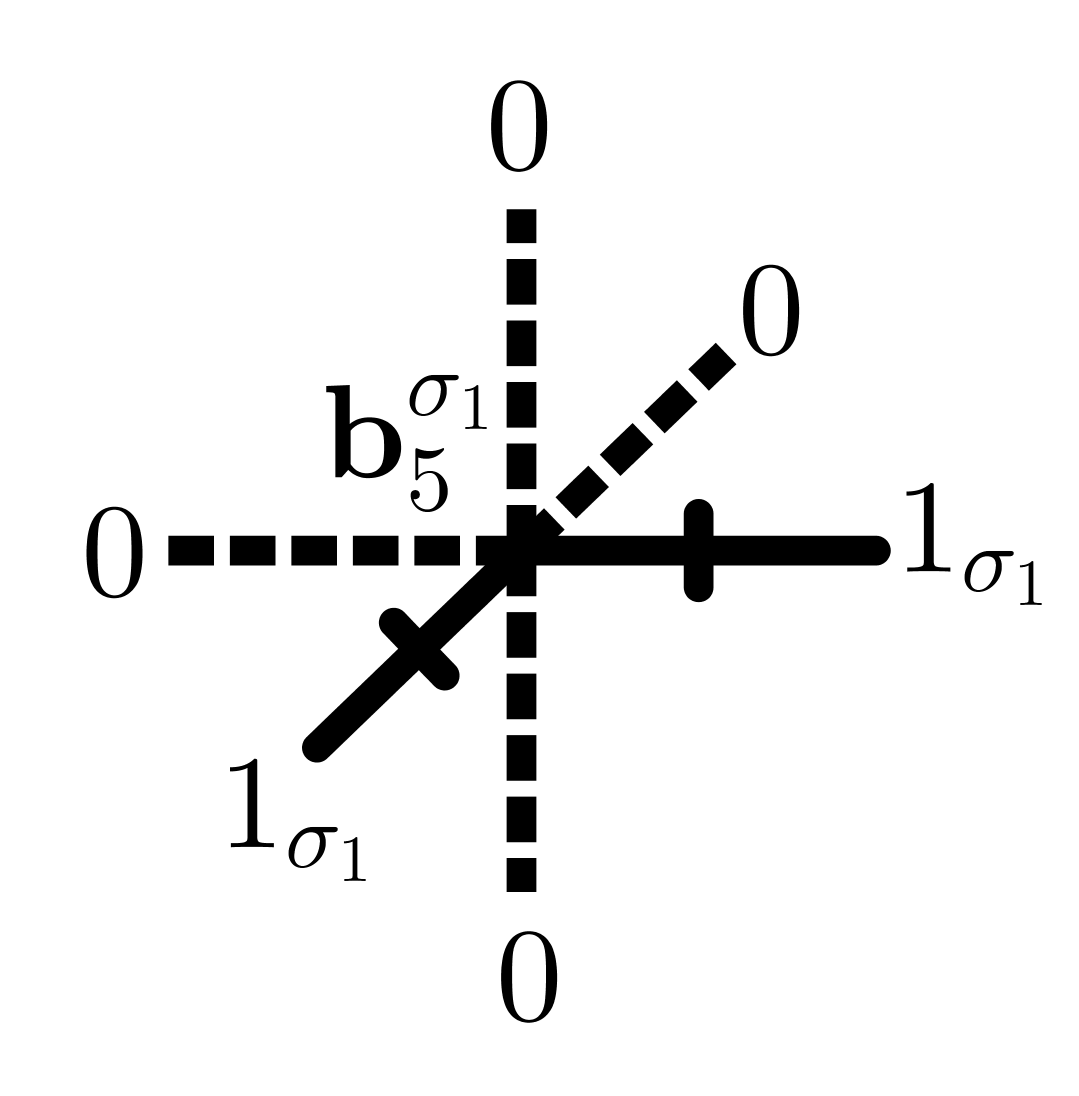}=C_{\sigma_1} \epsilon^{-8/3} (= O(\epsilon^{4/3})).
          \end{equation}
    \item For $v=2,3,4,6,7$ ($v \notin\graph (\widehat{\sigma_1})$), we define:
          \begin{equation}
              \deltaB_v^{\sigma_1}=0
          \end{equation}
\end{itemize}

Clearly, the so-defined tensors satisfy \eqref{NSrcEx_m}. Moreover, one can easily verify that \eqref{compat3}, including the second line, and \eqref{Pconds} are satisfied. We will discuss these properties after presenting the general construction. In the rest of this example, we will discuss the problem with \eqref{compat5} for $M=1$ posed by the naive definition and the modification of the naive definition which resolves this problem. We want to focus on \eqref{compat5} as its verification is the main challenge of Section~\ref{CON2} and it provides the motivation for the upcoming modification of the naive definition of $\deltaB_v^{\sigma_1}$.

To prove \eqref{compat5} for $M=1$, we aim to show that there is a collection of disjoint sets $D_\gamma \subset \mathrm{Disc}_1$, $\gamma \in \mathrm{Disc}$, (most of which are empty) such that
\begin{equation}\label{desire2}
    \mathrm{Disc}_1=\bigsqcup\limits_{\gamma \in \mathrm{Disc}} D_\gamma,
\end{equation}
and
\begin{equation}\label{desire}
    T_\gamma (\{\deltaB_v^1 + \deltaB_v^{\sigma_1}\}) = \sum_{\mu \in D_\gamma} T_\mu (\deltaA) \qquad  (\gamma \in \mathrm{Disc}).
\end{equation}
We assume that the sum over empty $D_\gamma$ is $0$. Clearly, \eqref{compat5} for $M=1$ follows from \eqref{desire} and \eqref{desire2}. Unfortunately, \eqref{desire} does not hold for the naive tensors $\deltaB_v^{\sigma_1}$. We will now demonstrate that there are templates $\gamma \in \mathrm{Disc}$ such that
\begin{equation}\label{NSrc_breaks}
    T_\gamma(\{\deltaB_v^1+\deltaB_v^{\sigma_1}\})=T_\gamma (\deltaA) + \text{"unwanted term"},
\end{equation}
where the "unwanted term" contradicts \eqref{desire}. Then, we will suggest a modification of the definition of $\deltaB_v^{\sigma_1}$ which cancels the unwanted term in the r.h.s. of \eqref{NSrc_breaks}. After this, we will show that \eqref{desire},\eqref{desire2} hold, and so, that \eqref{compat5}, for $M=1$, is satisfied.

Let us consider a disconnected template $\gamma$ of the form:
\begin{equation}\label{NSrc_gamma_def}
    \gamma = \widehat{\sigma_1} \star \lambda.
\end{equation}
where $\lambda$ is given by the following diagram:
\begin{equation}\label{NSrc_lambda_diag}
    \diag(\lambda)=\includegraphics[scale=0.75,valign=c]{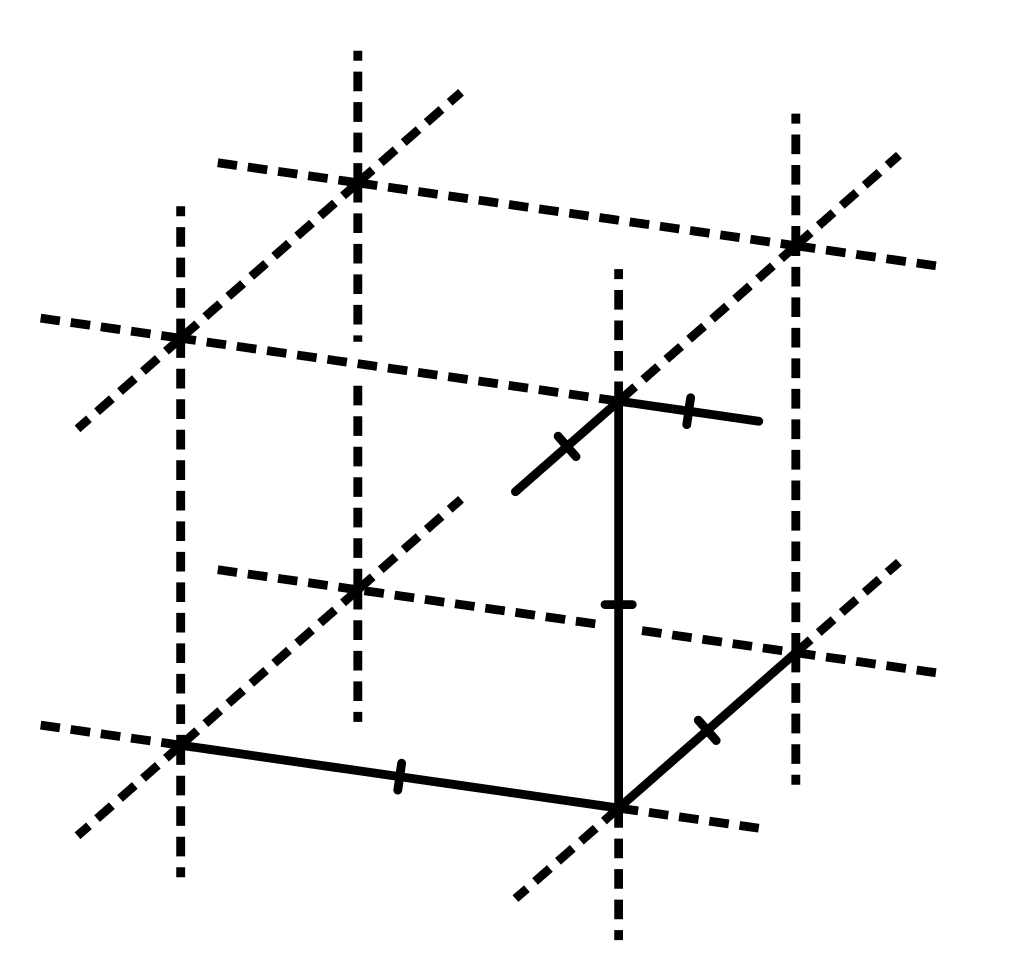}.
\end{equation}
So,
\begin{equation}\label{NSrc_gamma_diag}
    \diag(\gamma)=\includegraphics[scale=0.75,valign=c]{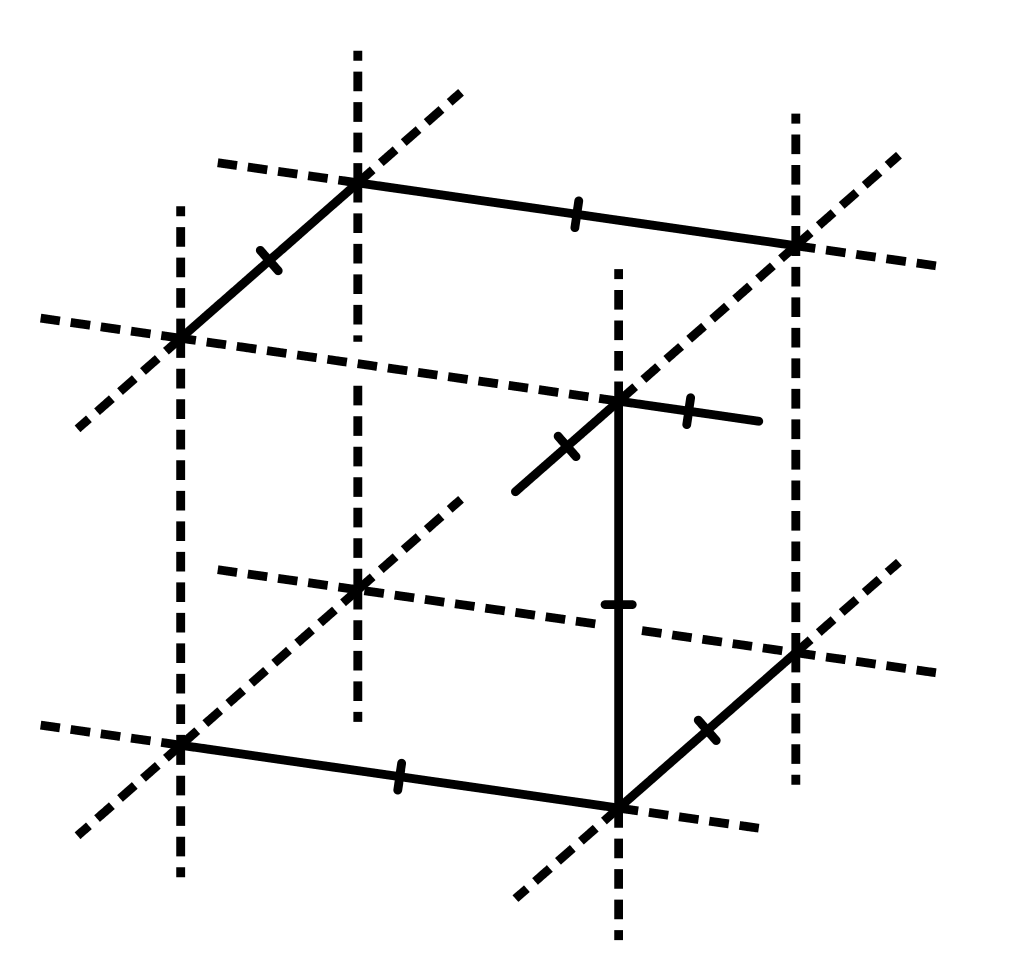}.
\end{equation}
Consider the corresponding $T_{\gamma}(\{\deltaB_v^1 +\deltaB_v^{\sigma_1}\})$:
\begin{equation}\label{NSrc_ex_disc1}
    T_{\gamma}(\{\deltaB_v^1 +\deltaB_v^{\sigma_1}\})= \includegraphics[scale=0.77,valign=c]{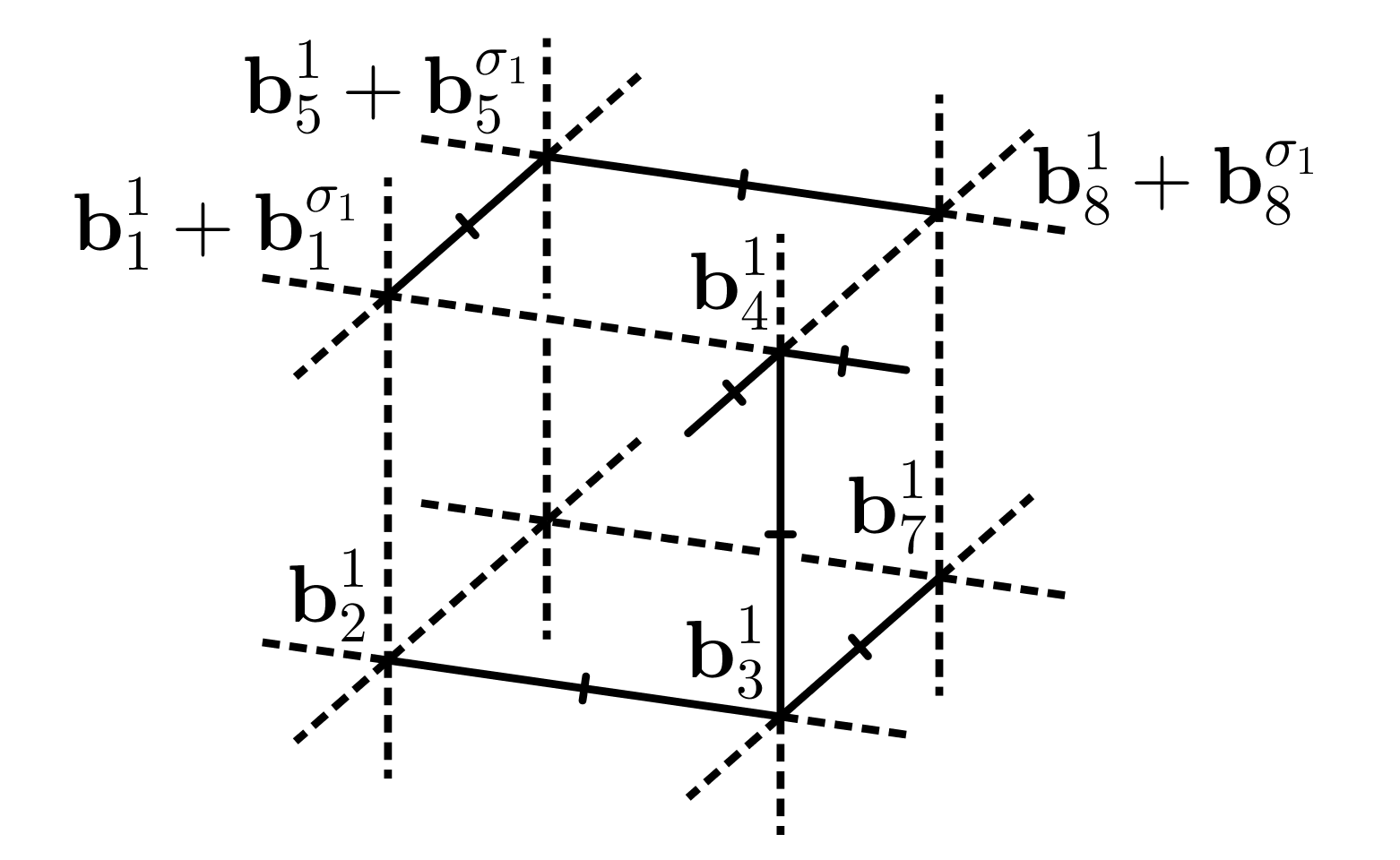}.
\end{equation}
We expand \eqref{NSrc_ex_disc1} by multilinearity of contractions and note that there are two nonzero terms:
\begin{equation}\label{NSrc_ex_disc3}
    T_\gamma(\{\deltaB_v^1 +\deltaB_v^{\sigma_1}\})=\includegraphics[scale=0.77,valign=c]{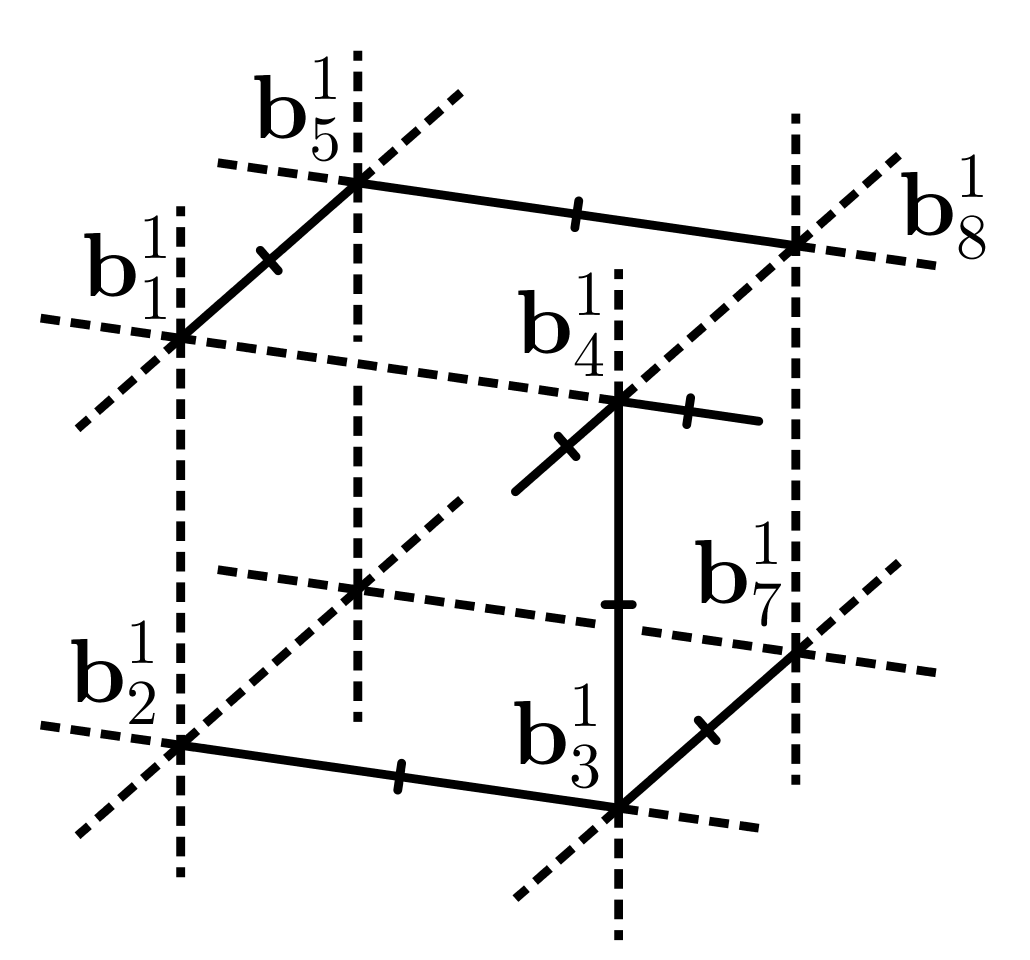}+\includegraphics[scale=0.77,valign=c]{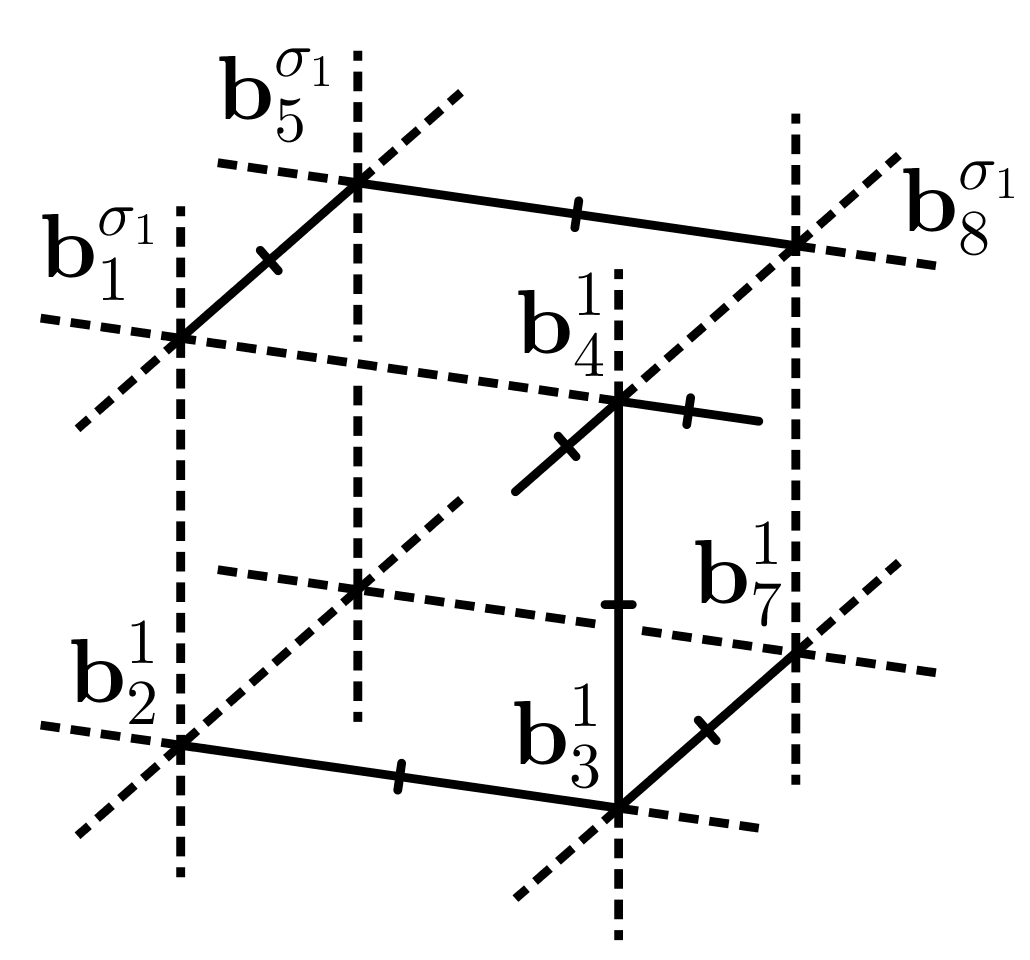}
\end{equation}
All other terms vanish as the contraction of $\deltaB_v^1$ and $\deltaB_v^{\sigma_1}$ along a ticked line vanishes due to their internal ticked legs being restricted to disjoint sets $\bigsqcup\limits_{\mu \in \mathrm{Src}} \N_\mu$ and $\{1_{\sigma_1}\}$, respectively. By \eqref{dbv1_goal}, the first term in \eqref{NSrc_ex_disc3} equals $T_\gamma (\deltaA)$. Consider the second term in \eqref{NSrc_ex_disc3}. Using \eqref{NSrc_def_18}, \eqref{NSrc_def_5}, we obtain:
\begin{equation}\label{NSrc_breaking_term}
    \includegraphics[scale=0.77,valign=c]{NSrc_tgamma3.png} = C_{\sigma_1} T_\lambda(\{\deltaB_v^1\})=C_{\sigma_1} T_\lambda(\deltaA),
\end{equation}
where $\lambda$ is given by \eqref{NSrc_lambda_diag}. This is the unwanted term mentioned earlier.

It can be shown that for generic $\deltaA$, there is no $\mu \in \mathrm{Disc}$ such that $T_\mu(\deltaA)= C_{\sigma_1} T_\lambda (\deltaA)$, and so, \eqref{desire} is violated for $\gamma$ given by \eqref{NSrc_gamma_def},\eqref{NSrc_gamma_diag}.\footnote{Strictly speaking, to show the contradiction with \eqref{desire}, one should prove that $C_{\sigma_1} T_\lambda (\deltaA)$ cannot be represented as a sum of contractions $T_{\mu} (\deltaA)$ with disconnected $\mu$'s. We omit this complicated discussion here, as our goal is to give a simple motivation for the upcoming modification of the naive definition, which, as we will see, establishes \eqref{desire}.} Let us give a rough explanation of this fact. Assume that there is a $\mu \in \mathrm{Disc}$ such that $T_\mu(\deltaA)=C_{\sigma_1} T_\lambda(\deltaA)$. Then, to get factor $C_{\sigma_1}$, $T_\mu (\deltaA)$ should insert $\deltaA$ in vertices $v \in\graph (\sigma_1)$ and restrict them according to $\sigma_1$. To get factor $T_\lambda (\deltaA)$, $T_\mu (\deltaA)$ should insert $\deltaA$ in vertices $v \in\graph (\lambda)$ and restrict them according to $\lambda$. Then, the tensor in vertex $4 \in\graph (\sigma_1) \cap\graph (\lambda)$ should be restricted simultaneously according to $\sigma_1$ and $\lambda$, which is impossible as there are bonds and legs $n^4_i$ such that $\sigma_1(n^4_i) \neq \lambda(n^4_i)$.\footnote{For the notation $n^v_k$ see \eqref{positions_of_ns} and the discussion around.}

We see from this discussion that a similar unwanted term will appear for any $\lambda$ in \eqref{NSrc_gamma_def} such that $4 \in\graph (\lambda)$ (in particular, $\lambda$ can be disconnected). If on the other hand $4 \notin\graph (\lambda)$, then it is easy to verify that $C_{\sigma_1} T_\lambda (\deltaA)=T_{\sigma_1 \star \lambda} (\deltaA)$ and so, such a term does not contradict \eqref{desire}.

We will cancel the unwanted terms by modifying the definition of $\deltaB_v^{\sigma_1}$ tensors as follows:
\begin{itemize}
    \item For $v=1,8$, we introduce a second tensor element for each $\deltaB^{\sigma_1}_v$, in addition to \eqref{NSrc_def_18}:
          \begin{equation}\label{NSrc_def_18(2)}
              \includegraphics[scale=0.75,valign=c]{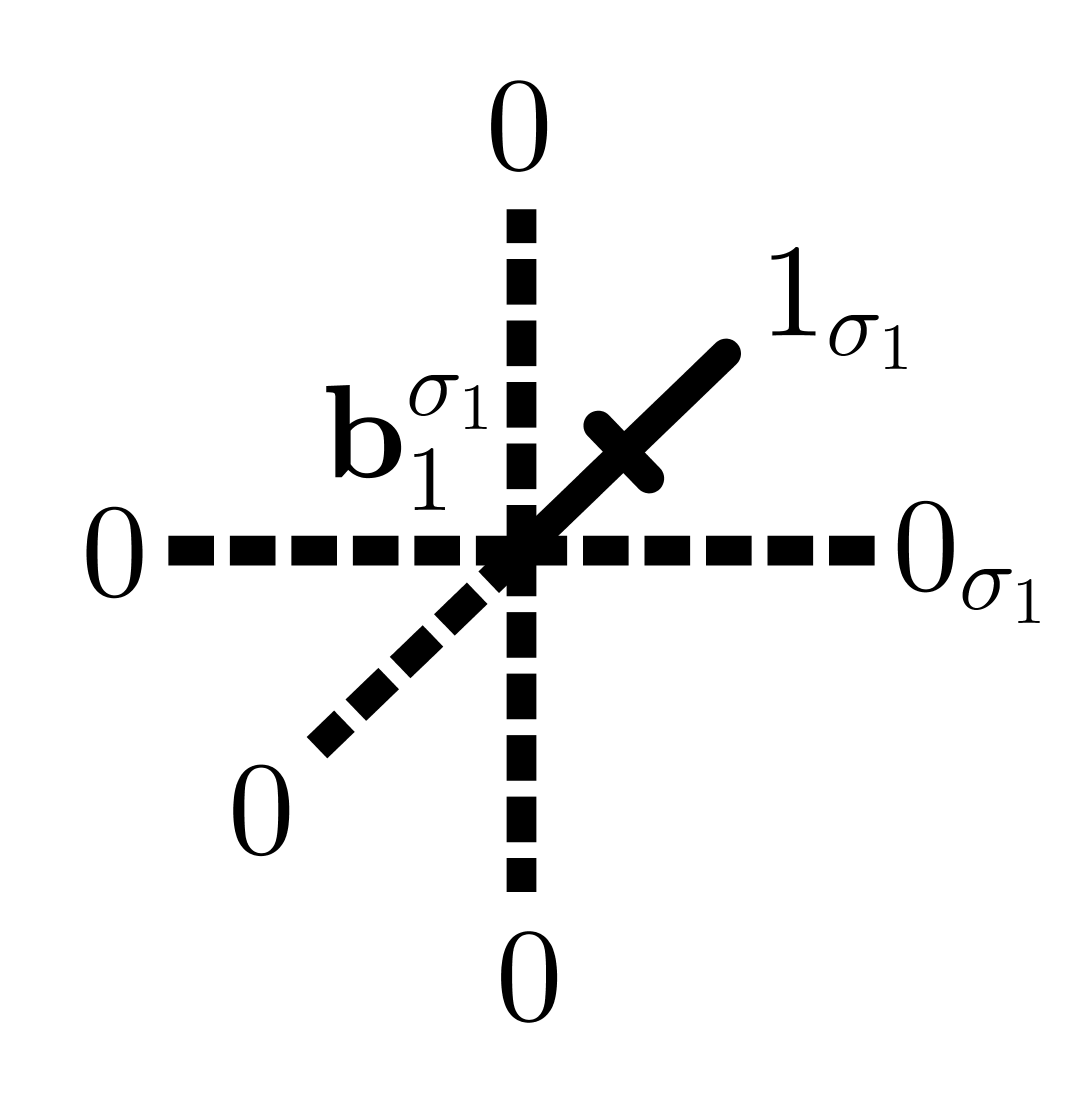}=\epsilon^{4/3}, \qquad \includegraphics[scale=0.75,valign=c]{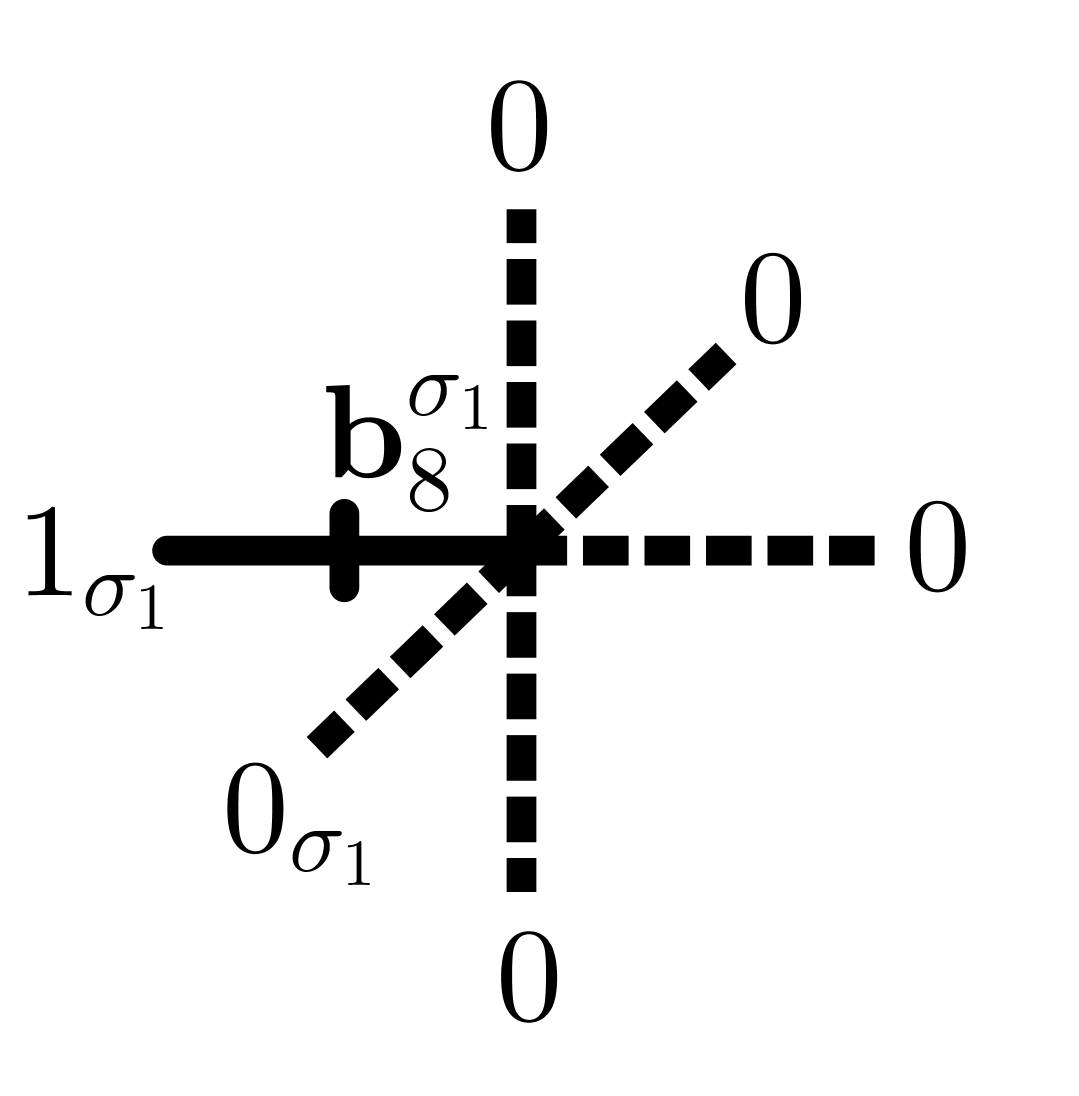}=\epsilon^{4/3},
          \end{equation}
          where $0_{\sigma_1}$ is an element of $\mathcal{O}$ introduced in Section~\ref{D0} (see also Footnote~\ref{footNg}). The difference between tensor elements in \eqref{NSrc_def_18} and \eqref{NSrc_def_18(2)} is that in the latter, legs which point towards vertex $4$ have index value $0_{\sigma_1}$. All nonzero tensor elements of $\deltaB_1^{\sigma_1}$ and $\deltaB_8^{\sigma_1}$ are given by \eqref{NSrc_def_18} and \eqref{NSrc_def_18(2)}.
    \item For $v=4$, we define $\deltaB_4^{\sigma_1}$ by the following formula:
          \begin{equation}\label{NSrc_ex_4}
              \includegraphics[scale=0.75,valign=c]{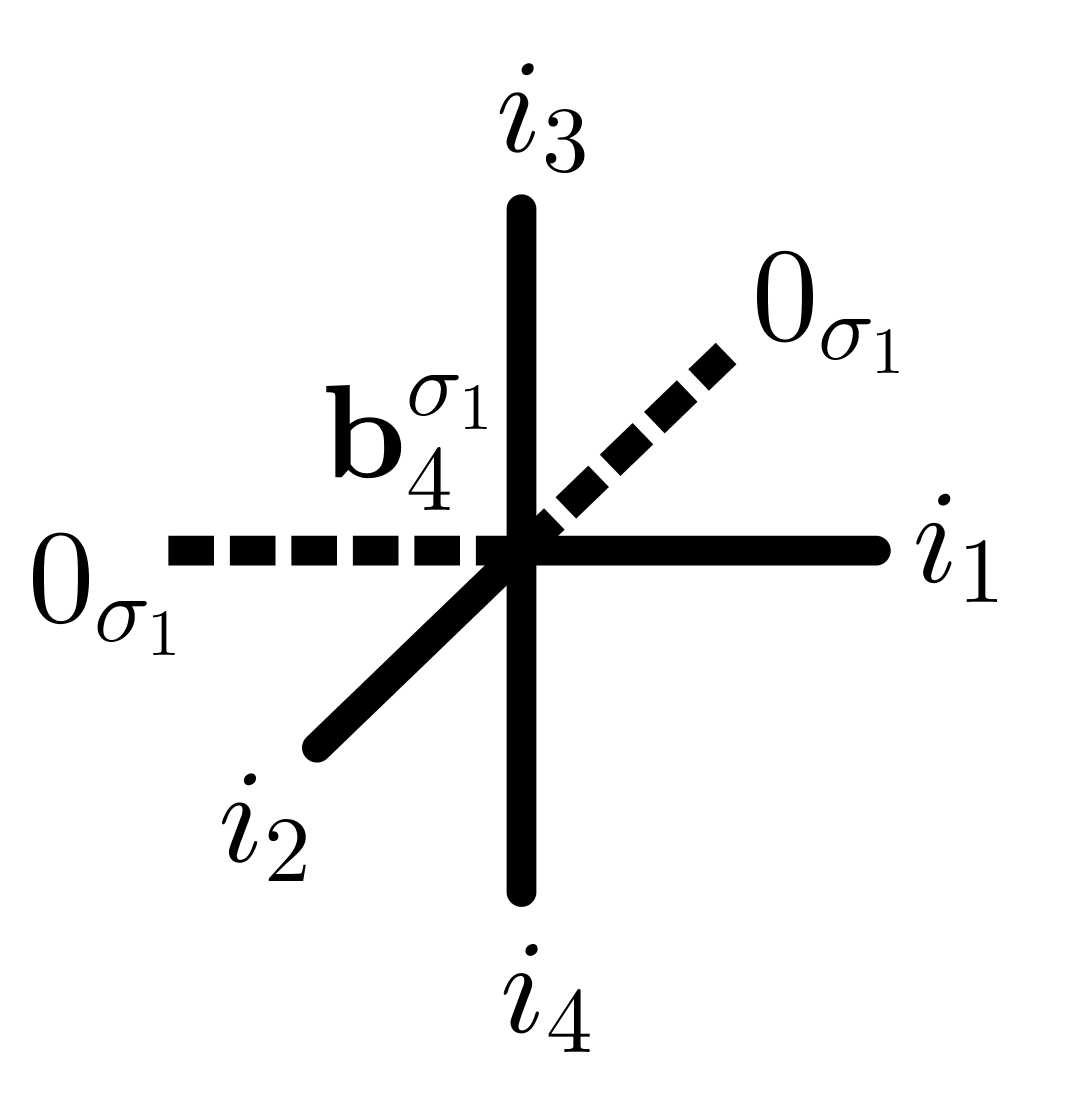}=-\left(\includegraphics[scale=0.75,valign=c]{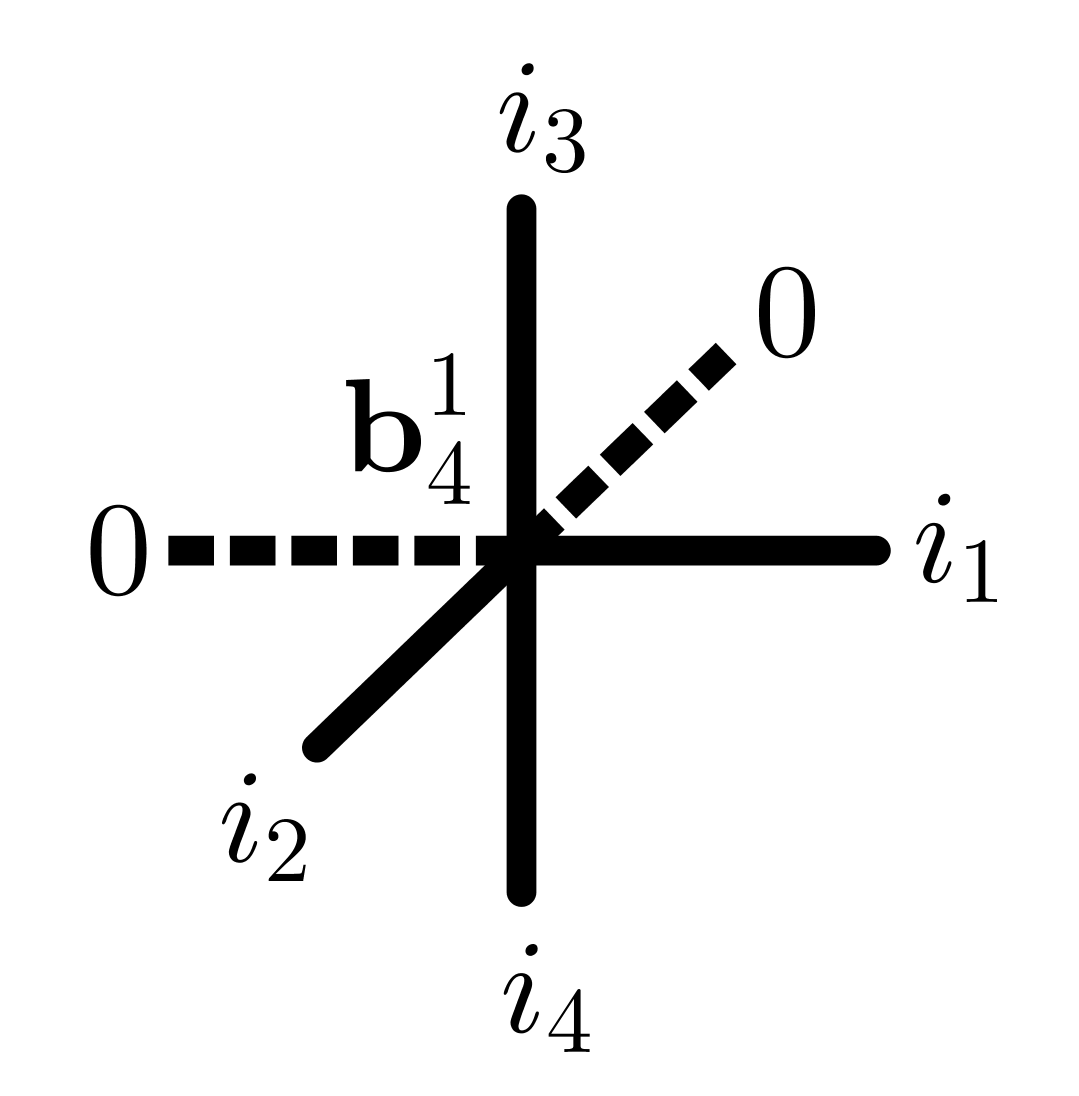}\right),
          \end{equation}
          where $\deltaB_4^1$ is the tensor defined in Section~\ref{dbv1}. Here, $i_1,i_2,i_3$ (external legs) run over $\N_0$, and $i_4$ (internal leg) runs over $\mathcal{D}_0$. Other tensor elements of $\deltaB_4^{\sigma_1}$ are zero. Note that the legs of $\deltaB_4^{\sigma_1}$ which point towards vertices $1$ and $8$ have index value $0_{\sigma_1}$, matching the index value of legs of $\deltaB_1^{\sigma_1}$ and $\deltaB_8^{\sigma_1}$ in \eqref{NSrc_def_18(2)}.\footnote{For our example, setting $i_1,i_2 \in \N$, $i_3=0$, and $i_4 \in \mathcal{D}$ in \eqref{NSrc_ex_4} would suffice to cancel the unwanted term in \eqref{NSrc_ex_disc3}. However, similar unwanted terms appear for all templates $\lambda$ in \eqref{NSrc_gamma_def} such that $4 \in\graph (\lambda)$. Different $\lambda$'s in \eqref{NSrc_gamma_def} may restrict the legs of $\deltaB_4^1$ in the corresponding unwanted terms in different ways. By using all possible $i_1,\ldots,i_4$ in \eqref{NSrc_ex_4}, we cover all possible cases.}
    \item Other tensors remain the same as in the naive definition.
\end{itemize}
It is easy to see that \eqref{Pconds} and the first line of \eqref{compat3} still hold after this modification. The second line of \eqref{compat3} is a bit trickier as $\deltaB_4^{\sigma_1}$ is not restricted according to $\widehat{\sigma}$. After the presentation of the general construction, we will show that $\deltaB_4^{\sigma_1}$ and its counterparts for other templates $\sigma_q \in \mathrm{NSrc}$ cannot appear in the contractions restricted according to connected templates\footnote{Expression "contraction restricted according to a template" is clarified in the discussion after \eqref{trm_dscrA}.} and, in particular, in $T_\gamma(\{\deltaB_v^{\sigma_1}\})$ in \eqref{compat3}. Thus, if \eqref{compat3} holds for the naive tensors, it should hold for the modified tensors as well.\footnote{Running ahead, we note that $\deltaB_4^{\sigma_1}$, though it has legs with index values in $\bigsqcup\limits_{\mu \in \mathrm{NSrc}} \N_\mu$, will not break the argument for \eqref{compat4} mentioned in the beginning of Section~\ref{CON2} as it will not appear in the contractions $T_\gamma (\{\deltaB_v^1+\deltaB_v^2\})$ with connected $\gamma$'s.}

Now we reconsider the example of $\gamma$ given by \eqref{NSrc_gamma_def}, \eqref{NSrc_gamma_diag}. We will show that after the above modification, we have the following:
\begin{equation}\label{nounwnt}
    T_\gamma(\{\deltaB_v^1+\deltaB_v^{\sigma_1}\})=T_\gamma(\deltaA),
\end{equation}
which coincides with \eqref{desire} for $D_\gamma =\{\gamma\}$.

The expansion by multilinearity of $T_\gamma (\{\deltaB_v^1 +\deltaB_v^{\sigma_1}\})$ now includes four terms:
\begin{equation}\label{NSrc_disc_4}
    T_\gamma (\{\deltaB_v^1 +\deltaB_v^{\sigma_1}\})=T_1+T_2+T_3+T_4
\end{equation}
The first two terms $T_1,T_2$ are identical to those in \eqref{NSrc_ex_disc3}:
\begin{equation}
    T_1=\includegraphics[scale=0.85,valign=c]{NSrc_tgamma2.png}, \qquad T_2=\includegraphics[scale=0.85,valign=c]{NSrc_tgamma3.png}.
\end{equation}
The third and the fourth terms $T_3,T_4$ are analogous to $T_1,T_2$, respectively, but with $\deltaB_4^1$ replaced with $\deltaB_4^{\sigma_1}$:
\begin{equation}\label{T4ref}
    T_3=\includegraphics[scale=0.85,valign=c]{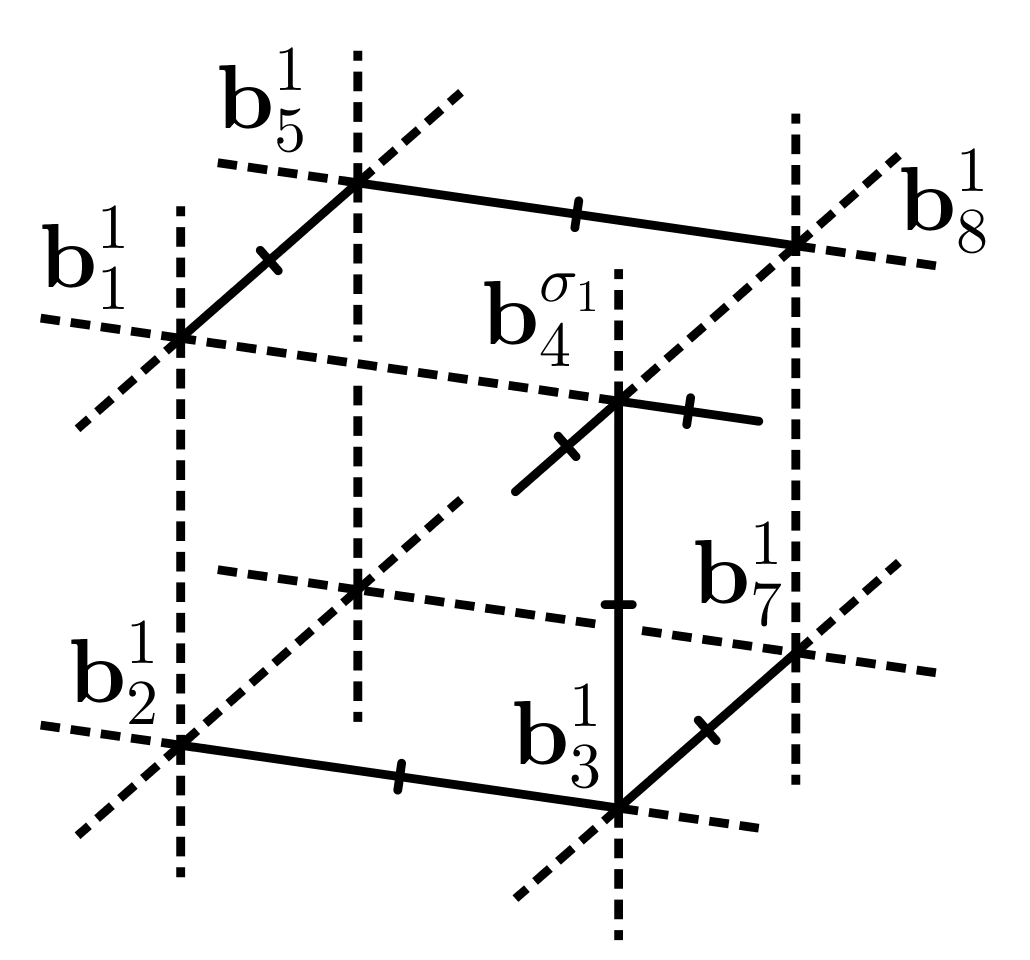}, \qquad T_4=\includegraphics[scale=0.85,valign=c]{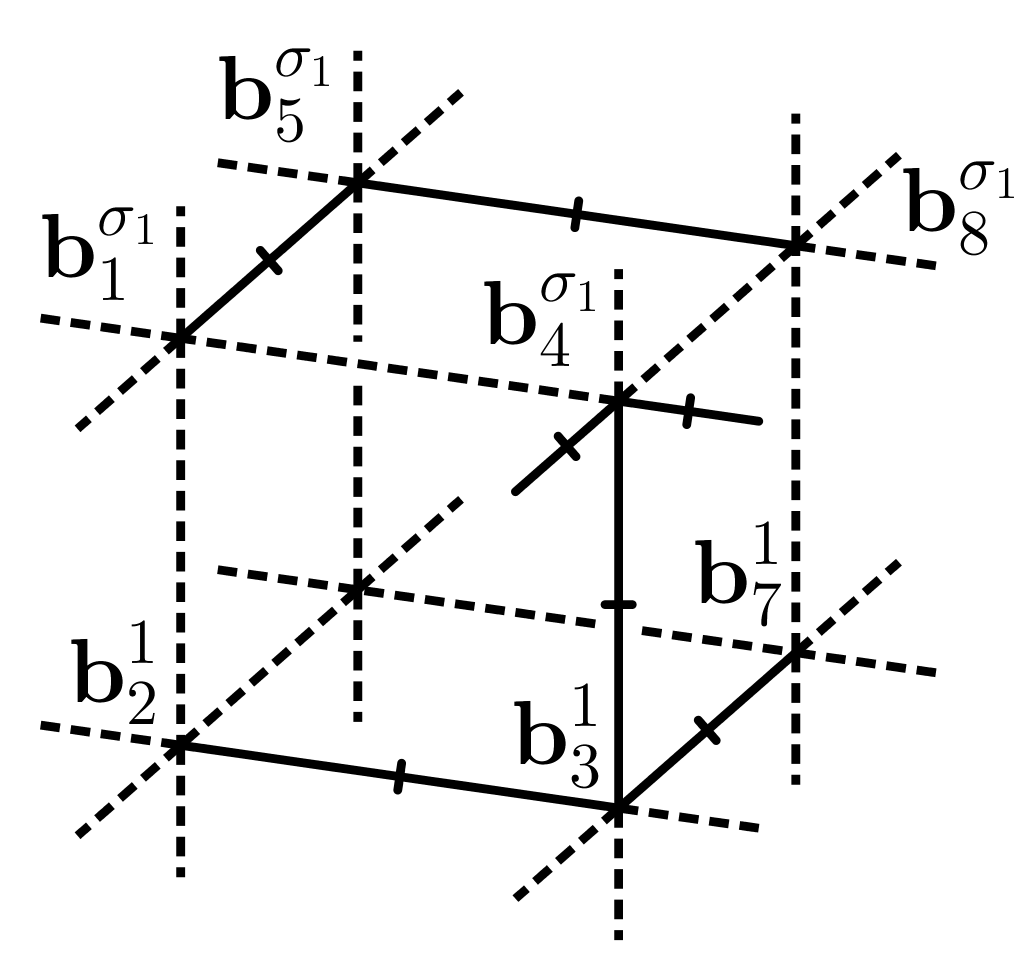}.
\end{equation}
Note that $T_3$ vanishes as the contracted legs of $\deltaB_4^{\sigma_1}$ and $\deltaB_1^1$ are restricted to disjoint sets $\{0_{\sigma_1}\}$ and $\{0\}$, respectively. For $T_4$, using \eqref{NSrc_def_18(2)},\eqref{NSrc_ex_4}, we find:
\begin{equation}\label{NSrc_cancel}
    T_4= -T_2.
\end{equation}
Then, $T_4$ cancels $T_2$ (the unwanted term), leaving only $T_1$ on the r.h.s. of \eqref{NSrc_disc_4}. By \eqref{dbv1_goal}, $T_1=T_\gamma(\deltaA)$. Therefore, we have \eqref{nounwnt}.

We conclude our example by checking that \eqref{desire}, \eqref{desire2} (and so \eqref{compat5} for $M=1$) hold for any $\gamma \in \mathrm{Disc}$. We consider three possible relations between generic $\gamma \in \mathrm{Disc}$ and $\widehat{\sigma_1}$:
\begin{subequations}\label{gammamu}
    \begin{align}
         & \text{$\widehat{\sigma_1}$ is not a connected component of $\gamma$;\footnotemark}\label{gammamuA}              \\
         & \text{$\widehat{\sigma_1}$ is a connected component of $\gamma$ and $4 \in\graph (\gamma)$;}\label{gammamuB}    \\
         & \text{$\widehat{\sigma_1}$ is a connected component of $\gamma$ and $4 \notin\graph (\gamma)$.}\label{gammamuC}
    \end{align}
\end{subequations}
Note that \eqref{gammamuA}-\eqref{gammamuC} are mutually exclusive and that each $\gamma \in \mathrm{Disc}$ satisfies one of these relations. We will show that in all cases \eqref{gammamuA}-\eqref{gammamuC}, \eqref{desire} holds with specific disjoint sets $D_\gamma$. Then, we will demonstrate that \eqref{desire2} holds. We will keep details to a minimum, as a similar discussion will follow in the general construction. \footnotetext{See the discussion after \eqref{unionExample} for the definition of a connected component of a template.}

\begin{itemize}
    \item[\eqref{gammamuA}] In this case, one can show that in the expansion by multilinearity of $T_\gamma(\{\deltaB_v^1+\deltaB_v^{\sigma_1}\})$, terms with insertions of $\deltaB_v^{\sigma_1}$ vanish. Thus,
        \begin{equation}\label{NSrc_nuhatnot}
            T_\gamma(\{\deltaB_v^1+\deltaB_v^{\sigma_1}\})=T_\gamma(\{\deltaB_v^1\})=[\text{by} \ \eqref{dbv1_goal}] \ \begin{cases}
                T_\gamma (\deltaA), & \text{if } \gamma \in \mathrm{DiscS}\footnotemark \\
                0,                  & \text{if } \gamma \notin \mathrm{DiscS}
            \end{cases}.
        \end{equation}
        Therefore, for $\gamma$ satisfying \eqref{gammamuA}, we have \eqref{desire} with:
        \begin{equation}\label{DgammaA}
            D_\gamma = \begin{cases}
                \{\gamma\}, & \text{if } \gamma \in \mathrm{DiscS}    \\
                \void,      & \text{if } \gamma \notin \mathrm{DiscS} \\
            \end{cases}.
        \end{equation}
        \footnotetext{See the definition of $\mathrm{DiscS}$ in \eqref{Discs_def}.}

    \item[\eqref{gammamuB}] This case was previously addressed with the example of $\gamma$ given by \eqref{NSrc_gamma_def}, \eqref{NSrc_gamma_diag}. For a generic $\gamma$ satisfying \eqref{gammamuB}, a derivation analogous to the one below \eqref{nounwnt} yields \eqref{NSrc_nuhatnot}. Therefore, for $\gamma$ satisfying \eqref{gammamuB}, we have \eqref{desire} with $D_\gamma$ given by \eqref{DgammaA}.

    \item[\eqref{gammamuC}] We can uniquely write each such $\gamma$ as $\gamma = \widehat{\sigma_1}\star \lambda$, where $\lambda$ is not necessarily connected. Note that, as $4 \notin\graph (\gamma)$, we have $4 \notin\graph (\lambda) \subset\graph (\gamma)$. The expansion by multilinearity of $T_\gamma (\{\deltaB_v^1+\deltaB_v^{\sigma_1}\})$ yields two terms similar to those in \eqref{NSrc_ex_disc3}:
        \begin{equation}\label{NSrc_ex_disc_5}
            T_\gamma(\{\deltaB_v^1+\deltaB_v^{\sigma_1}\})= T_\gamma (\{\deltaB_v^1\})+C_{\sigma_1} T_\lambda (\{\deltaB_v^1\}).
        \end{equation}
        Note that vertices $1$ and $8$ in $\widehat{\sigma_1}$ given by \eqref{NSrc_hatsigma_diag} are sources (see \eqref{source2}), and so $\widehat{\sigma_1} \in \mathrm{Src}$. Then, we have $\gamma = \widehat{\sigma_1} \star \lambda \in \mathrm{DiscS}$ if and only if $\lambda \in \mathrm{DiscS} \sqcup \mathrm{Src}$. Then, applying \eqref{dbv1_goal} to \eqref{NSrc_ex_disc_5}, we obtain:
        \begin{equation}
            T_\gamma(\{\deltaB_v^1+\deltaB_v^{\sigma_1}\})=\begin{cases}
                T_\gamma(\deltaA) + C_{\sigma_1} T_\lambda (\deltaA), & \text{if } \lambda \in \mathrm{DiscS} \sqcup \mathrm{Src}    \\
                0,                                                    & \text{if } \lambda \notin \mathrm{DiscS} \sqcup \mathrm{Src}
            \end{cases}
        \end{equation}
        As mentioned earlier\footnote{See the discussion after \eqref{NSrc_breaking_term}}, if $4 \notin\graph (\lambda)$, then $C_{\sigma_1} T_\lambda (\deltaA)=T_{\sigma_1 \star \lambda} (\deltaA)$. Therefore, for $\gamma=\widehat{\sigma_1} \star \lambda$ satisfying \eqref{gammamuC}, we have \eqref{desire} with:
        \begin{equation}
            D_{\widehat{\sigma_1} \star \lambda} = \begin{cases}\label{DgammaC}
                \{\widehat{\sigma_1} \star \lambda, \sigma_1 \star \lambda \}, & \text{ if } \lambda \in \mathrm{DiscS}\sqcup \mathrm{Src}    \\
                \void,                                                         & \text{ if } \lambda \notin \mathrm{DiscS}\sqcup \mathrm{Src}
            \end{cases}.
        \end{equation}
\end{itemize}

Thus, we see that \eqref{desire} holds for any $\gamma \in \mathrm{Disc}$. Clearly, $D_\gamma$'s from above are disjoint as required. Finally, we verify \eqref{desire2}:
\begin{equation}
    \begin{aligned}
        \bigsqcup_{\gamma \in \mathrm{Disc}} D_\gamma & =                                                                                                                                                                                                                                              \\
                                                      & =[\text{by \eqref{DgammaA}, \eqref{DgammaC}}] \ \bigsqcup_{\gamma \in \mathrm{DiscS}} \{\gamma\} \sqcup \bigsqcup_{\lambda \in \mathrm{DiscS} \sqcup \mathrm{Src}} \{\sigma_1 \star \lambda\}= [\text{by \eqref{DiscMdef}}] \ \mathrm{Disc}_1.
    \end{aligned}
\end{equation}

In this way, we constructed tensors $\deltaB_v^{\sigma_1}$ which satisfy \eqref{Pconds}, \eqref{compat3} and \eqref{compat5} for $M=1$.

\subsubsection{General construction}\label{dbv2genconstr}

\paragraph{Construction of $\widehat{\sigma}$} Let us start by providing the construction of $\widehat{\sigma}$ required in \eqref{compat3}.

Note that if all legs in $\diag(\mu)$ are dashed, we can identify a non-trivial template $\mu$ with its graph $\graph (\mu)$. In such cases, we will say that $\mu$ is \textbf{given by the graph} $\graph (\mu)$. In particular, each $\sigma \in \mathrm{NSrc}$ is given by the corresponding graph $\graph (\sigma)$.

We write $\graph (\sigma)-v$ for the subgraph of $\graph (\sigma)$ obtained by deleting the vertex $v$ and edges for which $v$ is an endpoint. For a connected graph $\graph (\sigma)$, a vertex $v$ is called a \textbf{cut-vertex} if $\graph (\sigma)-v$ is disconnected.\footnote{This is a special case of the general definition of a cut-vertex (suitable for any graph) from \cite{west_introduction_2000}.}

Now, let $\sigma \in \mathrm{NSrc} \subset \mathrm{Con}$. We choose and fix a vertex $w_\sigma \in\graph (\sigma)$ such that it is not a cut-vertex. [We can always find such a
$w_\sigma \in\graph (\sigma)$, see Proposition~1.2.29 in \cite{west_introduction_2000}.] Then, $\widehat{\sigma}$ is defined as the template given by the graph
\begin{equation}\label{hatdef1}
    \graph (\widehat{\sigma})=g(\sigma)-w_\sigma.
\end{equation}
As $w_\sigma$ is not a cut-vertex, $\graph (\widehat{\sigma})$ is connected and so $\widehat{\sigma} \in \mathrm{Con}$ as required by \eqref{compat3}.

In other words, we obtain $\diag(\widehat{\sigma})$ from $\diag(\sigma)$ by replacing all ticked bonds which touch the vertex $w_\sigma$ in $\diag(\sigma)$ with dashed ones. We will refer to $w_\sigma$ in \eqref{hatdef1} as the \textbf{special vertex} of $\sigma$.

\paragraph{Construction of $\deltaB_v^{\sigma_q}$} We number the elements of $\mathrm{NSrc}$ in some arbitrary fixed order (see \eqref{NSrc_ord}).\footnote{As already mentioned in Footnote~\ref{ftntnord}, in the general construction $\sigma_1$ can be any $\mathrm{NSrc}$ template, not necessarily the one from $q=1$ example.} Then, we construct tensors $\deltaB_v^{\sigma_q}$ in the increasing order of $q=1,\ldots, |\mathrm{NSrc}|$ as follows.
\begin{enumerate}
    \item\label{itm:step0dbv2} Let $N_q$ denote the number of vertices in $\graph (\sigma_q)$. [Note that $N_q \geq 4$ for any template $\sigma_q \in \mathrm{NSrc}$. Below, we will only use $N_q>1$.]
    \item As $\sigma_q \in \mathrm{NSrc}$, all legs of $T_{\sigma_q} (\deltaA)$ are restricted to $\{0\}$, and so $T_{\sigma_q} (\deltaA) \propto T_*$. Then, we define a scalar quantity $C_{\sigma_q}=C_{\sigma_q} (\deltaA)$ by the following formula:
          \begin{equation}\label{CsDEF}
              T_{\sigma_q} (\deltaA) = C_{\sigma_q} T_*.
          \end{equation}
          Note that $C_{\sigma_q}$ depends analytically on $\deltaA$, and that:
          \begin{equation}\label{CsigmaORD}
              C_{\sigma_q} = O (\epsilon^{N_q}).
          \end{equation}

          \item\label{itm:notations} Recall that $w_{\sigma_q}$ denotes the special vertex of $\sigma_q$, such that $\graph (\widehat{\sigma_q})=g(\sigma_q)-w_{\sigma_q}$, see \eqref{hatdef1}. To simplify the notation, we will write $w_q$ for $w_{\sigma_q}$. We denote by $V_q$ the set of neighbours of $w_q$ in $\graph (\sigma_q)$.\footnote{In the $q=1$ example, $V_1=\{1,8\} $ and $w_1=4$.} Let $S_{q}$ be the set of edges which were deleted from $\graph (\sigma_q)$ to get $\graph (\widehat{\sigma_q})$ (i.e., the edges of $\graph (\sigma_q)$ connecting $w_q$ with vertices in $V_q$). Note that, as $\sigma_q \in \mathrm{NSrc}$, $w_q\in\graph (\sigma_q)$ has degree no less than $2$ and so $|S_q|= |V_q| \geq 2$.

          \item\label{dbv2DEFS} For each $v \in\graph (\widehat{\sigma_q})$ we define two sequences of indices:
          \begin{equation}\label{sequence_nn}
              I_v=ind_{1}(n^v_1),\ldots,ind_1(n^v_6), \text{ where } ind_1(n^v_i)=\begin{cases}
                  0,            & \text{if } \widehat{\sigma_q} (n^v_i)=\mathcal{O} \\
                  1_{\sigma_q}, & \text{if } \widehat{\sigma_q} (n^v_i)=\mathcal{D}
              \end{cases}.
          \end{equation}
          \begin{equation}\label{sequence_n}
              J_v=ind_2(n^v_1),\ldots,ind_2(n^v_6), \text{ where } ind_2(n^v_i) = \begin{cases}
                  0,            & \text{if } \widehat{\sigma_q} (n^v_i)=\mathcal{O} \text{ and } n^v_i \notin S_q \\
                  0_{\sigma_q}, & \text{if } \widehat{\sigma_q} (n^v_i)=\mathcal{O} \text{ and } n_{i}^v\in S_q   \\
                  1_{\sigma_q}, & \text{if } \widehat{\sigma_q} (n^v_i)=\mathcal{D}
              \end{cases}.
          \end{equation}
          The notation $n^v_i$ was introduced in the discussion around \eqref{positions_of_ns}.

          In other words, an index in $I_v$ has value $0$ (resp. $1_{\sigma_q}$) if the corresponding leg is restricted by $\widehat{\sigma_{q}}$ to $\mathcal{O}$ (resp. $\mathcal{D}$). The sequence $J_v$ can be obtained from $I_v$ by replacing the index value $0$ in legs pointing towards vertex $w_q$ with $0_{\sigma_q}$ (these are exactly the legs $n^v_i \in S_q$).

          \item\label{itm:defsofBs} We distinguish three types of vertices $v \in\graph (\sigma_q)$:
          \begin{subequations}\label{vertextypes}
              \begin{align}
                   & \text{$v \in V_q$;}\label{vtypeA}                    \\
                   & \text{$v\notin V_q$ and $v \neq w_q$;}\label{vtypeB} \\
                   & v=w_{q}.\label{vtypeC}
              \end{align}
          \end{subequations}
          For each case \eqref{vtypeA}-\eqref{vtypeC}, we define $\deltaB_v^{\sigma_q}$ as follows.
          \begin{itemize}
              \item[\eqref{vtypeA}] We define $\deltaB_v^{\sigma_q}$ as the tensor restricted according to $\widehat{\sigma_q}$ with two nonzero tensor elements:
                  \begin{subequations}\label{vA}
                      \begin{align}
                          \left(\deltaB_v^{\sigma_q}\right)_{I_v}=\epsilon^{N_q/(N_q-1)};\label{vA1} \\
                          \left(\deltaB_v^{\sigma_q}\right)_{J_v}=\epsilon^{N_q/(N_q-1)}\label{vA2}.
                      \end{align}
                  \end{subequations}
                  [Recall that $N_q>1$ for any $\sigma_q \in \mathrm{NSrc}$.]

              \item[\eqref{vtypeB}] First, we arbitrarily choose a single vertex $v_*$ satisfying \eqref{vtypeB}. Then, for a generic vertex $v$ of this type, we define $\deltaB_v^{\sigma_q}$ as the tensor restricted according to $\widehat{\sigma_q}$ with a single nonzero tensor element:\footnote{In the $q=1$ example, $v_*=5$. We did not need to make a choice, as vertex $5$ was the only vertex satisfying \eqref{vtypeB}.}
                  \begin{equation}\label{vBdef}
                      (\deltaB_v^{\sigma_q})_{I_v}=\begin{cases}
                          \epsilon^{N_q/(N_q-1)},                  & \text{if } v \neq v_* \\
                          C_{\sigma_q} \epsilon^{N_q/(N_q-1)-N_q}, & \text{if } v=v_*
                      \end{cases},
                  \end{equation}
                  Note that, by \eqref{CsigmaORD}, we have:
                  \begin{equation}
                      C_{\sigma_q} \epsilon^{N_q/(N_q-1) -N_q}=O(\epsilon^{N_q/(N_q-1)}).
                  \end{equation}
                  Eq.~\eqref{vBdef} achieves two goals: together with \eqref{vA} it will make tensors $\deltaB_v^{\sigma_q}$ satisfy the first line of \eqref{compat3}; it ensures that the dependence of $\deltaB_v^{\sigma_q}$ on $C_{\sigma_q}$ (and so on $\deltaA$) is analytic. [The tensors which we will define for the case \eqref{vtypeC} will solve other problems.]

              \item[\eqref{vtypeC}] Note that, as we define $\deltaB_v^{\sigma_q}$ in increasing order of $q$, tensors $\deltaB_v^{\sigma_p}$ with $p=1,\ldots,q-1$ are already defined at this point. Then, we define tensor $W_{q}$ as:\footnote{In $q=1$ example, $W_1= \deltaB_{4}^1$.}
                  \begin{equation}\label{Wqdef}
                      W_q=\deltaB_{w_q}^1+\sum_{p=1}^{q-1} \deltaB_{w_q}^{\sigma_p}.
                  \end{equation}
                  Next, let $\mathcal{L}_{q}$ be the set of all index sequences which can label the tensor elements of a $6$-tensor inserted in the special vertex $w_q$. Namely, $\mathcal{L}_{q}$ consists of sequences $L=l_1,\ldots,l_6$ such that:
                  \begin{subequations}\label{genLdef}
                      \begin{align}
                           & \text{$l_i \in \N_0$ if the corresponding leg $n^{w_q}_i$ is external;}\label{LdefA}     \\
                           & \text{$l_i \in \mathcal{D}_0$ otherwise, i.e., if $n^{w_q}_i$ is internal.}\label{LdefC}
                      \end{align}
                  \end{subequations}
                  Then, we define a set $\bar{\mathcal{L}_q} \subset \mathcal{L}_q$ as the set of sequences  $L=l_1,\ldots,l_6$ satisfying the following condition:
                  \begin{equation}\label{Ldef}
                      \text{$L \in \mathcal{L}_q$ and $l_i=0$ if $n^{w_q}_i \in S_q$.}
                  \end{equation}
                  Recall that $n_i^{w_q} \in S_q$ are the internal legs pointing towards vertices belonging to $V_q$.

                  For each sequence $L\in \bar{\mathcal{L}_q}$ we define the sequence $L'$ by the following formula:
                  \begin{equation}\label{Lprimeseq}
                      L'=(l_1)',\ldots, (l_6)', \text{ where } (l_i)'=\begin{cases}
                          l_i, \text{ if } n^{w_q}_i \notin S_q \\
                          0_{\sigma_q}, \text{ if } n^{w_q}_i \in S_q.
                      \end{cases}.
                  \end{equation}
                  Finally, we define $\deltaB_{w_q}^{\sigma_q}$ by the following formula:
                  \begin{equation}\label{ErrTrmDef}
                      (\deltaB_{w_q}^{\sigma_q})_{L'}=-(W_q)_{L} \qquad (L \in \bar{\mathcal{L}_q}).
                  \end{equation}
                  Other tensor elements of $\deltaB_{w_q}^{\sigma_q}$ are zero. Note that
                  \begin{equation}\label{speciallegszeros}
                      \parbox{10cm}{the legs of $\deltaB_{w_q}^{\sigma_q}$ which point towards vertices in $V_q$ have index value $0_{\sigma_q}$,\footnotemark matching the index value of legs of $\deltaB_{v}^{\sigma_q}$ in \eqref{vA2}.}
                  \end{equation}
                  Let us emphasise that $\deltaB_{w_q}^{\sigma_q}$ may have nonzero tensor elements $(\deltaB_{w_q}^{\sigma_q})_{L'}$ where some $l_i' \in L'$ take arbitrary values from $\mathcal{D}$. This feature distinguishes $\deltaB_{w_q}^{\sigma_q}$ from other tensors defined in this paper, whose legs are typically restricted to proper subsets of $\mathcal{D}$ or $\mathcal{O}$. We will refer to $b_{w_q}^{\sigma_q}$, $q=1,\ldots,|\mathrm{NSrc}|$, as \textbf{special tensors} and to all other tensors as \textbf{nonspecial tensors}. \footnotetext{These are exactly the legs $n_i^w \in S_q$.}

                  Eq.~\eqref{ErrTrmDef} is the generalisation of \eqref{NSrc_ex_4} from the $q=1$ example. The only difference in $q>1$ case is that the unwanted terms\footnote{See the discussion about the unwanted terms in the $q=1$ example after \eqref{NSrc_breaks}.} inserts in $w_q$ all previously constructed tensors: $\deltaB_{w_q}^1$ and $\deltaB_{w_q}^{\sigma_p}, p<q$. Thus, to cancel those, we have to use $W_q$ in \eqref{ErrTrmDef}.

                  The definition of $\deltaB_{w_q}^{\sigma_q}$ may look quite complicated, but the idea behind it is straightforward. For each $q=1,\ldots, |\mathrm{NSrc}|$ we will encounter the unwanted terms which insert $W_q$ in $w_q$ and $\deltaB_v^{\sigma_q}$ in vertices $v \in V_q$ (E.g. see \eqref{NSrc_breaking_term}). To eliminate these terms, we introduce the tensors $\deltaB_{w_q}^{\sigma_q}$ and ensure they can appear only in the analogues of unwanted terms where $W_q$ is replaced with $\deltaB_{w_q}^{\sigma_q}$ (as $T_4$ in \eqref{NSrc_disc_4}, \eqref{T4ref}). This is achieved by using the index value $0_{\sigma_q}$ in \eqref{Lprimeseq}, which ensures that $\deltaB_{w_q}^{\sigma_q}$ can be inserted in $w_q$ only if there are $\deltaB_v^{\sigma_q}$ in vertices $v \in V_q$ as there are no other tensors whose legs can take value $0_{\sigma_q}$ (see Lemma~\ref{lem:nocrossvals} below). Here and below, we will use the following terminology:
                  \begin{itemize}
                      \item "leg of a tensor can take value $x$" means that there are potentially nonzero tensor elements with the corresponding index value;
                      \item "leg of a tensor cannot take value $x$" means that the corresponding tensor elements are guaranteed to be zero.
                  \end{itemize}

          \end{itemize}

    \item For $v\notin\graph (\sigma_q)$ we define:
          \begin{equation}\label{ZerosDef}
              \deltaB_v^{\sigma_q}=0.
          \end{equation}

\end{enumerate}

We now present the key properties of tensors $\deltaB_v^{\sigma_q}$ crucial for verification of \eqref{compat3},\eqref{compat4}, and \eqref{compat5}.

\begin{lem}\label{lem:nocrossvals}
    Tensors $\deltaB_{w_q}^{\sigma_q}$ and $\deltaB_v^{\sigma_q}$ with $v \in V_q$ are the only tensors among $\deltaB_v^1$, $\deltaB_v^{\sigma_p}$ ($p=1,\ldots,|\mathrm{NSrc}|$), $\deltaA$, $A_*$ whose legs can take value $0_{\sigma_q}$.
\end{lem}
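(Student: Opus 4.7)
The plan is to dispatch each tensor class directly and handle the special tensors $\deltaB_{w_p}^{\sigma_p}$ by strong induction on $p$. First I would treat the trivial cases: the tensors $\deltaA$ and $A_*$ are indexed by $\N_0$, which is disjoint from $\{0_{\sigma_q}\}$; the tensor $\deltaB_v^1 = \sum_{\mu \in \mathrm{Src}} \deltaB_v^\mu$ has internal legs valued in $\{0\} \sqcup \N_\mu$ and external legs in $\N_0$, so no entry equals $0_{\sigma_q}$. For the non-special $\deltaB_v^{\sigma_p}$ with $v \neq w_p$ (types \eqref{vtypeA} and \eqref{vtypeB}), inspecting \eqref{vA} and \eqref{vBdef} shows every nonzero tensor element is indexed by entries lying in $\{0,\,0_{\sigma_p},\,1_{\sigma_p}\}$, so the value $0_{\sigma_q}$ appears only when $p=q$ and then only for $v \in V_q$, exactly matching the lemma.

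The substantive case is the special tensors $\deltaB_{w_p}^{\sigma_p}$ for $p \neq q$, which I handle by strong induction on $p$. By \eqref{ErrTrmDef}, the nonzero components are $(\deltaB_{w_p}^{\sigma_p})_{L'} = -(W_p)_L$ for $L \in \bar{\mathcal{L}_p}$, and the map $L \mapsto L'$ from \eqref{Lprimeseq} preserves entries outside $S_p$ while overwriting them with $0_{\sigma_p}$ on $S_p$. Hence a nonzero element with $L'[j] = 0_{\sigma_q}$ forces $n^{w_p}_j \notin S_p$, $L[j] = 0_{\sigma_q}$, and $(W_p)_L \neq 0$. Expanding $W_p = \deltaB_{w_p}^1 + \sum_{r<p} \deltaB_{w_p}^{\sigma_r}$ and combining the previous checks with the inductive hypothesis for $r<p$, the only summand that can supply such an $L$ is $\deltaB_{w_p}^{\sigma_q}$, and only when $q<p$ and $w_p \in V_q$ (so it is of type \eqref{vtypeA}). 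Its unique nonzero component carrying $0_{\sigma_q}$ sits at $L = J_{w_p}$ (computed with respect to $\sigma_q$), with the $0_{\sigma_q}$ entry on the unique leg of $w_p$ in $S_q$.

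The crux --- and the place where the sourcelessness of $\sigma_p, \sigma_q$ pays off --- is to show $J_{w_p} \notin \bar{\mathcal{L}_p}$, so that this potentially bad contribution is excluded by the very definition \eqref{ErrTrmDef}. I would prove this by a pigeonhole count on cube-edges incident to $w_p$. Since $\sigma_p, \sigma_q \in \mathrm{NSrc}$ are sourceless, both $\graph(\sigma_p)$ and $\graph(\sigma_q)$ have degree at least $2$ at $w_p$; and $w_p \in V_q$ forces $w_p \neq w_q$. Since each cube vertex has only $3$ incident cube-edges, $\graph(\sigma_p)$ and $\graph(\sigma_q)$ must share at least $2+2-3 = 1$ cube-edge incident to $w_p$. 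Let $e$ be such a common edge; then $e \in S_p$ (by definition of $S_p$) and $e$ is ticked in $\sigma_q$, so $J_{w_p}$ assigns to the corresponding leg either $1_{\sigma_q}$ (if the other endpoint of $e$ is not $w_q$) or $0_{\sigma_q}$ (if it is $w_q$) --- neither equal to $0$. This violates the defining condition of $\bar{\mathcal{L}_p}$, and the induction closes. I expect this pigeonhole blocking step to be the main obstacle, since it requires carefully tracking the $I_v, J_v$ conventions of \eqref{sequence_nn}--\eqref{sequence_n} against the geometry of $S_p, S_q$ around $w_p$.
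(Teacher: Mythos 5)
Your proof takes a genuinely different route from the paper, but as written it has a gap in the inductive step.

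When you expand $W_p = \deltaB_{w_p}^1 + \sum_{r<p} \deltaB_{w_p}^{\sigma_r}$, you conclude that ``the only summand that can supply such an $L$ is $\deltaB_{w_p}^{\sigma_q}$, and only when $q<p$ and $w_p \in V_q$ (so it is of type \eqref{vtypeA}).'' That last parenthetical is where the argument slips: by your inductive hypothesis --- indeed by the very lemma you are proving --- the special tensor $\deltaB_{w_q}^{\sigma_q}$ is \emph{also} permitted to carry $0_{\sigma_q}$ on its legs. So if $q<p$ and $w_p = w_q$ (nothing forbids two NSrc templates from sharing the same special vertex), the summand $\deltaB_{w_p}^{\sigma_q} = \deltaB_{w_q}^{\sigma_q}$ of $W_p$ is a special tensor, not of type \eqref{vtypeA}, and the $J_{w_p}$/shared-edge pigeonhole you run next does not apply to it. This case needs a separate argument. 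It is not hard to supply: by \eqref{ErrTrmDef},\eqref{Lprimeseq} the nonzero components of $\deltaB_{w_q}^{\sigma_q}$ carry $0_{\sigma_q}$ exactly on the legs in $S_q$, and $|S_q|\geq 2$ because $\sigma_q$ is sourceless; but $L\in\bar{\mathcal{L}_p}$ forces $L=0$ on $S_p$ with $|S_p|\geq 2$; since $S_p$ and $S_q$ are both $\geq 2$-element subsets of the three internal legs at $w_p=w_q$, they intersect, so no such $L$ exists. The missing branch is precisely what the paper's own $2+2>3$ pigeonhole on the three internal-leg positions handles uniformly.

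For the comparison: the paper's proof is non-inductive for $p>q$. It first reduces to $|S_p|=2$ (for $|S_p|=3$ no bad $L$ exists at all), observes that $L$'s internal entries must then equal $\{0,0,0_{\sigma_q}\}$, and kills each summand of $W_p$ directly --- for a nonspecial $\deltaB_{w_p}^{\sigma_r}$ with $w_p\in\graph(\widehat{\sigma_r})$ an internal leg is forced to be $1_{\sigma_r}$, a value $L$ does not carry; for a special $\deltaB_{w_r}^{\sigma_r}$ ($w_p=w_r$) the $\geq 2$ legs restricted to $\{0_{\sigma_r}\}$ must overlap the $\geq 2$ legs where $L=0$. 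Your induction instead isolates the dangerous summand $\deltaB_{w_p}^{\sigma_q}$, locates its unique bad component $J_{w_p}$, and blocks it by a graph pigeonhole on the edges of $\graph(\sigma_p)$ and $\graph(\sigma_q)$ at $w_p$. Both proofs exploit the same geometric fact (three bonds per cube vertex), but the paper's count applies uniformly to every $r<p$ including $w_p=w_r$, whereas yours requires the extra $w_p=w_q$ branch spelled out above. On the upside, your inductive framing makes explicit the recursion that the paper compresses into ``for other tensors it is evident'' for $p\le q$.
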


\begin{proof}
    We need to check that legs of $\deltaB_{w_p}^{\sigma_p}$ ($p>q$) cannot take value $0_{\sigma_q}$ (for other tensors it is evident). By \eqref{ErrTrmDef}, it is enough to show that for any index sequence $L \in \bar{\mathcal{L}_p}$, $L=l_1,\ldots,l_6$ (see \eqref{Ldef}) such that there is an $l_i=0_{\sigma_q}$, we have $(W_p)_{L}=0$.

    Note that \eqref{Ldef} implies that $l_i=0_{\sigma_q}$ corresponds to an internal leg $n_i^{w_p} \notin S_p$. Recall that $|S_p|\geq 2$, so there are two possible situations (as there are only three internal legs): $|S_p|=2$ or $|S_p|=3$. If $|S_p|=3$, all internal legs $n_i^{w_q} \in S_p$, making it impossible for $L$ to contain $l_i=0_{\sigma_q}$. Hence, legs of $\deltaB_{w_p}^{\sigma_p}$ cannot take the index value $0_{\sigma_q}$ in this case.

    Suppose $|S_p|=2$. Then,
    \begin{equation}\label{twozeros}
        \text{we have two internal legs in $L$ with index value $0$ (see \eqref{Ldef}) and one with $0_{\sigma_q}$.}
    \end{equation}
    Let us show that $(W_p)_{L}=0$. By \eqref{Wqdef}, we have:
    \begin{equation}\label{WpCHECK}
        (W_p)_{L} = \left(\deltaB_{w_p}^1\right)_{L} + \sum_{r=1}^{p-1} \left(\deltaB_{w_p}^{\sigma_r}\right)_{L}.
    \end{equation}
    The first term in \eqref{WpCHECK} is zero as $\deltaB_{w_p}^1$ has no legs which can take index value $0_{\sigma_q}$. For terms $(\deltaB_{w_p}^{\sigma_r})_{L}$, we can have $w_p \notin \graph(\sigma_r)$ and $w_p \in \graph(\sigma_r)$. The former case is trivial as by \eqref{ZerosDef} we have $\deltaB_{w_p}^{\sigma_r}=0$. The latter case divides into two subcases: $w_p \in\graph (\widehat{\sigma_r})$ or $w_p =w_r$.

    If $w_p \in\graph (\widehat{\sigma_r})$, then $(\deltaB_{w_p}^{\sigma_r})_{L}=0$ as at least one internal leg of $\deltaB_{w_p}^{\sigma_r}$ is restricted to $\{1_{\sigma_r}\}$ (see \eqref{vA},\eqref{vBdef}) and so cannot take index values $0$ or $0_{\sigma_q}$.

    If $w_p=w_r$, then $b_{w_p}^{\sigma_r}=b_{w_r}^{\sigma_r}$. By \eqref{ErrTrmDef} and \eqref{Lprimeseq} with $q=r$, at least two internal legs of $b_{w_r}^{\sigma_r}$ are restricted to $\{0_{\sigma_r}\}$. Note that there are only three internal legs. Then, using \eqref{twozeros}, we may conclude that there is at least one internal leg restricted to $\{0_{\sigma_r}\}$ with the corresponding index value in $L$ equals $0$.\footnote{This follows from the fact that $2+2>3$.} Therefore, $(\deltaB_{w_r}^{\sigma_r})_{L}=0$.

    Thus, all terms in the r.h.s. of \eqref{WpCHECK} are zero, implying $(W_p)_{L}=0$. Therefore, legs of $\deltaB_{w_p}^{\sigma_p}$ cannot take the value $0_{\sigma_q}$.
\end{proof}

The following lemma establishes another key property of tensors $\deltaB_v^{\sigma_q}$. We will use it to verify \eqref{compat5}.
\begin{lem}\label{lem:nowqnew}
    Let $\gamma$ be a nontrivial template (connected or disconnected). We denote by $v_1,\ldots,v_m$ the vertices of $\graph (\gamma)$. Consider a contraction $T_\gamma (\deltaB_{v_1}^{K_1},\ldots, \deltaB_{v_m}^{K_m})$,\footnote{The notation $T_\gamma (\deltaB_{v_1}^{K_1},\ldots, \deltaB_{v_m}^{K_m})$ was introduced in \eqref{dep_on_r}. Recall that the tensor labelled with $v_i$ is assumed to be inserted in the vertex $v_i$. So, the order in which tensors are listed in the argument of $T_\gamma$ is irrelevant.} where each $K_i$ is either $1$ or a template belonging to $\mathrm{NSrc}$.\footnote{Note the difference between $K_i=1$ and $K_i \in \mathrm{NSrc}$. The tensor $\deltaB_{v_i}^1$ ($K_i=1$) is the tensor obtained in Section~\ref{dbv1} as the sum over $\mathrm{Src}$. At the same time, the tensor $\deltaB_{v_i}^{\sigma_q}$ ($K_i=\sigma_q\in\mathrm{NSrc}$) is only a summand in the definition of $\deltaB_{v_i}^2$ (see \eqref{dbv2Exp}).} If for some $i,j \in \{1,\ldots,m\}$ with $i\neq j$, and a $q \in \{1,\ldots,|\mathrm{NSrc}|\}$ we have:
    \begin{itemize}
        \item $v_i=w_q$, $K_i=\sigma_q$ (i.e., the special tensor $\deltaB_{w_q}^{\sigma_q}$ is inserted in $v_i=w_q$);
        \item $v_i, v_j$ belong to a connected component $\graph (\mu)$ of $\graph(\gamma)$;
        \item $K_j=\sigma_q$,
    \end{itemize}
    then
    \begin{equation}\label{nowqnew}
        T_\gamma (\deltaB_{v_1}^{K_1},\ldots, \deltaB_{v_m}^{K_m})=0.
    \end{equation}
\end{lem}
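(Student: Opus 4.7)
The plan is to argue by contradiction: assume the contraction in \eqref{nowqnew} is nonzero, and derive an incompatibility between the restrictions imposed by $\gamma$ and by the tensor definitions along a path in $\graph(\mu)$ connecting $v_j$ to $w_q$. The argument proceeds in three stages: forcing the $S_q$-bonds at $w_q$ to be dashed in $\gamma$, propagating a $\widehat{\sigma_q}$-restriction along the path from $v_j$ towards $w_q$, and exhibiting a mismatch at the terminal edge.

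For the first stage, I exploit the observation that by \eqref{Lprimeseq}--\eqref{ErrTrmDef}, every $S_q$-leg of $\deltaB_{w_q}^{\sigma_q}$ is restricted to $\{0_{\sigma_q}\} \subset \mathcal{O}$. Nonzero contraction then forces $\gamma(n)=\mathcal{O}$ for every $n\in S_q$, so these bonds cannot be edges of $\graph(\gamma)$. Combined with $|S_q| \geq 2$ (since $\sigma_q\in \mathrm{NSrc}$ rules out a degree-one $w_q$ via \eqref{source2}) and the requirement that $w_q$ is not isolated in its connected component, this yields $|S_q|=2$, identifies a unique non-$S_q$ internal bond $n_0=w_q u_0$ with $u_0\notin V_q$ that must be ticked in $\gamma$, and shows that $w_q$ is a leaf of $\graph(\mu)$.

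For the second stage, the case $v_j\notin\graph(\sigma_q)$ is settled at once by \eqref{ZerosDef}. Otherwise $v_j\in \graph(\widehat{\sigma_q})$, and \eqref{vA}--\eqref{vBdef} give that $\deltaB_{v_j}^{\sigma_q}$ is restricted according to $\widehat{\sigma_q}$; the familiar $\mathcal{O}\cap\mathcal{D}=\emptyset$ argument then forces $\gamma$ to agree with $\widehat{\sigma_q}$ on every leg at $v_j$, with each ticked bond at $v_j$ carrying the value $1_{\sigma_q}$. I would then propagate this description along the path. At an intermediate vertex $u$, the leg at the incoming bond carries value $1_{\sigma_q}$, and a short case analysis on $K_u$ (using the definitions of each tensor and the pairwise disjointness of the various copies of $\N$) leaves only two options: (i) $K_u = \sigma_q$ with $u \in \graph(\widehat{\sigma_q})$, or (ii) $u = w_p$ with $K_u = \sigma_p$ for some $p \neq q$. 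Option (ii) is ruled out at intermediate vertices because applying the first stage to $\sigma_p$ makes $w_p$ a leaf of $\graph(\mu)$, incompatible with being traversed.

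The third stage is then immediate. Every intermediate vertex, and in particular the penultimate $u_0$, falls under (i); so $\deltaB_{u_0}^{\sigma_q}$ is restricted by $\widehat{\sigma_q}$. Since $u_0 \notin V_q$, the bond $u_0 w_q$ is not in $\graph(\sigma_q)$ and is therefore dashed in $\widehat{\sigma_q}$, so this leg of $\deltaB_{u_0}^{\sigma_q}$ is restricted to $\mathcal{O}$, whereas $\gamma(n_0)=\mathcal{D}$ by the first stage. The empty intersection kills the contraction, contradicting the starting assumption. I expect the main obstacle to be the case analysis in the second stage, in particular cleanly excluding option (ii) for intermediate vertices; the leaf observation from stage one is what makes this tractable.
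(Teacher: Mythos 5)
Your proposal is correct, and it shares the paper's core strategy: follow a path in $\graph(\mu)$ from $v_j$ to $w_q$ and locate a leg-restriction incompatibility. But the decomposition and especially the terminal argument differ. The paper considers a path and splits into: (a) an intermediate special tensor $\deltaB_{w_p}^{\sigma_p}$ (ruled out by a pigeonhole: $|S_p|\geq 2$ of its three internal legs lie in $\mathcal{O}$ while an intermediate vertex has $\geq 2$ ticked bonds); (b.1) the first intermediate vertex with $K\neq\sigma_q$ (ruled out by disjointness of ticked index sets from $\{1_{\sigma_q}\}$); and (b.2) all $K=\sigma_q$, terminating with the disjointness of the ticked index sets of $\deltaB_{v_{l-1}}^{\sigma_q}$ and of the special tensor $\deltaB_{w_q}^{\sigma_q}$. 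Your stages 1 and 2 subsume (a) and (b.1), but your stage 3 replaces (b.2) with a cleaner observation: every bond incident to $w_q$ is dashed in $\widehat{\sigma_q}$ (since $\widehat{\sigma_q}$ deletes $w_q$), so $\deltaB_{u_0}^{\sigma_q}$'s leg on $n_0$ is restricted to $\mathcal{O}$, while $\gamma(n_0)=\mathcal{D}$. This avoids spelling out the internal index set of the special tensor. Your stage-1 "leaf of $\graph(\mu)$" observation is the pigeonhole of the paper's case (a) in disguise and is what lets stage 2 rule out option (ii). Two minor imprecisions, neither fatal: the phrasing "every intermediate vertex, and in particular the penultimate $u_0$" needs a word about the length-one path ($l=2$, $u_0=v_j$), although the conclusion there follows directly from the opening of stage 2; and in stage 3 the invocation of $u_0\notin V_q$ is superfluous, since $\widehat{\sigma_q}(n)=\mathcal{O}$ for \emph{any} bond $n$ incident to $w_q$, whether or not $n\in S_q$.
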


\begin{proof}[Proof of Lemma~\ref{lem:nowqnew}]
    Without loss of generality, assume $v_j=v_1, v_i=v_2$. Consider a path within $\graph(\mu)$ from $v_1$  to $v_2=w_q$ without backtracking and self-intersections. [There is such a path as $\graph(\mu)$ is connected.] Let $v_{i_1},\ldots, v_{i_l}$ be the vertices lying along this path ($i_1=1, i_{l}=2$). There are two cases to consider: (a) there is a $k \in \{2,\ldots,l-1\}$ such that $v_{i_k}=w_p, K_{i_k}=\sigma_p$ ($p\in \{1,\ldots,|\mathrm{NSrc}|\}$); (b) there is no such $k$ as in (a).

    \begin{itemize}
        \item[(a)] Note that vertices $v_{i_2},\ldots, v_{i_{l-1}}$ have degrees of at least two in $\graph (\mu)$. Then, for each $v_{i_k}$, $k=2,\ldots,l-1$, there are at least two bonds touching $v_{i_k}$, which are restricted to $\mathcal{D}$. Next, recall that $\deltaB_{w_p}^{\sigma_p}$ has $|S_p|\geq 2$ internal legs restricted to $\{0_{\sigma_p}\}$ (see \eqref{ErrTrmDef}, \eqref{Lprimeseq}). Then at least one out of three internal legs of $\deltaB_{w_p}^{\sigma_p}$ is restricted in $T_\gamma(\deltaB_{v_1}^{K_1},\ldots, \deltaB_{v_m}^{K_m})$ to $\mathcal{D} \cap \{0_{\sigma_p}\}=\void$. Therefore, \eqref{nowqnew} holds.

        \item[(b)] We divide this case into two subcases:(b.1) (a) does not hold and there is a $k \in\{2, \ldots,l-1\}$ such that $K_{i_k} \neq \sigma_q$; (b.2) (a) does not hold and for all $k \in \{1,\ldots,l\}$ we have $K_{i_k}=\sigma_q$.
            \begin{itemize}
                \item[(b.1)] Recall that internal ticked (restricted to $\mathcal{D}$) legs of $\deltaB_v^{\sigma_q}$ ($v\neq w_q$) are restricted to $\{1_{\sigma_q}\}$ (see \eqref{vA},\eqref{vBdef}). Consider the smallest $k$ for which $K_{i_k} \neq \sigma_q$. Then, in $v_{i_{k-1}}$ we have an insertion of $\deltaB_{v_{i_{k-1}}}^{\sigma_q}$. Then $\deltaB_{v_{i_{k-1}}}^{\sigma_q}$ is contracted along a ticked bond with $\deltaB_{v_{i_{k}}}^{1}$ or $\deltaB_{v_{i_{k}}}^{\sigma_p}$ ($p\neq q, v_{i_k}\neq w_p$ as (a) does not hold) whose internal ticked legs are restricted to the sets disjoint from $\{1_{\sigma_q}\}$ ($\bigsqcup\limits_{\mu \in \mathrm{Src}}\N_\mu$ and $\{1_{\sigma_p}\}$, respectively). Consequently, \eqref{nowqnew} holds.
                \item[(b.2)] In this subcase, $\deltaB_{v_{l-1}}^{\sigma_q}$ is contracted with $\deltaB_{w_q}^{\sigma_q}$ along a ticked bond. Consequently, \eqref{nowqnew} holds, as internal ticked legs of $\deltaB_{v_{l-1}}^{\sigma_q}$ and $\deltaB_{w_q}^{\sigma_q}$ are restricted to disjoint sets $\{1_{\sigma_q}\}$ (see \eqref{vA},\eqref{vBdef}) and $\bigsqcup\limits_{\mu \in \mathrm{NSrc}}\N_\mu \sqcup \{1_{\sigma_1},\ldots, 1_{\sigma_{q-1}}\} $ (see \eqref{ErrTrmDef},\eqref{Wqdef}), respectively.
            \end{itemize}
    \end{itemize}
\end{proof}

There is an important corollary from Lemmas~\ref{lem:nocrossvals},\ref{lem:nowqnew} which we will use for the verification of \eqref{compat3}, \eqref{compat4}.
\begin{cor}\label{cor:1}
    Let $\gamma$ be a connected template and $v_i, K_i$ be as in Lemma~\ref{lem:nowqnew}. If for some $i \in \{1,\ldots, m\}$ there is $q \in \{1, \ldots, \|\mathrm{NSrc}\|\}$ such that $v_i=w_q$ and $K_i=\sigma_q$, then Eq.~\eqref{nowqnew} holds.
\end{cor}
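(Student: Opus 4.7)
The plan is to reduce the corollary to Lemma~\ref{lem:nowqnew} by exploiting the distinguished index value $0_{\sigma_q}$ on the bonds in $S_q$. Recall from \eqref{Lprimeseq} that every nonzero tensor element of $\deltaB_{w_q}^{\sigma_q}$ has index $0_{\sigma_q}$ on each internal leg $n_i^{w_q} \in S_q$. Since $\gamma$ is connected, the whole graph $\graph(\gamma)$ is a single connected component, so if I can exhibit some $v_j \neq v_i$ with $K_j = \sigma_q$, Lemma~\ref{lem:nowqnew} immediately forces $T_\gamma(\deltaB_{v_1}^{K_1}, \ldots, \deltaB_{v_m}^{K_m}) = 0$. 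The task therefore reduces to showing that the contraction either vanishes trivially or that such a $v_j$ must be present.

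I would split the argument by the values of $\gamma$ on the bonds in $S_q$. If some $n \in S_q$ satisfies $\gamma(n) = \mathcal{D}$, then the corresponding $w_q$-leg is restricted in the contraction to $\{0_{\sigma_q}\} \cap \mathcal{D} = \void$, so $T_\gamma = 0$ at once. Otherwise $\gamma(n) = \mathcal{O}$ for every $n \in S_q$. Pick any such $n$ and let $v' \in V_q$ be its other endpoint. The $w_q$-side still only contributes the index $0_{\sigma_q}$ along this bond, so for the contraction to be nonzero the tensor inserted at $v'$ must have a nonzero entry with $0_{\sigma_q}$ on the leg facing $w_q$. If $v' \notin \graph(\gamma)$, then by \eqref{trm_dscr2} this tensor is $A_*$, whose only nonzero element uses the index $0 \neq 0_{\sigma_q}$, so the contraction vanishes.

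In the remaining case $v' \in \graph(\gamma)$, say $v' = v_j$, Lemma~\ref{lem:nocrossvals} pins down the only allowed insertions at $v_j$ with a leg capable of taking value $0_{\sigma_q}$, namely $\deltaB_{w_q}^{\sigma_q}$ itself (excluded because $v_j \neq w_q$, as $w_q \notin V_q$) and $\deltaB_v^{\sigma_q}$ for $v \in V_q$. Hence $K_j = \sigma_q$, and $v_i, v_j$ both lie in the unique connected component $\graph(\gamma)$, so Lemma~\ref{lem:nowqnew} gives \eqref{nowqnew}. I expect the only genuinely delicate point to be this exhaustive invocation of Lemma~\ref{lem:nocrossvals}: it is what forces $K_j = \sigma_q$ rather than merely $K_j \neq 1$, and it is the reason the index value $0_{\sigma_q}$ was introduced in the construction in the first place. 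Everything else is bookkeeping about how the $\gamma$-restrictions interact with the special values $0_{\sigma_q}$ and $0$.
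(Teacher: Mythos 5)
Your proof is correct and follows essentially the same route as the paper's: both use Lemma~\ref{lem:nocrossvals} to force $K_j=\sigma_q$ at a neighbour $v'\in V_q$ of $w_q$ (otherwise the contraction vanishes, since the $w_q$-leg on a bond in $S_q$ is stuck at $0_{\sigma_q}$), and then invoke Lemma~\ref{lem:nowqnew}. The only cosmetic differences are that the paper dichotomises over all $v\in V_q$ whereas you inspect a single $v'$ (enough, since Lemma~\ref{lem:nowqnew} needs only one $v_j$ with $K_j=\sigma_q$), and your preliminary split on $\gamma(n)=\mathcal{D}$ is harmless but redundant (the ``no matching $0_{\sigma_q}$ on the other side'' argument already covers it).
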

In other words, if $\deltaB_{w_q}^{\sigma_q}$, $q =1,\ldots,|\mathrm{NSrc}|$, (i.e., a special tensor) is inserted in a contraction $T_\gamma$, where $\gamma$ is connected, the contraction is zero.

\begin{proof}[Proof of Corollary~\ref{cor:1}]
    There are two cases to consider: (a) there is a $v \in V_q$ satisfying either $v=v_j, K_j \neq \sigma_q$ or $v \notin \graph(\gamma)$; (b) (a) does not hold, i.e., all $v \in V_q$ belong to $\graph(\gamma)$ and the corresponding $K_j$'s equal $\sigma_q$.
    \begin{itemize}
        \item[(a)] In this case, we have (if $v=v_j, K_j \neq \sigma_q$) $\deltaB_{v}^1$, $\deltaB_{v}^{\sigma_p}$, $p\neq q$,  or (if $v \notin \graph(\gamma)$) $A_*$ inserted in $v \in V_q$ in the contraction. By Lemma~\ref{lem:nocrossvals}, none of these tensors has legs which can take value $0_{\sigma_q}$. Then, as legs $n^{w_q}_k \in S_q$ of $\deltaB_{w_q}^{\sigma_q}$ (the legs pointing towards vertices in $V_q$) are restricted to $\{0_{\sigma_q}\}$ we have \eqref{nowqnew}.

        \item[(b)] In this case Lemma~\ref{lem:nowqnew} implies Eq.~\eqref{nowqnew}. [Note that there is only one connected component in $\graph (\gamma)$ as $\gamma \in \mathrm{Con}$.]
    \end{itemize}
\end{proof}

\subsubsection{Verification of \texorpdfstring{\eqref{Pconds}}{(\getrefnumber{Pconds})}}

It is evident that \eqref{P0} and \eqref{P3} are satisfied. Thus, we focus on \eqref{P1} and \eqref{P2}. We will show that \eqref{P1} and \eqref{P2} hold for any $a,b$ within the intervals:
\begin{equation}\label{final_intervals}
    1/2<a \leq \min (1-7c, 8/7)=1-7c, \qquad 1<b \leq \min (1+c,8/7)=1+c,
\end{equation}
where $c<1/14$ is the constant fixed at the beginning of Section~\ref{gendbv1}. Note that intervals \eqref{final_intervals} coincide with the intervals \eqref{gen_intervals} for $\deltaB_v^1$ tensors in Section~\ref{PcondsDBV1}. The lower bounds in \eqref{final_intervals} are the conditions on $a,b$ in Lemma~\ref{mainlem}. The condition $c<1/14$ ensures that the intervals are nonempty.

For $v \notin\graph (\sigma_q)$, tensors $\deltaB_v^{\sigma_q}=0$ trivially satisfy \eqref{P1} and \eqref{P2}. So, we will focus on $\deltaB_v^{\sigma_q}$ with $v \in\graph (\sigma_q)$. We need to consider two cases: $v\neq w_q$ and $v=w_q$.

If $v \neq w_q$, \eqref{vA} and \eqref{vBdef} yield $\deltaB_{v}^{\sigma_q}=O(\epsilon^{N_q/(N_q-1)})$. Thus, as $1<N_q\leq 8$, \eqref{P1} and \eqref{P2} hold with $a,b \leq 8/7$ in agreement with \eqref{final_intervals}.

If $v=w_q$, we prove the statement by induction. Consider the base case $q=1$. By \eqref{ErrTrmDef} and \eqref{Wqdef}, tensor elements of $\deltaB_{w_1}^{\sigma_1}$ and $\bar{\deltaB_{w_1}^{\sigma_1}}$ are given, up to a minus sign, by those of $\deltaB_{w_1}^1$ and $\bar{\deltaB_{w_1}^1}$, respectively. Then, since $\deltaB_{w_1}^1$ satisfies \eqref{P1}, \eqref{P2} with $a,b$ as in \eqref{final_intervals}, the same applies to $\deltaB_{w_1}^{\sigma_1}$.

Now, suppose the statement is proven for $q=1,\ldots,p$. Then, $W_{p+1}$ satisfies \eqref{P1}, \eqref{P2} with $a,b$ as in \eqref{final_intervals} because all terms in \eqref{Wqdef} satisfy these conditions. Therefore, by \eqref{ErrTrmDef},  $\deltaB_{w_{p+1}}^{\sigma_{p+1}}$ satisfies \eqref{P1}, \eqref{P2} with the same $a,b$.

Thus, by induction, $\deltaB_{w_q}^{\sigma_q}$ for $q=1,\ldots, |\mathrm{NSrc}|$ satisfy \eqref{P1}, \eqref{P2} with $a,b$ as in \eqref{final_intervals}.

\subsubsection{Verification of \texorpdfstring{\eqref{compat3}}{(\getrefnumber{compat3})}}

If $\gamma = \widehat{\sigma_q}$, we have $T_{\widehat{\sigma_q}} (\{\deltaB_v^{\sigma_q}\}) \propto T_*$ with the proportionality constant equals the product of tensor elements \eqref{vA1} and \eqref{vBdef} of tensors $\deltaB_v^{\sigma_q}$ ($v \in\graph (\widehat{\sigma_q})$). [For $v\in \graph(\widehat{\sigma_q}), v \notin V_q$, tensor $\deltaB_v^{\sigma_q}$ has only one nonzero tensor element \eqref{vBdef}. For $v \in V_q \subset \graph (\widehat{\sigma_q})$, the only nonzero tensor element of $\deltaB_v^{\sigma_q}$ appearing in the contraction is \eqref{vA1} as we have $A_*$ inserted in $w_q$.]\footnote{See definition of $V_q$ in step~\ref{itm:notations} of the general construction.} There are $N_q-2$ tensors with tensor elements $\epsilon^{(N_q-1)/N_q}$ and one with $C_{\sigma_q} \epsilon^{(N_q-1)/N_q-N_q}$.\footnote{See the definition of $N_q$ in step~\ref{itm:step0dbv2} of the general construction.} Putting all factors together, we obtain the first line of \eqref{compat3}:
\begin{equation}\label{compat3FL}
    T_{\widehat{\sigma_q}}(\{\deltaB_v^{\sigma_q}\})=C_{\sigma_q} \epsilon^{N_q/(N_q-1)-N_q} \left(\epsilon^{N_q/(N_q-1)}\right)^{N_q-2} T_* =C_{\sigma_q} T_* =[\text{by \eqref{CsDEF}}]\  T_{\sigma_q}(\deltaA).
\end{equation}

Assume next that $\gamma \in \mathrm{Con}$ and $\gamma \neq \widehat{\sigma_q}$. We need to show that
\begin{equation}\label{wanttoshow}
    T_\gamma(\{\deltaB_{v}^{\sigma_q}\})=0
\end{equation}
There are two cases to consider: $w_q \in \graph (\gamma)$ and $w_q \notin \graph (\gamma)$.

If $w_q \in \graph (\gamma)$, Corollary~\ref{cor:1} implies Eq.~\eqref{wanttoshow}.

If $w_q \notin \graph(\gamma)$, by \eqref{vA}, \eqref{vBdef}, all tensors $\deltaB_v^{\sigma_q}$ with $v \in\graph (\gamma)$ are restricted according to $\widehat{\sigma_q}$. Then, by Lemma~\ref{lem:incompatible_contractions} (with $m=1$, $\gamma_k=\gamma$, $\mu = \widehat{\sigma_q}$), we have \eqref{wanttoshow}.

\subsubsection{Verification of \texorpdfstring{\eqref{compat4}}{(\getrefnumber{compat4})}}

Let $\gamma \in \mathrm{Con}$. We denote by $v_1,\ldots, v_m$ the vertices of $\graph (\gamma)$. Then, we  substitute \eqref{dbv2Exp} into the l.h.s. of \eqref{compat4} and expand the result by multilinearity of contractions:\footnote{Here we use the notation $T_\gamma(\{\deltaB_v^1+\deltaB_v^2\})=T_\gamma(\deltaB_{v_1}^1+\deltaB_{v_1}^2,\ldots,\deltaB_{v_m}^1+\deltaB_{v_m}^2)$ introduced earlier in \eqref{dep_on_r}.}
\begin{equation}\label{compat4EXP}
    \begin{aligned}
        T_\gamma (\deltaB_{v_1}^1+\deltaB_{v_1}^2, \ldots, \deltaB_{v_m}^1+\deltaB_{v_m}^2) & =                                                                                                                                                              \\
                                                                                            & =T_\gamma (\deltaB_{v_1}^1+\sum_{q=1}^{|\mathrm{NSrc}|}\deltaB_{v_1}^{\sigma_q}, \ldots, \deltaB_{v_m}^1+\sum_{q=1}^{|\mathrm{NSrc}|}\deltaB_{v_m}^{\sigma_q}) \\
                                                                                            & =\sum_{K_i \in \{1\}\sqcup \mathrm{NSrc}}  T_\gamma (\deltaB_{v_1}^{K_1},\ldots, \deltaB_{v_m}^{K_m}).
    \end{aligned}
\end{equation}
By Corollary~\ref{cor:1}, terms in \eqref{compat4EXP} with insertions of special tensors $\deltaB_{w_q}^{\sigma_q}$ vanish. Hence, only tensors $\deltaB_v^1$ and $\deltaB_v^{\sigma_q}$ ($v \neq w_q$) appear in the nonzero terms of expansion \eqref{compat4EXP}. Then the legs of each $\deltaB_{v_i}^{K_i}$ contracted along the edges of $\graph (\gamma)$ belong to the sets which are disjoint for different $K_i$'s: $\{1_{\sigma_q}\}$ for $K_i=\sigma_q$ and $\bigsqcup\limits_{\mu \in \mathrm{Src}} \N_\mu$ for $K_i=1$. Consequently, $T_\gamma (\deltaB_{v_1}^{K_1},\ldots, \deltaB_{v_m}^{K_m})$ is nonzero only for $K_1=\ldots=K_m$ in which case it equals either $T_\gamma(\{\deltaB_v^1\})$ ($K_i=1$) or $T_\gamma(\{\deltaB_v^{\sigma_q}\})$ ($K_i=\sigma_q$). This implies \eqref{compat4}.

\subsubsection{Verification of \texorpdfstring{\eqref{compat5}}{(\getrefnumber{compat5})}}

Now we will prove \eqref{compat5}, concluding the construction of $\deltaB_v^2$ tensors and the proof of Lemma~\ref{mainlem}.

We aim to show that for every $M=1,\ldots,|\mathrm{NSrc}|$, there exists a collection of disjoint sets $D_\gamma^{M}$, $\gamma \in \mathrm{Disc}$, (some of which can be empty) such that:\footnote{See the definition of $\mathrm{Disc}_M$ in \eqref{DiscMdef}.}
\begin{equation}\label{Dsets}
    \mathrm{Disc}_M=\bigsqcup_{\gamma \in \mathrm{Disc}} D_\gamma^M
\end{equation}
and
\begin{equation}\label{Dsums}
    T_\gamma \left(\left\{ \deltaB_v^1+\sum_{q=1}^{M} \deltaB_v^{\sigma_q}\right\} \right)= \sum_{\lambda \in D^M_\gamma} T_\lambda(\deltaA) \qquad (\gamma \in \mathrm{Disc}).
\end{equation}
Clearly, \eqref{compat5} follows from \eqref{Dsums}, \eqref{Dsets}. This strategy generalises \eqref{desire2}, \eqref{desire} from the $q=1$ example.

We will start by defining the sets $D_\gamma^M$ and discussing the relations between $D_\gamma^{M}$ and $D_\gamma^{M-1}$. After this, we will prove \eqref{Dsums} by induction in $M$.

\paragraph{Sets $D_\gamma^M$} Note that each $\lambda \in \mathrm{Disc}_M$ can be uniquely expressed as:
\begin{equation}\label{lambdaexp}
    \lambda = \nu_1 \star \ldots \star \nu_{m_1} \star \mu_1 \star \ldots \star \mu_{m_2},
\end{equation}
where $\nu_i \in \mathrm{Src}$, $\mu_i \in \{\sigma_1,\ldots, \sigma_M\}$, $m_1,m_2 \geq 0$, and $m_1+m_2\geq 2$. We define $\widehat{\lambda}$ as the template obtained from $\lambda$ by replacing all $\mu_i$'s with $\widehat{\mu_i}$'s:
\begin{equation}
    \widehat{\lambda}=\nu_1 \star \ldots \star \nu_{m_1} \star \widehat{\mu_1} \star \ldots \star \widehat{\mu_{m_2}}.
\end{equation}
Note that if there are no $\mu_i$'s in \eqref{lambdaexp} ($m_2=0$), we have $\lambda \in \mathrm{DiscS}$ and $\widehat{\lambda}=\lambda$.

We define $D_\gamma^M \subset{\mathrm{Disc}_M}$ for each $\gamma \in \mathrm{Disc}$ as the set of all disconnected templates $\lambda\in \mathrm{Disc}_M$ satisfying $\widehat{\lambda} = \gamma$:
\begin{equation}\label{DgammaMdef}
    D_\gamma^M=\{\lambda| \lambda \in \mathrm{Disc}_M, \ \widehat{\lambda}=\gamma\}.
\end{equation}
Clearly, such sets $D_\gamma^M$ are disjoint (for the same $M$ and different $\gamma$'s) and \eqref{Dsets} is satisfied.

Also, we define $D_\gamma^M \subset \{\sigma_1,\ldots, \sigma_M\}$ for each $\gamma\in \mathrm{Con}$ as follows:
\begin{equation}\label{DgammaMdef2}
    D_\gamma^M = \{\lambda| \lambda \in \{\sigma_1,\ldots,\sigma_M\}, \ \widehat{\lambda}=\gamma\}.
\end{equation}
Although the sets $D_\gamma^M$ with $\gamma \in \mathrm{Con}$ are not present in \eqref{Dsets} and \eqref{Dsums}, they will be involved in the analysis of templates of the form $\widehat{\sigma_M} \star \nu$, where $\nu$ is connected.

To simplify the upcoming proof of \eqref{Dsums}, we will now express $D_\gamma^M$ with $\gamma \in \mathrm{Disc}$ in terms of sets $D^{M-1}_\nu$, where $\nu$ can be either connected or disconnected. We need to examine three possible relations between generic $\gamma$ and $\widehat{\sigma_M}$:
\begin{subequations}\label{gen_gammasigma}
    \begin{align}
         & \text{$\widehat{\sigma_M}$ is not a connected component of $\gamma$;}\label{gen_gammasigmaA}                             \\
         & \text{$\widehat{\sigma_M}$ is a connected component of $\gamma$ and $w_M \in\graph (\gamma)$;} \label{gen_gammasigmaB}   \\
         & \text{$\widehat{\sigma_M}$ is a connected component of $\gamma$ and $w_M \notin\graph (\gamma)$;}\label{gen_gammasigmaC}
    \end{align}
\end{subequations}
Note that \eqref{gen_gammasigmaA}-\eqref{gen_gammasigmaC} are mutually exclusive and that each $\gamma \in \mathrm{Disc}$ satisfies one of these relations.
\begin{itemize}
    \item[\eqref{gen_gammasigmaA}] Clearly, $\lambda \in \mathrm{Disc}_M$ may satisfy $\widehat{\lambda} = \gamma$ only if $\sigma_M$ is not a connected component of $\lambda$. [Otherwise $\widehat{\sigma_M}$ is a connected component of $\gamma$, which contradicts \eqref{gen_gammasigmaA}.] In other words, $\lambda$ should belong to $\mathrm{Disc}_{M-1} \subset \mathrm{Disc}_M$. Thus, by \eqref{DgammaMdef}, we have:
        \begin{equation}\label{MMm11}
            D^M_\gamma = D^{M-1}_\gamma.
        \end{equation}

    \item[\eqref{gen_gammasigmaB}] Again, $\lambda \in \mathrm{Disc}_M$ may satisfy $\widehat{\lambda}=\gamma$ only if $\sigma_M$ is not a connected component of $\lambda$. Otherwise, vertex $w_M$ belongs to the single connected component $\graph (\sigma_M)$ of $\graph(\lambda)$. In $\widehat{\lambda}$, $w_M$ were removed from $\graph(\sigma_M)$ to get $\graph(\widehat{\sigma_M})\subset \graph(\widehat{\lambda})$, so $w_M \notin \graph(\widehat{\lambda})= \graph(\gamma)$, which contradicts \eqref{gen_gammasigmaB}. Thus, we have \eqref{MMm11} again.

    \item[\eqref{gen_gammasigmaC}] As $\widehat{\sigma}_M$ is a connected component of $\gamma$, we have:
        \begin{equation}\label{gammadecnu}
            \gamma = \widehat{\sigma_M} \star \tilde{\gamma},
        \end{equation}
        where $\tilde{\gamma}$ can be either connected or disconnected. Consequently, there are two types of $\lambda \in \mathrm{Disc}_M$ satisfying $\widehat{\lambda}=\gamma$: $\lambda \in D_\gamma^{M-1}$ and $\lambda = \sigma_M \star \mu$, where $\mu \in D^{M-1}_{\tilde{\gamma}}$.

        Let us focus on the second case. For the union $\sigma_M \star \mu$ to be defined, we need $\graph (\sigma_M) \cap\graph (\mu)=\void$.\footnote{Recall that $\graph(\sigma_M)\cap \graph(\mu)=\void$ is required by the definition of union (see Def.~\ref{def:union}).} Let us show that this property holds for all $\mu \in D^{M-1}_{\tilde{\gamma}}$.

        Assume the contrary, i.e., that there is a $\mu \in D^{M-1}_{\tilde{\gamma}}$ such that
        \begin{equation}\label{inter1}
            \graph (\sigma_M) \cap\graph (\mu)\neq\void.
        \end{equation}
        Note that, as $\widehat{\sigma_M} \star \widehat{\mu}$ is defined (it equals $\gamma$), we have
        \begin{equation}\label{inter2}
            \graph (\widehat{\sigma_M}) \cap\graph (\widehat{\mu})=\void.
        \end{equation}
        Since conditions \eqref{inter1} and \eqref{inter2} are compatible, one of the two possibilities must hold: (a) $w_M \in\graph (\mu)$; (b) there is a connected component $\sigma_q \in \mathrm{NSrc}$ of $\mu$ such that $w_q \in\graph (\sigma_M)$.
        \begin{itemize}
            \item[(a)] Suppose $w_M \in\graph (\mu)$. Note that, by \eqref{gen_gammasigmaC}, $w_M \notin\graph (\widehat{\mu}) \subset\graph (\gamma)$. Then, there is a connected component $\sigma_q$ of $\mu$ such that $w_M=w_q \in\graph (\sigma_q)$. Then $w_q=w_M \in g(\sigma_M)$ and this case is reduced to the case (b).

            \item[(b)] Let $nbrs(w_q)$ be the set of neighbours of $w_q$ in $\graph (\sigma_M)$. Note that, as $\sigma_M \in \mathrm{NSrc}$, degree of $w_q$ in $\graph (\sigma_M)$ is $\geq 2$, and so $|nbrs(w_q)| \geq 2$. Then, as $|V_q| \geq 2$, and there are only three vertices in the cube that can belong to $V_q \subset \graph(\widehat{\sigma_q})$ or $nbrs(w_q) \subset \graph (\sigma_M)$, we can find a $v \in nbrs(w_q) \cap V_q \neq \void$. As $v \in nbrs(w_q) \subset \graph(\sigma_M)$, there are two options: either $v \in \graph(\widehat{\sigma_M})$ or $v=w_M$.

                If $v \in\graph (\widehat{\sigma_M})$, this contradicts \eqref{inter2} as $v \in V_q \subset \graph(\widehat{\sigma_q}) \subset \graph(\widehat{\mu})$. If $v = w_M$, this contradicts \eqref{gen_gammasigmaC} as $v \in V_q \subset \graph (\widehat{\mu}) \subset \graph(\gamma)$ (recall $\gamma = \widehat{\sigma_M} \star \widehat{\mu}$).
        \end{itemize}
        Thus, all unions $\sigma_M \star \mu$, $\mu \in D_{\tilde{\gamma}}^{M-1}$ are defined. Therefore, we have:
        \begin{equation}\label{MMm12}
            D^{M}_{\gamma} = D^{M-1}_\gamma \sqcup \left(\sigma_M \star D^{M-1}_{\tilde{\gamma}}\right),
        \end{equation}
        where $\sigma_M \star D^{M-1}_{\tilde{\gamma}}$ denotes the set of all unions of $\sigma_M$ with the elements of $D^{M-1}_{\tilde{\gamma}}$.
\end{itemize}

\paragraph{Proof of \eqref{Dsums}} We are going to use induction in $M$.

It is convenient to consider $M=0$ as the base case. For this, we define $\mathrm{Disc}_0=\mathrm{DiscS}$, $D^0_\gamma = \{\gamma\}$ for $\gamma \in \mathrm{DiscS}$, and $D^0_\gamma=\void$ for $\gamma \in \mathrm{Disc} \setminus \mathrm{DiscS}$. Then, \eqref{Dsums} for $M=0$ follows from \eqref{dbv1_goal}.

Now, assume \eqref{Dsums} with $M$ replaced by an $M-1 \in \{0, \ldots, |\mathrm{NSrc}|-1\}$ is proven. We will show now that this implies \eqref{Dsums}. To simplify the notation, we introduce tensors $\deltaB_v^{(M-1)}$ as follows:
\begin{equation}\label{dbD}
    \deltaB_v^{(M-1)}=\deltaB_v^1+\sum\limits_{q=1}^{M-1} \deltaB_v^{\sigma_q}.
\end{equation}
Thus, we want to prove that
\begin{equation}\label{compat5stp1}
    T_\gamma \left(\left\{ \deltaB_v^{(M-1)} +\deltaB_v^{\sigma_{M}} \right\}\right)=\sum_{\lambda \in D_\gamma^{M}} T_\lambda(\deltaA),
\end{equation}
having that
\begin{equation}\label{compat5stp2}
    T_\gamma \left( \left\{\deltaB_v^{(M-1)}\right\}\right)=\sum_{\lambda \in D_\gamma^{M-1}} T_\lambda(\deltaA).
\end{equation}

We will consider relations \eqref{gen_gammasigmaA}-\eqref{gen_gammasigmaC} between $\gamma$ and $\widehat{\sigma_{M}}$ and prove \eqref{compat5stp1} in each case separately. In what follows, we denote by $v_1,\ldots,v_m$ the vertices of $\graph (\gamma)$.

\begin{itemize}
    \item[\eqref{gen_gammasigmaA}]
        We expand $T_\gamma \left( \left\{\deltaB_v^{(M-1)} +\deltaB_v^{\sigma_{M}} \right\}\right)$ by multilinearity of contractions and obtain:
        \begin{equation}\label{final_exp1}
            T_\gamma \left( \left\{\deltaB_v^{(M-1)} +\deltaB_v^{\sigma_{M}}\right\}\right)=\sum_{K_i \in \{(M-1), \sigma_{M}\}} T_\gamma \left( \deltaB_{v_1}^{K_1},\ldots, \deltaB_{v_m}^{K_m} \right).
        \end{equation}
        We will now show that each term on the r.h.s. of \eqref{final_exp1} such that there is a $K_i = \sigma_{M}$ vanishes. This implies:
        \begin{equation}\label{dropD}
            T_\gamma \left( \left\{\deltaB_v^{(M-1)} +\deltaB_v^{\sigma_{M}}\right\}\right)=T_\gamma \left(\left\{ \deltaB_v^{(M-1)}\right\}\right),
        \end{equation}
        which, by \eqref{compat5stp2} and \eqref{MMm11}, yields \eqref{compat5stp1}.

        Consider a term $T_\gamma \left( \deltaB_{v_1}^{K_1},\ldots, \deltaB_{v_m}^{K_m} \right)$ from the r.h.s. of \eqref{final_exp1} such that there is a $K_i = \sigma_{M}$. We want to show that
        \begin{equation}\label{itiszero}
            T_\gamma \left( \deltaB_{v_1}^{K_1},\ldots, \deltaB_{v_m}^{K_m} \right)=0.
        \end{equation}
        Without loss of generality, we assume $i=1$. Note that there are two possible situations: (a) $v_1=w_{M}$ and (b) $v_1 \neq w_{M}$. Suppose we are in the situation (a). This means that we have a special tensor $\deltaB_{w_M}^{\sigma_M}$ inserted in vertex $v_1=w_M$. Then Lemma~\ref{lem:nocrossvals} and Eq.~\eqref{speciallegszeros} imply that either \eqref{itiszero} holds (what we wanted to show), or that all $v \in V_{M}$ belong to $\graph(\gamma)$ and the corresponding $K_j$'s equal $\sigma_{M}$. The latter case reduces to the situation (b) by relabeling vertices $v_1,\ldots,v_m$ such that some $v_j \in V_M$ becomes $v_1$. Thus, it is left to consider the situation (b).

        Let $\nu$ be the connected component of $\gamma$ such that $v_1 \in \graph(\nu)$. We should, again, consider two situations: (b.1) the special tensor $\deltaB_{w_M}^{\sigma_M}$ is inserted in $\graph(\nu)$; (b.2) (b.1) does not hold.

        \begin{itemize}
            \item[(b.1)] Lemma~\ref{lem:nowqnew} implies \eqref{itiszero}.
            \item[(b.2)] We may assume that for all $v_j \in \graph(\nu)$ we have $K_j=\sigma_{M}$ as otherwise it is clear that \eqref{itiszero} holds. [Contraction of nonspecial $\deltaB_{v}^{\sigma_M}$ ($v \neq w_M$) and $\deltaB_{v}^{(M-1)}$ along a ticked line is zero as their ticked legs are restricted to disjoint sets $\{1_{\sigma_M}\}$ and $\bigsqcup_{\mu \in \mathrm{NSrc}}\N_\mu \sqcup \{1_{\sigma_1}, \ldots,1_{\sigma_{M_1}}\} $, respectively.] Next, note that condition \eqref{gen_gammasigmaA} implies $\nu \neq \widehat{\sigma}_M$. Also, recall that nonspecial tensors $\deltaB_{v_j}^{\sigma_M}$ are restricted according to $\widehat{\sigma_M}$. Then, Lemma~\ref{lem:incompatible_contractions} (with $\gamma_k=\nu$ and $\mu=\widehat{\sigma_M}$) implies \eqref{itiszero}.
        \end{itemize}
        Thus, we have \eqref{dropD}, and so \eqref{compat5stp1} holds.

    \item[\eqref{gen_gammasigmaB}] In this case, it is clear that the expansion \eqref{final_exp1} has only three nonzero terms:
        \begin{equation}\label{dropDD}
            T_\gamma \left(\left\{ \deltaB_v^{(M-1)} +\deltaB_v^{\sigma_{M}}\right\}\right)=t_1+t_2+t_3.
        \end{equation}
        The first term $t_1$ is the contraction without insertions of $\deltaB_v^{\sigma_{M}}$ tensors, i.e., $t_1=T_\gamma(\{\deltaB_v^{(M-1)}\})$. The second term $t_2$ is the contraction with tensors $\deltaB_v^{\sigma_{M}}$ in $v \in\graph (\widehat{\sigma_{M}})$ and $\deltaB^{(M-1)}_{v}$ in all other vertices, including $w_M$. The third term $t_3$ is the contraction with tensors $\deltaB_v^{\sigma_{M}}$ in $v \in\graph (\sigma_{M})$ and $\deltaB^{(M-1)}_{v}$ in all other vertices. Note that the only difference between $t_2$ and $t_3$ is that $t_2$ inserts $\deltaB_{w_M}^{(M-1)}$ in $w_M$ and $t_3$ inserts $\deltaB_{w_M}^{\sigma_M}$ in the same vertex. All the other terms in the expansion \eqref{dropDD} vanish because the contraction of nonspecial $\deltaB_v^{\sigma_M}$ with $\deltaB_v^{M-1}$ along a ticked line is zero as their ticked legs are restricted to disjoint sets $\{1_{\sigma_M}\}$ and $\bigsqcup_{\mu \in \mathrm{NSrc}}\N_\mu \sqcup \{1_{\sigma_1}, \ldots,1_{\sigma_{M_1}}\} $, respectively.

        Note that $\deltaB^{(M-1)}_{w_{M}}=W_{M}$ (see \eqref{Wqdef} and \eqref{dbD}). Next, note that  tensors $\deltaB_v^{\sigma_M}$, $v \in V_M$, has legs $n^{v}_i \in S_M$ (the legs pointing towards $w_M$) restricted to $\{0, 0_{\sigma_M}\}$ (see \eqref{vA}). Therefore, the only tensor elements of $\deltaB^{(M-1)}_{w_{M}}$ appearing in the contraction $t_2$ are $(\deltaB^{(M-1)}_{w_{M}})_L=(W_M)_L$, $L \in \bar{\mathcal{L}_M}$ (see \eqref{Ldef}). [Note that Lemma~\ref{lem:nocrossvals} implies that legs of $\deltaB^{(M-1)}_{w_M}$ cannot take value $0_{\sigma_M}$.] The same statement with $L$ replaced by $L'$ (see \eqref{Lprimeseq}) is true for the special tensor $\deltaB_{w_M}^{\sigma_M}$ in $t_3$.

        We note that as all tensor elements of $\deltaB_v^{\sigma_M}$, $v \in V_M$ (the tensors which are directly contracted with $(\deltaB_{w_M}^{M-1})_L=(W_M)_L$ and $(\deltaB_{w_M}^{\sigma_M})_{L'}$ in $t_2$ and $t_3$, respectively), appearing in the contractions $t_2$, $t_3$, are equal to each other (see \eqref{vA1} and \eqref{vA2}), \eqref{ErrTrmDef} implies that $t_2+t_3=0$. Thus, we have \eqref{dropD}, which, by \eqref{MMm11} and \eqref{compat5stp2}, yields \eqref{compat5stp1}.

    \item[\eqref{gen_gammasigmaC}] We represent $\gamma$ as in \eqref{gammadecnu}. Then, the expansion \eqref{final_exp1} has two nonzero terms:
        \begin{equation}\label{dontDropD}
            T_\gamma \left( \left\{\deltaB_v^{(M-1)} +\deltaB_v^{\sigma_{M}} \right\}\right)=t_1+t_2.
        \end{equation}
        Terms $t_1, t_2$ in \eqref{dontDropD} are analogous to those in  \eqref{dropDD}. The first term $t_1$ is the contraction without tensors $\deltaB_v^{\sigma_{M}}$, i.e., $t_1=T_\gamma(\{\deltaB_v^{(M-1)}\})$. The second term $t_2$ is the contraction with $\deltaB_v^{\sigma_{M}}$ in $v \in\graph (\widehat{\sigma_{M}})$ and $\deltaB_v^{(M-1)}$ in other vertices ($v \in\graph (\tilde{\gamma}) \subset\graph (\gamma)$). There is no $t_3$ as, by \eqref{gen_gammasigmaC}, $w_M \notin \graph(\gamma)$.

        By \eqref{vA1}, \eqref{vBdef}, we have $t_2=C_{\sigma_{M}} T_{\tilde{\gamma}} (\{\deltaB_v^{(M-1)}\})$. [Note that, as $w_{M}\notin \graph(\gamma)$, by \eqref{trm_dscr}, we have $A_*$ inserted in $w_{M}$, which restricts the legs $n^v_i \in S_M$ of tensors inserted in $v \in V_M$ to $\{0\}$, excluding tensor elements \eqref{vA2} from the contraction.] Then, by \eqref{compat5stp2}, we have:
        \begin{equation}\label{prelast}
            T_\gamma \left( \deltaB_v^{(M-1)} +\deltaB_v^{\sigma_{M}}\right)=\sum_{\lambda \in D_\gamma^{M-1}} T_\lambda(\deltaA)+C_{\sigma_{M}} \sum_{\mu \in D_{\tilde{\gamma}}^{M-1}} T_\mu(\deltaA).
        \end{equation}
        As proved before Eq.~\eqref{MMm12}, $\sigma_{M} \star \mu$ is defined for any $\mu \in \mathrm{D}_{\tilde{\gamma}}^{M-1}$. Then, it follows from \eqref{CsDEF} that
        \begin{equation}\label{last}
            C_{\sigma_{M}} T_\mu (\deltaA)=T_{\sigma_{M} \star \mu}(\deltaA).
        \end{equation}
        Using \eqref{last} and \eqref{MMm12} in \eqref{prelast} we obtain \eqref{compat5stp1}. This finishes the proof of \eqref{Dsums} and thereby the verification of \eqref{compat5}.
\end{itemize}

This section concludes the verification that tensors $\deltaB_v=\deltaB_v^1+\deltaB_v^2$ solve \eqref{maineq2} and satisfy \eqref{Pconds}, thereby confirming that Lemma~\ref{mainlem} holds.

\subsection{The exponent in Theorem~\ref{maintheor}}\label{theexponent}

As we demonstrated in Section~\ref{rRG}, Lemma~\ref{mainlem} implies Theorem~\ref{maintheor} and provides the exponent $h$ in \eqref{contr}. Namely, we obtained $h=\min(2a,b)$ (see \eqref{dApbound}), where $a$ and $b$ are the constants from Lemma~\ref{mainlem}.

We constructed tensors $\deltaB_v$ which satisfy the conditions of Lemma~\ref{mainlem} with any $a,b$ within the intervals \eqref{final_intervals}, where $c<1/14$. Maximising $\min (2a,b)$ over the intervals \eqref{final_intervals} under the condition $c<1/14$, we obtain the exponent $h$ in \eqref{contr}:
\begin{equation}\label{exponent_of_step}
    h=\max_{\overset{\text{intervals \eqref{final_intervals},}}{c<1/14}} \min (2a,b) = 16/15.
\end{equation}
Value \eqref{exponent_of_step} is reached at:
\begin{equation}
    c=1/15, \qquad a=1-7c=8/15, \qquad b=1+c=16/15
\end{equation}

\section{Final remarks}\label{finalremarks}

In this paper, we gave a rigorous formulation of a tensor RG approach for the high-temperature phase of $3D$ lattice models. We explicitly constructed an RG map $R_\epsilon: A \mapsto A'$, acting in $U_\epsilon (A_*)$ (the ball of radius $\epsilon>0$ around the high-temperature fixed point tensor $A_*$). The map $R_\epsilon$ can be expressed as the following composition:
\begin{equation}
    R_\epsilon = R^\epsilon_{rearrangement} \circ R_{simple} \circ R_{gauge},
\end{equation}
where $R_{gauge}$ and $R_{simple}$ are the gauge transformation and simple RG step, respectively, discussed in Section~\ref{cornstr}, and $R^\epsilon_{rearrangement}$ is the rearrangement RG step discussed in Section~\ref{rRG}. The RG map $R_\epsilon$ has the lattice rescaling factor $s=4$.

We demonstrated that for a small enough $\epsilon$, $R_\epsilon$ is a contraction in the sense of Eq.~\eqref{contr} with $h=16/15$. The constant $t$ in Eq.~\eqref{contr} could be extracted by following the proofs of Theorem~\ref{maintheor} and Lemma~\ref{mainlem}; we will not do it here. We will show now how we can use $R_\epsilon$ to map any tensor in the vicinity of $A_*$ to a tensor arbitrarily close to $A_*$ while preserving the partition function.

Let $\epsilon_0>0$, $A_0 \in U_{\epsilon_0} (A_*)$, and $R_{\epsilon_0}(A_0)=\z_1 (A_*+\deltaA_1)$. In Remark~\ref{factorrem}, we mentioned that the normal factor $\z_1$ can always be factored out from the partition function. Because of this, we are interested in the RG map followed by the normalisation of the tensor, which makes the normal factor equal to one. Let $\tilde{R}_{\epsilon_0}$ be $R_{\epsilon_0}$ followed by such normalisation, i.e., $\tilde{R}_{\epsilon_0} (A_0)=\frac{1}{z_1} R_{\epsilon_0}(A)= A_*+\deltaA_1$. Let $A_1=\tilde{R}_{\epsilon_0} (A)= A_*+\deltaA_1$. By Eq.~\eqref{contr} ($h=16/15$), $\|\deltaA_1\| < t \epsilon_0^{16/15}$ and so $A_1 \in U_{\epsilon_1}$, where $\epsilon_1=t \epsilon_0^{16/15}$. Next, we get $A_2$ by applying $\tilde{R}_{\epsilon_1}$ to $A_1$. As a result, $A_2 \in U_{\epsilon_2}$, where $\epsilon_2 \in t \epsilon_1^{16/15}$. Repeating this procedure for $A_2, A_3, \ldots $, we obtain the following sequence:
\begin{subequations}\label{sequence}
    \begin{align}
         & A_0 \in U_{\epsilon_0} (A_*), \qquad \epsilon_0>0;\label{sequence0}                                                       \\
         & A_{n}=\tilde{R}_{\epsilon_{n-1}} (A_{n-1}), \qquad \epsilon_n=t\epsilon_{n-1}^{16/15} \qquad (n \geq 1).\label{sequencen}
    \end{align}
\end{subequations}
Clearly, each $A_n$ belongs to $U_{\epsilon_n} (A_*)$. Then, for $\epsilon_0$ satisfying $t \epsilon_0^{1/15}<1$, $A_n$ converges to $A_*$ super-exponentially fast. Thus, we can map $A_0$ arbitrarily close to $A_*$. Note that as $R_\epsilon: A \mapsto A'$ is an RG map with the lattice rescaling factor $s=4$, it satisfies \eqref{RG} with $s=4$. Then, in terms of $A_n$ tensors, the preservation of partition can be expressed as follows:
\begin{equation}\label{part01}
    Z(A_n, N)=\z_{n+1}^{(N/4)^3} Z(A_{n+1},N/4).
\end{equation}

Let us mention that a similar convergence result to the high-temperature fixed point was obtained in \cite{Kashapov1980} for Wilson-Kadanoff RG. There, the RG procedure was studied in terms of Hamiltonians by analysis of cluster expansions of the Gibbs measure. Our result, however, demonstrates properties of a completely different type of RG (tensor RG), which has its advantages over the Wilson-Kadanoff RG:
\begin{itemize}
    \item The Wilson-Kadanoff RG is ill-defined at low temperatures (see \cite{noLowWilson}). In contrast, tensor RG is well-defined in this regime. In \cite{Kennedy2022a}, tensor RG was successfully applied to demonstrate the existence of the first-order phase transition at the low-temperature regime of the $2D$ Ising model. The study of the low-temperature regime of the $3D$ Ising model is one possible direction for future work.

    \item Tensor RG procedures are more accessible numerically than the Wilson-Kadanoff RG. Many different algorithms are present in the literature, see \cite{TNR, TRG, Xie, EntRen, PhysRevLett.118.110504, Gilt}.
\end{itemize}

We also showed that $R_\epsilon$ is an analytic RG map. Namely, it satisfies \eqref{analit}. Eq.~\eqref{analit}, together with \eqref{contr}, allows one to show that there exists an $\epsilon>0$ such that:
\begin{itemize}
    \item The free energy density $f(A, N)=\frac{1}{N^3} \ln Z (A, N)$ is a bounded and analytic function of $A \in U_{\epsilon} (A_*)$.
    \item The limit $f_{\infty} (A)=\lim_{N \rightarrow \infty} f(A,N)$ exists and is itself a bounded analytic function of $A \in U_{\epsilon} (A_*)$.
\end{itemize}
The proof of these statements goes along the lines of the proof of Proposition~4.3 in \cite{Kennedy2022a}.

There are some features of $R_\epsilon$ which may need to be clarified. As we saw above, it is possible to achieve super-exponential convergence to $A_*$ by using different $\epsilon_n$'s at each iteration of the RG procedure (see \eqref{sequencen}). However, if we choose for all $n$, $\epsilon_n=\epsilon_0$ in \eqref{sequencen}, the sequence $A_n$ would not necessarily converge to $A_*$ even for an arbitrarily small $\epsilon_0>0$. Moreover, $A_*$ is not a fixed point for $R_{\epsilon}$, where $\epsilon>0$ (for $\epsilon=0$ the RG map is not defined). This is because the tensors $\deltaB_v^2$ used in the construction of $R_\epsilon$ do not vanish when $A=A_*$ ($\deltaA=0$), see \eqref{vA} and \eqref{vBdef}.

We can define a less confusing RG map if we allow it to violate analyticity. The construction is simple. We define $R: A \mapsto A'=z' (A_*+\deltaA')$ as $R=R_{\|\deltaA\|}$. Assuming $R$ is defined on $U_\epsilon(A_*)$ with a small enough $\epsilon$, we get:
\begin{subequations}\label{non_analyt_bounds}
    \begin{align}
         & \|\deltaA'\|<t\|\deltaA\|^{16/15},\label{non_analyt_bound} \\
         & z'=1+O(\|\deltaA\|^{16/15}),
    \end{align}
\end{subequations}
where $t$ is some universal constant. Although $R_0$ is not defined, we can still define $R(A_*)$ by continuity of $R$. Then, Eqs.~\eqref{non_analyt_bounds} imply that $A_*$ is the fixed point of $R$: $R(A_*)=A_*$. Eq.~\eqref{non_analyt_bound} implies that $A_n=\tilde{R}^n(A_0)$, where $\tilde{R}=\frac{1}{z'}R: A \mapsto A_* + \deltaA'$, converges to $A_*$ super-exponentially fast.

Let us indicate a few directions for future work.
\begin{itemize}
    \item We utilised the fact that each vertex $v$ has only three neighbours in the cube in a few places: Lemma~\ref{lem:nocrossvals}, Lemma~\ref{lem:nowqnew}, and derivation of \eqref{MMm12}. These were the only places where the lattice geometry was relevant. In light of this, it is natural to ask whether the techniques from this paper can be generalised to tensor networks on other lattices.

    \item An interesting challenge would be applying the techniques from this paper to a tensor network representing the $3D$ Ising model at low temperatures.

    \item The most exciting direction is the search for nontrivial tensor RG fixed points describing the critical points of the Ising model in $2D$ and $3D$. A rigorous proof of the existence of such points, perhaps, requires computer assistance. Obtaining the fixed points with good precision would be the first step toward this proof.
\end{itemize}

\paragraph{Acknowledgements} The author thanks Roman Gaidarov and Lev Yung for discussions on the initial draft. The author thanks Slava Rychkov and Tom Kennedy for their assistance in proof verification and advice on refining the text of this paper. The author thanks Antoine Tilloy and Clément Delcamp for communications about HOTRG. The author thanks the organisers of the research program "Quantum Information Theory" (ICMAT, March 2023), where this material was presented. The author thanks the anonymous referee for their qusetion, which led to the inclusion of  Footnote~\ref{refereequestion}.



\begin{appendices}
    
\section{Overview of \texorpdfstring{$2D$}{2D} results}\label{2Drd}

Here, we briefly discuss some results from \cite{Kennedy2022, Kennedy2022a}. Additionally, we provide an overview of the main ideas from \cite{Kennedy2022b} and explain how we adapted these ideas to the $3D$ case.

We consider a $2D$ tensor network on a rectangular lattice. At each site of the lattice, we have a $4$-tensor $A$. The edges of the lattice represent the contracted legs of $A$ tensors. We assume that $A$ has the following form:
\begin{equation}
    A=A_*+\deltaA,
\end{equation}
where:
\begin{itemize}
    \item $A_*$ is the tensor with single tensor element $(A_*)_{0000}=1$;
    \item $\deltaA$ is a small deviation tensor:
          \begin{equation}
              \deltaA=O(\epsilon) \qquad (\epsilon>0).
          \end{equation}
\end{itemize}

\paragraph{Result from \cite{Kennedy2022}}
The main result of \cite{Kennedy2022} is Theorem 2.1, which states that there is an RG map $A\mapsto A'=z' (A_*+\deltaA')$ such that:
\begin{equation}\label{app:eq:1}
    \|\deltaA'\|\leq C \|\deltaA\|^{3/2},
\end{equation}
where $C$ is a universal constant. Note that analyticity was not discussed in \cite{Kennedy2022}.

The suitable RG map was constructed in \cite{Kennedy2022} as a composition of three RG steps named type $0$, type I, and type II. Type $0$ and type I RG steps are the gauge transformation and the simple RG step, respectively. They were used to obtain tensor $A=A_*+\deltaA$ with $\deltaA$ having the corner structure (similar to what we did in Section~\ref{cornstr}):
\begin{equation}\label{app:eq:2}
    \deltaA=O(\epsilon), \qquad \deltaA=\includegraphics[valign=c, scale=0.8]{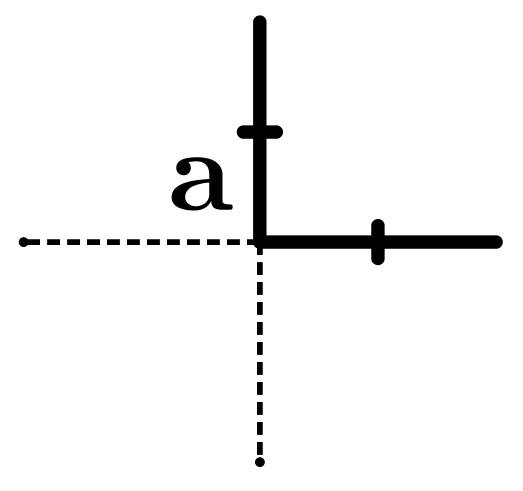}\!\!+3+O(\epsilon^2).
\end{equation}
The type II RG step (inspired by the TNR algorithm \cite{TNR}) is more complicated than the previous two. It involves the application of disentanglers, which is basically a gauge transformation acting on pairs of legs, followed by specific reconnection of tensors within the network. We refer the reader to \cite{Kennedy2022} for more details.

Let us mention that if we abandon the analyticity condition \eqref{analit} in Theorem~\ref{maintheor}, then we can find an RG map with \eqref{contr} replaced with the property analogous to \eqref{app:eq:1}, see \eqref{non_analyt_bound}.

\paragraph{Result from \cite{Kennedy2022a}}

In \cite{Kennedy2022a}, it was demonstrated how tensor RG can be used to study first-order phase transitions. In particular, it was made clear that the analyticity of an RG map plays a crucial role in the free energy analysis. Though the main focus of the work was on the low-temperature phase of the Ising model, the high-temperature result from \cite{Kennedy2022} was revisited as well. It was shown that it is possible to tweak the RG map from \cite{Kennedy2022} so that it becomes analytic. The cost is that \eqref{app:eq:1} should be replaced by a slightly weaker condition. This result was formulated in Theorem 4.1 in \cite{Kennedy2022a}, which states that there exists a tensor RG map $A \mapsto A'=z' (A_*+\deltaA')$ such that:
\begin{subequations}\label{TS_TH}
    \begin{align}
         & \text{$z', \deltaA'$ are analytic, taking real values for real $\deltaA$;}        \\
         & z'=1+O(\|\deltaA\|^2);                                                            \\
         & \|\deltaA'\|\leq C \epsilon^{1/2} \|\deltaA\| \label{contraction_Tom_and_Slavas}.
    \end{align}
\end{subequations}
Note that this result is stronger in spirit than our Theorem~\ref{maintheor}. Namely, the contraction property \eqref{contraction_Tom_and_Slavas} implies the $2D$ analogue of \eqref{contr}: $\|\deltaA'\| < C \epsilon^{3/2}$, but not other way around.

\paragraph{Overview of \cite{Kennedy2022b}}
The starting point in notes \cite{Kennedy2022b} was the tensor $A=A_*+\deltaA$ with $\deltaA$ having the corner structure \eqref{app:eq:2}. The goal was to prove the $2D$ analogue of Lemma~\ref{mainlem}. Namely, it was shown in \cite{Kennedy2022b} that there are tensors $B_1,B_2,B_3,B_4$ such that:
\begin{itemize}
    \item The following equation holds
          \begin{equation}\label{maineq2d}
              T\equiv\includegraphics[valign=c, scale=0.8]{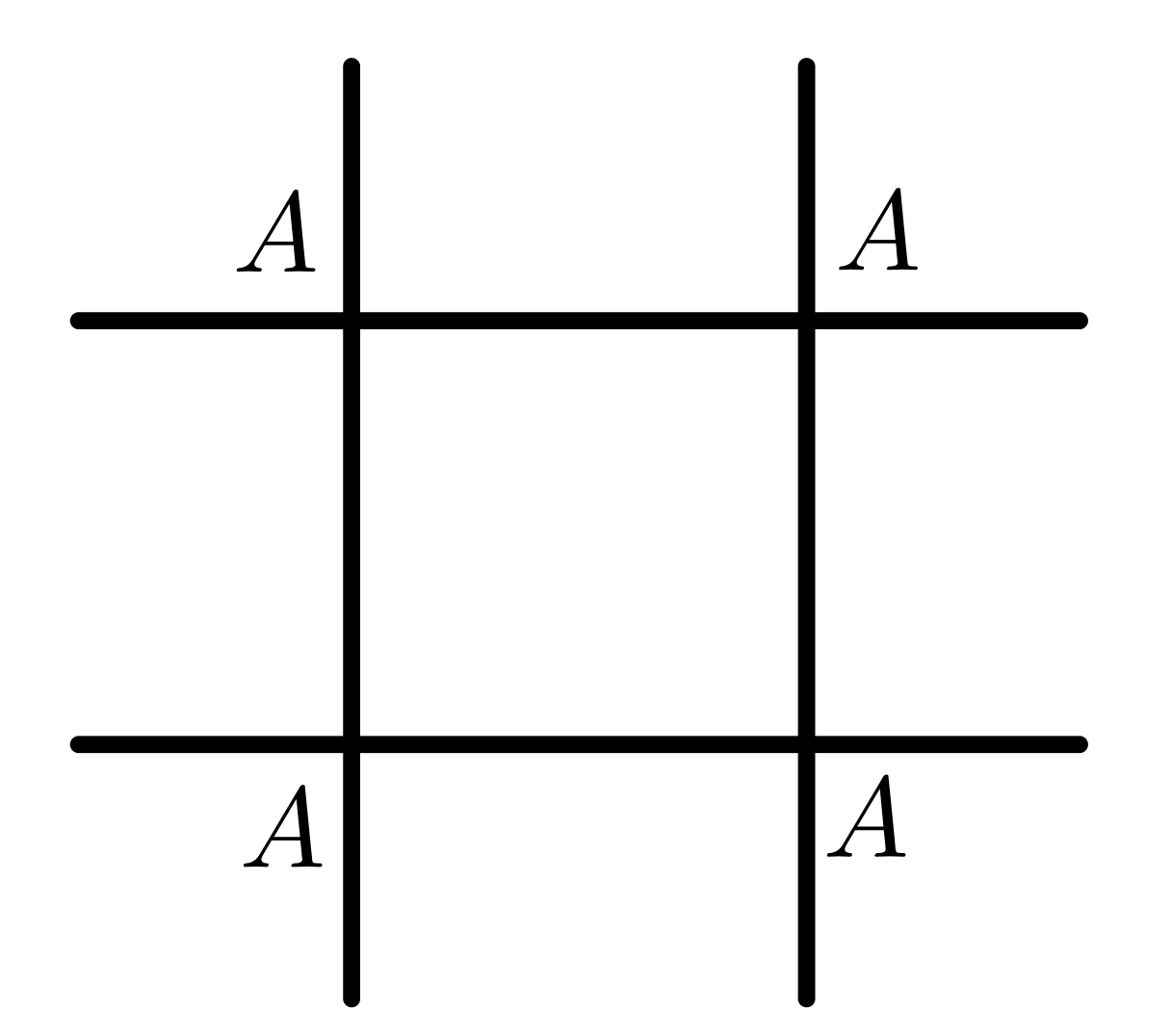}=\includegraphics[valign=c, scale=0.8]{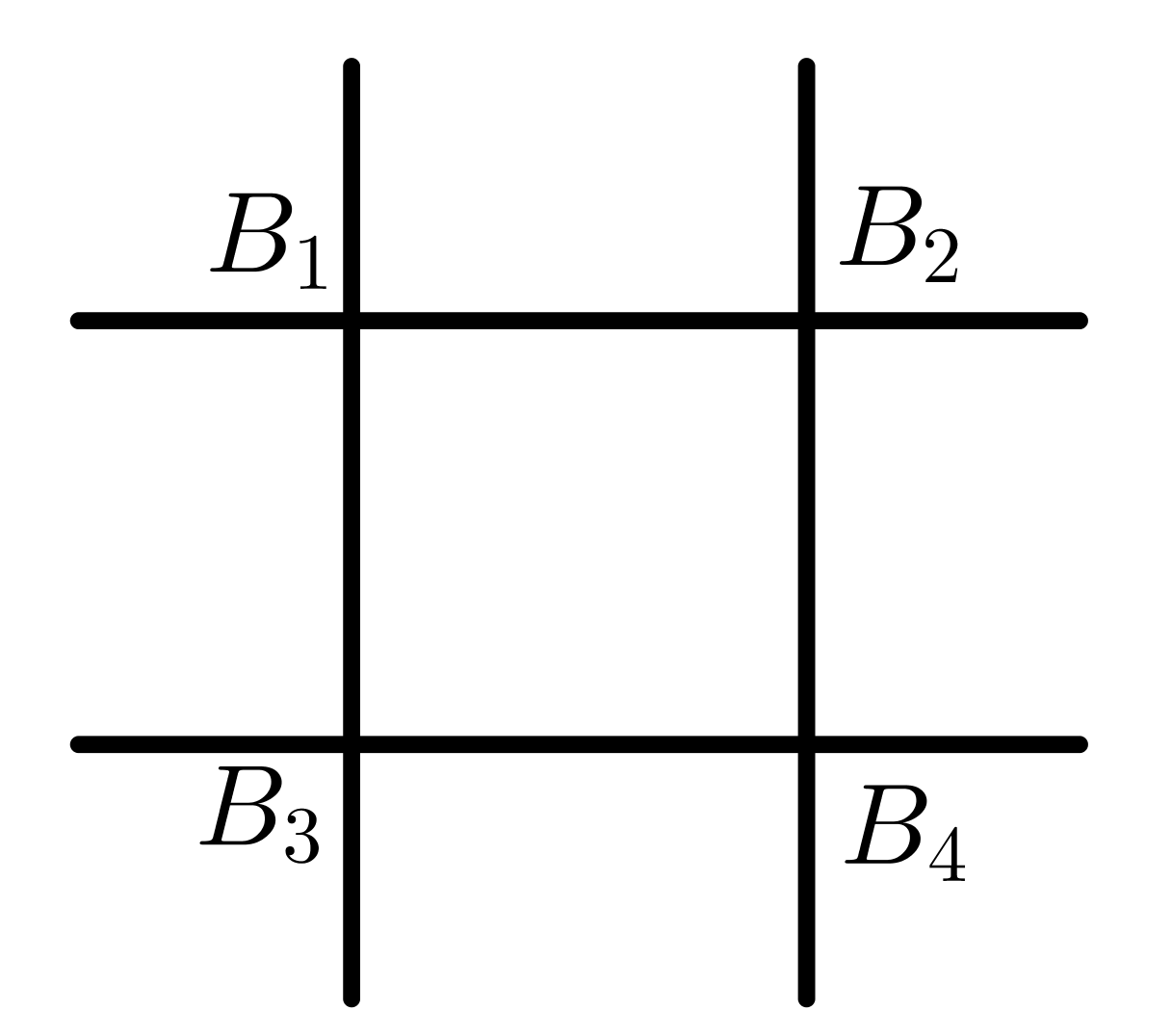} \equiv T'.
          \end{equation}
    \item Tensors $B_v$, $v=1,\ldots,4$, have the form
          \begin{equation}
              B_v=A_*+\deltaB_v,
          \end{equation}
          where $\deltaB_v$ satisfy the $2D$ analogues of \eqref{P1},\eqref{P2},\eqref{P3} (analyticity was not required in \cite{Kennedy2022b}).\footnote{Note that $T'$-external legs can be defined in $2D$ analogously to how we did it in the discussion above Lemma~\ref{mainlem}.}
\end{itemize}
Once such $B_v$ tensors are constructed, one can define the RG map analogously to how we did it in the proof of Theorem~\ref{maintheor}.

The construction of $B_v$ tensors provided in \cite{Kennedy2022b} is straightforward. The contraction $T$ (the l.h.s. of \eqref{maineq2d}) was expanded in the number of insertions of $\deltaA$.\footnote{Similarly to expansion \eqref{dUdec} in the proof of Theorem~\ref{maintheor}} Then, tensors $\deltaB_v$ were constructed by solving \eqref{maineq2d} order-by-order in the number of $\deltaA$ insertions.

\paragraph{Our contribution} The method from \cite{Kennedy2022b} presumably can be extended to $3D$. However, it requires drawing and analysing many diagrams, which is doable in $2D$ but becomes tedious in $3D$. We made the $3D$ case tractable by noting that the number of insertions of $\deltaA$ is not as significant as the "geometry" of contraction, which we encoded in the notion of a template (Section~\ref{tmplandD}). Then, we distinguished the main "geometrical" features: connectivity of $\graph(\gamma)$ and the presence of sources \eqref{source1},\eqref{source2}. This allowed us to avoid the separate analysis of dozens of diagrams. Instead, it was enough to consider only two cases: connected templates with sources (Section~\ref{dbv1}) and connected templates without sources (Section~\ref{CON2}).

There is yet another important observation that we made. In the original notes \cite{Kennedy2022b}, the issue of the unwanted terms (see Section~\ref{CON2}) was overlooked. We addressed this issue in the construction of $\deltaB_v^2$ tensors in Sections~\ref{Con2Example},~\ref{dbv2genconstr}.

Finally, we considered the analyticity properties of the construction, which was not done in \cite{Kennedy2022b}. We provided two versions of the RG map: analytic (satisfies \eqref{analit}) and non-analytic (violates \eqref{analit}). These two maps have different contraction properties. The analytic map satisfies \eqref{contr}. The non-analytic map satisfies a stronger condition \eqref{non_analyt_bound}. In the $2D$ case, the same techniques as we used in $3D$ would give the analytic RG map $A\mapsto z' (A_*+\deltaA')$ satisfying
\begin{equation}
    \|\deltaA'\| < t \epsilon^{8/7}, \quad \z'=1+O(\epsilon^{8/7}),
\end{equation}
and the non-analytic RG map satisfying
\begin{equation}
    \|\deltaA'\| < t \|\deltaA\|^{8/7}, \ \z'=1+O(\|\deltaA\|^{8/7}),
\end{equation}
where $t$ is a universal constant.

\end{appendices}

\bibliographystyle{utphys}
\bibliography{TensRG}

\end{document}